\documentclass[reprint,superscriptaddress,showpacs,amssymb,amsmath,aps,prx,longbibliography]{revtex4-2}
\usepackage[utf8]{inputenc}
\usepackage{graphicx}
\usepackage{xcolor}
\usepackage{amsmath}
\usepackage{amsthm}
\usepackage{bm}
\usepackage{bbm}
\usepackage{comment}
\usepackage{simpler-wick}
\usepackage{tikz}
\usepackage[compat=1.0.0]{tikz-feynman}

\usepackage{hyperref}
\hypersetup{hidelinks}
\usepackage{float}

\usepackage{mathdots}
\usepackage{lipsum}
\usepackage{verbatim}
\usepackage{natbib}
\usepackage{nccmath}
\usepackage{amsfonts} 
\usepackage[caption=false]{subfig}

\usepackage{amsfonts} 
\DeclareMathSymbol{\shortminus}{\mathbin}{AMSa}{"39}

\newtheorem{theorem}{Theorem}
\newtheorem{lemma}{Lemma}
\newtheorem{suplemma}{Supplemental Lemma}

\newtheorem{corollary}{Corollary}

\newtheorem{definition}{Definition}

\newcommand{\smallprod}{\mathop{\raisebox{0.25ex}{\scalebox{0.7}{$\displaystyle \prod$}}}}

\begin{document}

\title{Unifying spacetime approaches to quantum mechanics}

\author{N. L. Diaz}
\affiliation{Information Sciences, Los Alamos National Laboratory, Los Alamos, New Mexico 87545, USA}
\thanks{nldiaz.unlp@gmail.com}
\affiliation{Center for Non-Linear Studies, Los Alamos National Laboratory, Los Alamos, NM 87545, USA}

\author{M. Cerezo}
\affiliation{Information Sciences, Los Alamos National Laboratory, Los Alamos, New Mexico 87545, USA}

\author{Paolo Braccia}
\affiliation{Theoretical Division, Los Alamos National Laboratory, Los Alamos, New Mexico 87545, USA}

\begin{abstract}
    Recent efforts to formulate quantum mechanics in a way that treats space and time on a more equal footing have led to a large variety of spacetime-oriented approaches. In this work we present a detailed study of \emph{spacetime states}, the objects that play the role of quantum states in the recently introduced framework of ``spacetime quantum mechanics'', and show that the main proposals in the literature are different manifestations of the same underlying object. Path integrals, quantum states over time, pseudo-density matrices, the Page and Wootters mechanism, superdensity operators, and timelike-entanglement proposals all arise from spacetime states through particular evaluations, reduced information, linear maps, or quantum channels. This unification provides explicit mathematical representations of these formalisms, reveals relations among them, and clarifies the spacetime information each one captures. 
    We also study the broader relevance of the spacetime-state point of view for Leggett-Garg inequalities, OTOCs, temporal tensor networks, fermionic systems, relativistic QFTs, quantum reference frames, and classical physics, together with additional insights and perspectives revealed by the common unifying framework.
\end{abstract}

\maketitle

\section{Introduction}

The relativistic revolution profoundly changed the meaning of space and time assumed by Newtonian physics. In Minkowski's famous formulation, \textit{space by itself, and time by itself, have vanished into the merest shadows}, leaving only their union with independent meaning \cite{minkowski1909raum}. Yet quantum mechanics (QM), the second great revolution of twentieth-century physics, appears to take a step back from this perspective: Dirac's statement that the practical task of physics consists in relating events on one section of spacetime to events on another section at a later time becomes almost unavoidable from the canonical quantum point of view \cite{dirac1963evolution}. In fact, the central object of QM---the quantum state, encoding not a mere configuration of the system but the information from which all compatible physical predictions are extracted---is assigned to a spacelike slice. 
States on different slices are then related by Hamiltonian evolution with respect to a time parameter external to the states themselves. Thus, while relativity suggests a self-contained geometrical picture of physics in which, to use Eddington's phrase, \textit{events do not happen; they are just there} \cite{eddington1920space}, canonical QM remains at its core a dynamical theory whose formulation depends on a classical separation between space and time.

This conceptual tension becomes a concrete mathematical obstruction in canonical attempts to quantize gravity, where the Hamiltonian formulation of a generally covariant theory leads to constraints rather than to an ordinary evolution equation \cite{dirac1958theory,isham1993canonical,anderson2012problem}. At the origin of these difficulties, often gathered under the name of the ``problem of time'', lies the fact that the quantum formalism presupposes the very temporal structure that gravity teaches us should be dynamical.

A more recent motivation to revisit this issue comes from quantum information. Over the last decades, the idea that information is physical, together with the study of quantum correlations, has greatly refined our understanding of the quantum world, of how it differs from classical physics, and of how classical behavior emerges from quantum systems \cite{zurek2003decoherence}. 
 In particular, decoherence \cite{zurek2003decoherence}, thermalization \cite{abanin2019colloquium} and quantum Darwinism  \cite{zurek2009quantum} have shown that classical behavior can often be understood in terms of how quantum information is redistributed, hidden, or redundantly encoded across many degrees of freedom \cite{touil2024branching}.
In this sense, quantum information 
provided the tools needed to make the emergence of classical behavior a well-posed physical problem.
At the same time, 
these developments take the temporal aspects of the theory for granted: standard quantum information considerations inherit from the axioms of QM the same 
classically 
 assumed separation between space and time. 
In particular, while quantum states encode quantum correlations across space (or subsystems), there is no equally canonical object encoding quantum correlations across time.

At the intersection of quantum information and high-energy physics, and in particular in holography, this point becomes especially striking: geometric quantities are related to entanglement entropies \cite{ryu2006holographic}, and spacetime itself has been argued to emerge from quantum entanglement \cite{van2010building}. However, while spatial geometry can be recovered from standard Hilbert-space entanglement \cite{cao2017space}, the emergence of time remains an additional challenge. Recent work on timelike entanglement \cite{harper2023timelike, heller2025temporal, milekhin2025observable} aims to tackle the temporal side of this problem by means of pseudo-entropies, namely complex extensions of standard entanglement entropies, not given by standard quantum states. These considerations suggest that 
the need to separate space and time at the classical level represents a genuine obstruction to our understanding of the emergence of spacetime from correlations.

Motivated by different aspects of the problems previously described,  several recent approaches have sought to formulate QM in a more spacetime-oriented Hilbert-space language. Notable proposals include quantum states over time \cite{horsman2017can}, operators defined to  contain standard quantum states as marginals; pseudo-density matrices \cite{fitzsimons2015quantum, fullwood2024operator}, which generalize expectation values to sequential-in-time measurements; superdensity operators \cite{cotler2018superdensity}, which encode spacetime correlations in an extended operator space; the Page and Wootters mechanism \cite{page1983evolution}, together with its quantum-information-inspired revival \cite{giovannetti2015quantum, boette2016system, giovannetti2023geometric}, where external evolution is replaced by correlations with a quantum clock system; timelike-entanglement proposals \cite{milekhin2025observable, guo2025spacetime}, which assign entropic meaning to regions separated in time; and  work on quantum reference frames \cite{giacomini2019quantum} and indefinite causal order \cite{castro2018dynamics}, which questions the classical status of reference systems and causal structure themselves. 
While all of these proposals allow one to recover standard QM under reasonable assumptions, they start from rather different points of view and emphasize different aspects of the same underlying problem. Despite their common scope, they remain notably distinct from one another. Thus, it remains an open question whether QM truly admits a natural spacetime formulation in a canonical language.

Recently, the framework of spacetime quantum mechanics (SQM) has emerged as an alternative answer to this problem \cite{diaz2023spacetime,diaz2025spacetime}. Its starting point is to associate independent Hilbert-space degrees of freedom to each spacetime point, or classical ``event''. Thus, while SQM still assumes an underlying classical spacetime structure, the quantum object is no longer tied to a spacelike slice. In this geometrical picture, standard evolution is not imposed externally, but recovered from correlators defined through a quantum action \cite{diaz2021spacetime}, the operator counterpart of the action of classical mechanics. One of the strengths of this proposal is that it applies to both non-relativistic and relativistic systems, providing, in particular, a canonical Hilbert-space setting in which Lorentz covariance is manifest for special relativistic QFTs \cite{diaz2023spacetime,diaz2026quantum}. Moreover, the path-integral formulation \cite{feynman1948space}, the most successful spacetime approach to QM but one that traditionally replaces the canonical structure by a sum over histories, is naturally embedded in a Hilbert space within the SQM framework \cite{diaz2021path}.

In the present work we develop the SQM framework substantially further, with particular emphasis on what we call \emph{spacetime states}. In SQM, these are the objects that play the role played by ordinary quantum states in standard QM. They encode the physical information of the system, including their possible ``histories'', but they are assigned to spacetime rather than to a spacelike surface. 
In this way, the dynamical point of view emphasized by Dirac is recovered from fixed-time marginals of spacetime states, but it is no longer the only possible description. A full spacetime description of quantum theories also becomes feasible.

We then show that, once the structure and properties of spacetime states are properly grasped, all the spacetime approaches listed above 
can be understood as manifestations of the same underlying formalism: they arise from spacetime states by particular evaluations, reductions or linear maps. We represent this unification in Figure \ref{fig:schemeintro}. 
This result provides explicit mathematical representations and natural generalizations of these formalisms, while also revealing relations among them that are hidden when they are developed separately. In turn, these results give additional insight into spacetime states themselves.
With this unification in place, SQM can arguably be regarded as the natural extension of QM to spacetime. In particular, other spacetime approaches emerge from a single constrained framework 
that preserves, or naturally generalizes, many of the features of standard QM. \\

The manuscript is organized as follows. In Section \ref{sec:formalism} we review the SQM formalism and study new properties of spacetime states. In Section \ref{sec:unifying} we prove the unifying results summarized in Figure \ref{fig:schemeintro}, showing how the main spacetime approaches to QM arise from spacetime states.
After establishing this unification, we use the spacetime-state point of view to revisit several topics where the temporal structure of QM is directly involved. This is presented in Section \ref{sec:additional}, where we discuss Leggett-Garg inequalities, folded spacetime states and out-of-time-order correlators (OTOCs), temporal tensor networks, fermions, relativistic QFTs, quantum reference frames, and the classical counterpart of the formalism. These results are not all part of the unification result itself, but they situate SQM in a broader physical landscape and illustrate the range of questions that become accessible once quantum states are assigned to spacetime.
We conclude in Section \ref{sec:conclusions} where we add additional conceptual considerations and perspectives.

\begin{figure}[t!]
    \centering
    \includegraphics[width=\linewidth]{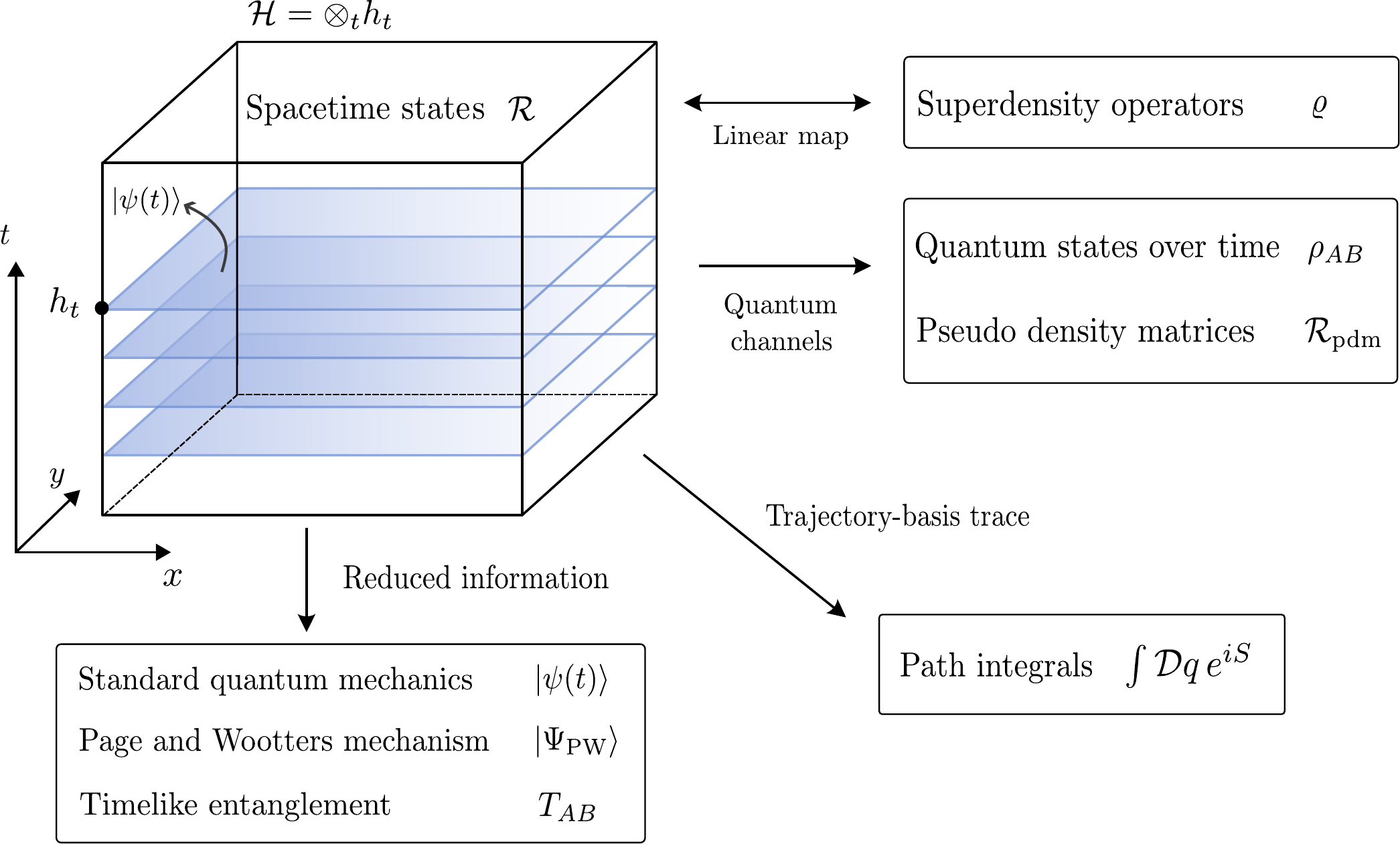}
    \caption{Schematic depiction of how standard QM and its main extensions to a spacetime setting arise from a common object.}
    \label{fig:schemeintro}
\end{figure}

\section{Spacetime Quantum mechanics}\label{sec:formalism}

\subsection{Basic Formalism}\label{sec:basicformalism}
In this section we introduce the formalism developed in \cite{diaz2023spacetime,diaz2025spacetime}, which we will refer to as \emph{spacetime quantum mechanics} (SQM). The presentation is self-contained up to technical details already treated in previous works, and differs slightly from the original formulation in order to better emphasize the role of spacetime states. \\

The first step toward formulating QM directly in spacetime, rather than on a single spacelike slice, is to enlarge the kinematical space of the theory. In standard QM, a system is described by a Hilbert space $h$ associated with its degrees of freedom at a given time. If we want to describe the same system as distributed over a finite spacetime region, the natural analogue is to assign independent copies of these degrees of freedom to the different time slices. This leads to the following definition.

\begin{definition} \textbf{Spacetime Hilbert space}. 
Consider a finite time window $T=\epsilon N$. Given a quantum system which in standard QM is described by a Hilbert space $h$, the corresponding spacetime Hilbert space is
\begin{equation}
    \mathcal{H}=\mathop{{\textstyle\bigotimes}}\limits_{t=0}^{N-1} h_t \simeq h^{\otimes N}\,,
\end{equation}
where each factor $h_t$ is associated with the system at the time slice $t$.
\end{definition}

Let us make a few comments. First, we assume a finite time window and a discrete set of time slices in order to avoid unnecessary mathematical subtleties. The formalism can be extended to continuum and unbounded time limits, as discussed in \cite{diaz2023spacetime,diaz2021path,diaz2025spacetime}. Second, although the definition above is written in terms of copies of the standard Hilbert space $h$, the more fundamental idea is algebraic: time is treated as an additional label for independent quantum degrees of freedom. For instance, for a single particle one may postulate
\begin{equation}
    [q_t,p_{t'}]=i\delta_{tt'}\,,
\end{equation}
which leads to $\mathcal{H}\simeq\otimes_t h$, with $h$ defined by the usual canonical algebra $[q,p]=i$ (we set $\hbar\equiv 1$) \footnote{For a finite number of time labels, $[q_t,p_{t'}]=i\delta_{tt'}$ defines a direct sum of independent canonical algebras. The corresponding Weyl algebra factorizes in time, and Stone-von Neumann gives $\mathcal{H}\simeq\otimes_t h$ up to unitary equivalence.}. If the system is instead a quantum field, with equal-time algebra $[q_x,p_{x'}]=i\delta_{xx'}$ in $h$, the spacetime construction gives
\begin{equation}\label{eq:spacetimealg}
    [q_{tx},p_{t'x'}]=i \delta_{tt'}\delta_{xx'}\,.
\end{equation}
Thus, at the level of the kinematical algebra, space and time appear on the same footing. In this way, the formalism associates quantum operators to spacetime points, or events. The same idea extends to other algebras, including those leading to finite dimensional systems \footnote{The case of fermions is however not equivalent to a tensor-product structure; for this reason we study it in the separate  Section \ref{sec:fermions}.}.

The spacetime Hilbert space is only the kinematical part of the construction. 
We still need to connect this scheme with conventional quantum evolution. To do so, we introduce a few natural definitions. 
\begin{definition}\label{def:eip}\textbf{Time translations across slices}. 
    We define the time translation operator across slices as
    \begin{equation}
        e^{i\epsilon \mathcal{P}}=\prod_{t=0}^{N-2}\text{\upshape SWAP}_{t+1,t}\,.
    \end{equation}
    where the product follows an increasing order to the left $ e^{i\epsilon \mathcal{P}}=\text{\upshape SWAP}_{1,0}\text{\upshape SWAP}_{2,1}\dots \text{\upshape SWAP}_{N-1,N-2}$. 
\end{definition}
Let us remark that this important operator is not related to evolution by itself. Instead it simply connects different slices in a geometrical way. It satisfies
$e^{i\epsilon\mathcal{P}}|i_0i_1\dots i_{N-1}\rangle=|i_{N-1}i_0i_1\dots i_{N-2}\rangle$. Equivalently, 
\begin{equation}
e^{i\epsilon \mathcal{P}}O_te^{-i\epsilon \mathcal{P}}=O_{t+1}    
\end{equation}
with cyclic conditions ($O_N=O_0$) for any operator  $O_t$ acting on $h_t$ as $O$ and trivially for the other slices 
 (to clarify the notation consider $N=3$, we have $O_0= O\otimes \mathbbm{1} \otimes \mathbbm{1}$, $O_1=\mathbbm{1}\otimes O \otimes \mathbbm{1}$, $O_2=\mathbbm{1}\otimes  \mathbbm{1}\otimes O$). Let us also stress that $e^{i\epsilon\mathcal{P}}$ is not separable-in-time.

To make further progress we introduce another useful definition.

\begin{definition} \textbf{Time translations within slices}. \label{def:translationwithin}
    We define the time translation operator within slices as
    \begin{equation}
        e^{i\epsilon \mathcal{K}}=\mathop{{\textstyle\bigotimes}}\limits_{t=0}^{N-1}e^{i\epsilon H_t}=e^{i\epsilon \sum_{t=0}^{N-1}H_t}\,,
    \end{equation}
    with $H$ the Hamiltonian of the system.
\end{definition}
These time translations are the ones more directly related to standard unitary evolution. Their adjoint action on an operator $O_t$ is given by 
\begin{equation}
    e^{i\epsilon \mathcal{K}}O_t e^{-i\epsilon \mathcal{K}}=O_t(\epsilon)\,,
\end{equation}
with $O(\epsilon)=e^{i\epsilon H}O e^{-i\epsilon H}$  where we write conventional Heisenberg evolution within parenthesis. Similarly, a tensor product of operators is also evolved a single step (in the Heisenberg sense). 
For simplicity we  will mostly focus on the time independent case so that the evolution operator of a single time step is given by $U(\epsilon)=e^{-i\epsilon H}$. The time-dependent case can be obtained by direct generalization as discussed in Appendix \ref{app:spacetimest}. 
Let us also recall that the notation $H_t$ indicates that the operator acts trivially in the other slices so that, e.g., for $N=2$ we have $\mathcal{K}=H_0+H_1=H\otimes \mathbbm{1}+\mathbbm{1}\otimes H$.  \\

If we want to relate the Hilbert space time slices to evolution we need to relate translations across time with translations within slices in such a way that 
``moving'' through time, in the geometrical sense, corresponds to standard unitary evolution. This insight leads naturally to the following definition.
\begin{definition}
    \textbf{Quantum action operator (QA)}. The composition of time translations across time with the inverse of time translations within slices defines the quantum action operator $\mathcal{S}$:
    \begin{equation}
        e^{i\mathcal{S}}=e^{i\epsilon \mathcal{P}}e^{-i\epsilon \mathcal{K}}=e^{i\epsilon (\mathcal{P}-\mathcal{K})}\,,
    \end{equation}
    namely $\mathcal{S}=\epsilon(\mathcal{P}-\mathcal{K})$. 
\end{definition}
We have thus defined $e^{i\mathcal{S}}$ from the mismatch between two notions of time translation: geometrical translations across slices and dynamical translations within slices. The reason for calling $\mathcal{S}$ a quantum action is not apparent from the definition alone. However, as we show in Section \ref{sec:PIs}, for systems with a classical analogue, $\mathcal{S}$ takes the same form as the action of classical mechanics, with $\mathcal{P}$ related to the Legendre-transform term and $\mathcal{K}$ to the sum over time of the Hamiltonian.

Before presenting the mechanisms from which evolution can be recovered let us introduce yet another important definition.

\begin{definition}
\textbf{State quantum action operator (SQA)}. The state quantum action operator is defined by the relation
\begin{equation}\label{eq:qainin}
    e^{i\tilde{\mathcal{S}}}=U^\dag_0 (T) \,e^{i\mathcal{S}}= \mathcal{V}^\dag e^{i\epsilon \mathcal{P}}\mathcal{V}\,,
\end{equation}
for $\mathcal{V}=\otimes_{t=0}^{N-1}e^{i\epsilon t H_t}$ and $U_0^\dag(T)=e^{iT H_0}$.
\end{definition}
The difference with the previous QA is a boundary term that becomes important when discussing evolution within a finite time window: while the QA is directly related to transition amplitudes, the SQA is better suited to describing expectation values (see Theorem \ref{th:theorem1} and the discussion below). This is the version that most naturally enters the definition of spacetime states, hence the name.
The second equality in \eqref{eq:qainin} has been proven in \cite{diaz2021path} and can be easily seen through a diagrammatic representation.

With these basic definitions, we can resume our previous discussion about identifying time translations across and within. Mathematically, this identification corresponds to $e^{i\mathcal{S}}\mathcal{O}e^{-i\mathcal{S}}\equiv\mathcal{O}$, or equivalently $e^{i\epsilon\mathcal{P}}\mathcal{O}e^{-i\epsilon\mathcal{P}}\equiv e^{i\epsilon\mathcal{K}}\mathcal{O}e^{-i\epsilon\mathcal{K}}$. Clearly this is not satisfied as an operator equation. What we require instead is 
\begin{equation}\label{eq:constraint}
    \langle e^{i\mathcal{S}}\mathcal{O}e^{-i\mathcal{S}}\rangle_\Gamma=\langle\mathcal{O}\rangle_\Gamma
\end{equation} with $\langle \dots \rangle_\Gamma={\rm Tr}[\Gamma \dots]$. Namely, we want the equality to hold within ``expectation values'', a condition that defines a subspace of possible $\Gamma$ and will lead us to the concept of spacetime state.  The solution is far from unique, and any function of the QA leads to a possible choice of $\Gamma$. In what follows, we will focus only in the case $\Gamma=e^{i\mathcal{S}}$ which leads to many remarkable properties  (we also consider $\Gamma=e^{i\tilde{\mathcal{S}}}$, satisfying the constraint with the SQA). 
The ``expectation values'' 
${\rm Tr}[e^{i\mathcal{S}}\mathcal{O}]$
 will be one of our main objects of study.

Having identified the structure needed to extract physical information from the formalism one can explore its consequences. One immediate but powerful result follows. 
\begin{theorem}\label{th:theorem1}
The following traces in the spacetime Hilbert space $\mathcal{H}$ are connected with standard traces in $h$ as follows:
\begin{subequations}
    \begin{align}
      {\rm Tr}\big[e^{i\mathcal{S}}\otimes_{t=0}^{N-1}O^{(t)}_t\,\big]&={\rm tr}\big[U(T)\,\hat{T}\smallprod\nolimits_{t=0}^{N-1} O^{(t)}(\epsilon t)\big]\\
       {\rm Tr}\big[e^{-i\mathcal{S}}\otimes_{t=0}^{N-1}O^{(t)}_t\,\big]&={\rm tr}[\bar{T}\smallprod\nolimits_{t=0}^{N-1} O^{(t)}(\epsilon t)U^\dag(T)]\\
          {\rm Tr}\big[e^{i\tilde{\mathcal{S}}}\otimes_{t=0}^{N-1}O^{(t)}_t\, \big]&={\rm tr}\big[\hat{T}\smallprod\nolimits_{t=0}^{N-1} O^{(t)}(\epsilon t)\big]\\
       {\rm Tr}\big[e^{-i\tilde{\mathcal{S}}}\otimes_{t=0}^{N-1}O^{(t)}_t\, \big]&={\rm tr}\big[\bar{T}\smallprod\nolimits_{t=0}^{N-1}O^{(t)}(\epsilon t)\big]
\end{align}
\end{subequations}
where $\hat{T}, \bar{T}$ denoting time ordering and anti-time ordering respectively. 
\end{theorem}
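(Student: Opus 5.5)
The whole theorem reduces to one elementary identity for cyclic shifts: for any operators $A^{(0)},\dots,A^{(N-1)}$ on $h$,
\begin{equation*}
{\rm Tr}\big[e^{i\epsilon\mathcal{P}}\otimes_{t}A^{(t)}_t\big]={\rm tr}\big[A^{(N-1)}A^{(N-2)}\cdots A^{(0)}\big].
\end{equation*}
I would prove this first. Since $e^{i\epsilon\mathcal{P}}|i_0\dots i_{N-1}\rangle=|i_{N-1}i_0\dots i_{N-2}\rangle$, the trace expands in the product basis as a sum over $i_0,\dots,i_{N-1}$ of a product of matrix elements of the $A^{(t)}$. This sum closes into a single loop, which is the trace of the ordered product. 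The ordering (latest slice to the left versus earliest slice to the left) has to be fixed carefully from the bra--ket contraction. I expect this bookkeeping to be the main place an error can creep in. I would check it explicitly for $N=2$, where ${\rm Tr}[\text{SWAP}\,(A\otimes B)]={\rm tr}[AB]$, and then for $N=3$, to decide which cyclic order arises. If the order comes out reversed, the $e^{-i\epsilon\mathcal{P}}$ version gives the other one, and the adjoint relations below sort out the assignments.

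For item (c) I would use the second equality in \eqref{eq:qainin}, namely $e^{i\tilde{\mathcal S}}=\mathcal V^\dagger e^{i\epsilon\mathcal P}\mathcal V$ with $\mathcal V$ a tensor product. By cyclicity of the trace,
\begin{equation*}
{\rm Tr}\big[e^{i\tilde{\mathcal S}}\otimes_t O^{(t)}_t\big]={\rm Tr}\big[e^{i\epsilon\mathcal P}\,\mathcal V\big(\otimes_t O^{(t)}_t\big)\mathcal V^\dagger\big].
\end{equation*}
The conjugated operator is $\otimes_t\big(e^{i\epsilon tH}O^{(t)}e^{-i\epsilon tH}\big)_t=\otimes_t O^{(t)}(\epsilon t)_t$. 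Applying the lemma then gives ${\rm tr}\big[O^{(N-1)}(\epsilon(N-1))\cdots O^{(0)}(0)\big]$, which is exactly the time-ordered product $\hat T\prod_t O^{(t)}(\epsilon t)$.

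For item (a) I would write $e^{i\mathcal S}=U_0(T)\,e^{i\tilde{\mathcal S}}$, which follows from the definition since $U_0(T)$ inverts $U_0^\dagger(T)$. This amounts to inserting $U(T)$ on slice $0$, i.e.\ replacing $O^{(0)}$ by $O^{(0)}U(T)$, or $U(T)O^{(0)}$ depending on which side it lands in the trace. Because $O^{(0)}$ sits at time $0$, it is the rightmost factor in the time-ordered product. Using cyclicity of ${\rm tr}$, $U(T)$ can therefore be moved to the front, giving ${\rm tr}[U(T)\,\hat T\prod O^{(t)}(\epsilon t)]$. This slice-$0$ insertion step is where I would double-check that the factor $U(T)$ ends up adjacent to $O^{(0)}$ on the correct side.

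Items (b) and (d) follow by taking adjoints. We have $e^{-i\mathcal S}=(e^{i\mathcal S})^\dagger$ and ${\rm Tr}[X^\dagger\,\otimes_t O^{(t)}_t]=\overline{{\rm Tr}[X\,\otimes_t O^{(t)\dagger}_t]}$. Applying (a) or (c) to the operators $O^{(t)\dagger}$ and conjugating the result reverses the order of the product into anti-time order, $\bar T\prod_t O^{(t)}(\epsilon t)$, turns $U(T)$ into $U^\dagger(T)$ placed on the right, and restores $O^{(t)}$. Alternatively, one can apply the lemma directly to $e^{-i\epsilon\mathcal P}$, which is the cyclic shift in the opposite direction.
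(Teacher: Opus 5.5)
Your proposal is correct. Expanding in the product basis indeed gives ${\rm Tr}[e^{i\epsilon\mathcal{P}}\otimes_t A^{(t)}_t]={\rm tr}[A^{(N-1)}\cdots A^{(0)}]$ with the latest slice on the left. Cyclicity of ${\rm Tr}$ places the boundary factor on slice $0$ as $O^{(0)}U(T)$, so (a) follows from (c) as you describe, and the adjoint argument correctly yields the anti-time-ordered items (b) and (d).

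This is the index-level version of the paper's argument. The paper defers the written proof to earlier references and, through the tensor-network picture of Figure~\ref{fig:tntheorem}, shows that the SWAP chain closes the slice insertions into a single loop.
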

Notice that we use the symbol ``${\rm Tr}$'' for traces in $\mathcal{H}$ and ``${\rm tr}$'' for traces in $h$ in order to distinguish quantities in SQM from quantities in ordinary QM. 
Notably, the correlators in time of the exponential of the QA (SQA) are equal to traditional traces of the composition of operators evolved in the Heisenberg picture at different times. The amount of evolution on the r.h.s. is determined by the position (in time) of the operator on the l.h.s. In particular, if a single operator $O_t$ is inserted, the condition \eqref{eq:constraint}, which might be rewritten as $\langle O_{t+1}-O_t\rangle=i\epsilon\langle [H,O_t] \rangle +\mathcal{O}(\epsilon^2)$, is clearly satisfied. 
The proof of these relations can be found in \cite{diaz2023spacetime,diaz2021path,diaz2025spacetime}. Here we provide a Tensor Network (TN) representation of the map in Figures \ref{fig:tntheorem}.

\begin{figure}[t!]
    \centering
\includegraphics[width=\linewidth]{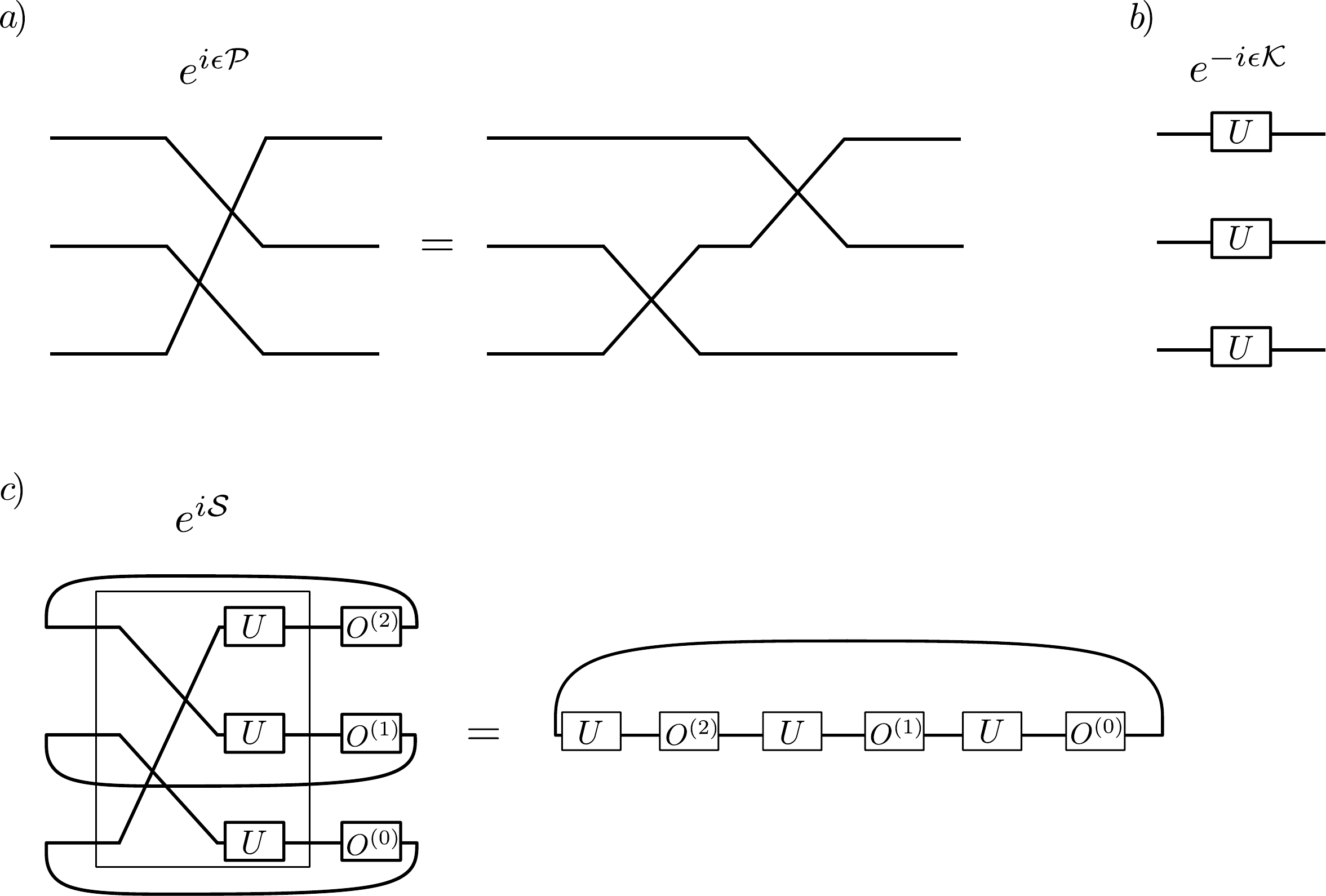}
    \caption{Tensor network representation of the main objects of the formalism for $N=3$.
   Panel a) Time translations across slices and the corresponding SWAP decomposition. Panel b) Time translations within slices. Panel c) the quantum action operator and a particular case of Theorem \ref{th:theorem1}.     }
    \label{fig:tntheorem}
\end{figure}

Let us now frame the usual concept of quantum state within this extended framework. Specifying a quantum state $\rho$ corresponds to specifying all the expectation values of the system at a given time, namely ${\rm tr}[\rho O]$ for all $O\in \mathcal{L}(h)=h\otimes h^\ast$. Thus, considering Theorem \ref{th:theorem1}, we need an object that for operators at a given time, say the first slice, yields the same mean values as $\rho$. At the same time, we want to preserve as much as possible the previous structure. The following definition of spacetime states is what we require.

\begin{definition}\label{def:spacetimest}
    \textbf{Spacetime states}.  We define spacetime states $\mathcal{R}$ as
    \begin{equation}
        \mathcal{R}=\rho_0\, e^{i\tilde{\mathcal{S}}}\,.
    \end{equation}
\end{definition}

The spacetime states $\mathcal{R}$ satisfy ${\rm Tr}[\mathcal{R}]=1$ but are clearly not-Hermitian. On the other hand, considering Theorem \ref{th:theorem1} one can easily prove that the partial trace ${\rm Tr}_{t\neq 0}[\mathcal{R}]=\rho$. The Hermiticity condition of standard QM is thus recovered for equal-time correlators, where the (anti) time-ordering acts trivially. For operators inserted at different times, $\mathcal{R}$ and $\mathcal{R}^\dag$ lead to different time orderings and Hermiticity breaks.  
Let us also notice that, as before, the approach will be to consider quantities such as ${\rm Tr}[\mathcal{R}\mathcal{O}]$ which will now be interpreted as a (generalized or quasi) expectation value of $\mathcal{R}$.
Notice also that $\mathcal{R}=\mathcal{V}^\dag\rho_0 e^{i\epsilon\mathcal{P}}\mathcal{V}$ meaning that for a common initial condition, spacetime states with different histories are unitarily related.

Next, we state a few basic important properties of spacetime states that motivate their name.

\begin{corollary}\label{cor:schrodheis}\textbf{Schr\"{o}dinger and Heisenberg evolution}. 
The partial trace of a spacetime states over all time slices, except a chosen $t$, gives Schr\"{o}dinger evolution
\begin{equation}
    {\rm Tr}_{t'\neq t}[\mathcal{R}]=\rho(t)=e^{-i\epsilon t H}\rho e^{i\epsilon tH}\,.
\end{equation}
Similarly, 
\begin{equation}
    {\rm Tr}[\mathcal{R} O_t]={\rm tr}[\rho O(\epsilon t)]
\end{equation}
for $O(\epsilon t)=e^{i\epsilon tH}O e^{-i\epsilon t H}$ the Heisenberg evolved operator. 
\end{corollary}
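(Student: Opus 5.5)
The plan is to reduce both statements to the third identity of Theorem \ref{th:theorem1}, the one for $e^{i\tilde{\mathcal{S}}}$. The key observation is that $\rho_0$ acts only on the first slice. Consequently, for any product operator $\otimes_{t'}O^{(t')}_{t'}$ we have
\begin{equation*}
{\rm Tr}\big[\mathcal{R}\otimes_{t'}O^{(t')}_{t'}\big]={\rm Tr}\big[e^{i\tilde{\mathcal{S}}}\,(O^{(0)}\rho)_0\otimes_{t'\neq 0}O^{(t')}_{t'}\big],
\end{equation*}
using cyclicity of the full trace to move $\rho_0$ to the right of $e^{i\tilde{\mathcal{S}}}$. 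This is just the statement that $\rho_0$ can be absorbed into the operator inserted at $t=0$. Theorem \ref{th:theorem1} then gives ${\rm tr}\big[\hat{T}\prod_{t'}\tilde O^{(t')}(\epsilon t')\big]$, where $\tilde O^{(0)}=O^{(0)}\rho$ is evaluated at time zero, so it is not evolved.

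For the Heisenberg statement, choose $O^{(t')}=\mathbbm{1}$ for all $t'\neq t$ and $O^{(t)}=O$. Only two nontrivial factors survive: $\rho$ at time $0$ and $O(\epsilon t)$ at time $\epsilon t\geq 0$. Time ordering puts the later time to the left, giving ${\rm tr}[O(\epsilon t)\rho]={\rm tr}[\rho O(\epsilon t)]$. The case $t=0$ is immediate, since there ${\rm Tr}[\mathcal{R}O_0]={\rm tr}[\rho O]$, which is the already noted marginal property.

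For the Schr\"odinger statement, note that a reduced operator $\sigma_t={\rm Tr}_{t'\neq t}[\mathcal{R}]$ on $h_t$ is uniquely fixed by ${\rm tr}[\sigma_t O]={\rm Tr}[\mathcal{R}O_t]$ for all $O\in\mathcal{L}(h)$. This is the defining property of the partial trace, and it holds even though $\mathcal{R}$ is not Hermitian. By the previous step this equals ${\rm tr}[\rho\,e^{i\epsilon tH}Oe^{-i\epsilon tH}]={\rm tr}[e^{-i\epsilon tH}\rho e^{i\epsilon tH}O]$ by cyclicity. Since $O$ is arbitrary, $\sigma_t=\rho(t)$.

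The only delicate point is bookkeeping in the first step. We must make sure that $\rho_0$ sits on the correct side, and that the time-ordering convention in Theorem \ref{th:theorem1} really places the $t=0$ insertion rightmost, so that $\rho$ and $O(\epsilon t)$ appear in the order needed. Alternatively, one could verify this directly from $\mathcal{R}=\mathcal{V}^\dag\rho_0e^{i\epsilon\mathcal{P}}\mathcal{V}$. Here $\mathcal{V}$ conjugates $O_t$ into $O(\epsilon t)_t$ and leaves $\rho_0$ unchanged, since it acts trivially at $t=0$. What remains is ${\rm Tr}[\rho_0e^{i\epsilon\mathcal{P}}O(\epsilon t)_t]$, and the cyclic-shift (SWAP-chain) contraction reduces this to ${\rm tr}[\rho O(\epsilon t)]$. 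I would include this as a cross-check on the ordering.
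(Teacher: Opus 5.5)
Your proposal is correct and follows the paper's own route. The paper simply states that this corollary is a direct consequence of Theorem~\ref{th:theorem1}, and your bookkeeping makes that explicit. You move $\rho_0$ by cyclicity into the unevolved $t=0$ insertion, apply the $e^{i\tilde{\mathcal{S}}}$ identity of Theorem~\ref{th:theorem1} with the later time placed leftmost, and then recover ${\rm Tr}_{t'\neq t}[\mathcal{R}]=\rho(t)$ from the duality characterization of the partial trace.
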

These results are once again a direct consequence of Theorem \ref{th:theorem1} and for this reason we state them as a corollary.  Notably, we have recovered the basic pictures of quantum evolution from a framework where there is no evolution in the traditional sense but only correlations. In particular, this confirms that $\mathcal{R}$ contains the information of a standard quantum state and all of its evolution.
Further information is codified within spacetime states, as we now state in the next corollary.

\begin{corollary}\label{cor:wigthman}\textbf{Time-ordered Wightman functions}. The correlators of spacetime states are Wightman functions:
\begin{subequations}
     \begin{align}
       {\rm Tr}\big[\mathcal{R} \otimes_{t=0}^{N-1}O_t^{(t)}\,\big]&={\rm tr}\big[\rho\,\hat{T}\,\smallprod\nolimits_{t=0}^{N-1} O^{(t)}(\epsilon t)\big]\\
        {\rm Tr}\big[\mathcal{R}^\dag \otimes_{t=0}^{N-1}O_t^{(t)}\,\big]&={\rm tr}\big[\,\bar{T}\,\smallprod\nolimits_{t=0}^{N-1} O^{(t)}(\epsilon t)\rho\big]\,.
   \end{align}
\end{subequations}
   \end{corollary}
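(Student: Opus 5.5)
The plan is to reduce both identities to Theorem \ref{th:theorem1}, using the fact that $\mathcal{R}=\rho_0 e^{i\tilde{\mathcal{S}}}$ and that $\rho_0$ acts only on the first slice. For the first identity, use the cyclicity of the trace in $\mathcal{H}$ to write
\begin{equation*}
{\rm Tr}\big[\mathcal{R}\otimes_{t}O^{(t)}_t\big]={\rm Tr}\big[e^{i\tilde{\mathcal{S}}}\,(O^{(0)}\rho)_0\otimes_{t\ge 1}O^{(t)}_t\big].
\end{equation*}
This absorbs $\rho_0$ into the slice-$0$ insertion as the single operator $\tilde O^{(0)}=O^{(0)}\rho$. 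Theorem \ref{th:theorem1}(c) then gives ${\rm tr}[\hat{T}\prod_t \tilde O^{(t)}(\epsilon t)]$. Since $\epsilon t=0$ for the first slice, $\tilde O^{(0)}(0)=O^{(0)}\rho$, and time ordering places it rightmost. The result is therefore ${\rm tr}[(\hat T\prod_{t\ge1}O^{(t)}(\epsilon t))\,O^{(0)}\rho]$, and cyclicity of ${\rm tr}$ in $h$ moves $\rho$ to the front, giving ${\rm tr}[\rho\,\hat T\prod_t O^{(t)}(\epsilon t)]$.

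For the second identity, first use $\mathcal{R}^\dag=e^{-i\tilde{\mathcal{S}}}\rho_0$. Here $e^{i\tilde{\mathcal S}}$ is unitary because it equals $\mathcal{V}^\dag e^{i\epsilon\mathcal P}\mathcal V$ by \eqref{eq:qainin}, and $\rho$ is Hermitian. Cyclicity then gives ${\rm Tr}[e^{-i\tilde{\mathcal{S}}}\,(\rho O^{(0)})_0\otimes_{t\ge1}O^{(t)}_t]$. Theorem \ref{th:theorem1}(d) yields the anti-time-ordered product, in which the $t=0$ operator $\rho O^{(0)}$ now sits leftmost. Cycling $\rho$ to the end produces ${\rm tr}[\bar T\prod_t O^{(t)}(\epsilon t)\,\rho]$.

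The only delicate point is the ordering convention for the $t=0$ factor. I must check that under $\hat T$ (respectively $\bar T$) the earliest-time operator is rightmost (respectively leftmost) in Theorem \ref{th:theorem1}. I must also check that a product $O^{(0)}\rho$ inserted on a single slice is treated as one operator in its given internal order. I would verify both directly from the tensor-network contraction of $e^{i\epsilon\mathcal P}$ for $N=2$, where the identity reduces to ${\rm tr}[\rho\,O^{(1)}(\epsilon)O^{(0)}]$.
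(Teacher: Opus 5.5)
Your proposal is correct and is essentially the paper's argument: the corollary is stated as a direct consequence of Theorem~\ref{th:theorem1}, obtained by absorbing $\rho_0$ into the slice-$0$ insertion (as $O^{(0)}\rho$ for $\mathcal{R}$, or $\rho O^{(0)}$ for $\mathcal{R}^\dag=e^{-i\tilde{\mathcal S}}\rho_0$) and then applying parts (c) and (d). The ordering convention you flag does hold: for $N=3$ one finds ${\rm Tr}[e^{i\tilde{\mathcal S}}A\otimes B\otimes C]={\rm tr}[C(2\epsilon)B(\epsilon)A]$, and Theorem~\ref{th:theorem1} holds for arbitrary, not necessarily Hermitian, slice operators, so treating $O^{(0)}\rho$ as a single insertion is legitimate.
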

This result makes use of $\tilde{\mathcal{S}}$ and relates extended mean values with standard mean values of (anti) time- ordered operators. We can state a similar result for the QA regarding transitions. 
   \begin{corollary}\label{cor:wigthmaninout}\textbf{In-Out Time-ordered Wightman functions}. Replacing the SQA in spacetime states with the QA leads to in-out Wightman functions:
   \small
   \begin{subequations}
        \begin{align}
       {\rm Tr}\big[|\psi\rangle_0 \langle \varphi|e^{i\mathcal{S}} \otimes_{t=0}^{N-1}O_t^{(t)}\,\big]&=\langle \varphi,T|\,\hat{T}\smallprod\nolimits_{t=0}^{N-1} O^{(t)}(\epsilon t) |\psi\rangle\\
        {\rm Tr}\big[(|\psi\rangle_0 \langle \varphi| e^{i\mathcal{S}})^\dag \otimes_{t=0}^{N-1}O_t^{(t)}\,\big]&=\langle \psi|\,\bar{T}\smallprod\nolimits_{t=0}^{N-1} O^{(t)}(\epsilon t) |\varphi,T\rangle\,,
   \end{align}
   \end{subequations}
   \normalsize
   with $|\psi,T\rangle=e^{iTH}|\psi\rangle$. 
\end{corollary}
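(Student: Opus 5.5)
The plan is to reduce Corollary \ref{cor:wigthmaninout} to the first identity of Theorem \ref{th:theorem1}, exactly as Corollary \ref{cor:wigthman} follows from the third one. The key observation is that the operator $|\psi\rangle_0\langle\varphi|$ acts only on the slice $h_0$. It can therefore be absorbed into the operator inserted at $t=0$. By cyclicity of ${\rm Tr}$,
\begin{equation}
{\rm Tr}\big[|\psi\rangle_0\langle\varphi|\,e^{i\mathcal{S}}\otimes_{t}O^{(t)}_t\big]={\rm Tr}\big[e^{i\mathcal{S}}\,\big(O^{(0)}|\psi\rangle\langle\varphi|\big)_0\otimes_{t\geq1}O^{(t)}_t\big].
\end{equation}
Here we use that $\otimes_t O^{(t)}_t\,|\psi\rangle_0\langle\varphi|$ is again a product over slices, with the slice-$0$ factor replaced by $O^{(0)}|\psi\rangle\langle\varphi|$.

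Applying Theorem \ref{th:theorem1}(a) gives ${\rm tr}\big[U(T)\,\hat T\prod_t \tilde O^{(t)}(\epsilon t)\big]$, where $\tilde O^{(0)}=O^{(0)}|\psi\rangle\langle\varphi|$ and $\tilde O^{(t)}=O^{(t)}$ otherwise. Since $\tilde O^{(0)}(0)$ carries the earliest time label, time ordering places it rightmost. The trace then becomes ${\rm tr}\big[U(T)\,\hat T\prod_{t\ge1}O^{(t)}(\epsilon t)\,O^{(0)}|\psi\rangle\langle\varphi|\big]=\langle\varphi|U(T)\,\hat T\prod_t O^{(t)}(\epsilon t)|\psi\rangle$. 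Using $U(T)=e^{-iTH}$ and $\langle\varphi,T|=(e^{iTH}|\varphi\rangle)^\dagger=\langle\varphi|e^{-iTH}$, this is the claimed in-out amplitude.

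For the second identity I would avoid a separate computation and use ${\rm Tr}[A^\dagger B]=\overline{{\rm Tr}[A B^\dagger]}$. Take $B=\otimes_t O^{(t)}_t$, so that $B^\dagger=\otimes_t O^{(t)\dagger}_t$. The first identity applied to the operators $O^{(t)\dagger}$ gives $\overline{\langle\varphi,T|\hat T\prod O^{(t)\dagger}(\epsilon t)|\psi\rangle}$. Taking the adjoint reverses the operator order and turns $\hat T$ into $\bar T$, which yields $\langle\psi|\bar T\prod O^{(t)}(\epsilon t)|\varphi,T\rangle$. A direct route via Theorem \ref{th:theorem1}(b) is also possible: $(|\psi\rangle_0\langle\varphi|e^{i\mathcal{S}})^\dagger=e^{-i\mathcal{S}}|\varphi\rangle_0\langle\psi|$, and the same absorption step applies, but the conjugation argument is cleaner.

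The main obstacle is bookkeeping rather than anything conceptual. One must check carefully that the slice-$0$ operator indeed lands at the extreme right of the time-ordered product, including the convention for how $\hat T$ treats ties, since only one operator sits per slice. One must also check that the sign conventions in $U(T)$ and $|\psi,T\rangle$ match. I would verify both on the $N=1$ or $N=2$ tensor-network picture of Figure \ref{fig:tntheorem}, where $e^{i\mathcal{S}}$ reduces to the explicit SWAP and $e^{-i\epsilon H}$ network.
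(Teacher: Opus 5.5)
Your proposal is correct and is essentially the paper's argument: the paper states this as a direct consequence of Theorem \ref{th:theorem1}, obtained by absorbing the boundary operator $|\psi\rangle\langle\varphi|$ into the slice-$0$ insertion and then using $U(T)=e^{-iTH}$ and $\langle\varphi,T|=\langle\varphi|e^{-iTH}$. For the second identity the natural route in the paper is the $e^{-i\mathcal{S}}$ line of Theorem \ref{th:theorem1} applied to $e^{-i\mathcal{S}}|\varphi\rangle_0\langle\psi|$; your complex-conjugation shortcut is an equivalent way to reach the same result, and since only one operator sits on each slice, no ties in $\hat T$ ever arise.
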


Hence, we have recovered Wightman functions by considering expectation values of operators inserted at different time slices. 
It is now interesting to draw a parallel with traditional quantum states: take the example where Alice and Bob have a qubit each. We have $\rho_{AB}=\sum_{i,j=0}^3\rho_{ij}P^{(i)}_A\otimes P^{(j)}_B$ for $P^{(i)}$ normalized Pauli operators. Thus the global quantum state is completely determined by all the correlators $\rho_{ij}={\rm tr}\big[\rho\, P^{(i)}_A\otimes P^{(j)}_B\,\big]$, with the tensor product indicating a spacelike separation. Similarly, consider two qubits and two times. The corresponding spacetime state can be written as
\begin{equation}
\mathcal{R}=\!\!\!\sum_{i_0,j_0,i_1,j_2}R_{i_0j_0i_1j_1}\,P^{(i_0)}_{A,0}\otimes P^{(j_0)}_{B,0}\otimes P^{(i_1)}_{A,1}\otimes P^{(j_1)}_{B,1}
\end{equation}
 with $R_{i_0j_0i_1j_1}={\rm Tr}\big[\mathcal{R}\, P^{(i_0)}_{A,0}\otimes P^{(j_0)}_{B,0}\otimes P^{(i_1)}_{A,1}\otimes P^{(j_1)}_{B,1}\,\big]$ a tensor playing a similar role to $\rho_{ij}$ in the standard formalism. These correlators are completely determined by Corollary \ref{cor:wigthman} yielding  $$R_{i_0j_0i_1j_1}={\rm tr}\big[\rho_{AB}\, \big(P_A^{(i_1)}\otimes P_B^{(j_1)}\big)\,(\epsilon)\,P_A^{(i_0)}\otimes P_B^{(j_0)}\,\big]\,,$$ 
 which is fixed by the quantum state $\rho_{AB}$ and the Hamiltonian. Conversely, one can consider the problem of extracting the state and Hamiltonian from $\mathcal{R}$, which can be solved by means of Corollary \ref{cor:schrodheis}. 
 These considerations show that 
the Definition \ref{def:spacetimest} of spacetime state provides the only generalization of quantum state whose timelike correlators are time-ordered timelike Wightman functions. In other words, spacetime states
 satisfying Corollary \ref{cor:wigthman} for a given Hilbert space $\mathcal{H}$, initial state and Hamiltonian are \emph{unique}.

On the other hand, Corollary \ref{cor:wigthmaninout} becomes particularly useful when considering transitions. In particular, if no operator is inserted, one recovers \emph{propagators}. \\

Another interesting comment concerns correlators which correspond to projectors:
\begin{equation}\label{eq:KDquasiprobability}
    {\rm Tr}[\mathcal{R}\, |n\rangle \langle n| \otimes |m,\epsilon\rangle \langle m,\epsilon|]
=\langle \psi|m\rangle\langle m|n\rangle\langle n|\psi\rangle\,,
\end{equation}
where we have considered a projector $|n\rangle \langle n|$ at time $0$ and $|m\rangle\langle m|$ at the next time step, with $\rho=|\psi\rangle \langle \psi|$. Here the indices $n,m$ label different families of projectors resolving the identity ($\sum_{n}|n\rangle\langle n|=\sum_m |m\rangle \langle m|=\mathbbm{1}$ with in general $\langle n|m\rangle\neq \delta_{nm}$). We have thus recovered the \emph{Kirkwood-Dirac (KD) quasiprobability} $Q_{KD}[n,m]=\langle \psi|m\rangle\langle m|n\rangle\langle n|\psi\rangle$ \cite{kirkwood1933quantum,dirac1945analogy}.
Notice that $Q_{KD}$ is not a genuine probability distribution precisely because  the projectors may not commute. On the other hand,
$\sum_n Q_{KD}[n,m]=|\langle m|\psi\rangle|^2$ with $\sum_{n,m} Q_{KD}[n,m] \lambda_m=\langle \psi|O|\psi\rangle$ for $O=\sum_m \lambda_m |m\rangle \langle m|$,
so the marginals of $Q_{KD}$ can still define genuine probabilities and reproduce standard expectation values following Born rule. This mirrors the behavior of $\mathcal{R}$, which is not by itself a standard quantum state, but whose marginals can be. Moreover, Eq.\ \eqref{eq:KDquasiprobability} shows that $\mathcal{R}$
 contains $Q_{KD}$ as the expectation value of a Hermitian observable in $\mathcal{H}$. Therefore if $\mathcal{R}$ were a standard quantum state, then $Q_{KD}$ would be a genuine probability distribution.\\

Let us also add a few comments regarding the Euclidean time case. Here, one can define a Euclidean quantum action $e^{-\mathcal{S}_E}=e^{i\epsilon\mathcal{P}} e^{-\mathcal{K}}$ such that thermal correlators are recovered, instead of temporal ones \cite{diaz2025spacetime}. In other words, operators evolve under imaginary time.
In particular, if we take definition \ref{def:spacetimest}, perform a Wick rotation $H\to -iH$ and choose $\rho=e^{-\beta H}/Z$, for $Z={\rm tr}[e^{-\beta H}]$ and $\beta=\epsilon N$, one finds that the spacetime state is given directly by
\begin{equation}\label{eq:Reuclid}
   \mathcal{R}_E= \frac{1}{Z}e^{-\mathcal{S}_E}\,.
\end{equation}
 Just as before we recover the standard thermal state from $\rho={\rm Tr}_{t\neq 0}[\mathcal{R}_E]$ while ${\rm Tr}[e^{-\mathcal{S}_E}]=Z$. Similarly, if a small imaginary component is added to time, and the limit of large $T$ is considered, one can formally replace $\mathcal{R}\to e^{i\mathcal{S}}/Z$ to recover zero temperature Wightman functions, i.e., associated with the ground state of the  Hamiltonian.

\subsection{Pure spacetime states and  subsystems}\label{sec:pureststates}

Given the central role of spacetime states in the formalism, and in order to compare SQM with other proposals, we now study some of their fundamental properties.
While most of the results reviewed in the previous subsection were already presented in \cite{diaz2023spacetime,diaz2025spacetime}, this subsection develops new results showing that spacetime states satisfy several properties that generalize basic features of quantum states, while also providing a clear interpretation of the regimes in which these features break down.

Let us begin by defining ``pseudo-pure spacetime states'' as those corresponding to closed systems throughout their complete \emph{history}, namely pure initial states and unitary evolution. In the notation of Definition \ref{def:spacetimest}, these are spacetime states with $\rho^2=\rho$. In what follows, we simply denote as pure spacetime state any $\mathcal{R}$
generated by a pure initial state and unitary closed evolution (not a rank-one  projector on $\mathcal{H}$).
We first characterize this class of spacetime states, and then explain how subsystems and arbitrary spacetime subregions can be described by reduced spacetime states obtained through partial traces over $\mathcal{R}$. In this way, reduced spacetime states define the spacetime analogue of mixed states. We will also discuss how coherences can arise. \\

\emph{Pure spacetime states}. Spacetime states of the form $\mathcal{R}=|\psi\rangle _0\langle \psi| e^{i\tilde{\mathcal{S}}}$ correspond to closed systems and admit the Jordan decomposition
    \begin{equation}\label{eq:decompositionpix}
\mathcal{R}=\mathop{{\textstyle\bigotimes}}\limits_{t=0}^{N-1} |\psi(\epsilon t)\rangle \langle \psi(\epsilon t)|+X\,,
    \end{equation}
    with $\Pi={\textstyle\bigotimes}_{t=0}^{N-1} |\psi(\epsilon t)\rangle \langle \psi(\epsilon t)|$ a projector and for $X$ satisfying the idempotent condition $X^N=0$ and $\Pi X=X\Pi=0$. 
   This decomposition shows that $\mathcal{R}$ is generally non-diagonalizable. Moreover, we remark that while $\Pi$ has the form of a tensor product in time of the standard evolved states, the additional orthogonal term $X$ makes the spacetime state richer than a simple product history. We derive Eq.\ \eqref{eq:decompositionpix} in the Appendix \ref{app:spacetimest}.

    As a direct consequence, the spectrum of $\mathcal{R}$ corresponds to a non-degenerate eigenvalue $1$ and an eigenvalues $0$ with degeneracy $d^{N}-1$. In particular, all pseudo-entropies, the direct generalization of entropies to  non-Hermitian operators \cite{harper2025non}, vanish:
    \begin{equation}\label{eq:vanishent}
        S_\alpha(\mathcal{R})=0\,,
    \end{equation}
for $ S_\alpha(\mathcal{R})=\frac{1}{1-\alpha}\log({\rm Tr}[\mathcal{R}^\alpha])$ the pseudo-Rényi entropies. Equivalently, one finds for pseudo-purities ${\rm Tr}[\mathcal{R}^k]=1$ for all positive integers $k$. 
As we explain below, this definition of entropy works as a witness of openness of a quantum system, detecting  (i) if the system is open in the sense of standard statistical mixture of quantum states, with e.g., ${\rm Tr}[\mathcal{R}_\rho^{N+1}]={\rm tr}[\rho^2]$ (ii) if the system is not evolving unitarily due to a global entangling unitary evolution.

It is also worth noting that $\mathcal{R}^\dag \mathcal{R}=\rho_0^2=\rho^2\otimes \mathbbm{1}... \otimes \mathbbm{1}$ showing that for pure states the singular values of $\mathcal{R}$ are either $1$, with degeneracy or $d^{N-1}$ or $0$. We thus see that entropies defined via singular value decomposition (SVD) are not vanishing for pure spacetime states. Interestingly,  
this is a manifestation of $\mathcal{R}$ containing causal information among time-slices, independently if the system is closed. 
Considering Corollary \ref{cor:wigthman} we can write an unequal time commutator as
\begin{equation}
    {\rm tr}[\rho [B(t_B),A(t_A)]]={\rm Tr}[(\mathcal{R}-\mathcal{R}^\dag)A_{t_A}B_{t_B}]
\end{equation}
where on the r.h.s. $A_{t_A}$ and $B_{t_B}$  commute for $t_A\neq t_B$ and on the l.h.s. the operators are in the Heisenberg picture (with $t\equiv \epsilon t$ i.e., we measure time in units of $\epsilon$ for ease of notation). Let us recall that in general for unequal time commutators $[B(t_B),A(t_A)]\neq 0$ is a signature of causality, as if $A$ and $B$ initially commute (e.g., correspond to two separated systems) after some time $[A,B(t)]\neq 0$ indicates signaling between the two parts. In our current considerations $A$ and $B$ are general so we must have ${\rm Tr}[\mathcal{R}^\dag\mathcal{R}]>1$ as it follows from the 
 Cauchy-Schwarz  inequality 
 \begin{equation}\label{eq:imagitivboundweak}
     \begin{split}
         |  {\rm tr}[\rho [B(t_B),A(t_A)]]|&\leq ||A||\, ||B||\, ||\mathcal{R}-\mathcal{R}^\dag||\\&=||A||\, ||B||\, \sqrt{2\big({\rm Tr}[\mathcal{R}^\dag \mathcal{R}]-1\big)}\,.
     \end{split}
 \end{equation}
 Here $||. ||$ denotes the Hilbert-Schmidt norm. 
 Below we discuss a tighter version of this bound that corresponds to restricting $\mathcal{R}$ to the spacetime region of interest. 
 Let us also recall that for $N=2$ the quantity $||\mathcal{R}-\mathcal{R}^\dag||$ has been recently introduced in the related context of the entanglement-in-time approach \cite{milekhin2025observable} and denoted as \emph{imagitivity} (see Section \ref{sec:timelikeent}). Therein the authors also applied the H\"older’s inequality leading to the concept of $p$-imagitivity, which corresponds to take different Schatten $p$-norms of $\mathcal{R}-\mathcal{R}^\dag$. The authors also proposed the use of pseudo-entropies as timelike entanglement quantifiers. \\

In summary, pure spacetime states correspond to trace one operator with null pseudo entropy ($S_\alpha(\mathcal{R})=0$) but non-trivial singular values ($S_\alpha(\frac{1}{d^N-1}\sqrt{\mathcal{R}^\dag\mathcal{R}})\neq 0$). The first is a statement about \emph{closeness of the history} of the quantum system, the second tells us that $\mathcal{R}$ contains causal information.

Notice also that in contrast to the pure case, the euclidean spacetime state $\mathcal{R}_E$ of Eq.\ \eqref{eq:Reuclid} is a normal operator, namely $[\mathcal{R}_E,\mathcal{R}_E^\dag]=0$. This means that it can be diagonalized in an orthonormal basis, as discussed in \cite{diaz2021path}. Considering their exponential form, it is straightforward to show that these Euclidean spacetime states satisfy a variational principle of stationary quantum action  based on the von-Neumann pseudo-entropy \cite{diaz2025spacetime}.  \\

\begin{figure}[t!]
    \centering
    \includegraphics[width=\linewidth]{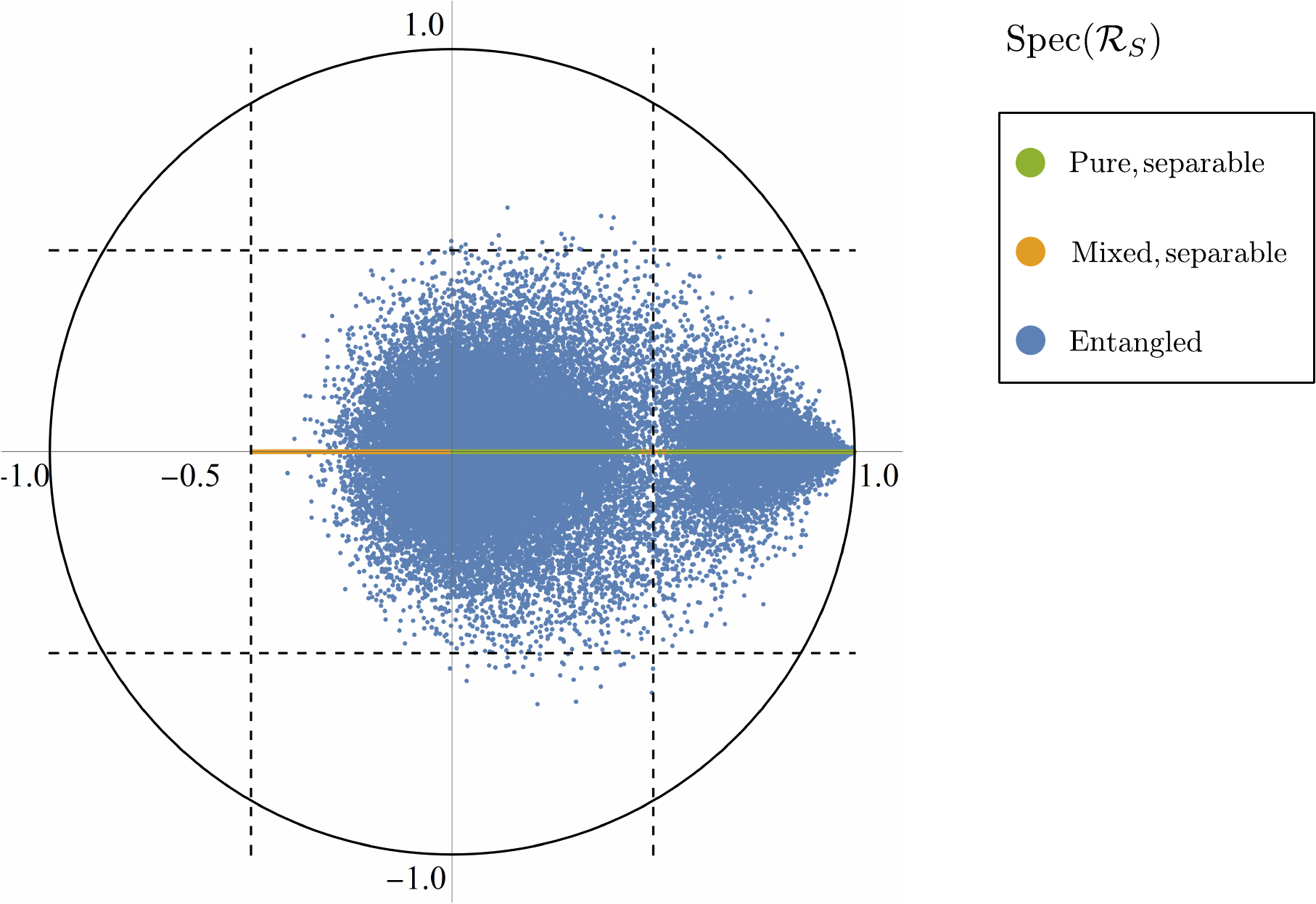}
    \caption{Spectrum in the complex plane of $\mathcal{R}_S={\rm Tr}_E[\mathcal{R}]$ for  $S,E$ qubits and $N=2$. We sampled Haar randomly both global unitaries and  (i) Separable system-environment and pure state (green points)
  (ii) Separable system-environment and mixed state (orange points) (iii) General (entangled) system-environment state (blue points). For each scenario a total of $10^4$ points have been sampled. In case (i) the spectrum is restricted to the real interval $(0,1)$ while in case (ii) to $(-1/2,1)$. In the general case (iii) the spectrum is complex. }
    \label{fig:qubitspectrum}
\end{figure}

\emph{Reduced spacetime states}. Given that Wightman functions are obtained via the traces of 
Corollary \ref{cor:wigthman}, if one is only interested in operators of a  particular ``subsystem'' a reduced spacetime state can be defined as $\mathcal{R}_A={\rm Tr}_B[\mathcal{R}_{AB}]$. Just as in standard QM, the latter satisfies   
${\rm Tr}[\mathcal{R}_A O_A]={\rm Tr}[\mathcal{R}_{AB}O_A]$ for $\mathcal{H}=\mathcal{H}_A\otimes \mathcal{H}_B$. Let us remark that here the partition $A-B$ can be anything, from a traditional system-environment partition to arbitrary regions in spacetime, including a single system separated in two time intervals and the system at a given time with the rest, as in Corollary \ref{cor:schrodheis}. In particular, the latter example shows that marginals of $\mathcal{R}$ can be standard quantum states.

 Once partial traces are considered, the spectral properties of spacetime states become non-trivial with negative and complex eigenvalues naturally arising as we show in Figure \ref{fig:qubitspectrum}. In Appendix \ref{app:spacetimest} we  develop a simple analytical representation of pseudo purities allowing us to prove that
\begin{equation}
    |{\rm Tr}[\mathcal{R}^k]|\leq 1\,.
\end{equation}
This condition also implies $|\lambda|\leq 1$ for every eigenvalue $\lambda$ of
$\mathcal{R}$: any eigenvalue with modulus greater than one would force the purities to become unbounded for arbitrarily large $k$. 
Moreover, using the result of Appendix \ref{app:spacetimest} we show that under a bipartition $A-B$ the isospectrality condition holds, namely, for the nonzero spectra
  \begin{equation}
        \text{spec}(\mathcal{R}_{A})= \text{spec}(\mathcal{R}_{B})\,,
    \end{equation}
which implies 
$S_\alpha(\mathcal{R}_{A})=S_\alpha(\mathcal{R}_{B})$.
In ordinary QM, such an isospectrality property is tied to the fact that a pure bipartite state is a rank-one projector. That is, the Schmidt coefficients determine the spectra of both reduced density matrices. It is therefore notable that an analogous property holds for pure spacetime states across arbitrary bipartitions, even though these objects are generally non-Hermitian and need not be projectors in the usual sense.

Moreover, when $\mathcal{R}_{A}$ (or equivalently $\mathcal{R}_{B}$) is diagonalizable, this isospectrality can be used to decompose the pure spacetime state as
\begin{equation}\label{eq:jointschmidt}
    \mathcal{R}=|\Psi\rangle \langle \Phi|+X_{AB}\,,
\end{equation}
with ${\rm Tr}_{A}[X_{AB}]={\rm Tr}_B[X_{AB}]=0$ and with the pair of states $|\Psi\rangle, |\Phi\rangle$ admitting a \emph{joint Schmidt-like decomposition}:
\begin{equation}
     |\Psi\rangle=\sum_\nu \sqrt{\lambda_\nu}|\nu\rangle_A |\nu\rangle_B\,,\quad  |\Phi\rangle=\sum_\nu \sqrt{\lambda^\ast_\nu}|\tilde{\nu}\rangle_A |\tilde{\nu}\rangle_B
\end{equation}
where the bases are biorthogonal, \begin{equation} _A\langle \tilde{\nu}'|\nu\rangle_A=_B\langle\tilde\nu'|\nu\rangle_B=\delta_{\nu\nu'}\,,
 \end{equation} 
 and $\langle \Phi|\Psi\rangle=\sum_\nu \lambda_\nu={\rm Tr}[\mathcal{R}_{A(B)}]=1$. We see that the eigenvalues $\lambda_\nu$ of the marginals, which determine the corresponding pseudo entropies, are also linked to a Schmidt-like decomposition thus generalizing the standard bipartite case. Furthermore, the decomposition implies 
 \begin{equation}
    {\rm Tr}[\mathcal{R}O_{A(B)}]=\langle \Phi|O_{A(B)}|\Psi\rangle\,.
\end{equation}
This shows that if the global spacetime state is pure we can interpret the expectation values of local operators as \emph{weak values} \cite{aharonov1988result}. Equivalently, the non-orthogonal projector $|\Psi\rangle \langle \Phi|$
(projecting in the direction of $|\Psi\rangle$ vectors expanded as $\alpha |\Psi\rangle+\beta |\Phi_\perp\rangle$, for $\langle \Phi| \Phi_\perp\rangle=0$) can be interpreted as a \emph{biorthogonal purification} of $\mathcal{R}_{A(B)}$. As a matter of fact, when the marginals collapse to standard quantum states (see examples below) we can take $|\Phi\rangle\to |\Psi\rangle$ and, with partitions of equal size, $|\Psi\rangle=|\sqrt{\mathcal{R}}_A\rangle \rangle$ the standard purification. 
In Appendix \ref{app:spacetimest} we also discuss how the decomposition \eqref{eq:jointschmidt} holds under the weaker condition of $\mathcal{R}_{A(B)}$ admitting a compatible Jordan decomposition. We conjecture that a decomposition of the form \eqref{eq:jointschmidt} holds in general. \\

Let us now discuss a simple but particularly relevant case that will allow us 
to make direct contact with other formalisms:  quantum channels for two times ($N=2$).

\begin{theorem}\label{th:quantumchannel}
    The spacetime state corresponding to an arbitrary quantum channel $\mathcal{E}$ for two times is given by \begin{equation}
\begin{split}
     \mathcal{R}_S&={\rm Tr}_E[(\rho\otimes |0\rangle_E\langle 0|)_0 e^{i\tilde{\mathcal{S}}_{SE}}]= \rho_0 J(\mathcal{E})
\end{split}
\end{equation} for $J(\mathcal{E})=\sum_{i,j}|i\rangle \langle j|\otimes \mathcal{E}(|j\rangle \langle i|)$ the Jamiolkowski matrix.\\
\end{theorem}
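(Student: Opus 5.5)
The plan is to characterize $\mathcal{R}_S$ through its two-time correlators rather than computing the partial trace directly. For $N=2$ the operators $A_0\otimes B_1$, with $A,B\in\mathcal{L}(h_S)$, span $\mathcal{L}(h_S^{\otimes 2})$. Two operators on $\mathcal{H}_S$ with the same values of ${\rm Tr}[\,\cdot\,A_0\otimes B_1]$ for all $A,B$ therefore coincide. It is then enough to check that ${\rm Tr}[\mathcal{R}_S\,A_0 B_1]={\rm Tr}[\rho_0 J(\mathcal{E})\,A_0B_1]$ for all $A,B$.

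First, I fix the dilation. By Stinespring, every channel $\mathcal{E}$ on $h_S$ can be written as $\mathcal{E}(Y)={\rm tr}_E[U(Y\otimes|0\rangle\langle 0|)U^\dag]$. Here $U=U(\epsilon)=e^{-i\epsilon H_{SE}}$ is a joint unitary, and the environment is taken large enough. I then take $H_{SE}$ as the Hamiltonian entering $\tilde{\mathcal{S}}_{SE}$.

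Next comes the left-hand side. By definition of the reduced spacetime state, ${\rm Tr}[\mathcal{R}_S A_0B_1]={\rm Tr}[\mathcal{R}_{SE}\,(A\otimes\mathbbm{1}_E)_0(B\otimes\mathbbm{1}_E)_1]$. Here $\mathcal{R}_{SE}=(\rho\otimes|0\rangle\langle0|)_0e^{i\tilde{\mathcal{S}}_{SE}}$ is a genuine spacetime state on the joint system. Corollary \ref{cor:wigthman}, applied to the $SE$ system with $N=2$, gives
\begin{equation*}
{\rm tr}_{SE}\big[(\rho\otimes|0\rangle\langle 0|)\,U^\dag(B\otimes\mathbbm{1})U\,(A\otimes\mathbbm{1})\big].
\end{equation*}
Here the time ordering places the later operator $B(\epsilon)$ to the left. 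Using cyclicity, this equals ${\rm tr}_{SE}[(B\otimes\mathbbm{1})\,U(A\rho\otimes|0\rangle\langle0|)U^\dag]$. Taking the trace over $E$ turns this into ${\rm tr}_S[B\,\mathcal{E}(A\rho)]$.

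For the right-hand side, insert $J(\mathcal{E})=\sum_{ij}|i\rangle\langle j|\otimes\mathcal{E}(|j\rangle\langle i|)$. This gives
\begin{equation*}
{\rm Tr}[\rho_0J(\mathcal{E})A_0B_1]=\sum_{ij}\langle j|A\rho|i\rangle\,{\rm tr}[B\,\mathcal{E}(|j\rangle\langle i|)].
\end{equation*}
By linearity of $\mathcal{E}$ and $\sum_{ij}|j\rangle\langle j|A\rho|i\rangle\langle i|=A\rho$, this equals ${\rm tr}[B\,\mathcal{E}(A\rho)]$. The two sides therefore agree for all $A,B$, which proves the identity.

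The steps are elementary. The main point of care is the bookkeeping of operator orderings: which factor $\rho_0$ multiplies, and where the time-ordered $B(\epsilon)$ sits relative to $A$. These must match the index convention $|i\rangle\langle j|\otimes\mathcal{E}(|j\rangle\langle i|)$ of $J$. If the orderings do not match, one obtains the transpose or adjoint variant instead. As a cross-check, I would also verify the result directly in the diagrammatic form $e^{i\tilde{\mathcal{S}}}=U_1\,\mathrm{SWAP}\,U_1^\dag$. Here $\mathcal{V}=\mathbbm{1}_0\otimes U^\dag_1$, and $\mathrm{SWAP}$ factorizes as $\mathrm{SWAP}_S\otimes\mathrm{SWAP}_E$. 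Tracing $E_0$ and $E_1$ then contracts $|0\rangle\langle0|$ into the environment input of $U$, which produces the Stinespring form of $\mathcal{E}$ on slice $1$.
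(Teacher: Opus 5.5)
Your proof is correct, but it takes a different route from the paper. You characterize $\mathcal{R}_S$ weakly: the products $A_0\otimes B_1$ span $\mathcal{L}(h_S^{\otimes 2})$, so it suffices to match ${\rm Tr}[\,\cdot\,A_0B_1]$. You then show that both sides equal ${\rm tr}[B\,\mathcal{E}(A\rho)]$; for the left side you combine Corollary~\ref{cor:wigthman}, applied to the joint $SE$ spacetime state, with a Stinespring dilation.

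The paper instead computes the partial trace over $E_0,E_1$ explicitly in a basis $|k_0k_1\rangle_E$, starting from $e^{i\tilde{\mathcal{S}}_{SE}}=(\mathbbm{1}\otimes U)\,\text{SWAP}_{SE}\,(\mathbbm{1}\otimes U^\dag)$. This yields the intermediate Kraus form $\rho_0\,\text{SWAP}\sum_k E_k\otimes E_k^\dag$ with $E_k=\langle k|U|0\rangle$, which it then rewrites as $\rho_0 J(\mathcal{E})$. That is essentially the cross-check you sketch at the end.

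Your argument has several advantages:
\begin{itemize}
\item It is independent of the basis and of the choice of dilation.
\item It makes the physical content transparent, namely the two-time correlator of a measure--evolve--measure sequence.
\item It gives the marginals in Eq.~\eqref{eq:marginalschannel} immediately, by setting $A$ or $B$ to $\mathbbm{1}$.
\item It mirrors the uniqueness reasoning the paper itself uses for Theorem~\ref{th:TABgeneral}.
\end{itemize}
Its only extra dependency is Corollary~\ref{cor:wigthman}, which for $N=2$ reduces to a one-line SWAP-trace identity.

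The paper's direct computation exhibits the operator without a duality step. It also produces the explicit representation $J(\mathcal{E})=\text{SWAP}\sum_k E_k\otimes E_k^\dag$, which is reused later: for convex mixtures in Eq.~\eqref{eq:convextoJ} and for $\mathcal{R}_{\text{ext},S}$ in Eq.~\eqref{eq:extquantumchannel}.
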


Let us clarify the statement of the theorem. If we consider a composite system $S+E$ in an initial state $\rho\,\otimes\, |0\rangle_E\langle 0|$ undergoing global unitary evolution, the global spacetime state is given by $(\rho\otimes |0\rangle_E\langle 0|)_0\, e^{i\tilde{\mathcal{S}}_{SE}}$. The reduced spacetime state of the system is then the one in Theorem \ref{th:quantumchannel}. 
Notice that $J(\mathcal{E})$ is replacing the role of $e^{i\tilde{\mathcal{S}}}$ when the channel is not unitary. In fact,  we can also write 
\begin{equation}
   J(\mathcal{E})=\text{SWAP}\sum_k E_k \otimes E_k^\dag\,, 
\end{equation}
 for $E_k=\langle k|U|0\rangle$ the Kraus operators. In particular, one recovers $J(\text{Ad}_U)=e^{i\tilde{\mathcal{S}}}$ for a unitary channel with $E_k\equiv U$  and $\text{Ad}_U(.)=U(.)U^\dag$. 
The fixed time marginals of $\mathcal{R}_S$ yield
\begin{equation}\label{eq:marginalschannel}
    {\rm Tr}_{1}[\mathcal{R}_S]=\rho\,,\quad {\rm Tr}_{0}[\mathcal{R}_S]=\mathcal{E}(\rho)\,,
\end{equation}
for $\mathcal{E}(.)=\sum_k E_k (.) E_k^\dag$ the quantum channel. These relations can be easily verified and generalize Corollary \ref{cor:schrodheis}. Moreover, for pure initial states we can state the following result.

\begin{lemma}\label{lemma:specquantumchannels}
For a pure initial state the non-vanishing spectrum of $\mathcal{R}_S$ satisfies
    \begin{equation}
        \text{Spec}(\mathcal{R}_S)=\text{Spec}(\mathcal{E}(|\psi\rangle \langle \psi|))\,,
    \end{equation}
    implying $S_\alpha(\mathcal{R}_S)=S_\alpha[\mathcal{E}(|\psi\rangle \langle \psi|)]$.
\end{lemma}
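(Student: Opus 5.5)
The plan is to reduce the nonzero spectrum of $\mathcal{R}_S$ to that of an operator which visibly factorizes into a trivial projector on the first slice times $\mathcal{E}(|\psi\rangle\langle\psi|)$ on the second slice. By Theorem \ref{th:quantumchannel}, for $\rho=|\psi\rangle\langle\psi|$ we have $\mathcal{R}_S=P\,J(\mathcal{E})$. Here $P=|\psi\rangle\langle\psi|\otimes\mathbbm{1}$ is an orthogonal projector on $h_0\otimes h_1$.

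First I will use the elementary fact that $AB$ and $BA$ have the same nonzero spectrum, counted with algebraic multiplicity. Taking $A=P$ and $B=PJ(\mathcal{E})$, and using $P^2=P$, this gives $\text{Spec}(PJ(\mathcal{E}))=\text{Spec}(PJ(\mathcal{E})P)$ on the nonzero part. Equivalently, $\mathcal{R}_S$ maps everything into the range of $P$ and annihilates nothing there that $PJP$ does not. So only the compression of $J(\mathcal{E})$ to the subspace $|\psi\rangle\otimes h_1$ matters.

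Next I compute this compression explicitly. From $J(\mathcal{E})=\sum_{i,j}|i\rangle\langle j|\otimes\mathcal{E}(|j\rangle\langle i|)$ we get
\begin{equation}
P\,J(\mathcal{E})\,P=|\psi\rangle\langle\psi|\otimes\sum_{i,j}\langle\psi|i\rangle\langle j|\psi\rangle\,\mathcal{E}(|j\rangle\langle i|)\,.
\end{equation}
By linearity of $\mathcal{E}$ the sum equals $\mathcal{E}\big(\sum_{i,j}\psi_j\psi_i^\ast|j\rangle\langle i|\big)=\mathcal{E}(|\psi\rangle\langle\psi|)$. Hence $PJ(\mathcal{E})P=|\psi\rangle\langle\psi|\otimes\mathcal{E}(|\psi\rangle\langle\psi|)$. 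The nonzero spectrum of this operator is exactly that of $\mathcal{E}(|\psi\rangle\langle\psi|)$, with multiplicities, since the first factor is a rank-one projector. This proves the spectral identity.

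For the entropies, the pseudo-Rényi entropies depend only on ${\rm Tr}[\mathcal{R}_S^\alpha]=\sum_\lambda\lambda^\alpha$ over the nonzero eigenvalues, counted with algebraic multiplicity (Jordan blocks do not contribute beyond their eigenvalues). The equality of nonzero spectra therefore gives $S_\alpha(\mathcal{R}_S)=S_\alpha[\mathcal{E}(|\psi\rangle\langle\psi|)]$. For integer $\alpha$ this can also be checked directly: cycling the trace gives ${\rm Tr}[(PJP)^k]={\rm tr}[\mathcal{E}(|\psi\rangle\langle\psi|)^k]$.

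The only delicate point is keeping track of algebraic versus geometric multiplicities, since $\mathcal{R}_S$ is non-normal. I will handle it by noting that the $AB$/$BA$ statement holds at the level of characteristic polynomials, $\lambda^{m}\det(\lambda-AB)=\lambda^{n}\det(\lambda-BA)$. This controls exactly the quantities entering the pseudo-entropies.
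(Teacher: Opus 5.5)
Your argument is correct, but it takes a different route from the paper's.

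\textbf{The paper's route.} It rewrites $\mathcal{R}_S=\sum_j|\psi\rangle\langle j|\otimes\mathcal{E}(|j\rangle\langle\psi|)$ and derives the power recursion $\mathcal{R}_S^2=(\mathbbm{1}\otimes\mathcal{E}(|\psi\rangle\langle\psi|))\,\mathcal{R}_S$. From this it gets ${\rm Tr}[\mathcal{R}_S^k]={\rm tr}[\mathcal{E}(|\psi\rangle\langle\psi|)^k]$ for all positive integers $k$. It then concludes that the nonzero spectra agree because all power sums agree, a step that implicitly uses Newton's identities.

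\textbf{Your route.} You use $\det(\lambda-AB)=\det(\lambda-BA)$ to replace $PJ(\mathcal{E})$ by its compression $PJ(\mathcal{E})P=|\psi\rangle\langle\psi|\otimes\mathcal{E}(|\psi\rangle\langle\psi|)$. This gives the whole characteristic polynomial in one step, $\det(\lambda-\mathcal{R}_S)=\lambda^{d(d-1)}\det[\lambda-\mathcal{E}(|\psi\rangle\langle\psi|)]$ with $d=\dim h$, so algebraic multiplicities come out automatically and no moment argument is needed. Since $\mathcal{R}_S$ agrees with $PJP$ on the range of $P$, you also get the eigenvectors $|\psi\rangle\otimes|\phi_\alpha\rangle$ directly, where $|\phi_\alpha\rangle$ are the eigenvectors of $\mathcal{E}(|\psi\rangle\langle\psi|)$.

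\textbf{A free generalization.} Your approach extends immediately to mixed states. Take $A=\rho_0^{1/2}$ and $B=\rho_0^{1/2}J(\mathcal{E})$, with $\rho$ possibly singular. Then the nonzero spectrum of $\rho_0J(\mathcal{E})$ equals that of $\rho_0^{1/2}J(\mathcal{E})\rho_0^{1/2}$, which is Hermitian because $J(\mathcal{E})$ is. This explains the real spectra for product system--environment initial states in Figure~\ref{fig:qubitspectrum}.

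\textbf{What the paper's route buys.} Its moment computation works directly with the traces ${\rm Tr}[\mathcal{R}_S^k]$ that define the pseudo-purities. That matches how pseudo-entropies are handled elsewhere in the paper.
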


Interestingly, this lemma confirms the intuition provided before, whereby the entropy of the spacetime state is a direct quantifier on how much the initial pure state got mixed by the quantum channel, with the pseudo entropy reducing to the standard entropy $S_\alpha[\mathcal{E}(|\psi\rangle \langle \psi|)]$.

Notice that the previous lemma does not imply that $\mathcal{R}_S$ is a standard quantum state. As a matter of fact the corresponding spacetime state takes the form $\mathcal{R}_S=\sum_i |\psi\rangle \langle i| \otimes \mathcal{E}(|i\rangle \langle \psi|)$ (see Appendix \ref{app:spacetimest}) which is not Hermitian in general. Only in the limit where the channel destroys all coherences, i.e., $\mathcal{E}(|i\rangle \langle \psi|)\simeq \delta_{i,\psi} \mathcal{E}(|\psi\rangle \langle \psi|)$ we find 
\begin{equation}\label{eq:collapsingdecoherence}
    \mathcal{R}_S\simeq |\psi\rangle \langle \psi| \otimes \mathcal{E}(|\psi\rangle \langle \psi|)\,.
\end{equation}
This is an important remark: a spacetime state can collapse to a standard quantum state when the decoherence across time is maximal (${\rm tr}_E[U^\dag|\psi\rangle \langle \phi|\otimes |0\rangle_E\langle 0| U]=0$ for $|\phi\rangle \neq |\psi\rangle$).  
Conceptually, this suggests that temporal coherences are precisely what make time different from space at the level of spacetime quantum states. When decoherence across time is sufficiently strong, spacetime states can instead behave as ordinary quantum states on a tensor-product Hilbert space, in this sense capturing a classical-like picture of spacetime. Departures from this fully decohered scenario then provide a mechanism to separate space and time. In Section \ref{sec:legget} we relate this observation to the violation of  Leggett-Garg inequalities \cite{leggett1985quantum}.

\begin{figure}[t!]
    \centering
    \includegraphics[width=\linewidth]{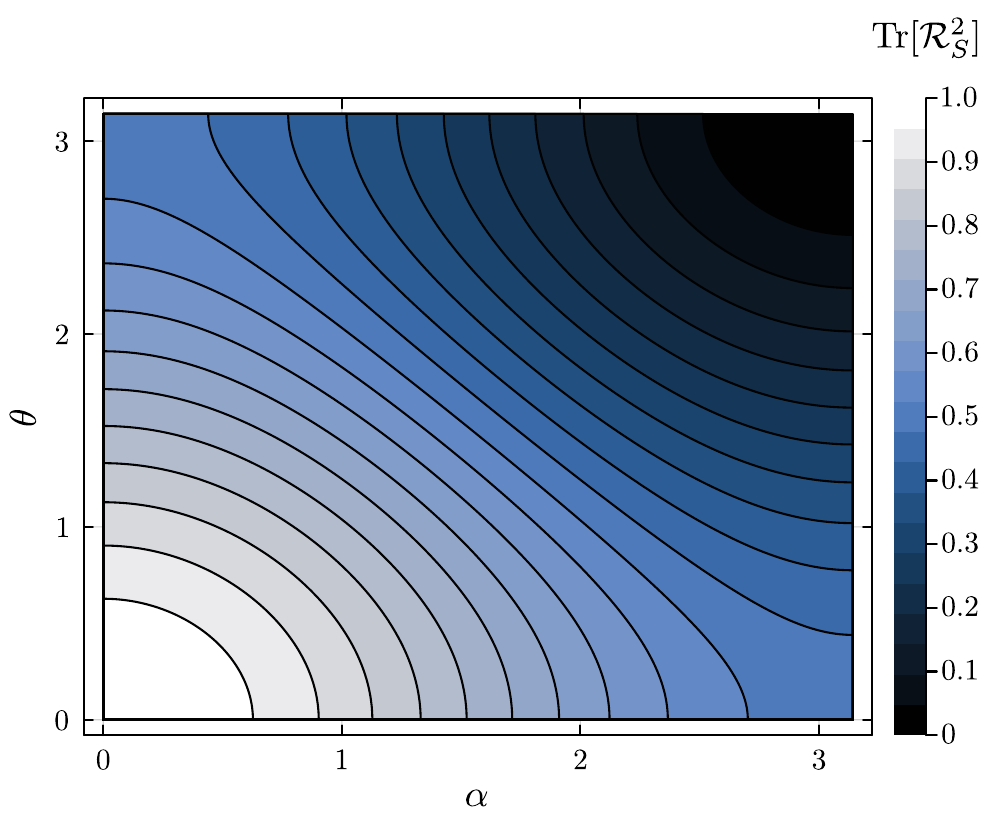}
    \caption{Contour plot of the pseudo-purity
   ${\rm Tr}[\mathcal{R}_S^2]$ with initial state  given by $|\psi(\theta)\rangle=\cos(\theta/4)|00\rangle+\sin(\theta/4)|11\rangle$ and  global unitary $U=(\mathbbm{1}_S\otimes |0\rangle_E\langle 0|+X_S\otimes |1\rangle_E\langle 1|)e^{-i\alpha Y_E/4}$. Here,  $\theta$ parameterizes the initial state entanglement while $\alpha$ parameterizes the entanglement generated by the evolution (and the final state entanglement). In particular, the line $\theta=0$ corresponds to the standard purity of the state of the system after the evolution. In this scenario the pseudo-purity is a real quantity for all values of the parameters since the complex eigenvalues of $\mathcal{R}_S$ come in conjugate pairs.}
    \label{fig:pseudopurity}
\end{figure}

Another  extreme case where a collapse to standard states could arise corresponds to the maximally mixed initial state $\rho_S=\mathbbm{1}/d$, so that 
\begin{equation}\label{eq:collapsingchannel}
    \mathcal{R}_S=J(\mathcal{E})/d\,.
\end{equation}
We thus have $\text{spec}(\mathcal{R}_S)=\text{spec}(J(\mathcal{E}))/d$ leading to real, in general non-positive, eigenvalues. Moreover, since $J$ is the partial transpose of the (unnormalized) Choi state corresponding to the channel, the normalized entanglement-negativity of the channel is the  negativity of $\mathcal{R}_S$. The condition for $\mathcal{R}_S$ becoming a quantum state corresponds to the channel having a  Positive Partial Transpose (PPT), i.e., $\mathcal{R}_S$ is a standard quantum state when the channel is PPT \cite{singh2022detecting}.

In general, if initial entanglement between the system and the environment is present, the simple form of Theorem \ref{th:quantumchannel} does not hold and the spectrum of $\mathcal{R}_S$ becomes complex. Thus, let us analyze a simple example in which the entropic properties of $\mathcal{R}_S$ are still completely intuitive. 
In Figure \ref{fig:pseudopurity} we describe a two qubit two times example for  a unitary $U=(\mathbbm{1}_S\otimes |0\rangle_E\langle 0|+X_S\otimes |1\rangle_E\langle 1|)e^{-i\alpha Y_E/4}$ which provides a Stinespring dilation of the bit flip channel, and initial state $|\psi(\theta)\rangle=\cos(\theta/4)|00\rangle+\sin(\theta/4)|11\rangle$.
In this simple setting one can straightforwardly show that the pseudo-purity is real and given by ${\rm Tr}[\mathcal{R}_S^2]={\rm tr}[\rho_S^2]+{\rm tr}[(\rho'_S)^2]-1=\frac{1}{4}(2+\cos(\theta)+\cos(\alpha))$  where $\rho'_S$ denotes the state of the system after the evolution. Thus the more mixed the reduced initial and final system states are. Equivalently, the more entangled the corresponding global pure states are, the smaller the pseudo-purity of the system spacetime state becomes. In particular, for maximally mixed initial and final states ($\theta=\alpha=\pi$) one obtains ${\rm Tr}[\mathcal{R}_S^2]=0$. \\

Let us now notice that Theorem \ref{th:quantumchannel} gives a straightforward interpretation of convex mixtures of spacetime states (we focus on the case $N=2$). First, if the initial state is mixed, namely $\rho=\sum_l q_l |\psi_l\rangle \langle \psi_l|$  with $q_l$ forming a probability distribution, we can write the corresponding spacetime state as $\mathcal{R}=\sum_l q_l |\psi_l\rangle_0\langle \psi_l| \text{SWAP} U \otimes U^\dag$. This is a convex mixture of closed spacetime states. There is, however, a new possible source of mixedness. Consider now a family of closed spacetime states $\mathcal{R}_k=\rho_0 \text{SWAP}\, U_k \otimes U^\dag_k$ where we consider different unitary evolutions $U_k$ but same initial state. We can define an associated convex mixture 
\begin{equation}\label{eq:convexmix}
   \mathcal{R}=\sum_k p_k \mathcal{R}_k\,, 
\end{equation}
where the $p_k$ form a probability distribution. For a pure initial state the convex mixture involves only closed spacetime states.
More generally, if $\rho_0$ is mixed, one can expand it in a pure-state ensemble and obtain a corresponding double-sum decomposition, so that $\mathcal{R}$ is again a convex mixture of closed spacetime states. 
Interestingly, we can rewrite 
 \begin{equation}\label{eq:convextoJ}
      \mathcal{R}=\rho_0\,\text{SWAP}\sum_k \sqrt{p_k\!}\,U_k \otimes  \sqrt{p_k\!}\,U^\dag_k=\rho_0 J(\mathcal{E})\,,
 \end{equation}
for $\mathcal{E}(\cdot)=\sum_k p_k U_k (\cdot) U_k^\dag$ the quantum channel corresponding to the Kraus operators $E_k=\sqrt{p_k\!}\,U_k$. This means that $\mathcal{R}={\rm Tr}_E[(\rho \otimes |e_0\rangle_E \langle e_0|)_0e^{i\tilde{S}_{SE}}]$ for $|e_0\rangle$ a reference state of the environment chosen to provide a Stinespring dilation of the channel $\mathcal{E}$. The corresponding quantum action is just the action of the closed system-environment system undergoing a global entangling unitary evolution. This in particular implies (see Eq.\ \eqref{eq:marginalschannel}) ${\rm Tr}_{t=1}[\mathcal{R}]=\sum_k p_k U_k\rho U_k^\dag$.
 These simple results show that convex mixtures of closed spacetime states with common initial state are themselves meaningful spacetime states, and that they  can be interpreted as reduced states of a larger closed and ``entangled history''. Their mixedness can thus be understood as arising from entangling evolution in a larger closed description. This closely parallels the standard density matrix formalism where we can think of convex mixtures as a consequence of quantum correlations present on a larger system. We have seen that in our current scenario there are two possible types of sources for mixedness, the standard one coming from the initial state, and the one coming from correlating evolution. One can easily verify that if both these sources are further combined by allowing different initial states we can still understand $\mathcal{R}$ as a marginal of a larger closed spacetime state. We explain this in detail in Appendix \ref{app:spacetimest}. \\

Let us briefly comment on the generalization of the spacetime state of an open system $\mathcal{R}_S$ to multiple time steps. If one assumes that the state is evolving under sequential (memoryless) quantum channels, then the simplest approach is to consider a different ancilla for each time step. This leads to a simple form for $\mathcal{R}_S$ which can be obtained in complete analogy with our $N=2$ derivation and that only involves Kraus operators. If instead the same ancilla is re-used more complicated objects could arise, reflecting the presence of a quantum memory. This would indicate that the formalism could be used to characterize quantum non-Markovianity. \\

Having discussed convex mixtures of spacetime states we briefly consider examples of spacetime coherences. In the density matrix formalism, coherences appear as non-diagonal terms in $\rho$, which for pure states correspond to quantum superpositions such as $|\psi\rangle=\alpha |0\rangle +\beta |1\rangle$. This type of coherence is directly reflected in the corresponding spacetime state, with e.g., $\mathcal{R}=\rho_0 e^{i\tilde{\mathcal{S}}}= |\alpha|^2 |0\rangle_0\langle 0| e^{i\tilde{\mathcal{S}}}+|\beta|^2 |1\rangle_0\langle 1| e^{i\tilde{\mathcal{S}}}+\alpha \beta^\ast|0\rangle_0\langle 1| e^{i\tilde{\mathcal{S}}}+\alpha^\ast \beta |1\rangle_0\langle 0| e^{i\tilde{\mathcal{S}}}$. The coherence is thus reflected in the appearance of  orthogonal spacetime transitions, such as $|0\rangle_0\langle 1| e^{i\tilde{\mathcal{S}}}$, multiplied by complex numbers. This is just the usual case seen from the spacetime perspective. On the other hand, in the current formalism we can consider  more general scenarios. As a concrete example, consider an ancilla $C$ so that $h=h_C\otimes h_S$ and a system, undergoing a global controlled operation $U=\sum_a |a\rangle_C \langle a| \otimes W_a$. The corresponding spacetime state is $\mathcal{R}=\sum_{a,b}(c_{ab}|a\rangle_C \langle b|\otimes \rho_S)_0 e^{i\tilde{\mathcal{S}}}$ where we assumed a separable initial state. Considering  $N=2$ for simplicity, we can write
$\mathcal{R}=\sum_{a,b}(c_{ab}|a\rangle_C \langle b|\otimes \mathbbm{1})_C\text{SWAP}_C (\rho \otimes \mathbbm{1})_S\text{SWAP}_S \,U \otimes U^\dag$. If we now expand the controlled unitary, and acting with $U^\dag$ on the initial bra of the control system we obtain 
\begin{equation}\label{eq:coherent}
    \mathcal{R}=\sum_{a,b,c}c_{ab}|ac\rangle_C \langle cb| \otimes \mathcal{R}_{S,cb}\,,
\end{equation}
 where we have defined $\mathcal{R}_{S, cb}=(\rho \otimes \mathbbm{1})_S\text{SWAP}_S W_b \otimes W^\dag_c$ acting on the system. Equation~ \eqref{eq:coherent} already has a coherent-like expansion with the non-diagonal $\mathcal{R}_{S, cb}$ (for $b\neq c$) indicating coherences across evolutions. The equation also contains all the information of the ancilla, but if we only care about its final state we can ``close the first ancilla leg'' corresponding to $(h_{C})_0$ to write ${\rm Tr}_{C_0} [\mathcal{R}]=\sum_{a,b,c}c_{ab}|a\rangle_C \langle b| \otimes \mathcal{R}_{S,ba}$. This now reflects a standard control-induced coherence but where the system terms are either spacetime states or their non-diagonal generalization. To be even more concrete, consider a standard qubit ancilla in the superposition $\alpha|0\rangle+\beta |1\rangle$. We obtain
\begin{equation}
\begin{split}
     {\rm Tr}_{C_0}[ \mathcal{R}]&=|\alpha|^2 |0\rangle_C\langle 0| \otimes \mathcal{R}_{S,00}+|\beta|^2 |1\rangle_C\langle 1| \otimes \mathcal{R}_{S,11}\\&+\alpha \beta^\ast|0\rangle_C\langle 1| \otimes \mathcal{R}_{S,01}+\alpha^\ast \beta |1\rangle_C\langle 0|\otimes \mathcal{R}_{S,10}\,.
\end{split}
\end{equation}
Let us finally notice that if we also project the ancilla onto the coherent basis $|+\rangle=\frac{|0\rangle+|1\rangle}{\sqrt{2}}$ (i.e., we are post-selecting the $|+\rangle$ reading), we can write $
        2{\rm Tr}_{C} [(\mathbbm{1}\otimes |+\rangle \langle +|)\mathcal{R}]=|\alpha|^2 \mathcal{R}_{S,00}+|\beta|^2 \mathcal{R}_{S,11}+\alpha \beta^\ast \mathcal{R}_{S,01}+\alpha^\ast \beta \mathcal{R}_{S,10}\,,
$
which is a coherent spacetime state without the ancilla ``bookkeeping''. 
In summary, evolution controlled by quantum states is reflected in global spacetime states that have a natural  coherent-like expansion.  Tracing over the full ancilla gives instead a statistical mixture of spacetime states as the ones we described above.

 It is also interesting to remark that as we considered arbitrary controlled operations, the previous discussion encompasses the \emph{quantum switch} \cite{chiribella2013quantum} $(\alpha|0\rangle+\beta |1\rangle) \otimes |\psi\rangle_S\to (\alpha|0\rangle \otimes U_1U_2|\psi\rangle_S+\beta |1\rangle \otimes U_2U_1|\psi\rangle_S)$ which corresponds to $W_1=U_1U_2$, $W_2=U_2U_1$. The quantum switch, a demonstrated resource for quantum computation, is a basic example of indefinite causal order \cite{castro2018dynamics}.
 We see that this scenario is naturally embedded in the formalism and induces a coherent behavior of spacetime states. Notice, however, that in principle one could explore the possibility of more general coherences involving $\mathcal{R}^\dag$ or,  even more generally, arbitrary foldings of time, corresponding to other indefinite causal structures. Let us also recall that the quantum switch falls into a particular class of indefinite causal orders and is unable to break causal inequalities \cite{araujo2015witnessing}. We make a few additional comments on this topic in Section \ref{sec:otocs} where we discuss general OTOCs and in the Conclusions \ref{sec:conclusions}. \\

\begin{figure}[t!]
    \centering
    \includegraphics[width=\linewidth]{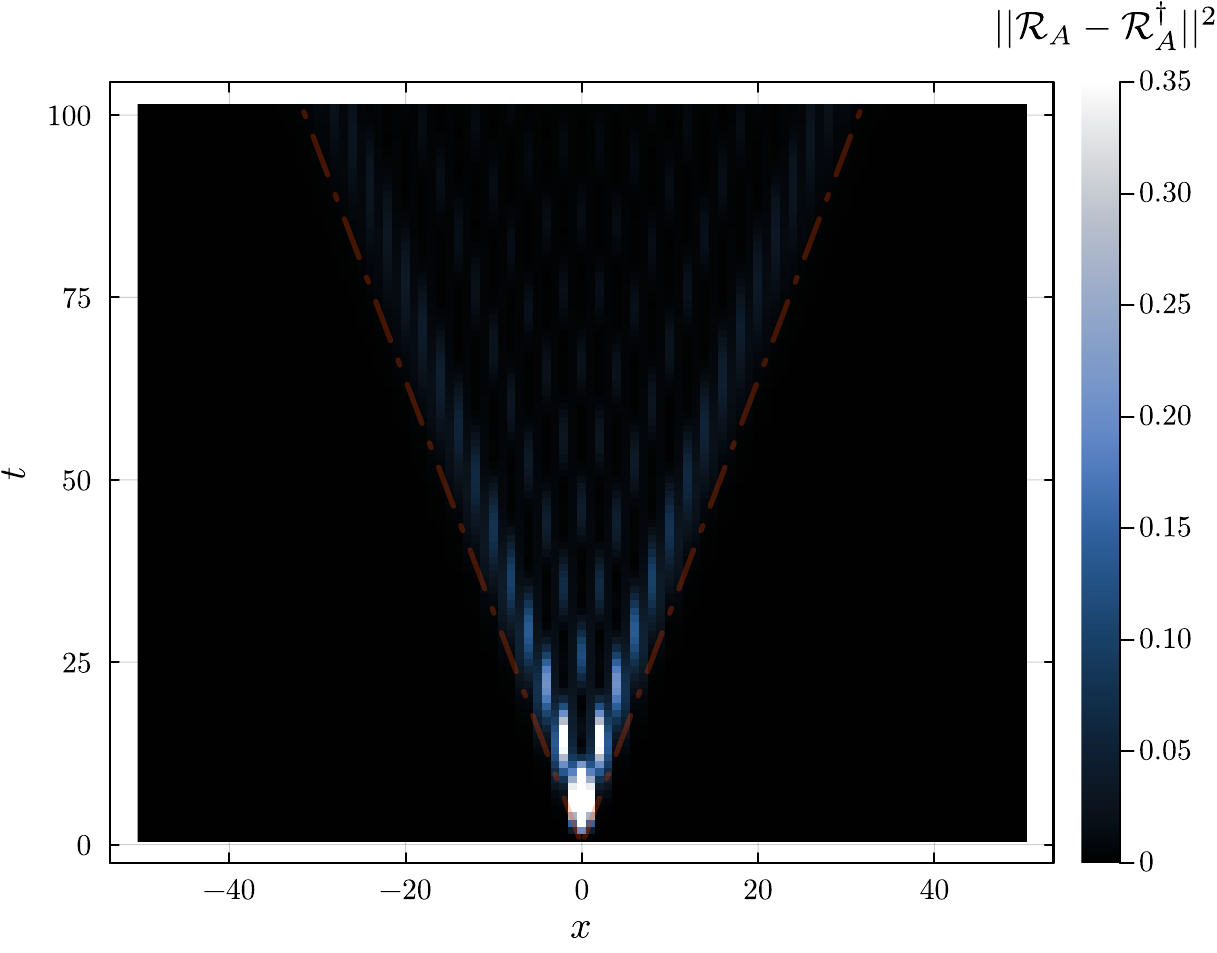}
    \caption{Contour plot of the imagitivity of the spacetime state $\mathcal{R}_A$ corresponding to the ground state of a critical antiferromagnetic Heisenberg model under bipartition $A=(0,0)\cup (t,x)$ $B=\text{rest}$. Here we are considering the same number of temporal and spatial slices $N=L+1=101$ with $\epsilon=0.05$. The red dash-doted lines correspond to the speed of low-energy modes associated with an effective infrared CFT description.}
    \label{fig:lightcone}
\end{figure}

Let us finally discuss a more general scenario beyond the simple system-environment bipartition. As we mentioned before, and as we used in Eq.\ \eqref{eq:coherent}, the bipartitions of spacetime we can take are completely arbitrary. It is interesting to leverage this in order to expand our previous comments about causality. In particular, when we write ${\rm tr}\big[\rho\, [O^{(2)}(t_2),O^{(1)}(t_1)]\big]={\rm Tr}\big[(\mathcal{R}-\mathcal{R}^\dag)O^{(1)}_{t_1}O^{(2)}_{t_2}\big]$ it is sufficient to restrict the spacetime state to the shared support (in spacetime) of the operators $O^{(1)}_{t_1}, O^{(2)}_{t_2}$, which we denote as $A$. So, using the Cauchy-Schwarz inequality as in Eq.\ \eqref{eq:imagitivboundweak} we obtain
 \begin{equation}\label{eq:imagitivbound}
     \begin{split}
         |{\rm tr}\big[\rho\, [O^{(2)}(t_2),O^{(1)}(t_1)]\big]|&\leq ||O^{(1)}|| \,||O^{(2)}||\, ||\mathcal{R}_A-\mathcal{R}_A^\dag||\,.
     \end{split}
 \end{equation}
As an example, consider a quantum system extended in space, e.g., a spin chain. One can then choose two arbitrary points in spacetime and quantify their causal relation through $||\mathcal{R}_{p,q}-\mathcal{R}^\dag_{p,q}||$, for $\mathcal{R}_{p,q}={\rm Tr}_{\overline{p \cup q}}[\mathcal{R}]$ with $p,q$ spacetime points. To illustrate a typical imagitivity behavior we consider the example of 
a one-dimensional chain under the critical antiferromagnetic Heisenberg Hamiltonian~\cite{cerezo2017factorization} 
\begin{equation}
    H=\sum_{i=-L/2}^{L/2}(X_iX_{i+1}+Y_iY_{i+1}+Z_iZ_{i+1})\,,
\end{equation} with open boundary conditions and $L+1$ sites. We then consider 
$p=(0,0)$, i.e., initial time slice and middle of the chain, and numerically compute the imagitivity of the spacetime state reduced to the points $p$ and $q$ for a generic $q$. For the initial state we consider the ground state of the Hamiltonian. The results are shown in Figure \ref{fig:lightcone} as a contour plot over all points $q=(t,x)$ within a time window $T=5$ for $\epsilon=0.05$ ($N=101$) and within the $L+1=101$ spatial sites. 
The imagitivity exhibits a light-cone-like causal structure, showing that $\mathcal{R}_{p,q}$ is essentially a standard density matrix for ``spacelike'' separations. 
The light-cone edge is
approximately concentrated within the effective speed determined by the low-energy regime: the energies of the first excited states at a given ``momentum'' $q$ and in the infinite chain limit are given by $\omega(q)=2\pi |\sin(q)|$ \cite{des1962spin}. Here $e^{iq}$ are the eigenvalues of the site translation operator along the chain. In the continuum limit one can expand around the gapless momentum $q=\pi+k$ to find $\omega(k)\approx v |k|$ with $v=2\pi$, consistently with a massless dispersion relation and with an effective CFT description. Let us stress, however, that the imagitivity is not determined solely by the low-energy sector of the spectrum, so it is not a priori clear that this should be the velocity controlling the causal edge, even if we find a good numerical agreement. 
Let us also notice that this causal behavior is in perfect agreement with results in the literature regarding Green functions in spin chains and their spacetime profile, usually described in relation to dynamical factors and neutron scattering experiments \cite{lee2026benchmarking}. In fact, a vanishing imagitivity between a pair of spacetime points, as the one we find outside the lightcone, also implies a vanishing Green function for that pair. 
We explain the details of our computation and numerical approximations in the Appendix \ref{app:numerics}.

\subsection{Forward and backward evolution}\label{sec:fbevolution}
In the previous, we introduced two types of QAs: the standard one $\mathcal{S}$ whose definitions only involves single step evolutions and $\tilde{\mathcal{S}}$ which defined spacetime states and also includes an evolution operator $U^\dag(T)$. Here we show that we can think of 
the extra factor $U^\dag(T)$ as arising from ignoring the backward in time evolution of an action involving both forward and backward in time evolutions. We discuss in Section \ref{sec:unifying} how this is related to a Schwinger–Keldysh contour and other formalisms.

Consider the Hilbert space $\mathcal{H}_{\text{ext}}\equiv h^{\otimes 2N}\simeq \mathcal{H}\otimes \mathcal{H}$, where for convenience we have defined a spacetime Hilbert space corresponding to $2N$ copies so that we may split it in two spacetime Hilbert spaces $\mathcal{H}$.  
We define the following time translation operator
\begin{equation}\label{eq:extendedP}
    e^{i\epsilon \mathcal{P}_{\text{ext}}}=e^{i\epsilon (\mathcal{P}\otimes \mathbbm{1}-\mathbbm{1}\otimes \mathcal{P})}\,\text{SWAP}_{N-1,N}\,.
\end{equation}
The operator is defined such that all operators on the first (second) copy of $\mathcal{H}$ are translated forward (backward) in time. The SWAP operator in the middle of the slices guarantees that a closed loop is considered. With this definition we can introduce the extended QA.

\begin{definition}\textbf{Extended Quantum action (EQA)}. \label{def:extendedqa}
\begin{equation}
    e^{i\mathcal{S}_{\text{ext}}}= e^{i\epsilon \mathcal{P}_{\text{ext}}}\,e^{-i\epsilon (\mathcal{K}\otimes \mathbbm{1}-\mathbbm{1}\otimes \mathcal{K})}\,U^\dag_{N-1}U_{N}\,.
\end{equation}
\end{definition}
Let us add a small comment on the factor $U^\dag_{N-1}U_{N}$. Its only purpose is to cancel the last evolution operator  of the forward evolution and the first of the backward evolution. This will allow us to easily match the convention of other schemes in the literature but can also be omitted without changing the main properties of $\mathcal{S}_{\text{ext}}$. With this definition we can state the following result.

\begin{figure}[t!]
    \centering
    \includegraphics[width=\linewidth]{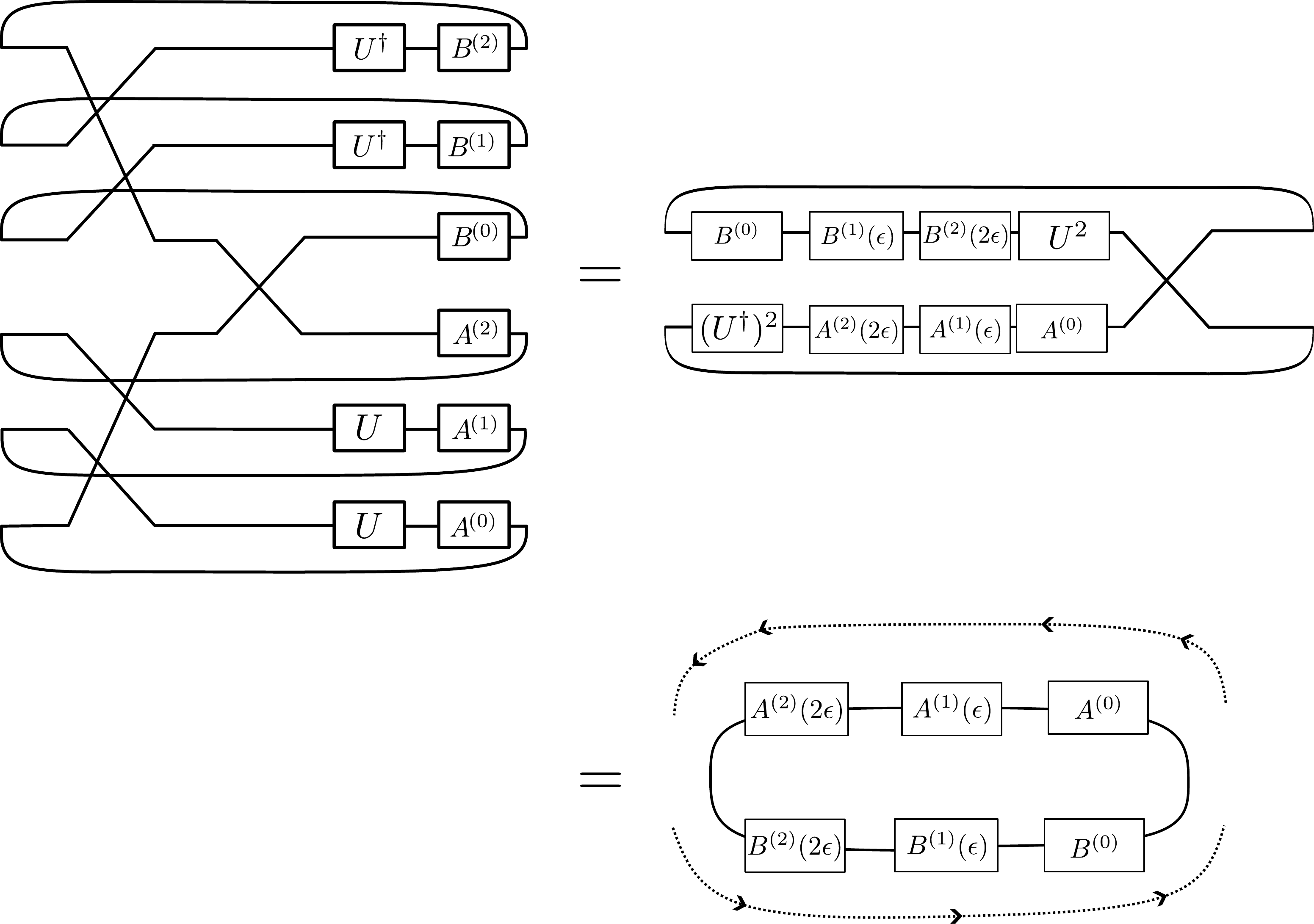}
    \caption{Graphical representation of Theorem \ref{th:extendedQA}. On the last part of the r.h.s. we have indicated  the direction of time (dotted lines) to emphasize that both a forward and backward evolution are involved. }
    \label{fig:theorem2}
\end{figure}
\begin{theorem}\label{th:extendedQA}
The correlators of the extended action yield        
        \begin{equation}
        \begin{split}
            &{\rm Tr}\left[e^{i\mathcal{S}_{\text{ext}}}\,\big(\otimes_{t=0}^{N-1}A^{(t)}_t\big)\otimes \big(\otimes_{t=0}^{N-1}B^{(t)}_t\big)\right]\\&={\rm tr}\left[\left(\bar{T}\smallprod\nolimits_{t=0}^{N-1} B^{(t)}(\epsilon t)\right)\,\left(\hat{T}\smallprod\nolimits_{t=0}^{N-1} A^{(t)}(\epsilon t)\right)\right]\,.
        \end{split}
        \end{equation}
    \end{theorem}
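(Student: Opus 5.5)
Our plan is to reduce Theorem \ref{th:extendedQA} to the same index-contraction argument behind Theorem \ref{th:theorem1}, treating the $2N$ slices of $\mathcal{H}_{\text{ext}}$ as one closed loop of SWAPs. Label the slices $0,\dots,N-1$ (forward copy) and $N,\dots,2N-1$ (backward copy), where slice $N+t$ carries $B^{(t)}$. First we make the translation operator explicit. The factor $e^{i\epsilon\mathcal{P}}\otimes\mathbbm{1}$ is a product of SWAPs shifting labels upward on the first copy. The factor $\mathbbm{1}\otimes e^{-i\epsilon\mathcal{P}}$ shifts labels downward on the second copy. Together with $\text{SWAP}_{N-1,N}$, this composition is a single cyclic permutation of the $2N$ slices. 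To check this we track a basis ket $|i_0\dots i_{2N-1}\rangle$, or equivalently compute the adjoint action on single-slice operators $O_t$ as in Definition \ref{def:eip}. The cycle should run $0\to1\to\dots\to N-1\to N\to\dots$ and back to $0$, read in whichever direction is consistent with the backward labelling, i.e.\ forward along the first copy, across the middle SWAP, and then back down the second copy to slice $0$.

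Next we use the elementary identity underlying Theorem \ref{th:theorem1}. For a cyclic shift $C$ acting on a chain of $M$ slices and single-slice operators $X^{(k)}$,
\begin{equation}
{\rm Tr}\big[C\,\otimes_k X^{(k)}_k\big]={\rm tr}\big[X^{(k_M)}\cdots X^{(k_1)}\big],
\end{equation}
with the operators composed in the order in which the cycle visits the slices. We apply this with $X^{(k)}$ being the slice operators multiplied by the within-slice factors. The second factor $e^{-i\epsilon(\mathcal{K}\otimes\mathbbm{1}-\mathbbm{1}\otimes\mathcal{K})}U^\dag_{N-1}U_N$ is a tensor product over slices. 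It places $U(\epsilon)$ next to each $A^{(t)}$ and $U^\dag(\epsilon)$ next to each $B^{(t)}$, except at slice $N-1$, where it gives $U(\epsilon)U^\dag(\epsilon)=\mathbbm{1}$, and at slice $N$, where it gives $U^\dag U=\mathbbm{1}$. Using cyclicity of ${\rm tr}$, the trace becomes a single product around the loop. Up the forward branch it reads $U A^{(N-1)}U\cdots UA^{(0)}$, written so that later times sit to the left. Down the backward branch it reads $B^{(0)}U^\dag B^{(1)}\cdots U^\dag B^{(N-1)}$.

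Finally we regroup. Inserting $U^{\dagger\,t}U^t$ between neighbouring factors turns the forward string into $U(\epsilon(N-1))\,\hat{T}\prod_t A^{(t)}(\epsilon t)$. It turns the backward string into $\bar{T}\prod_t B^{(t)}(\epsilon t)\,U^\dag(\epsilon(N-1))$. The boundary operators then cancel cyclically, and we obtain ${\rm tr}[(\bar{T}\prod B)(\hat{T}\prod A)]$. The removed factors at slices $N-1$ and $N$ are exactly what makes the total forward evolution match the total backward evolution, so no leftover $U(T)$ appears. This is precisely the boundary term that distinguishes $e^{i\tilde{\mathcal{S}}}$ from $e^{i\mathcal{S}}$.

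We expect the main obstacle to be bookkeeping rather than any conceptual step. We need the orientation of the cycle produced by $e^{-i\epsilon\mathcal{P}}$ on the second copy combined with the middle SWAP, and we need whether each $U$ or $U^\dag$ lands to the left or right of its slice operator. We would settle both by checking the case $N=2$ explicitly, where the operator is a product of three SWAPs on four slices. We would then confirm that omitting $U^\dag_{N-1}U_N$ only shifts the power of $U$ in the middle, as the remark after Definition \ref{def:extendedqa} indicates.
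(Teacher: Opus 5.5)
Your proposal is correct and is essentially the paper's argument. The paper reverses the labelling of the backward copy so that $e^{i\epsilon\mathcal{P}_{\text{ext}}}$ becomes the ordinary $2N$-slice shift $e^{i\epsilon\mathcal{P}}$ (with $U^{\dagger}_{N-1}U_N\to U^{\dagger}_{N-1}U_{2N-1}$), and then repeats the contraction of Theorem~\ref{th:theorem1}, which is the same $2N$-cycle contraction you carry out directly.

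Two bookkeeping slips need fixing, though neither affects the result. First, in your own labelling (slice $N+t$ carries $B^{(t)}$) the cycle is $0\to\cdots\to N-1\to 2N-1\to 2N-2\to\cdots\to N\to 0$, not $N-1\to N$. Second, slice $N-1$ carries no $U$, so the forward string is $A^{(N-1)}UA^{(N-2)}\cdots UA^{(0)}$ without the leading $U$; this is exactly what your regrouped form $U(\epsilon(N-1))\,\hat{T}\prod_t A^{(t)}(\epsilon t)$ requires.
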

Interestingly,  we can consider this result as a version of Theorem \ref{th:theorem1} where times first goes forward and then returns to the initial slice. We depict an example in Figure \ref{fig:theorem2}. It is also instructive to notice that rearranging the times on the second copy of $\mathcal{H}$ in decreasing order starting from top $e^{i\epsilon \mathcal{P}_{\text{ext}}}$ becomes equal to the operator $e^{i\epsilon \mathcal{P}}$ for $2N$ times (so we may more explicitly understand Theorem \ref{th:extendedQA} in analogy with Theorem \ref{th:theorem1}). This is explained in Appendix \ref{app:spacetimest}.

Consider now that all the insertions in the backward direction are identities i.e., $B^{(t)}=\mathbbm{1}_t$. The corresponding trace is equal to the one we obtain with the SQA, thus leading directly to the following corollary.
    \begin{corollary}
    If we label $\mathcal{H}_{\text{ext}}\equiv \mathcal{H}_1 \otimes \mathcal{H}_2$, with $\mathcal{H}_2$ ($\mathcal{H}_1$) corresponding to backward (forward) in time evolution, we can write
        \begin{equation}
            e^{i\tilde{\mathcal{S}}}={\rm Tr}_{\mathcal{H}_2}[e^{i\mathcal{S}_{\text{ext}}}]\,,\quad e^{-i\tilde{\mathcal{S}}}={\rm Tr}_{\mathcal{H}_1}[e^{i\mathcal{S}_{\text{ext}}}]\,.
        \end{equation}
    \end{corollary}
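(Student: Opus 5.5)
The corollary is the partial trace of Theorem \ref{th:extendedQA} over one copy of $\mathcal{H}$. An operator $Y$ on $\mathcal{H}$ is fixed by all its correlators ${\rm Tr}[Y\,\otimes_t O^{(t)}_t]$ with product insertions, since products of local operators span $\mathcal{L}(\mathcal{H})=\otimes_t\mathcal{L}(h_t)$. So to prove $e^{i\tilde{\mathcal{S}}}={\rm Tr}_{\mathcal{H}_2}[e^{i\mathcal{S}_{\text{ext}}}]$, I will check that both sides give the same number against every product $\otimes_{t=0}^{N-1}A^{(t)}_t$ on $\mathcal{H}_1$.

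For the first identity, the defining property of the partial trace gives
\begin{equation}
{\rm Tr}_{\mathcal{H}_1}\big[{\rm Tr}_{\mathcal{H}_2}[e^{i\mathcal{S}_{\text{ext}}}]\,\otimes_t A^{(t)}_t\big]={\rm Tr}\big[e^{i\mathcal{S}_{\text{ext}}}\,(\otimes_t A^{(t)}_t)\otimes \mathbbm{1}_{\mathcal{H}_2}\big].
\end{equation}
\begin{itemize}
\item Apply Theorem \ref{th:extendedQA} with $B^{(t)}=\mathbbm{1}$. The anti-time-ordered factor reduces to the identity, leaving ${\rm tr}[\hat{T}\prod_t A^{(t)}(\epsilon t)]$.
\item By Theorem \ref{th:theorem1}(c), this equals ${\rm Tr}[e^{i\tilde{\mathcal{S}}}\otimes_t A^{(t)}_t]$.
\end{itemize}
Since the two operators agree on a spanning set, they are equal.

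The second identity follows the same way, inserting $\mathbbm{1}_{\mathcal{H}_1}\otimes(\otimes_t B^{(t)}_t)$ and taking $A^{(t)}=\mathbbm{1}$. The theorem then gives ${\rm tr}[\bar{T}\prod_t B^{(t)}(\epsilon t)]$, which by Theorem \ref{th:theorem1}(d) is ${\rm Tr}[e^{-i\tilde{\mathcal{S}}}\otimes_t B^{(t)}_t]$.

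The main point needing care is the labelling of times on $\mathcal{H}_2$. Theorem \ref{th:extendedQA} labels the backward-copy insertions $B^{(t)}_t$ so that slice $t$ of $\mathcal{H}_2$ carries Heisenberg time $\epsilon t$. The identification of ${\rm Tr}_{\mathcal{H}_1}[e^{i\mathcal{S}_{\text{ext}}}]$ with $e^{-i\tilde{\mathcal{S}}}$ therefore requires this same slice labelling, not the reversed ordering mentioned after the theorem. If the slices of $\mathcal{H}_2$ are indexed in the reversed order, the statement holds only up to the permutation reversing the slices. I would settle the convention by reading it off the definition of $e^{i\epsilon\mathcal{P}_{\text{ext}}}$, whose second copy uses $e^{-i\epsilon\mathcal{P}}$ and so translates backward. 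I would also check that the boundary factor $U^\dag_{N-1}U_N$ leaves no residual $U(T)$, as the matching with part (c) rather than part (a) of Theorem \ref{th:theorem1} requires.
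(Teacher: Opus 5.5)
Your proposal is correct and follows the paper's own argument: set $B^{(t)}=\mathbbm{1}$ (respectively $A^{(t)}=\mathbbm{1}$) in Theorem~\ref{th:extendedQA} and match the result with the SQA correlators of Theorem~\ref{th:theorem1}(c),(d). Your spanning-set remark makes explicit the uniqueness step the paper leaves implicit, and your labelling worry does not arise: Theorem~\ref{th:extendedQA} is already stated with slice $t$ of $\mathcal{H}_2$ carrying Heisenberg time $\epsilon t$, which agrees with the definition of $e^{i\epsilon\mathcal{P}_{\text{ext}}}$, and as stated it leaves no residual $U(T)$.
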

Notice that while on the r.h.s. all the time evolution operators correspond to a single time slice, after the partial trace the factor $U^\dag_0(T)$ emerges. We can also use this corollary to recover the spacetime states as 
\begin{equation}
   \mathcal{R}= {\rm Tr}_{\mathcal{H}_2}[\rho_0 e^{i\mathcal{S}_{\text{ext}}}]\equiv  {\rm Tr}_{\mathcal{H}_2}[\mathcal{R}_{\text{ext}}]\,,
\end{equation}
where we have defined $\mathcal{R}_{\text{ext}}=\rho_0 e^{i\mathcal{S}_{\text{ext}}}$. This shows that we might study the properties of spacetime states directly from $\mathcal{R}_{\text{ext}}$.  In addition, this extended version of the spacetime state satisfies
\begin{equation}\label{eq:Rextexpect}
    \begin{split}
          &{\rm Tr}\left[\mathcal{R}_{\text{ext}}\big(\otimes_{t=0}^{N-1}A^{(t)}_t\big)\otimes \big(\otimes_{t=0}^{N-1}B^{(t)}_t\big)\right]\\&={\rm tr}\left[ \rho\left(\bar{T}\smallprod\nolimits_{t=0}^{N-1} B^{(t)}(\epsilon t)\right)\,\left(\hat{T}\smallprod\nolimits_{t=0}^{N-1} A^{(t)}(\epsilon t)\right)\right]\,.
    \end{split}
\end{equation}
Similarly, we have
\begin{equation}
    \mathcal{R}^\dag={\rm Tr}_{\mathcal{H}_1}[\mathcal{R}_{\text{ext}}]\,,
\end{equation}
showing that both spacetime states with time and anti-time orderings can be recovered from the same object.

We discuss additional properties of spacetime states with ``folded'' evolution in Section \ref{sec:otocs} in relation to OTOCs, showing, in particular, that many properties of $\mathcal{R}$ hold for $\mathcal{R}_{\text{ext}}$. Let us here just notice that 
 for a quantum channel and $N=2$ one finds 
\begin{equation}\label{eq:extquantumchannel}
    \mathcal{R}_{\text{ext}, S}=(\rho \otimes \mathbbm{1}\otimes \mathbbm{1}\otimes \mathbbm{1})\,e^{i\epsilon \mathcal{P}_{\text{ext}}}\sum_k E_k\otimes \mathbbm{1}\otimes \mathbbm{1}\otimes E_k^\dag\,,
\end{equation}
which follows in complete analogy with Theorem \ref{th:quantumchannel}. \\

\def\arraystretch{1.4}
 \begin{table*}[t!]
 \footnotesize
     \centering
 \begin{tabular}{|c|c|c|}
     \hline
         Approach & Main objects & Relation to spacetime QM \\\hline\hline
         Spacetime quantum mechanics& Spacetime Hilbert space $\mathcal{H}=h^{\otimes N}$; $\mathcal{H}^{ \otimes 2}$  & {} \\
        (SQM; this work and \cite{diaz2023spacetime, diaz2021path, diaz2025spacetime}) & Quantum actions $\mathcal{S}$, $\tilde{\mathcal{S}}$, $\mathcal{S}_{\text{ext}}$  & - \\
          {} & Spacetime states $\mathcal{R}=\rho_0 e^{i\tilde{\mathcal{S}}}$, $\mathcal{R}_{\text{ext}}=\rho_0 e^{i\mathcal{S}_{\text{ext}}}$  & {} \\\hline\hline
           Path integral formulation & Classical trajectories & Basis $|\textbf{q}\rangle=\otimes_t |q_t\rangle$ of $\mathcal{H}$\\ 
        (PI; \cite{feynman1948space, schwinger1961brownian, keldysh2024diagram}) & Classical actions & Matrix elements of quantum actions\\ 
       {} & Feynman PIs & Evaluation of $\rm{Tr}[\rho_0e^{i\mathcal{S}}\dots]$ in the $|\textbf{q}\rangle$ basis \\ 
         {} & Schwinger-Keldyish PIs & Evaluation of $\rm{Tr}[\rho_0e^{i\mathcal{S}_\text{ext}}\dots]$ in the $|\textbf{q}\rangle$ basis\\ \hline
           Quantum states over time & Spacetime Hilbert space $\mathcal{H}$  & Same Hilbert space\\
         (QSOT; \cite{horsman2017can}) & Jamiolkowski matrix $J(\mathcal{E})=\sum_{i,j}|i\rangle \langle j|\otimes \mathcal{E}(|j\rangle \langle i|)$  & $J(\mathcal{E})\equiv e^{i\tilde{\mathcal{S}}}$ for quantum channels ($N=2$)\\ {} & States over time ($\mathcal{H}=h_A\otimes h_B$) $\rho_{AB}=J(\mathcal{E})\star \rho_A$  & Example 0 of $\star$-product: $\rho_{AB}=\mathcal{R}$\\
         {} & Star product $\star$  & Example 1 of $\star$-product: $\rho_{AB}=\frac{\mathcal{R}+\mathcal{R}^\dag}{2}$\\
         {} & {} & Example 2 of $\star$-product: $\rho_{AB}=\rho_0^{-1/2}\,\mathcal{R}\,\rho_0^{1/2}$\\\hline
         Pseudo density matrices & Spacetime Hilbert space $\mathcal{H}$  &  Same Hilbert space\\
        (PDMs; \cite{fitzsimons2015quantum, fullwood2024operator}) & Pseudo density matrix $\mathcal{R}_{\text{pdm}}$  &  
         $\mathcal{R}_{\text{pdm}}=\Phi_{\text{UT}}(\mathcal{R})$ for $\Phi_{\text{UT}}$ CPTP Unimodal twirl 
        \\\hline
         Superdensity operators  & $\mathcal{L}(\mathcal{H})$  & Operator space of $\mathcal{H}$\\
        (SO; \cite{cotler2018superdensity}) & Superdensity operator $\varrho:\mathcal{L}(\mathcal{H})\to \mathcal{L}(\mathcal{H})$  &  $ \varrho=\frac{1}{\dim(\mathcal{H})}\mathcal{M}(\mathcal{R}_{\text{ext}})$  \\ 
       {} & {}  &   for $\mathcal{M}$  partial transpose and realignment map \\ \hline
          Page and Wootters mechanism  & Hilbert space $h_{\text{PW}}=h_T\otimes h$  & Second quantization: $\text{\upshape  Sym} \oplus_{n=0}^\infty h^{\otimes n}_{\text{\upshape PW}}\cong \mathcal{H}$\\
        (PW; {\cite{page1983evolution, giovannetti2015quantum, boette2016system}}) & Universe operator $J$  & $J$ first quantization version of $\tilde{\mathcal{S}}$\\
          & Physical states $|\Psi\rangle_{\text{PW}}$ & Reduced single particle operators\\{}& {}& for one particle states and free theories\\\hline
          Timelike entanglement approach  & Two times Hilbert space $\mathcal{H}=h\otimes h$  & Spacetime Hilbert space for $N=2$ \\ \cite{milekhin2025observable} & $T_{AB}$  & $T_{AB}={\rm Tr}_{\,\overline{A \cup B}}\,[\mathcal{R}^\dag]$ \\{} & Entropies of $T_{AB}$  & Entropies of reduced spacetime states\\\hline
     \end{tabular}
     \caption{Emergence of different spacetime approaches from spacetime QM.}
     \label{tab:unified}
 \end{table*}
\normalsize

Let us  add that while considering expectation values  of projectors  for $\mathcal{R}$ led to $Q_{KD}$ (Eq.\ \eqref{eq:KDquasiprobability})
if one considers only projectors in Eq.\ \eqref{eq:Rextexpect} one recovers general \emph{decoherence functionals}   $d(\boldsymbol{\alpha},\boldsymbol{\beta})={\rm tr}[C^\dag_\alpha\rho C_\beta]
$
for $C_\alpha=\alpha_{t_0}(t_0)\alpha_{t_1}(t_1)\dots$ where the different operators are evolving in the Heisenberg picture. Each operator $\alpha_{t_i}$ is a projector corresponding to ``propositions'' about the system \cite{griffiths1984consistent, gell2010quantum, isham1994quantum}. That is, the decoherence functional is the central object in the  coherent history approach which postulates that $d(\boldsymbol{\alpha},\boldsymbol{\beta})$  is a fundamental entity from which other physical quantities may be derived. Then, a series of criteria (such as Hermiticity and positivity for diagonal entries) for what constitutes a valid physical decoherence functional can be introduced. Interestingly, the decoherence functional is also a central object in the proposal \cite{isham1994quantum} which makes use of the spacetime Hilbert space $\mathcal{H}\otimes \mathcal{H}$ to characterize decoherence functionals through a generalization of Gleason theorem of standard QM \cite{isham1994classification}. We refer the reader to \cite{diaz2025spacetime} for a comparison between the spacetime approach and \cite{isham1994quantum}.

\section{Unifying spacetime approaches}\label{sec:unifying}

Having presented the main formalism and explored important properties of spacetime states, we are now in a position to introduce the main unifying results of the manuscript. We will establish how different proposals to generalize QM to a spacetime symmetric form can be rederived from the general framework of spacetime QM described in Section \ref{sec:formalism}. Our unifying scheme is summarized in Table \ref{tab:unified} and the proofs of our theorems are explicitly spelled out throughout the text for ease of readability unless indicated otherwise.

\subsection{Path integral formulation}\label{sec:PIs}

The first spacetime approach to QM, and by far the most used and known, is Feynman's formulation \cite{feynman1948space}. Feynman's approach replaces the canonical quantum formalism with a ``sum over histories'' weighted by $e^{iS_{\text{cl}}}$, with $S_{\text{cl}}$ the classical action. Since the Lagrangian formulation makes spacetime symmetries manifest a spacetime approach to QM follows. Moreover, in the case of QFTs both space and time appear on equal footing under the PI formulation as the history or trajectory of a field corresponds to its configurations in spacetime.

The price to pay by using Feynman's approach is  to abandon the familiar Hilbert space structure. Notably, the spacetime QM formulation of Section \ref{sec:formalism} can provide PIs with a Hilbert space embedding, whereby particular evaluations of  ${\rm Tr}[e^{i\mathcal{S}}\mathcal{O}]$ are time-sliced PIs.
Before making direct contact with standard PIs let us provide a direct intuition on why it is so. Consider a generic standard Hilbert space $h$ with a basis $|i\rangle$. Then a product-in-time basis of $\mathcal{H}$ is $|i_0i_1\dots i_{N-1}\rangle={\textstyle\bigotimes}_{t=0}^{N-1} |i_t\rangle$. We can think of this state as representing a \emph{trajectory} $\textbf{i}=(i_0,i_1,...,i_{N-1})$. Then one can write a trace in $\mathcal{H}$ as 
\begin{equation}
    {\rm Tr}[e^{i\mathcal{S}}\mathcal{O}]=\sum_{\textbf{i}}\;\langle i_0i_1\dots i_{N-1}|e^{i\mathcal{S}}\mathcal{O} | i_0i_1\dots i_{N-1}\rangle\,.
\end{equation}
 This is already a ``sum over histories'' parameterized by $\textbf{i}$. 
 Moreover, the matrix elements of the exponential of the quantum action provide precisely the classical phase associated with each trajectory, so that the Feynman path integral emerges directly from the spacetime Hilbert-space formalism. We now show this explicitly.

\subsubsection{Feynman PI}

Let us first showcase how PIs emerge from the spacetime approach in the familiar example of a single particle. We stress that  the  considerations on spacetime QM discussed in Section \ref{sec:formalism} hold for infinite dimensional Hilbert spaces as long as the traces of the operators involved are well-defined \footnote{Several results of this subsection are based on \cite{diaz2021path}, where further mathematical details have been discussed}.

For a single quantum particle defined by $[q,p]=i$ the spacetime formalism is characterized by $N$ independent modes satisfying $[q_t,p_{t'}]=i \delta_{tt'}$. The copies in time of the Hilbert space correspond to independent modes at each instant. This defines a basis $\{|\textbf{q}\rangle=\otimes_t |q_t\rangle\}$ so that $\hat{q}_t|\textbf{q}\rangle=q_t|\textbf{q}\rangle$. 
Similarly, we might use the momentum basis $\{|\textbf{p}\rangle=\otimes_t |p_t\rangle \}$. The corresponding completeness relations in $\mathcal{H}$ can then be written as
\begin{equation}
    \int \prod_{t=0}^{N-1}dq_t|\textbf{q}\rangle \langle \textbf{q}|=\int \prod_{t=0}^{N-1}dp_t|\textbf{p}\rangle \langle \textbf{p}|=\mathbbm{1}\,,
\end{equation}
while the overlap between eigenstates is 
\begin{equation}\label{eq:trajoverlap}
    \langle \textbf{p}|\textbf{q}\rangle=\frac{1}{\sqrt{2\pi }^{N}}e^{-i\sum_{t=0}^{N-1} p_t q_t}\,.
\end{equation}
This shows that the spacetime Hilbert space of a single particle is essentially the space of possible trajectories. This is the first contact of the spacetime approach with the ``sum over histories'' of Feynman: in the spacetime approach the sum arises from completeness relations or, equivalently, particular spacetime trace evaluations.

Let us now discuss how one can make use of standard properties of bosonic modes to find an explicit expression for $\mathcal{S}$. The idea is to introduce Fourier in time modes, namely consider the Fourier transform in time of the original operators. This corresponds to a simple canonical/Bogoliubov transformation where the Fourier frequencies are $\omega_n= \frac{2\pi n}{T}$ and which 
can be used to write 
\begin{equation}\label{eq:Plegendre}
 \mathcal{P}=\sum_t p_t \dot{q}_t\,.
\end{equation}
Here we have defined a discrete derivative $\dot{q}_t=\sum_{t',n} \frac{i\epsilon\omega_n}{N} e^{i\omega_n (t-t')\epsilon}q_{t'}$ induced by the Fourier transformation (see Appendix \ref{app:PI} for details). One can easily verify that
\begin{equation}
    e^{i\epsilon \mathcal{P}}q_t e^{-i\epsilon \mathcal{P}}=q_{t+1}\,,    e^{i\epsilon \mathcal{P}}p_t e^{-i\epsilon \mathcal{P}}=p_{t+1}\,,
\end{equation}
in accordance with Definition~\ref{def:eip}. Notably, the logarithm of the concatenation of SWAPs have the form of the \emph{classical Legendre transform}. Moreover, for the time translations within slices we find $\mathcal{K}=\sum_t H_t$ where the sum over $t$ follows from the tensor product of the exponentials in Definition~\ref{def:translationwithin}. We also recall that $H_t$ denotes an operator acting as $H$  on $h_t$ and trivially otherwise such that $\mathcal{K}=H\otimes \mathbbm{1}\otimes ...+ \mathbbm{1}\otimes H \otimes ...+...+\mathbbm{1}\otimes \mathbbm{1} \otimes ... \otimes H$. In addition, considering that for time-independent Hamiltonians $[\mathcal{P},\mathcal{K}]=0$, we can write $\mathcal{S}=\epsilon \mathcal{P}-\epsilon \mathcal{K}$  leading directly to 
\begin{equation}\label{eq:actionparticle}
 \mathcal{S}= \epsilon\sum_t \left[p_t \dot{q}_t-\frac{p_t^2}{2m}-V(q_t)\right]\,,
\end{equation}
where we considered a standard Hamiltonian $H=\frac{p^2}{2m}+V(q)$ for concreteness. 
We see that \emph{the quantum action has the form of the action of classical mechanics in phase-space variables}. 
This is rather surprising: 
while in classical mechanics the action is a functional of possible trajectories, the quantum counterpart  $\mathcal{S}$ is an operator acting on $\mathcal{H}$ with the time index indicating on which Hilbert space $h_t$ each operator acts and defined to yield Wightman functions (Theorem \ref{th:theorem1}). 
A priori there is no apparent reason for these two objects to have the same functional form. A nice conceptual interpretation is obtained by developing the spacetime counterpart of classical mechanics, where the role of the action in both quantum and classical mechanics is to define the mismatch between translations across and within time slices. This result is developed in Section \ref{sec:classical} independently from the PI formulation.

We can now make direct contact with Feynman's approach via spacetime expectation values. For concreteness let us consider the transition $|q_i\rangle_0 \langle q_f|$ and an operator $\mathcal{O}=q_{t_1} q_{t_2}$. By using the completeness relation in the position basis we obtain
\begin{equation}\label{eq:twopointpi}
\begin{split}
       {\rm Tr}\big[|q_i\rangle_0 \langle q_f|e^{i\mathcal{S}}q_{t_1} q_{t_2}\big]= \int \prod_{t=1}^{N-1}dq_t\, \langle \textbf{q}_f|e^{i\mathcal{S}} |\textbf{q}_i\rangle\, q_{t_1} q_{t_2}\,,
\end{split}
\end{equation}
where we have defined $\textbf{q}_i=(q_i,q_1,\dots, q_{N-1})$ and $\textbf{q}_f=(q_f,q_1,\dots, q_{N-1})$. Interestingly, this expression has already a PI-like form, with a sum over histories inherited from the completeness relation and the insertion of $q_{t_1} q_{t_2}$. We also remark that on the l.h.s. $q_{t_1} q_{t_2}$ are operators while on the r.h.s. they are c-numbers. Our previous results also guarantee that this is equal to the two-point Wightman function, namely, Corollary \ref{cor:wigthman} states that
\begin{equation}\label{eq:twopoincorr}
    {\rm Tr}\big[|q_i\rangle_0 \langle q_f|e^{i\mathcal{S}}q_{t_1} q_{t_2}\big]=\langle q_f,T|\hat{T}q(\epsilon t_1)q(\epsilon t_2)|q_i\rangle\,,
\end{equation}
for any number of time slices. 
To make direct contact with Feynman's PIs we simply need to give an expression for the matrix elements of the exponential of the QA. Considering that its operator expression \eqref{eq:actionparticle} already has the form of the classical action  one can easily relate the matrix elements of $e^{i\mathcal{S}}$ 
with $e^{iS_{\text{cl}}}$  for small $\epsilon$. We show this explicitly in Appendix \ref{app:PI}. The result is: 
\small
\begin{equation}
\begin{split}
    \langle \textbf{q}_f|e^{i\mathcal{S}} |\textbf{q}_i\rangle
    \!&=\!\int \!\prod_{t=0}^{N-1}\frac{dp_t}{2\pi}\, e^{i\sum_t \epsilon \big[p_t\left(\frac{q_{t+1}-q_t}{\epsilon}\right)-\frac{p_t^2}{2m}-V(q_t)\big]\big|_{q_0=q_i}^{q_N=q_f}}\\ 
    &=\frac{1}{(\sqrt{2\pi i \epsilon/m})^N}\, e^{i\sum_t \epsilon\, \left(\frac{1}{2}m \dot{q}^2_t-V(q_t)\right)\big|_{q_0=q_i}^{q_N=q_f}}\,,
\end{split}
\end{equation}
\normalsize
which holds up to second order in $\epsilon^2$ and where we introduced the notation (for classical variables) $\dot{q}_t\equiv \frac{q_{t+1}-q_t}{\epsilon}$. We have thus recovered the classical action evaluated along the trajectory $\textbf{q}$ with $q_0=q_i$, $q_N=q_f$ from the matrix elements of the QA. This also means, that if we write $\mathcal{R}_{q_i,q_f}\equiv |q_i\rangle_{0} \langle q_f|\,e^{i\mathcal{S}}$ we have
\begin{equation}\label{eq:shadowPI}
   \langle \textbf{q} |\mathcal{R}_{q_i,q_f} |\textbf{q}\rangle\propto  e^{iS_{\text{cl}}}\,,
\end{equation}
for $S_{\text{cl}}$ the classical action evaluated in the trajectory $\textbf{q}$ with border conditions $q_i,q_f$ (we are omitting a delta $\delta(q_0-q_i)$  in the proportionality constant). This has a clear interpretation: the ``shadows'' of $\mathcal{R}_{q_i,q_f}$ in the classical-like basis of trajectories give the standard phase of the PI formulation.

Now, if we put this result back in \eqref{eq:twopointpi} we recover the time-sliced PI expression for the two point Wightman function of Eq.\ \eqref{eq:twopoincorr}. The main insight is that expressions of the form ${\rm Tr}[|q_i\rangle_0\langle q_f|e^{i\mathcal{S}}\dots]$, which define spacetime QM in Hilbert space, when evaluated in particular basis yield standard PIs. We can state the following result.
\begin{theorem} (Informal) Consider the spacetime position basis $\{|\textbf{q}\rangle=|q_0\rangle\otimes\cdots\otimes|q_{N-1}\rangle\}$, where each element represents a discrete trajectory 
$\textbf{q}=(q_0,\dots,q_{N-1})$. Then the following matrix elements of $e^{iS}$ reproduce the classical phase
\[
\langle \textbf{q}\,|\,e^{iS}\,|\textbf{q}\rangle \propto  e^{iS_{\rm cl}[\textbf{q}]}\,.
\]
 Moreover, spacetime traces evaluated in this basis become sums/integrals over trajectories weighted by $e^{iS_{\rm cl}[\textbf{q}]}$, i.e., standard Feynman path integrals.\\
\end{theorem}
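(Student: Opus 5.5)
The plan is to compute the matrix elements of $e^{i\mathcal{S}}$ in the trajectory basis directly from the factorized form $e^{i\mathcal{S}}=e^{i\epsilon\mathcal{P}}e^{-i\epsilon\mathcal{K}}$. I will not work with the Fourier-mode expression \eqref{eq:Plegendre}.

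\textbf{Step 1: reduce to a product of short-time propagators.} By Definition~\ref{def:eip}, the cyclic shift acts on the product basis as $e^{i\epsilon\mathcal{P}}|q_0\dots q_{N-1}\rangle=|q_{N-1}q_0\dots q_{N-2}\rangle$. By Definition~\ref{def:translationwithin}, $e^{-i\epsilon\mathcal{K}}=\otimes_t U(\epsilon)$ factorizes in time. Hence for any bra $\langle\textbf{q}'|$ and ket $|\textbf{q}\rangle$ one gets
\[
\langle \textbf{q}'|e^{i\mathcal{S}}|\textbf{q}\rangle=\prod_{t=0}^{N-1}\langle q'_{t+1}|U(\epsilon)|q_t\rangle ,
\]
with the cyclic identification $q'_{N}\equiv q'_0$. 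This is purely kinematical and holds for any $\epsilon$. For the diagonal element $\textbf{q}'=\textbf{q}$, it is a closed (periodic) discrete trajectory. For the boundary-condition version $\langle\textbf{q}_f|e^{i\mathcal{S}}|\textbf{q}_i\rangle$ used in \eqref{eq:twopointpi}, only the $t=0$ entries differ, so the chain runs from $q_0=q_i$ to $q_N=q_f$. This explains the endpoint conventions quoted above.

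\textbf{Step 2: evaluate each factor.} For $H=p^2/2m+V(q)$, split $U(\epsilon)=e^{-i\epsilon p^2/2m}e^{-i\epsilon V(q)}+\mathcal{O}(\epsilon^2)$ using Baker--Campbell--Hausdorff/Trotter. Then insert a momentum resolution and use the overlap $\langle q|p\rangle=e^{ipq}/\sqrt{2\pi}$, the single-slice version of \eqref{eq:trajoverlap}. This gives
\[
\langle q_{t+1}|U(\epsilon)|q_t\rangle=\int\frac{dp_t}{2\pi}\,e^{i\epsilon[p_t(q_{t+1}-q_t)/\epsilon-p_t^2/2m-V(q_t)]}\,(1+\mathcal{O}(\epsilon^2)).
\]
Taking the product over $t$ reproduces the phase-space form of the action, matching the operator identity \eqref{eq:actionparticle}. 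Performing each Fresnel Gaussian integral in $p_t$ then gives the factor $(2\pi i\epsilon/m)^{-1/2}e^{i\epsilon(\frac{m}{2}\dot q_t^2-V(q_t))}$. The product is therefore $\propto e^{iS_{\rm cl}[\textbf{q}]}$, with a $\textbf{q}$-independent prefactor. This establishes the first claim, and equivalently \eqref{eq:shadowPI}.

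\textbf{Step 3: traces become path integrals.} Write a trace as a sum of diagonal elements over the basis $\{|\textbf{q}\rangle\}$ via the completeness relation in $\mathcal{H}$. Operators that are functions of the $q_t$, such as $q_{t_1}q_{t_2}$, act diagonally and become c-numbers. Boundary operators such as $|q_i\rangle_0\langle q_f|$ fix $q_0$ and the endpoint, which turns the diagonal element into the open-chain element of Step 1. The result is exactly the time-sliced Feynman integral. Corollaries~\ref{cor:wigthman} and \ref{cor:wigthmaninout} guarantee that it equals the corresponding time-ordered Wightman function, so no separate consistency check is needed.

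\textbf{Main obstacle.} The delicate point is Step 2. The Trotter error must be controlled uniformly over $N=T/\epsilon$ factors, since $N$ errors of $\mathcal{O}(\epsilon^2)$ accumulate to $\mathcal{O}(\epsilon)$. In addition, the Fresnel integrals are oscillatory and only conditionally convergent, so both need a regularization such as $\epsilon\to\epsilon(1-i0)$ or a Wick rotation. Since the statement is informal, I would state the result up to $\mathcal{O}(\epsilon^2)$ per slice, as the paper does. The rigorous continuum limit would follow from the Trotter product formula for self-adjoint $H$.
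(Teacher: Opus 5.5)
Your proposal is correct and is essentially the paper's argument. The appendix evaluates $\langle \textbf{q}_f|e^{i\epsilon\mathcal{P}}e^{-i\epsilon\mathcal{K}}|\textbf{q}_i\rangle$ by inserting the spacetime momentum resolution between the two factors, so that the SWAP chain produces the Legendre term and the Trotterized $e^{-i\epsilon\mathcal{K}}$ produces the Hamiltonian term, followed by the Fresnel integrals; you instead first collapse the cyclic shift into $\prod_t\langle q'_{t+1}|U(\epsilon)|q_t\rangle$, exactly as the paper does in its generalized-coherent-state discussion, which is the same computation reordered (your endpoint bookkeeping and your caveats on Trotter-error accumulation and Fresnel regularization are consistent with the paper's ``up to second order in $\epsilon$'' treatment).
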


Let us now remark that the generalization to an arbitrary number of particles and to fields is completely straightforward. Indeed, for a standard algebra $[q_i,p_j]=i\delta_{ij}$ the spacetime formalism imposes instead $[q_{ti},p_{t'j}]=i\delta_{ij}\delta_{tt'}$ so that all modes are on an equal footing. 
 See also the discussion in Sections \ref{sec:QFT} and \ref{sec:fermions}. On the other hand, while we focused in the position basis, and on the phase-space PI, quite interestingly, some changes of bases in Hilbert space are related to different PI representations. For example, one might use coherent states to evaluate the trace and the coherent state PI follows. On the other hand, Matsubara-like expansions are related to the use of Fourier in time basis, which from the Hilbert space point of view correspond to a non-local in time Bogoliubov transformation. This is particularly useful when an euclidean action is considered (see Eq.\ \eqref{eq:Reuclid} and comments below).
Furthermore, the trace might be evaluated in basis for which a PI representation is not possible. We refer the reader to \cite{diaz2021path} for details. Moreover, fermionic PIs can also be obtained form the formalism by similar means \cite{diaz2025spacetime}. All the previous results follow directly from the spacetime formalism showing that many PI techniques can be easily accommodated in the extended Hilbert space.

Let us also make a small example on how similar ideas hold for a  continuum time formalism. If we consider a continuum time from the start, $\mathcal{H}$ is defined by $[q(t),p(t')]=i \delta(t-t')$. Then the generator of time translations takes the form
\begin{equation}
    \mathcal{P}=\int dt\, p(t)\dot{q}(t)\,,
\end{equation}
which makes the relation with the Legendre transform explicit. The spacetime algebra implies
\begin{equation}
    e^{i\tau \mathcal{P}}O(t) e^{-i\tau \mathcal{P}}=O(t+\tau)\,,
\end{equation}
in agreement with the discrete time case. As an example of QA, take $H=\frac{p^2}{2m}+\frac{1}{2}m\omega^2 q^2$ leading to $\mathcal{S}=\int dt\, \left( p(t)\dot{q}(t)-\frac{p^2(t)}{2m}-\frac{1}{2}m\omega^2 q^2(t)\right)$. For general (non quadratic) Hamiltonians there are some additional subtleties to be considered. These are related to the fact that for continuum time, single step translations become meaningless and Theorem \ref{th:theorem1} must be modified accordingly. This is not much of an obstacle as one can introduce a time scale $\tau$ and consider trace-expressions in the small $\tau$ limit. One can show that in this limit standard QM is recovered. This also provides a different perspective on the continuum time PIs \cite{diaz2021path}.

\subsubsection{Schwinger–Keldysh PI}\label{sec:SKPI}
In the previous we made exclusive use of the QA. Let us now discuss the role of the SQA.

Standard PIs describe ``in''-``out'' transitions, namely quantities of the form $\langle q',T|O(t)|q\rangle$, where we have taken a single operator for simplicity. However, in many physical applications one requires $\langle q'|O(t)|q\rangle$ which in particular allows one to write conventional expectation values of the form ${\rm tr}[O(t)\rho]$ \footnote{In many field theoretical applications at zero temperature a PI suffices as one is interested in vacuum correlation functions, which can be obtained by the quotient of two Feynman PIs. At finite temperature and out of equilibrium standard PI techniques are  no longer feasible \cite{kamenev2005course}. }. The Schwinger–Keldysh PI provides an expression for this expectation value. Let us provide an elementary derivation of this PI in the example of a single particle and taking $O=\int dq\, O(q)|q\rangle \langle q|$ for simplicity. We have,
\small
\begin{equation}\label{eq:SKPI}
\begin{split}
    \langle q'|O(t)|q\rangle&=\langle q'|U(t_i,t_f)U(t_f,t)OU(t,t_i)|q\rangle \\
    &=\int dq_f \langle q'|U(t_i,t_f)|q_f\rangle \langle q_f|U(t_f,t)OU(t,t_i)|q\rangle
    \\
    &=\int dq_f \int_{q^-(t_i)=q'}^{q^-(t_f)=q_f} \mathcal{D}q^-  \\&\qquad\int_{q^+(t_i)=q}^{q_+(t_f)=q_f}\mathcal{D}q^+ e^{iS[q^+]-iS[q^-]}\,O(q^+(t))\,,\end{split}
\end{equation}
\normalsize
where we used $U^\dag(t,t_i)=U(t_i,t_f)U(t_f,t)$. Notice that we can think of the evolution starting at $t_i=0$, moving to $t_f$ and then coming back $t_i$. For this reason this type of PI is said to correspond to a closed time loop, while the standard PI correspond to the $q_+$ variable going from $t_i$ to $t_f$. Notice also that the action corresponding to the $q_-$ trajectory has a minus sign that arises from $\langle q'|U(t_i,t_f)|q_f\rangle=\langle q_f|U(t_f,t_i)|q'\rangle^\ast$.

The closed time loop associated with the Schwinger–Keldysh PI has the form of the r.h.s. of Figure \ref{fig:theorem2}. 
We will now  show that indeed the PI emerges from evaluating  ${\rm Tr}[|q\rangle_0\langle q'|e^{i\mathcal{S}_{\text{ext}}}\dots]$ in the position basis. First let us make a few useful definitions to compare with \eqref{eq:SKPI}. We will write $$(\textbf{q}^+,\textbf{q}^-)=(q^+_0, q_1^+,\dots,q^+_{N-1},q^{-}_0,q^{-}_{1},\dots, q^{-}_{N-1})\,,$$
and 
$$(\textbf{p}^+,\textbf{p}^-)=(p^+_0, p_1^+,\dots,p^+_{N-1},p^{-}_{0},p^{-}_{1},\dots, p^{-}_{N-1})\,,$$ both of which are vectors with  $2N$ entries.  The completeness relations of position and momentum in $\mathcal{H}\otimes \mathcal{H}$ can be written as
\begin{align}
      &\int d^{N}q^+ d^{N}q^-\, |\textbf{q}^+,\textbf{q}^-\rangle \langle \textbf{q}^+,\textbf{q}^-|\nonumber\\&=\int d^{N}p^+ d^{N}p^-\, |\textbf{p}^+,\textbf{p}^-\rangle \langle \textbf{p}^+,\textbf{p}^-|=\mathbbm{1}\,,
\end{align}
where the integration measures are given by $d^{N}q^{\pm}=\prod_{t=0}^{N-1}dq^\pm_t$,  $d^{N}p^\pm=\prod_{t=0}^{N-1}dp^\pm_t$.

Just as for Feynman PIs we are interested in the matrix elements of the exponential of the action, in the current case given by $\langle \textbf{q}'^+\textbf{q}^-|e^{i\mathcal{S}_{\text{ext}}} |\textbf{q}^+\textbf{q}^-\rangle$
for ${\textbf{q}'^+=(q',q^+_1,\dots, q^+_{N-1})}$ and $\textbf{q}^+=(q,q^+_1,\dots, q^+_{N-1})$ since within the trace the first entries are fixed by the operator $|q\rangle_0 \langle q'|$. A direct calculation, explicitly provided in Appendix \ref{app:PI}, leads to 
\begin{equation}
    \langle \textbf{q}'^+\textbf{q}^-|e^{i\mathcal{S}_{\text{ext}}} |\textbf{q}^+\textbf{q}^-\rangle=\frac{1}{(\sqrt{2\pi i \epsilon/m})^{2N}}e^{i S_{\text{cl}}[\textbf{q}^+]-i S_{\text{cl}}[\textbf{q}^-]}\,,
\end{equation}
with $S_{\text{cl}}[\textbf{q}^{\pm}]$ the discrete time classical action of the corresponding variables and  border conditions $q^+_0=q, q^+_N=q_f$ and $q^-_{-1}=q', q^-_{N-1}=q_f$. This equation, holding for small $\epsilon$, reflects the structural form of $\mathcal{S}_{\text{ext}}$ in Definition~\ref{def:extendedqa} which might be associated with the difference between two standard QAs with the addition of border conditions among the backward and forward evolution.

On the other hand, by making use of the completeness relations and for $O(q_t)$ in the first half of $\mathcal{H}\otimes \mathcal{H}$ we can write
\begin{align}
&{\rm Tr}[|q\rangle_0 \langle q'|e^{i\mathcal{S}_{\text{ext}}}O(q_t)]\nonumber\\&= \int dq_fd^{N-1}q^+ d^{N-1}q^- \langle \textbf{q}'^+\textbf{q}^-|e^{i\mathcal{S}_{\text{ext}}} |\textbf{q}^+\textbf{q}^-\rangle\, O(q^+_t)\,,\label{eq:trSKPI}
\end{align}
which in combination with \eqref{eq:matrixelsext} lead to the time-sliced Schwinger–Keldysh PI of \eqref{eq:SKPI}. The insertion of more operators within the ``loop'' $t_i\to t_f\to t_i$ correspond to adding functionals within the PI just as in the standard Schwinger-Keldysh PI, and in agreement with the Wightman functions of Theorem \ref{th:extendedQA}.

\subsubsection{Path integrals for general systems}
Let us make a few comments regarding arbitrary quantum systems for which a classical analogue is not as straightforward. In this scenario it is natural to employ the so-called generalized coherent states (GCS). These states are defined by a Lie algebra $\mathfrak{g}$ whose Cartan-Weyl decomposition defines a Highest weight state $|hw\rangle\in h$. We assume that the group $G\equiv e^\mathfrak{g}$ is compact and has elements $T(\Omega)\in G$ parameterized by $\Omega$. Then the GCSs are given by $|\Omega\rangle=T(\Omega)|hw\rangle$. Notice that these are the states employed to map the Hilbert space of the system to a phase-space in the Stratonovich-Weyl formulation \cite{brif1999phase}.
They also provide an overcomplete basis of $h$ satisfying
\begin{equation}
    \int d\mu(\Omega)|\Omega\rangle \langle \Omega|=\mathbbm{1}\,.
\end{equation}
A common example of GCS are the spin coherent states defined by the algebra of $\mathfrak{su}(2)$. In particular, standard (bosonic) coherent states are recovered from the non-compact Heisenberg-Weyl algebra.

We can use GCS in $\mathcal{H}$, which define separable-in-time basis of the form $|\boldsymbol{\Omega}\rangle=|\Omega_0\rangle \otimes |\Omega_1\rangle \otimes \dots |\Omega_{N-1}\rangle$. The completeness relation reads
\begin{equation}
    \int d^N\mu(\boldsymbol{\Omega})|\boldsymbol{\Omega}\rangle\langle \boldsymbol{\Omega}|=\mathbbm{1}\,.
\end{equation}
Then, one can recover the coherent state PI of any quantum system by noting that
\small
\begin{equation}
\begin{split}
     \langle \boldsymbol{\Omega}_f|e^{i\mathcal{S}}|\boldsymbol{\Omega}_i\rangle&=\langle \Omega_f\Omega_1 ...\,  \Omega_{N-1}|e^{i\epsilon \mathcal{P}}e^{-i\epsilon\mathcal{K}}|\Omega_i\Omega_1...\,\Omega_{N-1}\rangle\\&=\langle \Omega_1...\,\Omega_{N-1}\Omega_f|\otimes_t e^{-i\epsilon H}|\Omega_i\Omega_1... \,\Omega_{N-1}\rangle\\&=
     \prod_t \langle \Omega_{t+1}|e^{-i\epsilon H}|\Omega_t\rangle\,.
\end{split}
\end{equation}
\normalsize
Now under the usual approximation one can write $\langle \Omega_{t+1}|e^{-i\epsilon H}|\Omega_t\rangle\approx e^{-i\epsilon H[\Omega_t,\Omega_{t+1}]} \langle \Omega_{t+1}|\Omega_t\rangle$ (where $\Omega_{t+1},\Omega_t$ replace the ladder operators that define a normally ordered $H$). Then one can consider the logarithm of the product in time to obtain the sum $\sum_t \log[\langle \Omega_{t+1}|\Omega_t\rangle]$ which is related to the Legendre transform of the action (e.g., for spin one obtains the Berry phase; for bosons the Legendre transform in coherent-state variables and so on).

In particular, we find it useful to discuss the special case of fermionic systems. The standard fermionic PI relies on the use of Grassmann variables to define fermionic coherent states that have the same form as the bosonic ones. It was recently shown in \cite{diaz2025spacetime} that one can introduce the corresponding  states in the spacetime approach leading to the fermionic PI just as we showed for GCS. However, Hilbert space properties and manipulations of the fermionic quantum action operator reproduce the conventional PI techniques without the need to introduce Grassmann variables. See Section \ref{sec:fermions} for a brief introduction to the spacetime formalism for fermionic systems.

Let us also mention that one might develop a phase space formulation of spacetime QM itself. That is,  given a map from Hilbert space $h\to X$, with $X$ a phase-space, one can map $\mathcal{H}\to \times_t X_t$, with $\times_t X_t$ the Cartesian product of a single time phase-space $X$. As a consequence, the inner products ${\rm Tr}[e^{i\mathcal{S}}\mathcal{O}]$ are naturally represented in $\times_t X_t$ as sum over histories. See also Section \ref{sec:classical} where we develop the classical counterpart of the spacetime formalism.

\subsection{Quantum states over time}\label{sec:QSOT}
In \cite{horsman2017can} the following question was posed: \emph{Can a quantum state over time
resemble a quantum state at a
single time?} Therein the authors pose the problem of defining a quantum state over time (QSOT) $\rho_{AB}$ such that 
\begin{equation}\label{eq:stateovertimecond}
    \rho_A={\rm Tr}_B[\rho_{AB}]\,,\quad \rho_B={\rm Tr}_A[\rho_{AB}]= \mathcal{E}(\rho_A)
\end{equation}
 for $\rho_A$ an initial quantum state, $\mathcal{E}$ a quantum channel and $\rho_{AB}$ acting on $\mathcal{H}=h_A\otimes h_B$. Here $h$ corresponds to the standard Hilbert space of the system while the tensor product indicates two time slices.

A central quantity in their discussion is the Jamiolkowski matrix $J(\mathcal{E})=\sum_{i,j}|i\rangle \langle j|\otimes \mathcal{E}(|j\rangle \langle i|)$. The authors propose to define a star product $\star$ such that $\rho_{AB}=J(\mathcal{E}) \star \rho_A$. 
They prove that no choice of the star product can satisfy all the criteria that would lead to $\rho_{AB}$ being a quantum state in the usual sense. Notice also that fixing the marginals \eqref{eq:stateovertimecond} does not fix  $\rho_{AB}$ which allows different possible definitions of $\star$.

Let us now discuss the relation between this scheme and spacetime QM. First of all we notice that the Hilbert space is the same in both constructions, as a tensor product structure across slices is imposed. To further the connection, let us focus for simplicity in the case of $N=2$, in which case we write $h_A\equiv h_0$, $h_B\equiv h_1$. Then, as we showed in Section \ref{sec:pureststates} we have $e^{i\tilde{\mathcal{S}}}\equiv J(\mathcal{E})$, where we are considering the SQA corresponding to a quantum channel (see Theorem \ref{th:quantumchannel} and the related discussion). We see that one of the main elements of the QSOT is also present in the spacetime formalism. 
On the other hand, the corresponding spacetime state is given by $\mathcal{R}=(\rho\otimes \mathbbm{1})J(\mathcal{E})$ and satisfies the conditions \eqref{eq:stateovertimecond} (see Eq.\ \eqref{eq:marginalschannel}). In this sense spacetime states $\mathcal{R}$ are a particular choice of QSOT where one relaxes the Hermiticity condition (not considered in \cite{horsman2017can})
in favor of allowing for timelike correlators. In fact, as we previously discussed the spacetime state $\mathcal{R}$ is unique, since it not only satisfies the conditions \eqref{eq:stateovertimecond} but also yields Wightman functions when the expectation value of operators at different times are considered.

To compare with other proposals in \cite{horsman2017can} we need to use Hermitian operators. One possibility is to consider the operator
\begin{equation}\label{eq:simistar}
    \rho_{AB}\equiv  (\rho^{-1/2}\otimes \mathbbm{1})\mathcal{R}  (\rho^{1/2}\otimes \mathbbm{1})
\end{equation}
with $\rho^{-1/2}$ indicating pseudo-inverse if $\rho$ is not invertible. This congruence transformation recovers one of the star products considered in \cite{horsman2017can} corresponding to  the state over time $\rho_{AB}=(\rho^{1/2}\otimes \mathbbm{1})J(\mathcal{E}) (\rho^{1/2}\otimes \mathbbm{1})$. Notice that since $\rho_{AB}$ is Hermitian, it follows that $\mathcal{R}$ has real eigenvalues for $\rho$ invertible.

Another natural choice is
\begin{equation}\label{eq:qsot}
    \rho_{AB}\equiv  \frac{\mathcal{R}+\mathcal{R}^\dag}{2}= \frac{(\rho\otimes \mathbbm{1})J(\mathcal{E})+J(\mathcal{E})(\rho\otimes \mathbbm{1})}{2}\,.
\end{equation}
This operator  also arises in our discussion about PDMs where we show that $\rho_{AB}$  is equal to a quantum channel acting on $\mathcal{R}_{\text{ext}}$ (see Section \ref{sec:PDM}), and in the context of Leggett-Garg inequalities (see Section \ref{sec:legget}). This choice of QSOT also corresponds to another star product considered in \cite{horsman2017can}. However, for multiple time slices this definition of QSOT differs from the proposal in \cite{lie2025multipartite} based on a repeated use of a star product (see also comments on section \ref{sec:PDM}).

Interestingly, the discussion of Section \ref{sec:pureststates} which shows that $\mathcal{R}$ can collapse to a standard quantum state under particular conditions shows that $\rho_{AB}$ can indeed become a quantum state as well, although this can only happen if the system is open.
Notice also that as a consequence of Corollary \ref{cor:schrodheis} these 3 definitions of state over time satisfy the generalized version of Eq.\ \eqref{eq:stateovertimecond} to an arbitrary number $N$ of time slices.

There is one final proposal in \cite{horsman2017can} inspired by Wigner quasiprobabilities in discrete settings that makes use of a point basis $\{K_i\}$ satisfying ${\rm tr}[K_i K_j]=d \delta_{ij}$ and $\sum_i K_i=d\mathbbm{1}$. If we write   $r_A(i)=\frac{1}{d}\,\mathrm{Tr}(\rho_A K_i)$ and 
    $r_{B|A}(j|i)=\frac{1}{d}\,\mathrm{Tr}\big(J(\mathcal{E})K_{i}^A\otimes K_j^B\big)$ the corresponding definition of state over time is \begin{equation}\label{eq:rhoW}
    \rho^{(W)}_{AB}=\sum_{i,j} r_A(i)\,r_{B|A}(j|i)\; K_i\otimes K_j.
\end{equation}
Consider instead the expansion of the spacetime state. One finds
$
    \mathcal{R}=\sum_{i,i',j} r_A(i)\,r_{B|A}(j|i')\;\big(K_i K_{i'}\big)\otimes K_j
$. To recover the previous QSOT one would need a map  $K_i K_{i'}\to \delta_{ii'}K_i$. One can see, e.g., by considering vanishing  linear combinations of products of $K_i$, that in general a linear map implementing this transformation cannot exist. While we leave open the possibility of providing an interpretation to $\rho^{(W)}_{AB}$ via the spacetime formalism, there is a simple (but rather  artificial) way to recover both $\mathcal{R}$ and  $\rho^{(W)}_{AB}$ from a single object: one can duplicate $h_A\to h_{A_1}\otimes h_{A_2}$ to write $$\mathcal{R}=\Phi_1( \rho_{A_2}\otimes J(\mathcal{E}))\,,\quad \rho^{(W)}_{AB}=\Phi_2( \rho_{A_2}\otimes J(\mathcal{E}))$$ where the Jamiolkowski operators act on $h_{A_1}\otimes h_B$.
Here the
maps $\Phi_{i}:\mathcal{L}(h_{A_1}\otimes h_{A_2})\to \mathcal{L}(h_A)$ are defined so that $\Phi_1(K_i \otimes K_{i'})=K_i K_{i'}$ and $\Phi_2(K_i \otimes K_{i'})= \delta_{ii'} K_i$. The first map can be trivially found as a SWAP between $A_1,A_2$ and a partial trace over $A_2$ while the second follows immediately from the orthogonality relation as $\Phi_2(X)=\frac{1}{d^2}\sum_i{\rm Tr}\big[(K_i\otimes K_i)\,X\big]\;K_i$. \\

The central message of \cite{horsman2017can} is that there is no universally satisfactory prescription $\rho_{AB}=J(\mathcal E)\star\rho_A$ that yields an ordinary  quantum state while simultaneously reproducing the correct marginals \eqref{eq:stateovertimecond} and retaining a sensible notion of temporal correlations. From the spacetime QM perspective this tension is natural: the canonical object is the spacetime state $\mathcal R=(\rho_A\otimes\mathbbm 1)J(\mathcal E)$, which satisfies \eqref{eq:stateovertimecond} but is generically non-Hermitian precisely because it encodes timelike correlators and causality. Hermitian QSOTs as the one considered in \cite{horsman2017can} are then obtained by applying additional  prescriptions, such as the similarity-transformed star product \eqref{eq:simistar} or the symmetrized choice \eqref{eq:qsot}, which  discard part of the temporal information. Moreover, in the regime where $\mathcal{R}$ collapses to a standard quantum state (e.g., under sufficiently strong decoherence; see Section.~\ref{sec:pureststates}), no Leggett-Garg violation is possible  (this is proven in Section \ref{sec:legget}) in line with the no-go intuition of \cite{horsman2017can}.

\subsection{Pseudo density matrix}\label{sec:PDM}

\subsubsection{Introduction to PDMs}
A proposal to generalize QM to spacetime tackles the problem by considering measurements at different points in spacetime, or equivalently, at different events. To be more explicit, in \cite{fitzsimons2015quantum} the authors identify that a standard density matrix of distinguishable subsystems (e.g., separated in space) is completely specified by the expectation value of the product of
measurement outcomes in space. If one wants to generalize this to time, one can define an object usually denoted as Pseudo-density matrix (PDM), specified by the expectation value of product of measurements in spacetime. In the following we will refer to this scheme as the PDM approach, a formalism explored by many authors, see e.g., \cite{marletto2020non, song2024causal,fullwood2025quantum}. Let us also remark that most of the PDM approach has been developed for qubits. The case of arbitrary dimensional systems is still under active research \cite{fullwood2024operator}. It has also been shown recently how to define PDMs  for gaussian states of continuous variable systems \cite{zhang2020different}. In the following, we will make a definition of PDM that holds for arbitrary systems, and that is compatible with the most recent developments.

Pseudo-density matrices deal with the expectation value of the product of outcomes from measurements separated in spacetime. Let us recall some basic facts. First of all, notice that the case of spatial separations, or subsystems, corresponds to conventional mean values and can be obtained from a conventional density matrix. In fact, if we consider a product operator $O=O^{(0)}\otimes O^{(1)} \otimes ... \otimes O^{(m-1)}$ we can write
$${\rm tr}[\rho O]=\sum_{\{\textbf{i}\}}{\rm tr}\Big[\rho \,P^{(0)}_{i_0}\otimes ... \otimes P^{(m-1)}_{m-1}\Big ]\,\lambda^{(0)}_{i_0}\dots \,\lambda^{(m-1)}_{m-1}\,,$$ 
for $O^{(k)}=\sum_i \lambda_i^{(k)}P_i^{(k)}$, namely, $\lambda^{(k)}_i$ are the possible outcomes of measuring $O^{(k)}$ and $P_i^{(k)}$ are the corresponding  orthogonal projectors. At the same time, according to the standard axioms of QM ${\rm tr}\big[\rho \, P^{(0)}_{i_0}\otimes \dots \otimes P^{(m-1)}_{m-1}\big]=p_{i_0\wedge  \dots \wedge i_{m-1}}$, the joint probability of obtaining the results labeled by $\{\textbf{i}\}=\{i_0,\dots ,i _{m-1}\}$. In summary,   for space-like separated operators
\begin{equation}
   {\rm tr}[\rho \, O^{(0)}\otimes ... O^{(m-1)}]= \mathbb{E}_{\text{space}}\big[O^{(0)},..., O^{(m-1)}\big]\,,
\end{equation}
where $\mathbb{E}_{\text{space}}\big[O^{(0)}, O^{(1)}, ..., O^{(m-1)}\big]$ denotes the expectation value of the product of the outcomes of measuring the different $O^{(i)}$ operators at a given time. Specifying these averages for a complete set of operators completely defines the quantum state of the system.
As is well known, the expectation value can exhibit non-classical behavior, codified in the state $\rho$, which can be e.g., entangled across space.

Pseudo density matrices (PDMs) \cite{fitzsimons2015quantum} generalize the previous considerations to scenarios where time-like separations are allowed.
Let us then describe the case of $N$ different operators separated in time. The PDM proposal is to define an operator $\mathcal{R}_{\text{pdm}}$ acting on $\mathcal{H}$, the same Hilbert space employed in the spacetime QM approach, such that
\begin{equation}\label{eq:PDMexp}
    {\rm Tr}[\mathcal{R}_{\text{pdm}} O^{(0)}\otimes ...  O^{(N-1)}]= \mathbb{E}_{\text{time}}\big[O^{(0)}, ..., O^{(N-1)}\big]\,,
\end{equation}
where the tensor product denotes time (each $h_t$ might still be a composed Hilbert space) and $\mathbb{E}_{\text{time}}\big[O^{(0)}, ..., O^{(N-1)}\big]$ is the expectation value of the product of the results of the measurements at different times.  These measurements are defined as follows. First we consider an initial state $\rho$ and immediately measure the observable $O^{(0)}$, then we let the system evolve unitarily and measure $O^{(1)}$ after the first time step, and so on \footnote{We consider that between each measurements a fixed amount of time $\epsilon$ lapsed. We also mostly focus on unitary evolution as our discussion can be extended to quantum channels by simply extending the system.}. 
Under this scheme, we can write $ \mathbb{E}_{\text{time}}\big[O^{(0)}, ..., O^{(N-1)}\big]=\sum_{\{\textbf{i}\}} p_{i_0\wedge i_1... \wedge i_{N-1}}\lambda_{i_0}^{(1)}\lambda_{i_1}^{(2)} \dots \lambda_{i_{N-1}}^{(N-1)}$ where the joint probabilities are now given by 
\begin{widetext}
   \begin{equation}\label{eq:jointprobs}
\begin{split}
      p_{i_0\wedge i_1\dots \wedge i_{N-1}} 
      &={\rm tr}\big[\rho \,P^{(0)}_{i_{0}}P^{(1)}_{i_1}(\epsilon)P^{(2)}_{i_2}(2\epsilon)\dots P^{(N-1)}_{i_{N-1}}((N-1)\epsilon) \dots P^{(2)}_{i_2}(2\epsilon)P^{(1)}_{i_1}(\epsilon)P^{(0)}_{i_0}\big]\,,
\end{split}
\end{equation} 
\end{widetext}
with 
$O^{(k)}=\sum_i \lambda_i^{(k)}P^{(k)}_i$ the eigendecomposition of the operators in orthogonal projectors (if the operator $O^{(k)}$ is degenerate then $P^{(k)}_i$ indicates the projector onto the corresponding subspace). To justify this expression let us consider the example of an initial pure state $|\psi\rangle$ undergoing a projective measurement, then unitary evolution and another projective measurement. One obtains 
\begin{equation*}
    \begin{split}
      p_{i_0\wedge i_1}=p_{i_1|i_0}p_{i_0}&=\frac{\langle \psi| P_{i_0}^{(0)}U^\dag P_{i_1}^{(1)}UP_{i_0}^{(0)}|\psi\rangle}{\langle \psi|P^{(0)}_{i_0}|\psi\rangle}\langle \psi|P^{(0)}_{i_0}|\psi\rangle  \\&=\langle \psi| P_{i_0}^{(0)}U^\dag P_{i_1}^{(1)}UP_{i_0}^{(0)}|\psi\rangle
    \end{split}
\end{equation*}
 where we used that the state of the system immediately after obtaining the result $\lambda_{i_0}$ from the first measurement is $|\psi\rangle'=P^{(0)}_{i_0}|\psi\rangle/\sqrt{\langle \psi|P^{(0)}_{i_0} |\psi\rangle}$ so that $p_{i_0|i_1}=\langle \psi'|U^\dag P^{(1)}_{i_1}U|\psi'\rangle$. 
 The general case follows analogously, yielding Eq.\ \eqref{eq:jointprobs}.

Notice that fixing the ``mean values'' across space and time over a complete set of operators completely defines the $\mathcal{R}_{\text{pdm}}$. In other words, if we consider a complete orthonormal set of operators in $h$, denoted as $\{O^{(i)}\}$, a complete set in $\mathcal{H}=\otimes_{t,x} h$ is given by $\{\otimes_{t,x}O^{(i_{tx})}\}$ and the corresponding PDM is 
${\mathcal{R}_{\text{pdm}}=\sum_{\textbf{i}}{\rm Tr}[\mathcal{R}_{\text{pdm}}\otimes_{t,x}O_{i_{tx}}]\otimes_{t,x}O^{(i_{tx})}}$ with 
\begin{equation}
    {\rm Tr}[\mathcal{R}_{\text{pdm}}\otimes_{t,x}O^{(i_{tx})}]=\mathbbm{E}_{\text{spacetime}}[O^{(i_{00})}, O^{(i_{01})}, \dots]\,.
\end{equation}
So the PDM is the unique object whose spacetime correlators are the expectation values of product of outcomes of measurements, defined by a given set of complete operators. To be more explicit , if we consider a grid of points in space and time, the corresponding joint probabilities defining $\mathbbm{E}_{\text{spacetime}}$ are given by 
\begin{equation}
\begin{split}
       p_{\wedge_{t,x}i_{tx}}&={\rm tr}\Big[\,\rho\, \otimes_x P^{(0x)}_{i_{0x}}\, U^\dag \otimes_x P^{(1x)}_{i_{1x}}U\dots\\&\dots (U^\dag)^{N-1} \otimes_x P^{(N-1,x)}_{i_{N-1x}}U^{N-1}\dots  \otimes_x P^{(0x)}_{i_{0x}}\Big]\,,
\end{split}
\end{equation}
 which is just Eq.\ \eqref{eq:jointprobs} with the subsystem structure explicit. Notice that the evolution operator $U$ is not necessarily separable in space. 
The previous definition leads to the following properties of the PDMs: they are Hermitian operators, have trace one, and their partial trace over all times except one is just the standard density matrix at that time. However, PDM are not positive semidefinite. Instead, the negative eigenvalues can be thought of a type of quantum correlation indicative of causality among events \cite{fitzsimons2015quantum}.

\subsubsection{PDMs from spacetime QM}

We now discuss how to obtain PDMs from the SQM formalism. Let us first emphasize that since the PDM approach focuses on measurements at different times and not on spacetime correlators the connection is not apparent. Yet, we can exploit the operators $\mathcal{S}_{\text{ext}}$ and Theorem \ref{th:extendedQA} to write the joint probabilities of measurements across time in a simple way. From these expressions, a closed form for PDMs follow.

The extended QA provides an explicit expression for the probabilities of sequential in time measurements as
\small
\begin{equation}
\begin{split}
     p_{i_0\wedge i_1 \dots \wedge i_{N-1}}
    =\\{\rm Tr}[\mathcal{R}_{\text{ext}}\underbrace{P^{(0)}_{i_0}\otimes P^{(1)}_{i_1}... P^{(N-1)}_{i_{N-1}}}_{\mathcal{H}} \otimes \underbrace{P^{(0)}_{i_0}\otimes P^{(1)}_{i_1}... P^{(N-1)}_{i_{N-1}}}_{\mathcal{H}}]\,,
\end{split}
\end{equation}
\normalsize
where we have emphasized the use of both copies of $\mathcal{H}$ to accommodate the presence of two projectors for each operator.
Using this result, we can write
\small
\begin{equation}
\begin{split}
&\mathbb{E}_{\text{time}}\big[O^{(0)}, ..., O^{(N-1)}\big]=\\&\sum_{\{\textbf{i}\}}{\rm Tr}\big[\mathcal{R}_{\text{ext}}\,P^{(0)}_{i_0}...\otimes P^{(N-1)}_{i_{N-1}}\otimes P^{(0)}_{i_0}\otimes ...\big]\,\lambda^{(0)}_{i_0}...\,\lambda^{(N-1)}_{i_{N-1}}\,.
\end{split}
\end{equation}
\normalsize
We see that the spacetime formalism provides a simple expression for the average of product of measurements at different times. If we now include also space we obtain
\small
\begin{equation}\label{eq:Espacetimeexpr}
\begin{split}
&\mathbb{E}_{\text{spacetime}}\big[O^{(00)},O^{(01)},...\big]=\sum_{\{\textbf{i}\}}\prod_{t,x}\lambda^{(tx)}_{i_{tx}}\times \\ &\times {\rm Tr}\big[\mathcal{R}_{\text{ext}}\,\mathop{{\textstyle\bigotimes}}\limits_{x} P^{(0x)}_{i_{0x}}\otimes    P^{(1x)}_{i_{1x}}\dots   P^{(N-1,x)}_{i_{N-1x}} \otimes P^{(0x)}_{i_{0x}}\otimes ...\big] \,.
\end{split}
\end{equation}
\normalsize
Notice that we simply needed to consider an ``overall'' tensor product in space (the tensor product between projectors correspond to time). As such,  the dynamics within slices, which can entangle across space, are codified in the QA.

To make further progress notice now 
that for each operator $O=\sum_i \lambda_i P_i$ on the l.h.s. of Eq. \eqref{eq:Espacetimeexpr}  the r.h.s. has a term $\phi(O\otimes \mathbbm{1})=\sum_i \lambda_i P_i\otimes P_i$, where we introduced the linear map $\phi$ for convenience. 
If we consider a complete set of operators and  find the map $\phi$ holding for the whole set, we can do the same for each copy $h_t\otimes h_t$ and define $\Phi(.)=\otimes_{t,x} \phi_{t,x}(.)$ such that 
\small
\begin{align*}
\mathbb{E}_{\text{spacetime}}\big[O^{(00)},O^{(01)},...\big]&={\rm Tr}[\mathcal{R}_{\text{ext}}\,\Phi(\otimes_{t,x} O_{t,x} \otimes \mathbbm{1}_\mathcal{H})]\,.
\end{align*}
\normalsize
This is almost the equation the PDM needs to satisfy, i.e., Eq.\ \eqref{eq:PDMexp}. To get the PDM we need to ``invert'' the map. Namely we need a $\Phi'$ such that
 ${{\rm Tr}[\mathcal{R}_{\text{ext}}\,\Phi(\otimes_{t,x} O_{tx} \otimes \mathbbm{1}_\mathcal{H})]={\rm Tr}[\Phi'(\mathcal{R}_{\text{ext}})\,\otimes _{t,x} O_{tx} \otimes \mathbbm{1}_\mathcal{H}]}$. Considering that these correlators completely specify the operator we conclude that
\begin{equation}
    \mathcal{R}_{\text{pdm}}={\rm Tr}_{\mathcal{H}_2}[\Phi'(\mathcal{R}_{\text{ext}})]\,,
\end{equation}
where the partial trace is over the ``backward evolution'' Hilbert space $\mathcal{H}_2$, namely we are labeling  $\mathcal{H}^{\otimes 2}\equiv \mathcal{H}_1 \otimes \mathcal{H}_2$ according with the conventions of Section \ref{sec:fbevolution}.

Now the crucial part becomes finding $\Phi'$. In general, the choice of $\Phi$ depends on the basis of operators chosen to define the PDM. To make direct contact with the literature (we recall that the definition of PDMs for qudits is under development), we will follow the proposal in \cite{fullwood2024operator} that generalizes the qubit case by employing ``light-touch observables''. In the Appendix \ref{app:PDM} we also discuss the use of Weyl observables, thus showing that our scheme can be used to generalize PDMs to other types of measurements.

In \cite{fullwood2024operator} the authors showed that one can always span $\mathcal{L}(h)$ with observables that have eigenvalues  $\pm 1$ (dichotomic measurements) with possible degeneracies, and the identity. These are the so-called light-touch observables, defined up to an arbitrary rescaling. Since for these observables we can write either $O=P_+-P_-$ or $O=P_+ +P_-$ (with $P_\pm$ projectors) we find
\begin{equation}
    \phi(O\otimes \mathbbm{1})=\sum_i \lambda_i P_i\otimes P_i=\frac{O\otimes \mathbbm{1}+\mathbbm{1}\otimes O}{2}\,,
\end{equation}
where the sum is over distinct eigenvalues. 
We can then define $\phi(.)$ acting on $h\otimes h$ as the quantum map 
\begin{equation}
    \phi(.)=\frac{\mathbbm{1}}{\sqrt{2}}(.) \frac{\mathbbm{1}}{\sqrt{2}}+\frac{\text{SWAP}}{\sqrt{2}}(.)\frac{\text{SWAP}}{\sqrt{2}}
\end{equation}
with Hermitian Kraus operators $\{\frac{\mathbbm{1}}{\sqrt{2}},\frac{\text{SWAP}}{\sqrt{2}}\}$. This defines $\Phi(.)=\otimes_{t,x}\phi(.)=\Phi'(.)$ as it follows from the cyclicity of the trace and Hermiticity of the Kraus operators. From the previous, we can state the following Theorem.
\begin{theorem}\label{th:PDMextendedQA}
    The PDM corresponding to light-touch observables is a quantum channel acting on the spacetime state $\mathcal{R}_{\text{ext}}$:
    \begin{equation}
        \mathcal{R}_{\text{pdm}}={\rm Tr}_{\mathcal{H}_2}[\Phi(\mathcal{R}_{\text{ext}})]
    \end{equation}
    with $\Phi=\otimes_{t,x}\phi$ for $\phi$ defined by Kraus operators $\{\frac{\mathbbm{1}}{\sqrt{2}},\frac{\text{SWAP}}{\sqrt{2}}\}$ and where SWAP acts between $h_{t,x}\otimes h_{t,x}$.  
\end{theorem}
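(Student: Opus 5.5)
The plan is to characterize $\mathcal{R}_{\text{pdm}}$ through its correlators on a complete product operator basis, and to show that ${\rm Tr}_{\mathcal{H}_2}[\Phi(\mathcal{R}_{\text{ext}})]$ has the same correlators. By the definition around Eq.~\eqref{eq:PDMexp}, $\mathcal{R}_{\text{pdm}}$ is fixed by the values $\mathbb{E}_{\text{spacetime}}[O^{(00)},O^{(01)},\dots]$ on products $\otimes_{t,x}O^{(i_{tx})}$. Following \cite{fullwood2024operator}, I would take the basis elements to be light-touch observables together with the identity. Linearity of the trace then reduces the theorem to checking the identity on each such product.

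The first step is to write $\mathbb{E}_{\text{spacetime}}$ through $\mathcal{R}_{\text{ext}}$, using Eq.~\eqref{eq:Espacetimeexpr}. This relies on Eq.~\eqref{eq:Rextexpect}, applied to projector insertions in both copies with the same labels. It reproduces the sandwich structure of the joint probabilities \eqref{eq:jointprobs}, with the anti-time-ordered product from $\mathcal{H}_2$ on the left and the time-ordered product from $\mathcal{H}_1$ on the right. Summing over outcomes with weights $\prod\lambda$ replaces each local factor by $\sum_i\lambda_i P_i\otimes P_i$ acting on $h_{t,x}\otimes h_{t,x}$.

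Next I would check locally that $\sum_i\lambda_iP_i\otimes P_i=\tfrac12(O\otimes\mathbbm{1}+\mathbbm{1}\otimes O)$ for light-touch $O$. When $O=P_+-P_-$ with $P_++P_-=\mathbbm{1}$, expand $O\otimes\mathbbm{1}+\mathbbm{1}\otimes O$ in the $P_\pm\otimes P_\pm$ basis: the cross terms cancel, leaving $2(P_+\otimes P_+-P_-\otimes P_-)$. When $O=\mathbbm{1}$, the statement is trivial. One subtlety is the case where $O$ has a single eigenvalue of magnitude one on a proper subspace, i.e.\ $O=P_+$. Here I would rely on the convention in \cite{fullwood2024operator} that light-touch observables are $\pm1$-valued on the full space, so the sum over distinct eigenvalues is exactly as displayed; I expect this to be the main place needing care.

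The right-hand side equals $\phi(O\otimes\mathbbm{1})$, where $\phi$ has Kraus operators $\mathbbm{1}/\sqrt2$ and $\text{SWAP}/\sqrt2$, because $\text{SWAP}(O\otimes\mathbbm{1})\text{SWAP}=\mathbbm{1}\otimes O$.

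Finally, I would tensor the local maps into $\Phi=\otimes_{t,x}\phi$. Here SWAP pairs each slice $(t,x)$ of $\mathcal{H}_1$ with the same slice of $\mathcal{H}_2$, and the insertion has $\mathbbm{1}_{\mathcal{H}_2}$ in the second copy. Since the Kraus operators are Hermitian, cyclicity of the trace makes $\Phi$ self-dual in the Hilbert–Schmidt sense. This gives
\begin{equation}
\mathbb{E}_{\text{spacetime}}={\rm Tr}\big[\mathcal{R}_{\text{ext}}\,\Phi(\otimes_{t,x}O_{tx}\otimes\mathbbm{1}_{\mathcal{H}_2})\big]={\rm Tr}\big[{\rm Tr}_{\mathcal{H}_2}[\Phi(\mathcal{R}_{\text{ext}})]\,\otimes_{t,x}O_{tx}\big].
\end{equation}
Agreement on a spanning set then yields the theorem. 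That $\Phi$ is a quantum channel is immediate from $\tfrac12\mathbbm{1}+\tfrac12\text{SWAP}^2=\mathbbm{1}$.
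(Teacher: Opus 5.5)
Your proposal is correct and is essentially the paper's proof. You write $\mathbb{E}_{\text{spacetime}}$ as a correlator of $\mathcal{R}_{\text{ext}}$ with identical projectors on both branches, use $\sum_i\lambda_iP_i\otimes P_i=\tfrac12(O\otimes\mathbbm{1}+\mathbbm{1}\otimes O)=\phi(O\otimes\mathbbm{1})$ for light-touch observables, move $\Phi$ onto $\mathcal{R}_{\text{ext}}$ using the Hermitian Kraus operators and cyclicity of the trace, and then trace out $\mathcal{H}_2$. The degenerate-eigenvalue subtlety you flag is resolved in the paper exactly as you propose, by summing over distinct eigenvalues of $\pm1$-valued observables together with the identity.
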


\begin{figure}
    \centering
    \includegraphics[width=\linewidth]{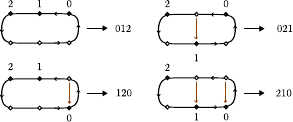}
    \caption{Graphical representation on how the channel $\Phi$ defined by SWAPs between $\mathcal{H}\otimes \mathcal{H}$ induces unimodal permutations in the ordering of operators for $N=3$. }
    \label{fig:PDMSWAPs}
\end{figure}

We have obtained a close expression for PDM, following from the spacetime QM formalism. This holds for arbitrary systems, as long as light-touch measurements are considered. Interestingly, the quantum channel $\Phi$ is independent on the evolution operators, a fact that can be exploited to generalize this result to PDM corresponding to a system undergoing a quantum channel evolution. In particular, as a direct application of the theorem one can use \eqref{eq:extquantumchannel} to  obtain 
\begin{equation}\label{eq:pdm2example}
    \mathcal{R}_{\text{pdm}}=\frac{\rho\otimes \mathbbm{1} J(\mathcal{E})+J(\mathcal{E})\rho\otimes \mathbbm{1}}{2}=\frac{\mathcal{R}+\mathcal{R}^\dag}{2}\,
\end{equation}
when  $N=2$.
 This is also equal to the state over time of Eq.\ \eqref{eq:qsot} that we discussed in Section \ref{sec:QSOT}. We explicitly derive this result in Appendix \ref{app:PDM} as an illustration of the theorem. One can also check that more generally, our closed form agrees with the recursive formula presented in \cite{liu2025quantum} for qubits. In addition, we also note that the repeated-star-product construction of \cite{lie2025multipartite}, when specialized to the canonical star product of Eq.\ \eqref{eq:qsot},  appears to reproduce the same unimodal symmetrization defining the multipartite PDM above \footnote{In this sense, that Markovian QSOT extension may be naturally identified with the multipartite PDM. To our knowledge, this connection was not made explicit in \cite{lie2025multipartite}.}.

So far, we have considered the extended QA operator acting on $\mathcal{H}\otimes \mathcal{H}$. 
Interestingly, we can develop things further. As shown in Figure \ref{fig:PDMSWAPs}, since only insertion of operators in the first copy of  $\mathcal{H}$ are to be considered (that is why we trace over the second copy of $\mathcal{H}$), the SWAP operators between $h_t\otimes h_{t}$ induce particular permutations on the usual time ordering. Notice that in the loop, reading from right to left (following the time arrows), even after the SWAPs have acted, one is necessarily increasing the time index till reaching $N-1$. Then, the order necessarily decreases.
We thus find that the only permutations among time slices appearing are the \emph{unimodal permutations}, rearrangements of the positions in time that first strictly increase to the peak and then strictly decreases.
This is a direct consequence of this ``loop'' structure depicted in the Figure \ref{fig:PDMSWAPs} and of Theorem \ref{th:PDMextendedQA}. To give a few concrete examples, for $N=3$ we have the orderings $012$, $021$, $120$ and $210$. For $N=4$ we obtain $3210$, $2310$, $1320$, $1230$, $0321$, $0231$, $0132$ and $0123$. In general one obtains $2^{N-1}$ out of the $N!$ permutations.

An equivalent, possibly more formal way to understand the appearance of unimodal permutations is the following. For light-touch observables, and considering Theorem \ref{th:PDMextendedQA} each local channel $\phi_t$ either leaves the insertion at time $t$ on the forward branch or moves it to the backward branch. Expanding $\Phi=\otimes_t \phi_t$ therefore amounts to summing over subsets $S\subseteq\{0,\dots,N-1\}$, where $t\in S$ indicates that the operator at time $t$ is assigned to the backward branch. By Theorem \ref{th:extendedQA}, the operators in $S$ then appear in an anti-time-ordered block, while those in the complement appear in a time-ordered block. As a result, each subset induces a permutation whose labels first increase up to the maximal time slice and then decrease, i.e., a unimodal permutation. Since the position of the maximal time is fixed, this yields precisely $2^{N-1}$ distinct permutations.

Notice also that under Theorem \ref{th:PDMextendedQA} the amount of evolution of the individual operators respect the initial labels: for example, the permutation $021$ induces ${\rm tr}[\rho O_1(\epsilon)O_2(2\epsilon)O_0]$, which 
following Theorem \ref{th:theorem1}, can be recovered form the spacetime state $\mathcal{R}$ after the adjoint action of  $\text{SWAP}_{21}\mathbbm{1}\otimes U\otimes U^\dag$. Notice that this adjoint action involves evolution operators (here assumed unitary again) and not only a permutation (in Theorem \ref{th:PDMextendedQA} the SWAP operators acted among time slices that correspond to the same time so no additional evolution operator was involved; instead, by working with $\mathcal{R}$ we are considering permutations among different times). 
 As a consequence, what we need to recover the PDM from a spacetime state is the \emph{unimodal twirl} $\Phi_{\text{UT}}:\mathcal{L}(\mathcal{H})\to \mathcal{L}(\mathcal{H})$ defined as
\begin{equation}\label{eq:unimodal-twirl}
\begin{split}
    \Phi_{\text{UT}}(X)= \frac{1}{|\mathcal U_N|} \sum_{\pi\in\mathcal U_N} U_\pi\,X\,U_\pi^\dagger\,,
\end{split}
\end{equation}
with $\pi$ a permutation, $\mathcal U_N$ the set of unimodal permutations ($|\mathcal U_N|=2^{N-1}$) and 
\begin{equation}
    U_\pi=W_\pi \otimes_t U_t^{t-\pi(t)}\,,
\end{equation}
for $W_\pi$ the unitary representation of the permutation acting among slices. Here the factor $\otimes_t U_t^{t-\pi(t)}$ ensures that the proper amount of evolution at each slice is recovered after the twirl, and a negative exponent is defined by $U_t^{t-\pi(t)}\equiv (U^\dag_t)^{\pi(t)-t}$.
Notice that the $2^{N-1}$ operators 
\begin{equation}
    K_\pi=\frac{1}{\sqrt{2^{N-1}}}U_{\pi}
\end{equation}
are the Kraus operators of the map with $K_\pi K_\pi^\dag=K_\pi^\dag K_\pi=\frac{1}{2^{N-1}}\mathbbm{1}$. 
With this definition, and following the previous discussion we can state the following Theorem.

\begin{theorem}\label{th:pdmfromr}
The PDM corresponding to light-touch observables is a quantum channel acting on spacetime states
  \begin{equation}
      \mathcal{R}_{pdm}=\Phi_{\text{UT}}(\mathcal{R})\,,
  \end{equation}
  with $\Phi_{\text{UT}}$ the CPTP channel implementing unimodal permutations among slices.
\end{theorem}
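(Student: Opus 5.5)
The proof compares correlators. Both sides are operators on $\mathcal{H}$, so it suffices to show that ${\rm Tr}[\Phi_{\text{UT}}(\mathcal{R})\,\mathcal{O}]={\rm Tr}[\mathcal{R}_{\text{pdm}}\,\mathcal{O}]$ for every product operator $\mathcal{O}=\otimes_t O^{(t)}_t$, since these span $\mathcal{L}(\mathcal{H})$. The right-hand side is fixed by Theorem \ref{th:PDMextendedQA}. Expanding $\Phi=\otimes_t\phi_t$, each Kraus pair $\{\mathbbm{1}/\sqrt2,\text{SWAP}/\sqrt2\}$ at slice $t$ either leaves $O^{(t)}$ on the forward branch $\mathcal{H}_1$ or moves it to the backward branch $\mathcal{H}_2$. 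Using $\mathcal{R}_{\text{pdm}}={\rm Tr}_{\mathcal{H}_2}[\Phi(\mathcal{R}_{\text{ext}})]$, cyclicity, and Hermiticity of the Kraus operators, Eq.~\eqref{eq:Rextexpect} gives
\begin{equation*}
{\rm Tr}[\mathcal{R}_{\text{pdm}}\mathcal{O}]=\frac{1}{2^{N}}\sum_{S\subseteq\{0,\dots,N-1\}}{\rm tr}\Big[\rho\,\Big(\bar T\smallprod\nolimits_{t\in S}O^{(t)}(\epsilon t)\Big)\Big(\hat T\smallprod\nolimits_{t\notin S}O^{(t)}(\epsilon t)\Big)\Big].
\end{equation*}
Whether $N-1$ lies in $S$ or not, the word is the same, because the maximal-time operator sits at the junction of the two blocks. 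The sum therefore collapses to $2^{-(N-1)}\sum_{\pi\in\mathcal U_N}{\rm tr}[\rho\,\mathcal{W}_\pi]$, where $\mathcal{W}_\pi$ is the unimodally ordered product of Heisenberg operators determined by $\pi$.

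For the left-hand side, cyclicity gives ${\rm Tr}[\Phi_{\text{UT}}(\mathcal{R})\mathcal{O}]=2^{-(N-1)}\sum_\pi {\rm Tr}[\mathcal{R}\,U_\pi^\dagger\mathcal{O}U_\pi]$. Next I compute $U_\pi^\dagger\mathcal{O}U_\pi$ with $U_\pi=W_\pi\otimes_tU_t^{t-\pi(t)}$. Conjugation by the local evolution factor turns $O^{(t)}$ at slice $t$ into a Heisenberg-shifted operator $O^{(t)}(\epsilon(t-\pi(t)))$ (up to the sign convention, which must be fixed). The permutation $W_\pi$ then relocates it to slice $\pi(t)$ or $\pi^{-1}(t)$. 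The result is again a product operator, now with insertion $\tilde O^{(s)}$ at slice $s$.

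By Corollary \ref{cor:wigthman}, ${\rm Tr}[\mathcal{R}\,\otimes_s\tilde O^{(s)}_s]={\rm tr}[\rho\,\hat T\prod_s\tilde O^{(s)}(\epsilon s)]$. In that expression the original operator $O^{(t)}$ appears evolved by exactly $\epsilon t$, since the local factor cancels the displacement of the slice, and it is placed in the time-ordered position dictated by its new slice. The ordering of the product is thus the permutation $\pi$ of the original labels. Choosing the convention of $W_\pi$ consistently makes this product equal to $\mathcal{W}_\pi$, matching the example $021\mapsto{\rm tr}[\rho O_1(\epsilon)O_2(2\epsilon)O_0]$ given in the text.

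The main obstacle is this bookkeeping: confirming that the direction of $W_\pi$ and the exponent $t-\pi(t)$ are mutually consistent, so that the evolution shift exactly compensates the slice shift. A related point is that $\hat T$ in Corollary \ref{cor:wigthman} orders by slice label, with later slices to the left, whereas a unimodal word has to be read as the slot order of a product. I would settle the conventions on the $N=2$ case first, checking against Eq.~\eqref{eq:pdm2example}: there $\mathcal U_2=\{01,10\}$, the identity permutation gives $\mathcal{R}$, and the swap with the evolution factor $\text{SWAP}\,U\otimes U^\dagger$ should give $\mathcal{R}^\dagger$. After that, a single transposition-like step extends the argument to general $N$.

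Finally, I check that the map is CPTP. Each $U_\pi$ is unitary, so $\sum_\pi K_\pi^\dagger K_\pi=\mathbbm{1}$, and hence $\Phi_{\text{UT}}$ is a mixed-unitary channel.
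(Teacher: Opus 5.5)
Your proposal is correct and follows essentially the paper's own argument: the subset expansion of $\Phi=\otimes_t\phi_t$ from Theorem~\ref{th:PDMextendedQA} collapses onto the $2^{N-1}$ unimodal words, because the maximal-time insertion sits at the junction. Each word is then realized as ${\rm Tr}[\mathcal{R}\,U_\pi^\dagger\mathcal{O}U_\pi]$ through Corollary~\ref{cor:wigthman}, with the local evolution factor compensating the slice displacement; the paper, like you, only checks this bookkeeping on examples.

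On the bookkeeping you flag, your $N=2$ calibration cannot resolve the $\pi$ versus $\pi^{-1}$ ambiguity, since every two-slice permutation is an involution. With the shift $\epsilon(t-\pi(t))$ applied at slice $t$ before relocation, compensation forces $t\mapsto\pi(t)$, and the resulting word is $\pi^{-1}$; for example, $\pi=120$ would produce the non-unimodal word $201$. So fix the convention on a non-involutive case at $N=3$. Keep in mind that the paper's own examples ($\text{SWAP}\,U\otimes U^\dagger$ and $\text{SWAP}_{21}\,\mathbbm{1}\otimes U\otimes U^\dagger$) use the exponent $\pi(t)-t$ with $W_\pi$ on the left, not the stated $t-\pi(t)$.
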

Let us  consider again the basic example of  $N=2$. In this case there is only one non trivial permutation leading to $\Phi_{\text{UT}}(\mathcal{R})=\frac{1}{2}\mathcal{R}+K_{\pi}\mathcal{R}K_\pi^\dag\,,$ for $\pi(0,1)=(1,0)$ corresponding to the Kraus operator $K_\pi=\frac{1}{\sqrt{2}}\text{SWAP}\, U\otimes U^\dag =\frac{1}{\sqrt{2}}e^{i\tilde{\mathcal{S}}}$. To compute the second term notice that $\mathcal{R}K_\pi^\dag=\frac{1}{\sqrt{2}}(\rho\otimes \mathbbm{1})$ while $K_\pi (\rho\otimes \mathbbm{1}) =\frac{1}{\sqrt{2}}\mathcal{R}^\dag$. In summary,
$
    \Phi_{\text{UT}}(\mathcal{R})=\frac{\mathcal{R}+\mathcal{R}^\dag}{2}\,,
$
in agreement with our previous discussion and with Eq.\ \eqref{eq:pdm2example}.

Considering that both $\mathcal{R}$ and $\mathcal{R}_{\text{ext}}$ have simple closed forms, we have obtained two \emph{closed formulas} for PDMs. This illustrates the advantage of working with spacetime states as the fundamental objects. Conversely, it also shows that the formalism of spacetime QM can naturally accommodate sequential-in-time measurements, in which case it collapses to the PDM formalism. While our derivation was purely mathematical, it would be interesting to understand possible physical interpretations of the quantum channels $\Phi,\Phi_{\text{UT}}$, perhaps in terms of a more complete description of the measurement process. Since these are genuine quantum channels, the resulting PDMs inherit a convex structure from the Kraus decomposition. However, these convex combinations are more general than those discussed in Eq.\ \eqref{eq:convextoJ}, since they also involve permutations across slices. This motivates the study of a broader class of convex mixtures of spacetime states, including those generated by channels that permute degrees of freedom across slices.

Let us also make a few comments about the extension of PDM proposed in \cite{zhang2020different} for gaussian continuum variable systems. First of all let us notice that the construction in \cite{zhang2020different} that defines modes at different times is equivalent to our spacetime Hilbert proposal when applied to bosons (see Sections \ref{sec:formalism} and \ref{sec:PIs} and the discussion in \cite{diaz2025spacetime}). Then, it is straightforward to verify that for two times and $U\equiv \mathbbm{1}$ the operator $\frac{\mathcal{R}+\mathcal{R}^\dag}{2}$ has as a covariance matrix the ``covariance matrix in time'' defined in \cite{zhang2020different} for the same scenario. We develop that example explicitly in  Appendix \ref{app:PDM}. 
Let us also remark that the operator $\mathcal{R}$ has an explicit closed form for bosons, directly related to the action in classical mechanics, as shown in Section \ref{sec:PIs}. Our approach is thus also elucidating the relation between PDMs and the PI formulation, a problem that has been posed in 
\cite{zhang2020different}.

Another open problem posed in \cite{fullwood2024operator} is the definition of PDMs for other sets of observables, with the simple relation between PDMs and quantum states over time only holding for two time slices and light touch observables. In principle, the strategy we presented throughout this section could be extended to any observable.
The problem is reduced to finding $\Phi$ which in general is not a linear function of observables but still a linear map. Then, since a Kraus-like representation always exists (with possible signs if the map is not CPTP), it can be ``inverted'' using the cyclicity of the trace. We provide an explicit example in the Appendix \ref{app:PDM}. Although the structure of the PDM and relation with spacetime states could vary notably from case to case at the very least
the formalism suggests a clear path to developing the required generalizations of PDMs, with the latter always retrievable as a linear map acting on spacetime states.

\subsection{Superdensity operators}

Superdensity operators (SO) are density-operator analogues on spacetime \emph{operator space}. Namely, they are objects acting as
\begin{equation}
    \varrho: \mathcal{L}(\mathcal{H})\to \mathcal{L}(\mathcal{H})\,,
\end{equation}
or equivalently as a bilinear form $\varrho: \mathcal{L}(\mathcal{H}^\ast)\otimes \mathcal{L}(\mathcal{H})\to \mathbbm{C}$, where $\mathcal{H}=h_0\otimes h_1\otimes \dots h_{N-1}$ is the spacetime Hilbert space of Section \ref{sec:formalism}. Since we are working in operator space, it is useful to introduce the notation $|O\rangle \rangle=O\otimes \mathbbm{1} |\Phi^+\rangle \rangle$ for $|\Phi^+\rangle \rangle$ the Choi state.

The essential idea of the SO formalism is that one can operationally construct  an operator $\varrho$ that captures spacetime correlations. In this sense, such  $\varrho$ leads to a definition for a state in spacetime. Following \cite{cotler2018superdensity} this SO can be obtained as the reduced state of  auxiliary registers after sequentially coupling them (in superposition over an operator basis) to the system at times $t_0,\dots,t_{N-1}$ and tracing out the system. Under this procedure (see \cite{cotler2018superdensity} for the details) one obtains 
\begin{equation}
\label{eq:sdo-expansion}
  \varrho=\frac{1}{\dim(\mathcal{H})} \sum_{\textbf{i},\textbf{j}}
  C_{\textbf{i},\textbf{j}}\;
  \mathop{{\textstyle\bigotimes}}\limits_{t=0}^{N-1} |O_{i_t}\rangle \rangle \langle \langle O_{j_t}|\,,
\end{equation}
where $\textbf{i}=(i_0,i_1,..., i_{N-1})$, the $O_i$ form a complete orthonormal basis of $\mathcal{L}(h)$ ($\langle \langle O_j|O_i\rangle\rangle={\rm tr}[O_j^\dag O_i]=\delta_{ij}$),
and 
\begin{equation}
\label{eq:Ctensor}
  C_{\textbf{i},\textbf{j}}
  = \mathrm{tr}\Big[
    O_{i_{N-1}}\,U\cdots U\,O_{i_0}\,\rho\,
    O_{j_0}^\dag\,U^\dag\cdots U^\dag\,O_{j_{N-1}}^\dag
  \Big].
\end{equation}
For simplicity and ease of notation we assume a time independent evolution such that $U=e^{-i\epsilon H}$. Notice that the tensor $C$ contains all the time-ordered correlation functions (of ascending and descending orders)
of a complete basis of operators. This also means that any time-ordered correlation function can be recovered from $C$ and, as consequence from $\varrho$: 
\begin{equation}\label{eq:correlfromSO}
    \begin{split}
        &C[A^{(0)},A^{(1)},\dots, B^{(0)}, B^{(1)},\dots]\\&= \mathrm{tr}\Big[
    A^{(i_{N-1})}\,U\cdots U\,A^{(0)}\,\rho\,
    B^{(0)\dag}\,U^\dag\cdots U^\dag\,B^{(N-1)\dag}
  \Big]\\&=\langle \langle A^{(0)},A^{(1)},\dots|\varrho|B^{(0)\dag},    B^{(1)\dag},\dots\rangle\rangle\,,
    \end{split}
\end{equation}
with e.g., $|B^{(0)}, B^{(1)},\dots\rangle\rangle\equiv |B^{(0)}\rangle \rangle \otimes  |B^{(1)}\rangle \rangle \otimes \dots\,$. This is a direct consequence of the $O_i$ being a complete orthonormal basis in which the $A^{(t)}$, $B^{(t)}$ can be expanded. Here, we find it important to remark that $\varrho$ is Hermitian, has unit trace and is positive definite. In other words, $\varrho$ is a genuine quantum state \cite{cotler2018superdensity} (technically one can call any quantum state acting in $\mathcal{L}(\mathcal{H})$ a superdensity operator; however we will focus on $\varrho$ as is the one naturally capturing correlations across time).

\begin{figure}[h!]
    \centering
\includegraphics[width=0.35\linewidth]{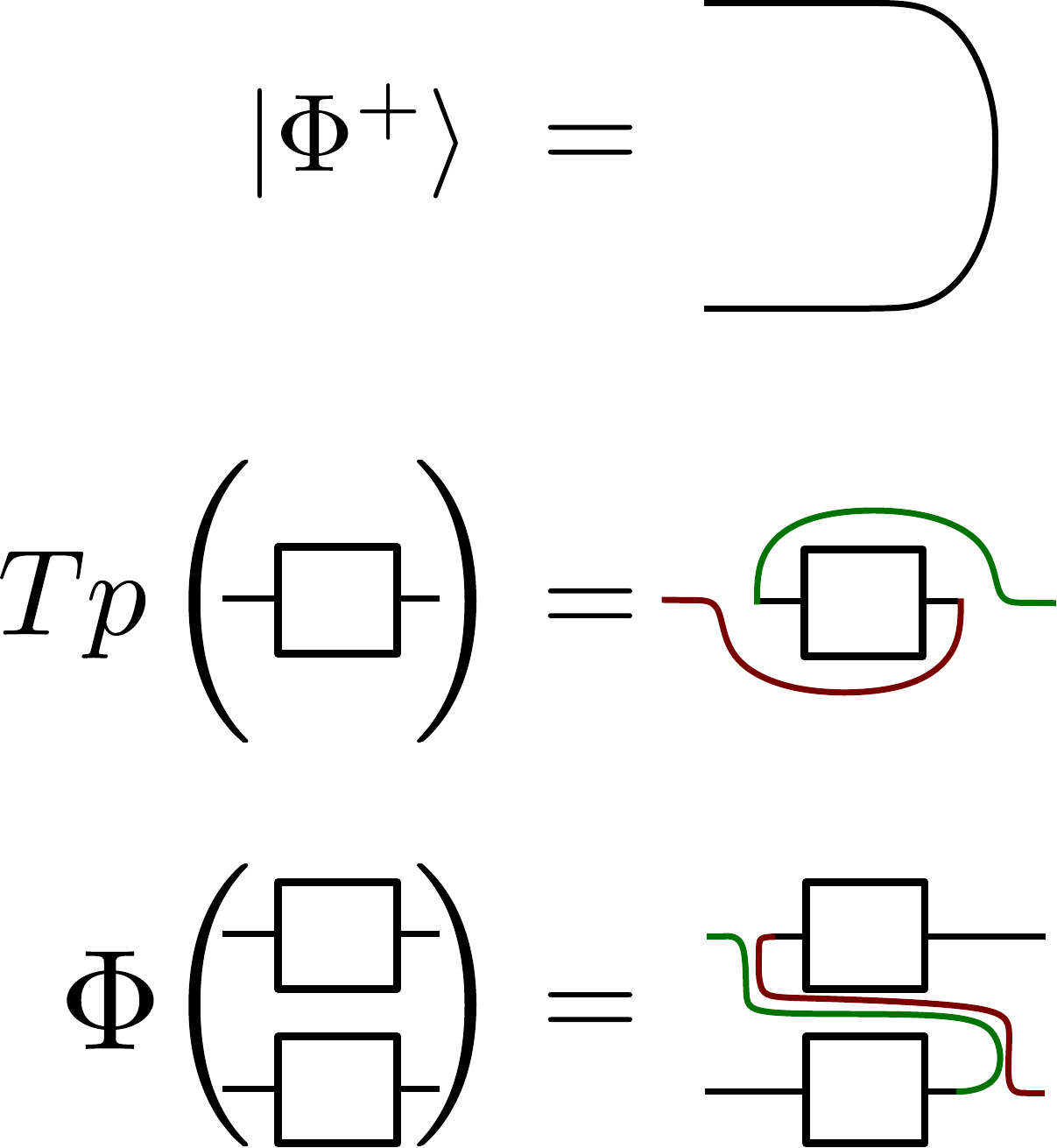}
    \caption{Pictorial representations of the different maps involved in Theorem \ref{th:SO} and of the unnormalized Choi state $|\Phi^+\rangle$. }
    \label{fig:TNmaps}
\end{figure}

\begin{figure*}[t!]
    \centering
\includegraphics[width=0.9\linewidth]{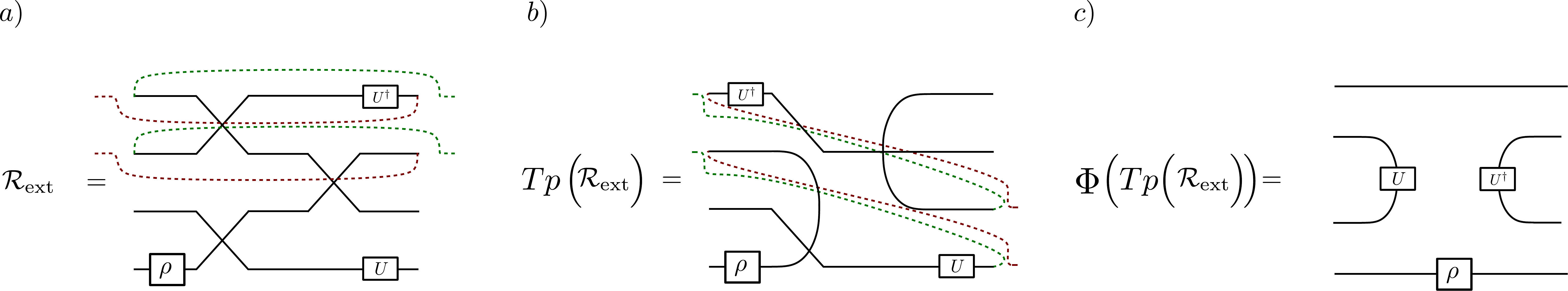}
    \caption{Diagrammatic derivation of Eq.\ \eqref{eq:SOfinal}. On panel a) we depict $\mathcal{R}_{\text{ext}}$ for two times (solid lines), together with the action of $Tp$ (dashed lines). On panel b) the result of $Tp(\mathcal{R}_{\text{ext}})$ (solid lines) with the action of $\Phi$ (dashed lines). On panel c) the result of having applied the full map, which results in $\varrho$. }
    \label{fig:soderivation}
\end{figure*}

Let us now discuss how to recover SOs from the spacetime formalism. Notably, as we shall prove, a simple rearrangement relates $\varrho$ to $\mathcal{R}_{\text{ext}}$. 
Let us first notice that the tensor $C$ can be rewritten as 
\small
\begin{equation}
    \begin{split}\label{eq:cijSO}
        C_{\textbf{i},\textbf{j}}
  &= \mathrm{tr}\big[\rho O^\dag_{j_0}O^\dag_{j_1}(\epsilon)...O^\dag_{j_{N-1}}\![(N-1)\epsilon]O_{i_{N-1}}\![(N-1)\epsilon]...O_{i_0}\!\big]\\
  &={\rm Tr}\big[ \mathcal{R}_{\text{ext}}(\otimes_{t=0}^{N-1} O_{i_t})\otimes  (\otimes_{t=0}^{N-1} O^\dag_{j_{t}})\, \big]\,,
    \end{split}
\end{equation}
\normalsize
where the last equality is a direct consequence of Theorem \ref{th:extendedQA}.
On the other hand, since the operators form a complete basis we can expand $\mathcal{R}_{\text{ext}}$ as follows
\begin{equation}
    \begin{split}
 \mathcal{R}_{\text{ext}}
  =\sum_{\textbf{i},\textbf{j}}  C_{\textbf{i},\,\textbf{j}}\,  \left[\,\mathop{{\textstyle\bigotimes}}\limits_{t=0}^{N-1} O_{i_t}\,\right]\otimes \left[\,\mathop{{\textstyle\bigotimes}}\limits_{t=0}^{N-1} O^\dag_{j_{t}}\,\right]\,,
    \end{split}
\end{equation}
where we use the dagger basis in the second copy of $\mathcal{H}$ (also a complete basis) in accordance with Eq.\ \eqref{eq:cijSO}. Let us now stress that this operator has almost the same form of $\varrho$ but without vectorizations. We also see that both operators have dimension $d^{2N}\times d^{2N}$. 
As a matter of fact, to explicitly relate them we need the following rearrangement 
$
        \mathcal{R}_{\text{ext}}\in \mathcal{L}(\mathcal{H})\otimes \mathcal{L}(\mathcal{H})=\mathcal{H}\otimes \mathcal{H}^\ast\otimes \mathcal{H}\otimes \mathcal{H}^\ast\to (\mathcal{H}\otimes \mathcal{H})\otimes (\mathcal{H}^\ast\otimes \mathcal{H}^\ast)\,.
$
    This corresponds to the $O_{i_t}$ becoming kets and the $O_{j_t}$ becoming bras. In a given basis we thus need  
    \begin{equation}
        \Phi(|i\rangle \langle j|\otimes |k\rangle \langle l|)= |ij\rangle \langle kl|=|i\rangle \langle k|\otimes |j\rangle \langle l|\,
    \end{equation}
    holding for all time slices. Interestingly, this is the realignment that also appears in the  Computable Cross-Norm or Realignment (CCNR) criterion to detect entanglement \cite{rudolph2003some, chen2002matrix}. 
 But in addition since the operators on $\mathcal{H}_2$ appear with a dagger we need to consider a partial transpose acting on them. Putting all together we find.

\begin{theorem}\label{th:SO}
    Consider the linear map \begin{equation}
        \mathcal{M}\equiv \Phi\circ(\mathbbm{1}_{\mathcal{H}_1}\otimes Tp)\,,
    \end{equation} with $Tp$ indicating  transpose on $\mathcal{H}_2$ and $ \Phi(|i\rangle_t \langle j|\otimes |k\rangle_t \langle l|)=|ij\rangle_t \langle kl|$ for all $t$. It holds that
    \begin{equation}
    \varrho=\frac{1}{\dim(\mathcal{H})}\mathcal{M}(\mathcal{R}_{\text{ext}})\,,\;\; \mathcal{R}_{\text{ext}}=\dim (\mathcal{H})\,\mathcal{M}(\varrho)\,.
\end{equation}
\end{theorem}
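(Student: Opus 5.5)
The plan is to compare the two operators term by term in a common expansion. Both $\mathcal{R}_{\text{ext}}$ and $\varrho$ are already expanded over the same orthonormal operator basis $\{O_i\}$ with the same coefficient tensor $C_{\textbf{i},\textbf{j}}$.

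\textbf{Step 1: expansion of $\mathcal{R}_{\text{ext}}$.} By Theorem \ref{th:extendedQA}, in the form of Eq.\ \eqref{eq:Rextexpect}, we have
\begin{equation*}
{\rm Tr}\big[\mathcal{R}_{\text{ext}}(\otimes_t O_{i_t})\otimes(\otimes_t O^\dag_{j_t})\big]=C_{\textbf{i},\textbf{j}}.
\end{equation*}
To expand $\mathcal{R}_{\text{ext}}$ itself we use the dual basis. Since $\{O_i\}$ is orthonormal, the dual of $O_{i}$ under the pairing $X\mapsto {\rm tr}[XY]$ is $O_i^\dag$. On the second copy we use the basis $\{O_j^\dag\}$, whose dual is $O_j$. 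With these duals the expansion displayed before the theorem follows, with the appropriate dagger and conjugation conventions absorbed. I will check this bookkeeping explicitly, because the convention decides whether the second copy carries $O_j^\dag$ or $O_j^\ast$.

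\textbf{Step 2: action of $\mathcal{M}$ on a single product term.} By linearity, and because both $Tp$ and $\Phi$ factorize over time slices, it suffices to treat one slice $t$ and one term $A\otimes B^\dag$ on $h_t\otimes h_t$, with $A=O_{i_t}$ and $B=O_{j_t}$. Write $A=\sum A_{ij}|i\rangle\langle j|$ and $B^\dag=\sum \bar B_{lk}|k\rangle\langle l|$.
\begin{itemize}
\item The partial transpose gives $(B^\dag)^T=\bar B=\sum \bar B_{kl}|k\rangle\langle l|$.
\item The realignment gives $\Phi(|i\rangle\langle j|\otimes|k\rangle\langle l|)=|ij\rangle\langle kl|$, which yields $\big(\sum A_{ij}|ij\rangle\big)\big(\sum \bar B_{kl}\langle kl|\big)$.
\end{itemize}
The first factor is $|A\rangle\rangle=(A\otimes\mathbbm{1})|\Phi^+\rangle\rangle$. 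The second factor is $\langle\langle B|$, the conjugate-transposed vectorization of $B$. Hence
\begin{equation*}
\mathcal{M}(A\otimes B^\dag)=|A\rangle\rangle\langle\langle B|.
\end{equation*}
Tensoring over $t$ turns each term of the $\mathcal{R}_{\text{ext}}$ expansion into $\otimes_t|O_{i_t}\rangle\rangle\langle\langle O_{j_t}|$. Comparing with Eq.\ \eqref{eq:sdo-expansion} then gives $\mathcal{M}(\mathcal{R}_{\text{ext}})=\dim(\mathcal{H})\,\varrho$.

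\textbf{Step 3: the inverse relation.} $\mathcal{M}$ maps the product basis $\{|i\rangle\langle j|\otimes|k\rangle\langle l|\}$ bijectively onto itself, because transposition and realignment are both index permutations. In fact, for the composite index assignment the map is an involution: applying $Tp$, then $\Phi$, then again each step permutes the four slots back. I will verify this directly on basis elements: $\mathcal{M}(|i\rangle\langle j|\otimes|k\rangle\langle l|)=|i\rangle\langle l|\otimes|j\rangle\langle k|$. This is involutive if the realigned space is re-identified with $\mathcal{H}_1\otimes\mathcal{H}_2$ slice by slice. The second identity, $\mathcal{R}_{\text{ext}}=\dim(\mathcal{H})\,\mathcal{M}(\varrho)$, then follows.

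\textbf{Main obstacle.} I expect the main difficulty to be consistency of conventions rather than any substantive step: fixing the vectorization $|O\rangle\rangle$, the ordering of tensor factors after realignment (per-slice versus global grouping), and the conjugations so that they match Eq.\ \eqref{eq:Ctensor} exactly. I would fix these on a single slice with $N=1$ by direct index computation, in the spirit of Figure \ref{fig:soderivation}, and then extend multiplicatively.
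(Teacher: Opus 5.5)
Your Step 2 is correct and is the mechanism of the paper's proof. Slice by slice, $\mathcal{M}(A\otimes B^\dag)=|A\rangle\rangle\langle\langle B|$, so the common tensor $C_{\textbf{i},\textbf{j}}$ carries $\mathcal{R}_{\text{ext}}$ to $\dim(\mathcal{H})\,\varrho$.

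\textbf{Step 3 rests on a false claim.} $\mathcal{M}$ is not an involution under the slice-wise re-identification you propose. Iterating your own formula $\mathcal{M}(|i\rangle\langle j|\otimes|k\rangle\langle l|)=|i\rangle\langle l|\otimes|j\rangle\langle k|$ gives $\mathcal{M}^2(|i\rangle\langle j|\otimes|k\rangle\langle l|)=|i\rangle\langle k|\otimes|l\rangle\langle j|$. The realignment swaps the second row index with the first column index. The partial transpose swaps the second row index with the second column index. Two transpositions sharing a slot compose to a 3-cycle, so $\mathcal{M}^3=\mathrm{id}$ and $\mathcal{M}^{-1}=\mathcal{M}^2=(\mathbbm{1}\otimes Tp)\circ\Phi^{-1}\neq\mathcal{M}$.

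What bijectivity legitimately gives is $\mathcal{R}_{\text{ext}}=\dim(\mathcal{H})\,\mathcal{M}^{-1}(\varrho)$: undo the realignment first, then transpose. This is how the theorem's second identity has to be read. Taken literally with the same $\mathcal{M}$, it already fails for $N=1$. There $\mathcal{R}_{\text{ext}}=(\rho\otimes\mathbbm{1})\,\text{SWAP}$ and, by the first identity, $\varrho=\rho\otimes\mathbbm{1}/d$. But $d\,\mathcal{M}(\varrho)=\sum_{a,b,c}\rho_{ab}\,|a\rangle\langle c|\otimes|b\rangle\langle c|\neq\mathcal{R}_{\text{ext}}$.

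\textbf{The Step 1 bookkeeping cannot simply be absorbed.} Your dual-basis argument correctly yields $\mathcal{R}_{\text{ext}}=\sum_{\textbf{i},\textbf{j}}C_{\textbf{i},\textbf{j}}\,(\otimes_t O^\dag_{i_t})\otimes(\otimes_t O_{j_t})$. Step 2 turns this into $\sum C_{\textbf{i},\textbf{j}}\otimes_t|O^\dag_{i_t}\rangle\rangle\langle\langle O^\dag_{j_t}|$, which is not the expansion \eqref{eq:sdo-expansion}. The expansion you need is the one the paper writes, with $O_{i_t}$ on $\mathcal{H}_1$ and $O^\dag_{j_t}$ on $\mathcal{H}_2$. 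It coincides with yours only when the operator basis is Hermitian.

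This is a real restriction, not a convention. Eq.~\eqref{eq:sdo-expansion} is linear in the ket operator and antilinear in the bra operator, so it is invariant only under real orthogonal changes of basis. For matrix units $O_{(ab)}=|a\rangle\langle b|$ and $N=1$ it gives $\mathbbm{1}\otimes\rho/d$, while $\mathcal{M}(\mathcal{R}_{\text{ext}})/d=\rho\otimes\mathbbm{1}/d$.

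So fix a Hermitian orthonormal basis at the outset, as the paper tacitly does. Then $O_i^\dag=O_i$, Step 1 is immediate, and your Steps 1 and 2 prove the first identity as in the paper.
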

The theorem shows that $\varrho$ and $\mathcal R_{\rm ext}$ are related by the linear rearrangement
$\mathcal M=\Phi\circ(\mathbbm 1\otimes Tp)$, namely a partial transpose on the second copy  followed by a realignment (index reshuffling) $\Phi$. Importantly, $\mathcal M$ is not a physical map. For instance, the Peres-Horodecki test \cite{peres1996separability,horodecki2001separability} relies on the fact that the partial transpose preserves positivity for separable states but can produce negative eigenvalues for entangled states. Similarly, the CCNR/realignment criterion is based on a reshuffling of matrix indices which obeys a norm bound for separable states but can be violated by entanglement. 
This shows that \emph{timelike correlations in 
$\varrho$} (between forward and backward evolution) \emph{manifest  themselves precisely in how 
$\mathcal{R}_{\text{ext}}$
 departs from an ordinary quantum state}.

\begin{figure}[h!]
    \centering
\includegraphics[width=0.4\linewidth]{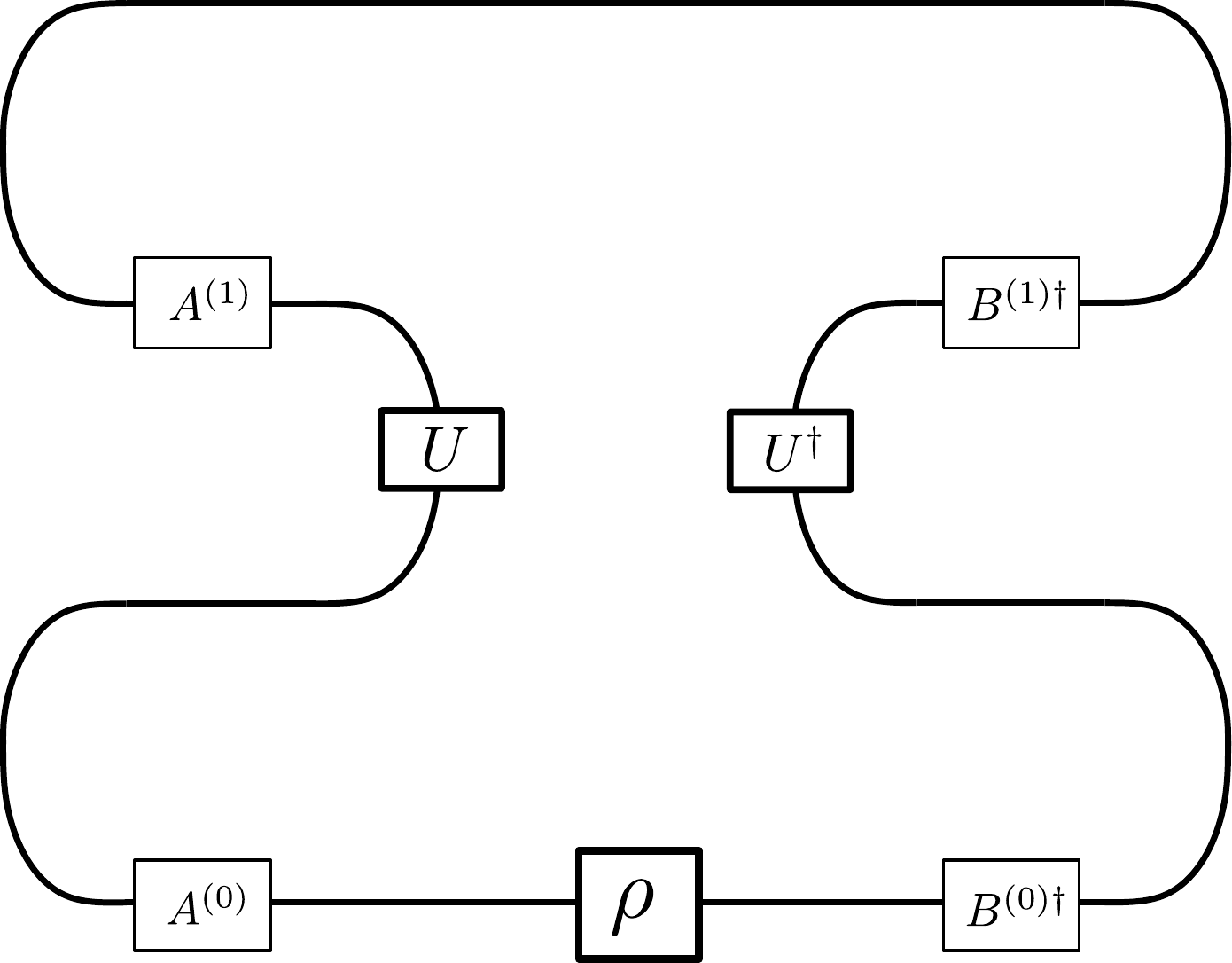}
    \caption{Forward and backward correlator of Eq.\ \eqref{eq:correlfromSO}  as derived from Eq.\ \eqref{eq:SOfinal}. }
    \label{fig:sotrace}
\end{figure}

As a direct application of the theorem let us obtain an explicit expression for $\varrho$. For concreteness we focus on the case $N=2$ but a similar procedure can be applied to any $N$. It is convenient to represent the maps $\Phi, Tp$ in TN notation as we depict in Figure \ref{fig:TNmaps}, where we also recall how to represent the unnormalized Choi state. 
Then, a direct application of the theorem to $\mathcal{R}_{\text{ext}}$, shown step by step in Figure \ref{fig:soderivation}, leads to
\begin{equation}\label{eq:SOfinal}
    \varrho=\rho \otimes \frac{1}{d}|U\rangle \rangle \langle \langle U| \otimes \frac{1}{d}\mathbbm{1}\,.
\end{equation}
Notably we have an explicit expression for $\varrho$. One can check that it satisfies Eq.\ \eqref{eq:correlfromSO} which is the defining property of the SO. An easy verification is obtained once again in TN notation, as depicted in Figure \ref{fig:sotrace}.

Having a closed expression for 
$\varrho$ allows us to make a few interesting comments. First, even when the initial state is pure and the evolution is unitary, 
$\varrho$ has nonzero entropy. This is in contrast with the picture arising from 
spacetime states, where closed dynamics correspond to vanishing pseudo-entropy. Second, SOs do not reproduce the time marginals by partial traces in the way spacetime states do (Corollary \ref{cor:schrodheis}). This is not a problem for SOs, since the equivalent description corresponds to identity-operator insertions (after all, SOs act on operator space). In this sense, while SOs contain the same information as $\mathcal{R}_{\text{ext}}$ rearranged in a convenient and useful way that makes $\varrho$ operational,  the resulting object does not play the same role as a spacetime state in the SQM framework. Rather, 
$\varrho$ may be viewed as a standard-state encoding of the same multitime information.

\subsection{The Page and Wootters mechanism}\label{sec:PW}

\subsubsection{Introduction to the PW mechanism for discrete time}
The Page and Wootters (PW) mechanism \cite{page1983evolution} aims to replace quantum evolution with correlations. Given a quantum system, traditionally evolving under the Schr\"{o}dinger equation, one considers instead an enlarged system with Hilbert space $h_{\text{PW}}=h_T\otimes h$ for $h$ the original Hilbert space of the system and $h_T$ the Hilbert space of the rest (sometimes also referred to as clock system). The essential idea of PW is that under a proper correlation between $S$ and $T$, which defines the so-called physical subspace, the system, conditioned to values of an observable of $T$ (which might be thought as ``clock readings''), evolves according to unitary evolution.  Initially proposed as a solution for a ``timeless universe'' in the context of the problem of time \cite{dewitt1967quantum, unruh1989time, kuchavr2011time, anderson2012problem}, the PW mechanism has recently attracted renewed interest in the setting of quantum information and computation \cite{giovannetti2015quantum, boette2016system, boette2018history, paiva2022non, diaz2023parallel,  cafasso2024quantum, coppo2026quantum}. Moreover, in the case of single particles, it has been employed to define models where  Lorentz covariance is explicit \cite{diaz2019history, diaz2019historystate, giovannetti2023geometric}.

One convenient and rigorous way to introduce the PW mechanism in a discrete time setting, developed in \cite{boette2016system}, is as follows. Assuming a basis of states $|t\rangle$ with $t=0,1,\dots ,N-1$ for $h_T$ we can define
\begin{equation}
    e^{i\epsilon P_T}=\sum_{t=0}^{N-1}|t+1\rangle \langle t|\otimes \mathbbm{1}_S
\end{equation}
with $|N\rangle\equiv |0\rangle$. This is clearly a time translation operator with $P_T$ the discrete version of the momentum conjugated to the ``time operator'' $T=\sum_t t\, |t\rangle \langle t|$ (notice that $e^{i \epsilon P_T}$ and $e^{i2\pi T/N}$ are just the usual Weyl operators $X, Z$ respectively).  Then, let us introduce the unitary operator 
\begin{equation}
    V=\sum_{t=0}^{N-1}|t\rangle \langle t|\otimes e^{i \epsilon  t H}\,,
\end{equation}
which is implementing unitaries on the system controlled by states on $T$ (let us notice that this approach is very natural in the quantum computing scenario, as shown in \cite{boette2016system} and further developed in \cite{diaz2023parallel}).
We can now define the exponential of the universe operator $J$ as
\begin{equation}\label{eq:eijdef}
    e^{i J}=V^\dag e^{i\epsilon P_T} V\,.
\end{equation}
With these definitions the Page and Wootters ``universe equation'' becomes
\begin{equation}\label{eq:universeconstr}
e^{iJ}|\Psi_{\text{PW}}\rangle=|\Psi_{\text{PW}}\rangle \,,
\end{equation}
or equivalently 
$
    J|\Psi_{\text{PW}}\rangle=0\,,
$
(modulo $2\pi$) which is the form used in continuum time formulations and which models the
timeless Wheeler-DeWitt equation.
Let us also notice that a direct calculation leads to 
\begin{equation}\label{eq:eij}
    e^{i J}=e^{i T (|0\rangle \langle 0|\otimes H)}\,e^{i \epsilon P_T}\otimes e^{-i\epsilon H}\,.
\end{equation}
In particular, for cyclic evolution $e^{-iTH}=\mathbbm{1}$, the universe operator becomes  $J=P_T\otimes \mathbbm{1}-\mathbbm{1}\otimes H$ thus taking the same form as in continuum (unbounded) time schemes.

One can easily prove that $J$ has always a zero eigenvalue with degeneracy equal to the dimension of $h$, leading to solutions of the form 
\begin{equation}
|\Psi_{\text{PW}}\rangle=\sum_{t=0}^{N-1} |t\rangle |\psi(t)\rangle    
\end{equation}
 for $|\psi(t)\rangle=e^{-i\epsilon t H}|\psi\rangle$ and arbitrary initial conditions $|\psi\rangle$. One might interpret \eqref{eq:universeconstr} as a frozen in time Schr\"{o}dinger equation with $J$ the global Hamiltonian. In this sense, the quantum state $|\Psi_{\text{PW}}\rangle$ is static but we can recover standard unitary evolution of the system by conditioning on the rest, namely $\langle t|\Psi_\text{PW}\rangle=|\psi(t)\rangle$. 
This is the essence of the PW mechanism. Equivalently, we can define the global operator $O_t= |t\rangle \langle t|\otimes O_S$ leading to 
\begin{equation}
    \langle \Psi_{\text{PW}}|O_t|\Psi_{\text{PW}}\rangle=\langle \psi(t)|O_S|\psi(t)\rangle=\langle \psi|O_H(\epsilon t)|\psi\rangle\,.
\end{equation}
Let us also notice that given a basis $|k\rangle$ of the system we can write 
\begin{equation}\label{eq:pwstate}
|\Psi_{\text{PW}}\rangle=\sum_{t,k}\psi_{k}(\epsilon t)|t,k\rangle\,,
\end{equation}
with $\psi_{k}(\epsilon t)$ the evolved wavefunction, and 
\begin{equation}\label{eq:operatorpaw}
    O_t= \sum_{k,k'} O_{kk'}|t,k\rangle \langle t,k'|\,.
\end{equation}

\subsubsection{Second quantization of the PW mechanism and  spacetime QM}

Given the description of a single system, typically a particle, second quantization provides the framework to describe many indistinguishable copies of it. In other words, starting from a fundamental single-particle Hilbert space, one constructs the corresponding Fock space that accommodates an arbitrary number of particles.
Let us apply this scheme to the PW mechanism assuming that the system is a single particle and taking as the single particle basis $|t,k\rangle$. The idea is to introduce creation and annihilation operators such that $a^\dag_{t,k}|\Omega\rangle=|t,k\rangle$ leading to (we assume a bosonic algebra for simplicity) 
\begin{equation}
[a_{t,k},a^\dag_{t',k'}]=\delta_{tt'}\delta_{kk'}\,,
\end{equation} 
and with $|\Omega\rangle$ the associated vacuum. Notably, if we fix $t=t'$ we have the algebra of the traditional Fock space $h_F$ defined by $[a_k,a_{k'}^\dag]=\delta_{kk'}$. As a consequence, $|\Omega\rangle=\otimes_t |0\rangle$ for $|0\rangle$ the vacuum in $h_F$. The additional index $t$ can be regarded as a Hilbert space index so we reach the following conclusion.
\begin{theorem}
Consider the PW Hilbert space $h_{\text{\upshape PW}}=h_T\otimes h$, and the conventional Fock space of the system $h_F\equiv \text{\upshape  Sym} \oplus_{n=0}^\infty h^{\otimes n}$.
The Fock space arising from second quantization $$\mathcal{H}^{\text{\upshape PW}}_F\equiv \text{\upshape  Sym} \oplus_{n=0}^\infty h^{\otimes n}_{\text{\upshape PW}}$$  
is isomorphic to the spacetime Fock space, namely:
    \begin{equation}
  \mathcal{H}^{\text{\upshape PW}}_F\simeq\mathcal{H}\,,
\end{equation}
with $\mathcal{H}=h_F^{\otimes N}$.
\end{theorem}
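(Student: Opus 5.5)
The plan is to use the standard exponential law for Fock spaces: for finite-dimensional (or separable) single-particle spaces $h_1,h_2$,
\begin{equation}
\Gamma_s(h_1\oplus h_2)\simeq \Gamma_s(h_1)\otimes\Gamma_s(h_2),
\end{equation}
where $\Gamma_s(k)\equiv \text{Sym}\oplus_{n=0}^\infty k^{\otimes n}$. The first observation is that the PW single-particle space decomposes as a direct sum over clock readings. Since $h_T=\text{span}\{|t\rangle\}_{t=0}^{N-1}$, we have
\begin{equation}
h_{\text{PW}}=h_T\otimes h\simeq \bigoplus_{t=0}^{N-1}h^{(t)},\qquad h^{(t)}=\text{span}\{|t,k\rangle\}_k\simeq h .
\end{equation}
Iterating the exponential law $N-1$ times then gives
\begin{equation}
\mathcal{H}^{\text{PW}}_F\simeq\bigotimes_{t=0}^{N-1}\Gamma_s(h^{(t)})\simeq h_F^{\otimes N}=\mathcal{H}.
\end{equation}

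To make this concrete, and to match the algebraic language of the paper, I would define the isomorphism explicitly on occupation-number bases. A basis of $\mathcal{H}^{\text{PW}}_F$ is $\prod_{t,k}(a^\dag_{t,k})^{n_{tk}}/\sqrt{n_{tk}!}\,|\Omega\rangle$. I would map it to
\begin{equation}
\bigotimes_t\Big(\prod_k (a_k^\dag)^{n_{tk}}/\sqrt{n_{tk}!}\,|0\rangle\Big)\in h_F^{\otimes N}.
\end{equation}
This map sends orthonormal bases to orthonormal bases bijectively, so it is unitary. Under it, $a_{t,k}$ corresponds to $(a_k)_t$, i.e.\ $a_k$ acting on the $t$-th factor and trivially elsewhere, and $|\Omega\rangle$ corresponds to $\otimes_t|0\rangle$. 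I would then check that the commutation relations $[a_{t,k},a^\dag_{t',k'}]=\delta_{tt'}\delta_{kk'}$ are reproduced: for $t\neq t'$ the operators act on different tensor factors and commute, and for $t=t'$ one recovers the Fock algebra of $h_F$. This is exactly the spacetime algebra construction of Section~\ref{sec:basicformalism}, with time promoted to a mode label as in Eq.~\eqref{eq:spacetimealg}.

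The main obstacle is bookkeeping rather than conceptual. The first point is to verify that symmetrization across the combined index $(t,k)$ factorizes into independent symmetrizations within each $t$-block. Counting occupation numbers does this directly: a symmetric multi-particle state is determined by $\{n_{tk}\}$, and these split by $t$. The second point is to handle infinite dimension, where one restricts to finite-particle vectors and takes closures; the occupation basis argument extends to the completed spaces. Alternatively, one can appeal to Stone--von Neumann uniqueness for the finite set of canonical pairs per $k$, as in the footnote following the spacetime algebra, which gives unitary equivalence directly.

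For fermions the same occupation-number map works, but it requires a Jordan--Wigner-type sign convention, so it is not a plain tensor product. This is consistent with the separate treatment the paper reserves for Section~\ref{sec:fermions}, and the statement itself assumes bosons.
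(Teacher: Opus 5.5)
Your proof is correct and follows the paper's route: the clock label $t$ becomes an extra mode index, so the algebra $[a_{t,k},a^\dag_{t',k'}]=\delta_{tt'}\delta_{kk'}$ splits into $N$ independent copies of the $h_F$ algebra with vacuum $|\Omega\rangle=\otimes_t|0\rangle$. Your exponential-law and occupation-basis unitary is a more explicit rendering of the paper's remark that $t$ "can be regarded as a Hilbert space index". The only thing to drop is the Stone--von Neumann alternative: it covers only finitely many modes, so it fails for the infinite-dimensional (e.g.\ QFT) single-particle spaces the paper has in mind. Your basis-matching argument handles that case, because the Fock vacuum is fixed on both sides.
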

Hence, the Fock space whose elementary excitations are PW states is isomorphic to the corresponding spacetime Fock space. We will now show how, in many ways, the PW scheme can then be regarded as a the single-particle subspace of $\mathcal{H}$. We should clarify, however, that this isomorphism arises only when the system is described through a Fock space. This includes many important scenarios, such as any QFT. On the other hand, if the system is not described through a Fock space (e.g., a qubit), $\mathcal{H}$ is not a Fock space and no isomorphism follows.

The second quantization scheme also allows one to ``promote'' operators that were defined in the original Hilbert space $h_{\text{PW}}$ to $\mathcal{H}_F$. For a single particle operator the relation is given by \cite{schwabl2008advanced}
\begin{equation}
\begin{split}
      O=\sum_{t,i,t',j}\langle t,i|O|t',j\rangle |t,i\rangle\langle t',j|\\\to \mathcal{O}=\sum_{t,i,t',j}\langle t,i|O|t',j\rangle a^\dag_{ti}a_{t'j}\,.
\end{split}
\end{equation}
Notice that the matrix elements of $\mathcal{O}$ in the subspace of a single particle are the same as those of $O$, namely $\langle 0| a_{ti} \mathcal{O}a^\dag_{t'j}|0\rangle=\langle t,i|O|t',j\rangle$. In particular, for an operator of the form of \eqref{eq:operatorpaw} we obtain an operator $(\mathcal{O})_t$ acting on the slice $t$ of $\mathcal{H}$.
Let us now apply this scheme to the universe operator. 
A direct calculation yields
\begin{equation}\label{eq:matrixelJ}
    \langle t_2,k_2|e^{iJ}|t_1,k_1\rangle=e^{i\lambda_{k_1}(\epsilon-T\delta_{t_20})}\delta_{t_2,t_1+1}\delta_{k_1k_2}
\end{equation}
in the eigenbasis of the Hamiltonian, with $H|k\rangle=\lambda_k |k\rangle$. Similarly, consider the second quantized Hamiltonian $H_t=\sum_k\lambda_k a^\dag_{tk}a_{tk}$. The corresponding SQA leads to 
\begin{equation}
\begin{split}
     \langle \Omega|a_{t_2k_2}e^{i\tilde{\mathcal{S}}}a^\dag_{t_1k_1}|\Omega\rangle&=e^{i\lambda_{k_1}(\epsilon-T\delta_{t_20})} \langle \Omega|a_{t_2k_2}a^\dag_{t_1+1,k_1}|\Omega\rangle\\
     &= e^{i\lambda_{k_1}(\epsilon-T\delta_{t_20})} \delta_{t_2,t_1+1}\delta_{k_1k_2}\,,
\end{split} 
\end{equation}
which are precisely the matrix elements of $e^{i J}$ in Eq.\ \eqref{eq:matrixelJ}. In summary we can state the following result.
\begin{theorem}
The operator $e^{i\tilde{\mathcal{S}}}$ is the second quantized version of $e^{iJ}$.
\end{theorem}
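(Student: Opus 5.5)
The claim to establish is that $e^{i\tilde{\mathcal{S}}}$ is the second quantization of the single-particle operator $e^{iJ}$. In the bosonic setting this means two things. First, $e^{i\tilde{\mathcal{S}}}$ preserves particle number. Second, its restriction to each $n$-particle sector $\text{Sym}\,h_{\text{PW}}^{\otimes n}$ equals $(e^{iJ})^{\otimes n}$. In other words, $e^{i\tilde{\mathcal{S}}}=\Gamma(e^{iJ})$ is the multiplicative (exponentiated) second quantization of $e^{iJ}$. This differs from the additive rule $O\to\sum\langle\cdot|O|\cdot\rangle a^\dag a$ quoted earlier, which applies to the generator $J$ rather than to $e^{iJ}$.

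The plan is to reduce this to the standard fact that a unitary $W$ on Fock space satisfying
\[
W|\Omega\rangle=|\Omega\rangle,\qquad W a^\dag_{t,k}W^\dag=\sum_{t',k'}\langle t',k'|e^{iJ}|t,k\rangle\, a^\dag_{t',k'}
\]
is $\Gamma(e^{iJ})$. This holds because the $n$-particle states $a^\dag\cdots a^\dag|\Omega\rangle$ span the Fock space and are mapped exactly as $(e^{iJ})^{\otimes n}$ maps symmetrized product states.

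\emph{Step 1 (vacuum).} I will use $e^{i\tilde{\mathcal{S}}}=\mathcal{V}^\dag e^{i\epsilon\mathcal{P}}\mathcal{V}$. With $\mathcal{V}=\otimes_t e^{i\epsilon tH_t}$ and $H_t=\sum_k\lambda_k a^\dag_{tk}a_{tk}$, the vacuum $|\Omega\rangle=\otimes_t|0\rangle$ is annihilated by each $H_t$, so $\mathcal{V}|\Omega\rangle=|\Omega\rangle$. The SWAP network $e^{i\epsilon\mathcal{P}}$ permutes identical factors $|0\rangle$, so it also fixes $|\Omega\rangle$.

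\emph{Step 2 (conjugation of creation operators).} Definition~\ref{def:eip} gives $e^{i\epsilon\mathcal{P}}a^\dag_{t,k}e^{-i\epsilon\mathcal{P}}=a^\dag_{t+1,k}$, with the cyclic convention. The factor $\mathcal{V}$ acts slice-wise, giving $\mathcal{V}a^\dag_{t,k}\mathcal{V}^\dag=e^{i\epsilon t\lambda_k}a^\dag_{t,k}$. Composing the three conjugations then yields
\[
e^{i\tilde{\mathcal{S}}}a^\dag_{t,k}e^{-i\tilde{\mathcal{S}}}=e^{i\epsilon t\lambda_k}e^{-i\epsilon (t+1)_N\lambda_k}a^\dag_{t+1,k},
\]
where $(t+1)_N$ is reduced mod $N$. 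I will keep track of the sign conventions (whether $\mathcal{V}$ or $\mathcal{V}^\dag$ acts first) so that this reproduces Eq.~\eqref{eq:matrixelJ}. The phase is $\epsilon$ in general and $\epsilon-T$ when $t+1\equiv0$, times $\lambda_k$, in agreement with the one-particle matrix element computed just before the statement.

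\emph{Step 3 (conclusion).} I will compute $\langle t_2,k_2|e^{iJ}|t_1,k_1\rangle$ directly from Eq.~\eqref{eq:eijdef}, using $V=\sum_t|t\rangle\langle t|\otimes e^{i\epsilon tH}$ and the shift $e^{i\epsilon P_T}$. The same three-factor structure appears, with $|t\rangle\langle t|$ replacing the slice label, so the coefficients match those of Step 2 term by term. Combined with Step 1, the characterization above gives $e^{i\tilde{\mathcal{S}}}=\Gamma(e^{iJ})$.

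The main obstacle is bookkeeping rather than concept, and it has two parts. The first is getting the boundary phase at the wrap-around $N\to0$ consistent between the $U_0^\dag(T)$ factor and the cyclic SWAP. The second is ensuring that the ordering conventions for $\mathcal{V}$ agree on both sides. For a time-dependent Hamiltonian the same argument goes through with $e^{i\epsilon tH}$ replaced by the accumulated evolution.
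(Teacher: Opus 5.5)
Your argument is correct and proves more than the paper does.

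\textbf{Comparison with the paper's proof.} The paper only evaluates one-particle matrix elements. It computes $\langle t_2,k_2|e^{iJ}|t_1,k_1\rangle$ in the eigenbasis of $H$, then computes $\langle\Omega|a_{t_2k_2}e^{i\tilde{\mathcal{S}}}a^\dag_{t_1k_1}|\Omega\rangle$ with $H_t=\sum_k\lambda_k a^\dag_{tk}a_{tk}$, and observes that the two coincide. It supplements this with a generator-level remark: the additive second quantizations of $P_T\otimes\mathbbm{1}$ and $\mathbbm{1}\otimes H$ are $\mathcal{P}$ and $\mathcal{K}$, so $J\mapsto\mathcal{S}$ for cyclic evolution.

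The phase picked up by a single creation operator is the same computation in both routes. What you add is the characterization of $\Gamma(u)$ by two properties: vacuum invariance, and the linear action $Wa^\dag(f)W^\dag=a^\dag(uf)$. This upgrades the one-particle match to the operator identity $e^{i\tilde{\mathcal{S}}}=\Gamma(e^{iJ})$ on every $n$-particle sector. Unlike the paper's generator-level remark, it needs neither logarithms nor cyclic evolution.

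It also fixes what ``second quantized version'' has to mean. The additive rule quoted just before the statement, applied to $e^{iJ}$ itself, gives an operator that annihilates $|\Omega\rangle$, whereas $e^{i\tilde{\mathcal{S}}}$ fixes $|\Omega\rangle$. So only the multiplicative reading $\Gamma$ is consistent, and a one-particle match alone does not establish it. The paper's version is shorter and is enough for how the result is used later (single-particle PW states). Strictly, though, it only proves the restriction to the one-particle sector.

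\textbf{A sign in Step 2.} Your phase $e^{-i\lambda_k(\epsilon-T\delta_{(t+1)_N,0})}$ is the complex conjugate of the exponent displayed in Eq.~\eqref{eq:matrixelJ}. Your sign is the one that follows from the stated definitions. Evaluating Eq.~\eqref{eq:eijdef}, or equivalently Eq.~\eqref{eq:eij}, directly gives $\langle t_2,k_2|e^{iJ}|t_1,k_1\rangle=e^{-i\lambda_{k_1}(\epsilon-T\delta_{t_20})}\delta_{t_2,t_1+1}\delta_{k_1k_2}$. The paper's spacetime-side formula carries the same sign slip, so its matching is unaffected. You should not adjust conventions to reproduce the displayed sign, as you proposed. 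Compare instead with the direct evaluation in your Step 3, where both sides agree.
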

This explains the similarities between Eqs.\ \eqref{eq:eijdef}, \eqref{eq:eij} and \eqref{eq:qainin}. Moreover, it is straightforward to check that the second quantization of $P_T\otimes \mathbbm{1}$ is $\mathcal{P}$ (one can use a Fourier in time basis to show this explicitly; see \cite{diaz2021spacetime,diaz2025spacetime}) while $\mathbbm{1}\otimes H=\sum_{t,k} \lambda_k |t\rangle \langle t|\otimes |k\rangle \langle k|\to \mathcal{K}=\sum_t \lambda_k a^\dag_{tk}a_{tk}$. Thus the form of the universe operator with cyclic evolution, given by $J=P_T\otimes \mathbbm{1}-\mathbbm{1}\otimes H$, leads precisely to $\mathcal{S}$ after second quantization. This reasoning has been recently applied to the relativistic extension of the PW formalism  developed in \cite{diaz2019history, diaz2019historystate} to show that the free Dirac and Klein-Gordon quantum actions arise from second quantization as well \cite{diaz2025spacetime}. 
In general, we see that promoting the PW mechanism corresponding to single particle interactions to second quantization, leads to \emph{free theories} \footnote{One could attempt to introduce e.g., two-particle interaction within the PW scheme. Nonetheless, since usually a single time is considered, in practice this is never done. Instead, introducing a ``clock'' for each particle give rise to a complete new set of challenges as one has to deal with a multi-time scheme. None of the difficulties arise in the second quantized approach. }, namely particle preserving gaussian QA operators.

\subsubsection{PW states as Reduced density matrices}

So far we have only made considerations regarding the Hilbert space and the universe operator. Now we discuss the role of spacetime states. In particular, there is a natural way to recover PW states from $\mathcal{R}$. The connection arises from the concept of reduced density matrices (RDM). Let us recall  first that given a density matrix $\rho$ in a Fock space $h_F$ with associated ladder operators $a_i$, we can always define the single particle reduced density matrix $\rho^{\text{sp}}=\sum_{i,j}{\rm tr}[\rho a^\dag_j a_i]|i\rangle \langle j|$. Then, given any single particle operator $O=\sum_{i,j} O^{\text{sp}}_{ij} a^\dag_i a_j$, we can compute expectations values as 
\begin{equation}
    {\rm tr}_F[\rho O]={\rm tr}[\rho^{\text{sp}}O^{\text{sp}}]\,.
\end{equation}
Notice that we have replaced a mean value in Fock space with a mean value in essentially a single particle Hilbert space of dimension $L$ for $L$ the number of modes in $h_F$ (e.g., for fermions leading to $\text{dim}(h_F)=2^L$). Since ${\rm tr}[\rho^{\text{sp}}]=\langle \hat{N}\rangle$, i.e., the mean value of particle number, for a single particle state $\rho$ the corresponding $\rho^{\text{sp}}$ is a genuine quantum state (the positive definiteness is also easily proven). Let us recall the importance of RDMs in applications such as in estimating energies in the context of quantum chemistry \cite{coleman1963structure, mazziotti2012structure} where the state of a many-particle system contains more information than necessary for the task.

Since $\mathcal{H}_F^{\text PW}$ is a Fock space we can apply the RDM formalism directly. Let us first notice that the single particle RDM of an operator $\mathcal{O}$ in 
$\mathcal{H}_F^{\text PW}$ has the following structure $$\rho^{(\text{sp})}[\mathcal{O}]=\sum_{t_1,k_1,t_2,k_2}{\rm Tr}[Oa^\dag_{t_1k_1}a_{t_2k_2}]|t_1,k_1\rangle \langle t_2,k_2|$$ which, as a matrix, acts on the Hilbert space $h_T\otimes h\equiv h_{\text{PW}}$, meaning that we recover the tensor product structure ``system-rest'' that characterized the PW mechanism. 
We now want to discuss the RDM of spacetime states. 
For reasons that will become apparent below, it will be convenient to consider $\mathcal{R}_{\text{ext}}=|\psi\rangle_0\langle \psi|e^{i\mathcal{S}_{\text{ext}}}$,   the ``extended'' version of the spacetime state, with $\tilde{\mathcal{S}}$ replaced by $\mathcal{S}_{\text{ext}}$. By using a notation inspired by Section \ref{sec:SKPI}, we will denote as $a^{\pm}_{ti}$ the operators acting on the first and second half of $\mathcal{H}\otimes \mathcal{H}$ respectively. The RDM of $\mathcal{R}_{\text{ext}}$ is then composed of four blocks that might be schematically denoted as $++,+-,-+,--$, each having dimension $\text{dim}(h_{\text{PW}})\times \text{dim}(h_{\text{PW}})$. These blocks correspond to different time orderings, e.g.,
\begin{equation}
\begin{split}
\rho^{\text{sp}}_{++}[\mathcal{R}_{\text{ext}}]=\sum_{t_1,i,t_2,j}{\rm Tr}[\mathcal{R}_{\text{ext}}\,a^{+\dag}_{t_1,i}a^+_{t_2,j}]|t_2,j\rangle \langle t_1,i|\\ 
=\sum_{t_1,i,t_2,j}\langle \psi|\hat{T}a^{\dag}_{i}(t_1)a_{j}(t_2)|\psi\rangle|t_2,j\rangle \langle t_1,i|\,,
\end{split}
\end{equation}
where we used Theorem \ref{th:extendedQA}. Notice that a time ordering is imposed in this case. Instead,
\begin{equation}
\begin{split}
\rho^{\text{sp}}_{+-}[\mathcal{R}_{\text{ext}}]=\sum_{t_1,i,t_2,j}{\rm Tr}[\mathcal{R}_{\text{ext}}\,a^{+\dag}_{t_1,i}a^-_{t_2,j}]|t_2,j\rangle \langle t_1,i|\\ 
=\sum_{t_1,i,t_2,j}\langle \psi|a^{\dag}_{i}(t_1)a_{j}(t_2)|\psi\rangle|t_2,j\rangle \langle t_1,i|\,,
\end{split}
\end{equation}
with no time-ordering: the creation operator is always at the left of the annihilation operator. Now, given a  sp operator of the form $\mathcal{O}=\sum_{t,i,t',j}\langle t,i|O^{\text{sp}}|t',j\rangle a^{+\dag}_{ti}a^-_{t'j}$ it is easy to show that 
\begin{equation}
    {\rm Tr}[\mathcal{R}_\text{ext}\mathcal{O}]={\rm tr}[\rho^{\text{sp}}_{+-} O^{\text{sp}}]\,.
\end{equation}
In particular, if we take $O^{\text{sp}}=|t\rangle \langle t|\otimes O^{\text{sp}}_S$ we obtain
\begin{equation}\label{eq:spresult}
    {\rm tr}[\rho^{\text{sp}}_{+-} O^{\text{sp}}]=\sum_{i,j}(O^{\text{sp}}_S)_{ij}\langle \psi(t)|a_i^\dag a_j|\psi(t)\rangle\,,
\end{equation}
i.e., the mean value at time $t$ of the operator $O=\sum_{i,j}(O^{\text{sp}}_S)_{ij}a_i^\dag a_j$ acting on $h_F$. This is an important result that requires some discussion. In the PW mechanism, the mean value of operators having a conditioning structure (see Eq.\ \eqref{eq:operatorpaw}) with respect to physical states are equal to the conventional mean value of states evolving unitarily.  Equation \eqref{eq:spresult} is thus a very nontrivial generalization of the mechanism, as no assumption was made on the initial state or on the evolution. Hence, through this scheme one can codify in a PW-like state $\rho^{\text{sp}}_{+-}$ any sp mean value. This includes interacting QFTs and states with arbitrary (or non-fixed) number of particles. The PW states, defined by \eqref{eq:universeconstr} are recovered under the assumption of $|\psi\rangle$ describing a single particle and the QA being the one of a free theory. This is the content of the following theorem.  
\begin{theorem}\label{th:RDMPW}
    Consider the  RDM of $\mathcal{R}_{\text{ext}}$ defined by 
\begin{equation}
\begin{split}
\rho^{\text{sp}}_{\pm}=\sum_{t_1,i,t_2,j}{\rm Tr}[\mathcal{R}_{\text{ext}}\,a^{+\dag}_{t_1,i}a^-_{t_2,j}]|t_2,j\rangle \langle t_1,i|\,.
\end{split}
\end{equation}
When $|\psi\rangle$ is a single particle state and $\mathcal{S}_{\text{ext}}$ a free quantum action one obtains 
\begin{equation}
\rho^{\text{sp}}_{\pm}=|\Psi_{\text{\upshape PW}}\rangle \langle \Psi_{\text{\upshape PW}}|\,,
\end{equation}
with $|\Psi_{\text{\upshape PW}}\rangle$ the physical PW state describing $|\psi\rangle$.
\end{theorem}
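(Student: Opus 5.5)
The plan is to compute the matrix elements of $\rho^{\text{sp}}_{\pm}$ directly using Theorem \ref{th:extendedQA} (equivalently Eq.\ \eqref{eq:Rextexpect}). We then use the two hypotheses: particle number is conserved and evolution is linear on single-particle modes, and the state lies in the one-particle sector.

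\emph{Step 1: reduce the entries to single-time expectation values.} The operator $a^{+\dag}_{t_1,i}$ sits on the forward copy $\mathcal{H}_1$ and $a^-_{t_2,j}$ on the backward copy $\mathcal{H}_2$. Eq.\ \eqref{eq:Rextexpect} then gives, with no time ordering,
\begin{equation}
{\rm Tr}[\mathcal{R}_{\text{ext}}\,a^{+\dag}_{t_1,i}a^-_{t_2,j}]=\langle\psi|a_j(\epsilon t_2)\,a_i^\dag(\epsilon t_1)|\psi\rangle
\end{equation}
in the operator ordering dictated by the theorem: backward-branch insertions stand to the left, forward ones to the right. I would first pin down the ordering convention carefully. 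The text writes the creation operator on the left, and the two choices differ by whether one gets $\langle\psi|a^\dag a|\psi\rangle$ or $\langle\psi|a\,a^\dag|\psi\rangle$. For a single-particle state only the version in which $a(t_2)$ annihilates the particle on one side and $a(t_1)$ on the other gives a rank-one result. I would therefore adopt the ordering consistent with Eq.\ \eqref{eq:spresult}, namely $\langle\psi|a_i^\dag(\epsilon t_1)a_j(\epsilon t_2)|\psi\rangle$, and check it against Theorem \ref{th:extendedQA}.

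\emph{Step 2: use freeness.} For a free (particle-preserving, quadratic) quantum action the one-step evolution is $U=e^{-i\epsilon H}$ with $H=\sum_k\lambda_k a_k^\dag a_k$, or more generally $H=\sum_{kl}h_{kl}a_k^\dag a_l$. Hence $a_j(\epsilon t)=\sum_k [e^{-i\epsilon t h}]_{jk}a_k$, which is linear in annihilators, and $H|\Omega\rangle=0$ up to a constant phase that cancels. Now write the single-particle state as $|\psi\rangle=\sum_k\psi_k a_k^\dag|0\rangle$. Then $a_j(\epsilon t_2)|\psi\rangle=\psi_j(\epsilon t_2)|0\rangle$, where $\psi_j(\epsilon t)=\langle j|e^{-i\epsilon t h}|\psi\rangle$ is the evolved first-quantized wavefunction. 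Since $|0\rangle$ is the only vacuum contribution, the correlator factorizes:
\begin{equation}
\langle\psi|a_i^\dag(\epsilon t_1)a_j(\epsilon t_2)|\psi\rangle=\psi_i^\ast(\epsilon t_1)\,\psi_j(\epsilon t_2).
\end{equation}

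\emph{Step 3: identify the PW state.} Inserting this into the definition gives
\begin{equation}
\rho^{\text{sp}}_{\pm}=\sum_{t_2,j}\psi_j(\epsilon t_2)|t_2,j\rangle\sum_{t_1,i}\psi_i^\ast(\epsilon t_1)\langle t_1,i|=|\Psi_{\text{PW}}\rangle\langle\Psi_{\text{PW}}|,
\end{equation}
which matches Eq.\ \eqref{eq:pwstate}. This vector solves Eq.\ \eqref{eq:universeconstr} by the discussion preceding it. One can also see this through the theorem stating that $e^{i\tilde{\mathcal S}}$ is the second-quantized $e^{iJ}$: the single-particle matrix elements of the free action reproduce those of $e^{iJ}$ in Eq.\ \eqref{eq:matrixelJ}.

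The main obstacle is Step 1, specifically the bookkeeping of operator ordering and the boundary factor $U^\dag_{N-1}U_N$ in $\mathcal{S}_{\text{ext}}$. These must combine so that both insertions are evolved by exactly $\epsilon t$ with no leftover $e^{\pm iTH}$ phase. For free theories acting on a one-particle state, any residual global phase would cancel between bra and ket, but that cancellation should be verified.
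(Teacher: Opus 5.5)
Your route is essentially the paper's proof in Appendix~\ref{app:reducedPW}. You get the entries from Theorem~\ref{th:extendedQA}, use that a quadratic, particle-preserving $H$ evolves ladder operators linearly and annihilates the vacuum, factorize on the one-particle state, and read off $|\Psi_{\text{PW}}\rangle\langle\Psi_{\text{PW}}|$. Your shortcut $a_j(\epsilon t)|\psi\rangle=\psi_j(\epsilon t)|0\rangle$ replaces the paper's Wick contractions and is a bit cleaner.

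The ordering issue you raise in Step 1 is real, however. It has to be resolved explicitly; you cannot simply adopt the ordering you need. Eq.~\eqref{eq:Rextexpect} gives ${\rm tr}[\rho\,(\bar{T}\prod B)(\hat{T}\prod A)]$, so backward-branch insertions always stand to the left. With the definition exactly as written (creation operator on the forward copy), one gets for bosons
\[
{\rm Tr}[\mathcal{R}_{\text{ext}}\,a^{+\dagger}_{t_1,i}a^-_{t_2,j}]=\langle\psi|a_j(\epsilon t_2)\,a_i^\dagger(\epsilon t_1)|\psi\rangle=\psi_j(\epsilon t_2)\,\psi_i^\ast(\epsilon t_1)+\big[e^{-i\epsilon(t_2-t_1)h}\big]_{ji}\,.
\]
The propagator term makes the resulting matrix have rank larger than one. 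So the check ``against Theorem~\ref{th:extendedQA}'' that you promise would fail for the definition as stated.

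The paper's appendix proof shares this convention slip. It simply asserts the normally ordered correlator ``as it follows from Theorem~\ref{th:extendedQA}''. The clean fix is to put the creation operator on the backward branch:
\[
{\rm Tr}[\mathcal{R}_{\text{ext}}\,a^{-\dagger}_{t_1,i}a^{+}_{t_2,j}]=\langle\psi|a_i^\dagger(\epsilon t_1)\,a_j(\epsilon t_2)|\psi\rangle\,.
\]
This is the ordering used in Eq.~\eqref{eq:spresult}. With it, your Steps 2 and 3 go through verbatim. You should state this convention explicitly in Step 1.

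Your other worry, about the boundary factor $U^\dagger_{N-1}U_N$, is not an obstacle. Theorem~\ref{th:extendedQA} already evolves every insertion by exactly $\epsilon t$, and since $H|0\rangle=0$ no residual phase appears.
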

We see that under these additional assumptions $\rho^{\text{sp}}_{\pm}$ is exactly the PW state of Eq. \eqref{eq:pwstate}. 
In this precise sense, \emph{the PW mechanism applied to a single particle arises from the RDM formalism of spacetime QM applied to many particles.} Standard PW states correspond to single particle states and free theories. 
In general, $\rho^{\text{sp}}_{\pm}$ provides a direct generalization of the PW scheme to many-particles and interacting theories. Higher order RDMs might also be considered. The proof of the theorem follows by direct evaluation and is provided in the Appendix \ref{app:reducedPW}. One important detail, is the ordering of the ladder operators, with only the $+-$ ordering corresponding to the PW scheme. \\

Let us conclude this section with a few remarks. It was recently noted that the SQM approach applied to QFTs may be regarded as a natural many-body completion of the PW formalism of a particle \cite{diaz2026quantum}. The results of this section reinforce this idea and, in particular, clarify the role of standard PW states as RDMs of the spacetime approach. Although we did not focus on a particular QFT here, and we worked in a discrete-time setting, the results generalize naturally to QFT scenarios (see also section \ref{sec:QFT}).
Let us also comment on the  ``Geometric event-based quantum mechanics'' proposal of \cite{giovannetti2023geometric}. There, the authors introduce a Hilbert space framework for events and use a foliation/conditioning  to recover standard relativistic QM and QFTs at a given spacelike surface. The framework may also be regarded as a non-trivial generalization of the PW scheme to QFT that leverages second quantization concepts. In fact, one aspect in common with the SQM description of a Fock space concerns the single-particle space. While their interpretation is in terms of event states, from which a corresponding Fock space is built, these states are in one-to-one correspondence with the single-particle space described in this section, or more precisely with its relativistic version \cite{diaz2019historystate,diaz2019history}.
The main difference with the present approach is that \cite{giovannetti2023geometric} is formulated in terms of ordinary quantum states in the extended Hilbert space. By contrast, SQM introduces spacetime states as the central objects, focusing not only on recovering predictions at a given time, but also providing meaning to timelike correlators.
This is precisely the structure that allowed us to recover PW states as RDMs.

\subsection{Timelike entanglement approach}\label{sec:timelikeent}

In \cite{milekhin2025observable} the authors proposed a definition of spacetime density matrix that can take into account correlation between two regions separated in time. Their proposal provides a microscopic definition for the entropy of two disjoint regions $A$ and $B$ deformed from a spacelike separation (standard QM and QFT) to a timelike distance. Indeed such entropic quantity has attracted recent interest within the context of holographic theories \cite{harper2023timelike}.

If $A$ and $B$ constitute the entire system at two different points in time, the main quantity of the proposal is
\begin{equation}
    T_{AB}=J(U) (\rho \otimes \mathbbm{1}_B)\,,
\end{equation}
with $J(U)=\sum_{i,j}^{\dim( h)} U^\dag |i\rangle \langle j| U\otimes |j\rangle \langle i|$ the Jamiolkowski state
associated with the (unitary) quantum channel $\text{Ad}_U(.)=U^\dag \,.\, U$. 
This operator is defined in such a way to obtain Wightman functions as
\begin{equation}\label{eq:TABwightman}
\begin{split}
    {\rm Tr}[T_{AB}\,O_A\otimes O_B]&={\rm tr}[\rho \, O_A O_B(t)]\\
       {\rm Tr}[T^\dag_{AB}\,O_A\otimes O_B]&={\rm tr}[\rho \, O_B(t) O_A]
\end{split}    
\end{equation}
with $O_B(t)=U^\dag O_B U$. Notice that $T_{AB}$ here acts on $\mathcal{H}=h\otimes h$.
In general, one would be interested in arbitrary regions $A$ and $B$, such as two subregions of space at different times, with $T_{AB}$ defined so that it satisfies Eqs.\ \eqref{eq:TABwightman} for the subregions.  The entropies of 
$T_{AB}$ are the main focus of \cite{milekhin2025observable}.

To make direct contact with our scheme, consider first spacetime states for $N=2$. We can write $\mathcal{R}=(\rho \otimes \mathbbm{1})\, \text{SWAP} \,U(\epsilon)\otimes U^\dag(\epsilon)$. Here, $U(\epsilon)=e^{-i\epsilon H}$ constitutes a single step evolution. Interestingly, if we consider arbitrary $N$ and then a partial trace over all times except the initial slice and some final slice at position $t$, we obtain
\begin{equation}\label{eq:invariance}
    {\rm Tr}_{\,\overline{h_0 \otimes h_t}}[\mathcal{R}]=(\rho\otimes \mathbbm{1})\,\text{SWAP} \,U(t)\otimes U^\dag(t)\,,
\end{equation}
for $U(t)\equiv e^{-i \epsilon t H}$ and $\overline{h_0 \otimes h_t}\equiv \otimes_{t'\neq 0, t} h_{t'}\cong h\otimes h$. We see that under partial traces over all times but two, the spacetime state preserves its structure but $U\to U^t$. The proof of this relation is elementary and illustrated in Figure \ref{fig:invariance}. On the other hand, one can easily check that ${\rm Tr}_{\,\overline{h_0 \otimes h_t}}[\mathcal{R}^\dag\,]=\text{SWAP} \,U(t)\otimes U^\dag(t) (\rho\otimes \mathbbm{1})$.
This leads directly to the following result.

\begin{figure}
    \centering
\includegraphics[width=0.72\linewidth]{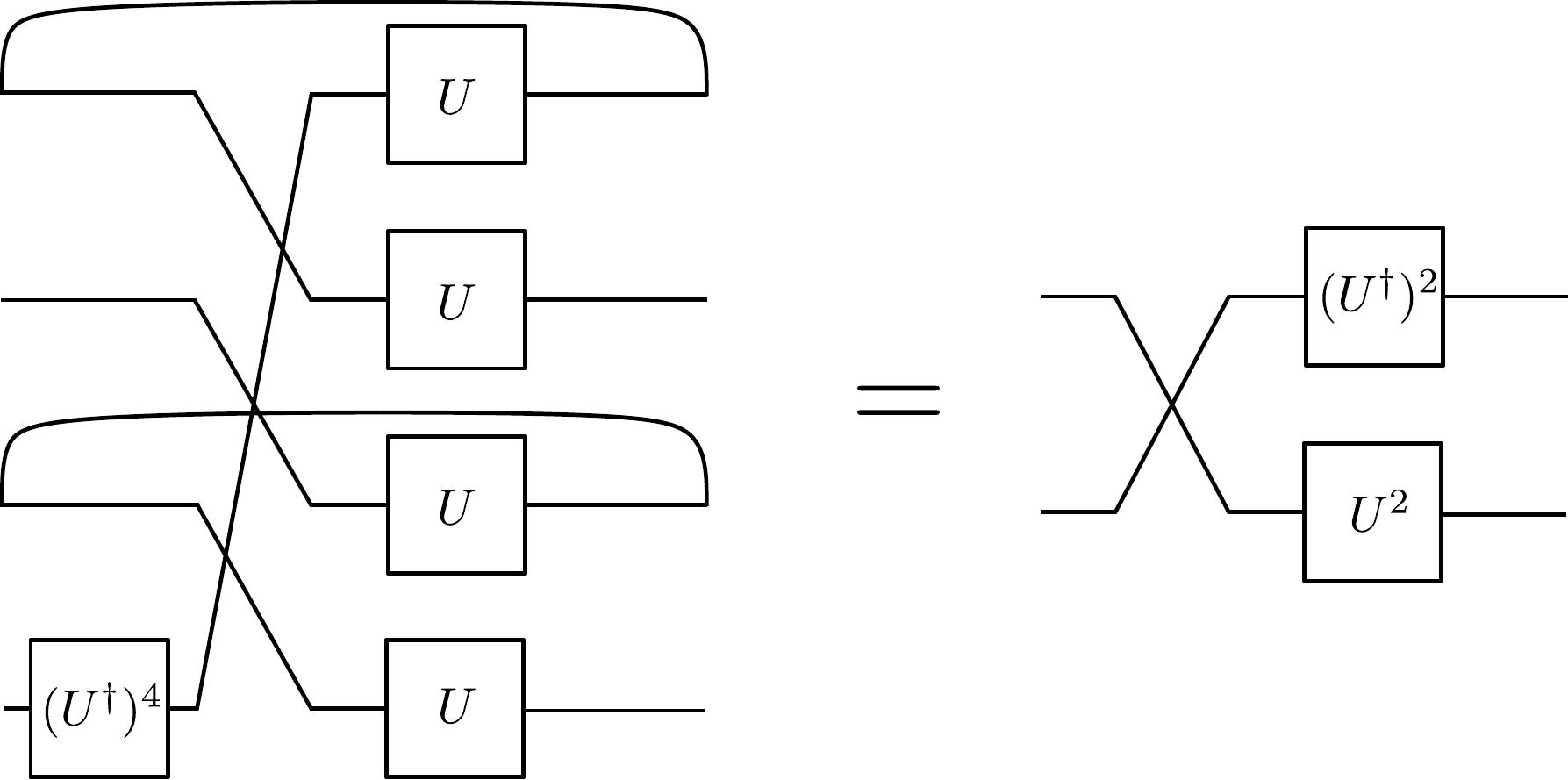}
    \caption{Illustrative example of Equation \eqref{eq:invariance}. We consider the diagrammatic representation of ${\rm Tr}_{\, \overline{h_0 \otimes h_2}}[e^{i\tilde{\mathcal{S}}}]$ for $N=4$. Here $U\equiv U(\epsilon)$. Notice how the structure of $e^{i\tilde{\mathcal{S}}}$ is preserved after the partial trace  but the step of evolution has been replaced by $\epsilon \to 2\epsilon$. In general, a partial trace over $\overline{h_0 \otimes h_t}$ corresponds to $\epsilon \to t \epsilon$ as it follows analogously  for any $N$. }
    \label{fig:invariance}
\end{figure}

\begin{lemma}\label{th:lemmaTAB}
\label{th:entanglement}
For $A,B$ representing the entire system at an initial and final time $t$, the operator $T_{AB}$ is given by
    \begin{equation}
        T_{AB}={\rm Tr}_{\,\overline{h_0 \otimes h_t}}[\mathcal{R}^\dag\,]\,.
    \end{equation}
\end{lemma}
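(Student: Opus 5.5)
The plan is to show that both operators have the same correlators on a complete set of product operators $O_A\otimes O_B$ on $h_0\otimes h_t$. Since a complete set of product operators fixes an operator uniquely (as in the qubit example after Corollary \ref{cor:wigthman}), this gives the equality. A direct algebraic check using Eq.\ \eqref{eq:invariance} will serve as a cross-check.

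First, the correlator side. For any $O_A$ acting on $h_0$ and $O_B$ acting on $h_t$, the defining property of the partial trace gives
\begin{equation*}
{\rm Tr}\big[{\rm Tr}_{\,\overline{h_0\otimes h_t}}[\mathcal{R}^\dag]\,O_A\otimes O_B\big]={\rm Tr}\big[\mathcal{R}^\dag\, (O_A)_0 (O_B)_t\big].
\end{equation*}
All other slices carry identity insertions. The second line of Corollary \ref{cor:wigthman} then applies, since the anti-time ordering places the later operator to the left. This yields ${\rm tr}[O_B(\epsilon t)O_A\rho]={\rm tr}[\rho\,O_B(\epsilon t)O_A]$ by cyclicity.

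On the other side, the second line of Eq.\ \eqref{eq:TABwightman} gives ${\rm Tr}[T^\dag_{AB}O_A\otimes O_B]={\rm tr}[\rho\,O_B(t)O_A]$. This matches the previous result for the identification $O_B(t)=U^\dag O_B U$ with $U=U(\epsilon t)$. We must therefore compare with ${\rm Tr}[T^\dag_{AB}\cdots]$ rather than with ${\rm Tr}[T_{AB}\cdots]$.

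This is the main subtlety, namely bookkeeping of conventions. I would check directly whether the lemma intends $T_{AB}$ or $T_{AB}^\dag$ by computing explicitly. From the statement after Eq.\ \eqref{eq:invariance}, ${\rm Tr}_{\,\overline{h_0 \otimes h_t}}[\mathcal{R}^\dag]=\text{SWAP}\,U(t)\otimes U^\dag(t)\,(\rho\otimes\mathbbm{1})$. I would expand $\text{SWAP}\,U\otimes U^\dag=\sum_{i,j}U^\dag|i\rangle\langle j|U\otimes |j\rangle\langle i|$ diagrammatically, or equivalently write SWAP $=\sum_{ij}|i\rangle\langle j|\otimes|j\rangle\langle i|$ and push $U\otimes U^\dag$ through. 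This should reproduce $J(U)$ exactly as defined in the text, with the convention $\mathrm{Ad}_U(\cdot)=U^\dag\cdot U$ of \cite{milekhin2025observable}, which is why the roles of $U$ and $U^\dag$ appear swapped. It would then give $J(U)(\rho\otimes\mathbbm{1})=T_{AB}$ directly.

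If the two routes seem to disagree by a dagger, I would resolve it by noting that the reference's $U$ corresponds to the backward evolution $U(\epsilon t)^\dag$ relative to ours. I would then recheck which of the Wightman orderings in Eq.\ \eqref{eq:TABwightman} is consistent with $J(U)(\rho\otimes\mathbbm{1})$ under the trace, by evaluating ${\rm Tr}[J(U)(\rho\otimes \mathbbm 1)O_A\otimes O_B]$ explicitly. I expect this sign and ordering check to be the only real obstacle. The structural input, Eq.\ \eqref{eq:invariance} together with Corollary \ref{cor:wigthman}, is already available.
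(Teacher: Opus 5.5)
Your primary (correlator) route contains a concrete error: the anti-time ordering $\bar{T}$ places the \emph{later} operator to the \emph{right}, not the left. It is the adjoint ordering of $\hat{T}$, which puts later times on the left. The second line of Corollary~\ref{cor:wigthman} therefore gives
\[
{\rm Tr}\big[\mathcal{R}^\dag (O_A)_0 (O_B)_t\big]={\rm tr}\big[O_A\, O_B(\epsilon t)\,\rho\big]={\rm tr}\big[\rho\, O_A\, O_B(\epsilon t)\big].
\]
This is the \emph{first} line of Eq.~\eqref{eq:TABwightman}, i.e.\ ${\rm Tr}[T_{AB}\,O_A\otimes O_B]$, not ${\rm Tr}[T_{AB}^\dag\,O_A\otimes O_B]$. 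You can confirm the ordering from the paper's identity ${\rm tr}[\rho[B(t_B),A(t_A)]]={\rm Tr}[(\mathcal{R}-\mathcal{R}^\dag)A_{t_A}B_{t_B}]$, or from ${\rm Tr}[\mathcal{R}^\dag X]=\overline{{\rm Tr}[\mathcal{R}X^\dag]}$.

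As written, your correlator argument would establish ${\rm Tr}_{\,\overline{h_0\otimes h_t}}[\mathcal{R}^\dag]=T_{AB}^\dag$. That contradicts the lemma, since $T_{AB}$ is not Hermitian in general. The reconciliation you propose, reading the reference's $U$ as backward evolution, is not valid and should be dropped. There is no convention mismatch: $O_B(t)=U^\dag O_B U$ with $U=U(\epsilon t)=e^{-i\epsilon tH}$ is exactly the Heisenberg evolution $O(\epsilon t)$ of Corollary~\ref{cor:schrodheis}. Replacing $U\to U^\dag$ would turn $J(U)$ into $J(U^\dag)$, a different operator, and break the identity.

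Once the ordering is fixed, both of your routes are correct and agree. The corrected correlator route proves the lemma directly, because product operators $O_A\otimes O_B$ span $\mathcal{L}(h_0\otimes h_t)$. This is the uniqueness argument the paper uses for the general Theorem~\ref{th:TABgeneral}.

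Your algebraic cross-check is exactly the paper's proof of this lemma:
\[
\text{SWAP}\,(U\otimes U^\dag)=(U^\dag\otimes U)\,\text{SWAP}=(U^\dag\otimes\mathbbm{1})\,\text{SWAP}\,(U\otimes\mathbbm{1})=J(U).
\]
By Eq.~\eqref{eq:invariance} this gives ${\rm Tr}_{\,\overline{h_0\otimes h_t}}[\mathcal{R}^\dag]=J(U)(\rho\otimes\mathbbm{1})=T_{AB}$, with no sign or dagger adjustment needed.
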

Notice that $\overline{h_0 \otimes h_t}\equiv \, \overline{A \cup B}$, namely the complement of $A\cup B$.  One can prove this lemma by direct evaluation and by noting that  
\begin{equation}
\begin{split}
     J(U)&=\sum_{i,j}^{\dim( h)} U^\dag |i\rangle \langle j| U\otimes |j\rangle \langle i|\\&= (U^\dag \otimes \mathbbm{1})\, \text{SWAP} \,(U\otimes \mathbbm{1})={\rm Tr}_{\,\overline{h_0 \otimes h_t}}[e^{-i\tilde{\mathcal{S}}}]\,.
\end{split}
\end{equation}
It is also instructive to use a diagrammatic representation. In Figure \ref{fig:timelikeent1} we represent $T_{AB}$ for $A$ and $B$ representing the entire system, following the diagrams in \cite{milekhin2025observable}. Then, we deform the diagram to show the equivalence with spacetime states, described by the diagrams in Section \ref{sec:formalism}. 
\begin{figure}[h!]
    \centering
\includegraphics[width=0.95\linewidth]{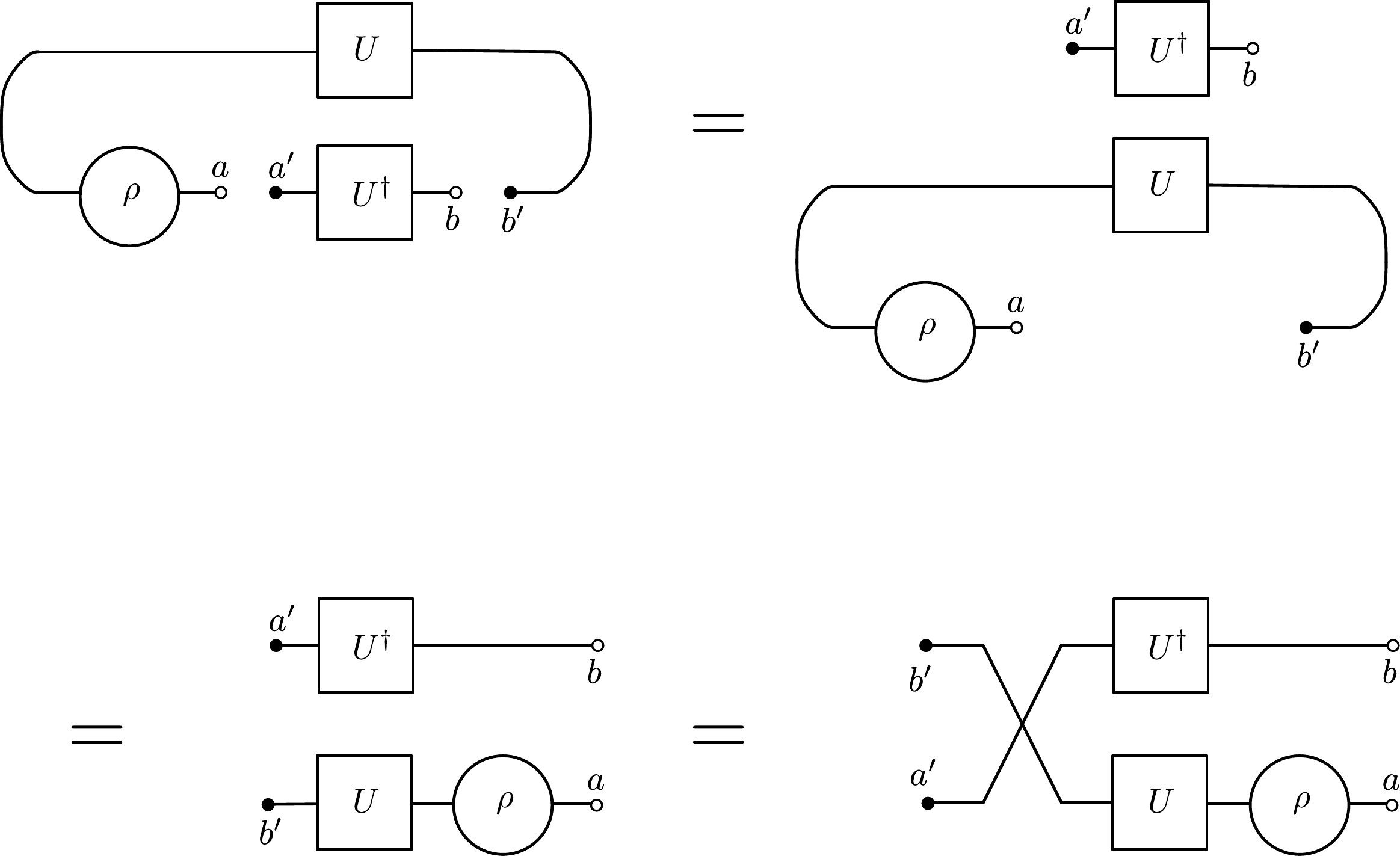}
    \caption{Diagrammatic proof of Lemma \ref{th:lemmaTAB}. We start with the representation of $T_{AB}$ provided in \cite{milekhin2025observable}. The diagram can be broken in two parts and rearranged to explicitly show its tensor structure in $h\otimes h$. Then, the indices can be adjusted so that the $A$ ($B$) indices $a,a'$ ($b,b'$) act on the corresponding time slice. This introduces a SWAP operator making the connection with spacetime states explicit. }
    \label{fig:timelikeent1}
\end{figure}

The natural generalization of Lemma \ref{th:lemmaTAB} to arbitrary regions is provided by the following theorem. 
\begin{theorem}\label{th:TABgeneral}
Consider regions of space (or subsystems) $A$ and $B$ at different times and their corresponding  complements $\bar{A}$ and $\bar{B}$ for the corresponding slices. Then
    \begin{equation}
        T_{AB}={\rm Tr}_{\,\overline{A \cup B}}[\mathcal{R}^\dag]\,.
    \end{equation}
\end{theorem}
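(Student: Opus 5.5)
The strategy is to reduce Theorem \ref{th:TABgeneral} to the defining property of $T_{AB}$. In \cite{milekhin2025observable}, $T_{AB}$ for general regions is the operator on $\mathcal{H}_A\otimes\mathcal{H}_B$ whose correlators reproduce the Wightman functions of Eq.\ \eqref{eq:TABwightman} restricted to the subregions. Since an operator on $\mathcal{H}_A\otimes\mathcal{H}_B$ is fixed by its traces against a complete product basis $\{O_A\otimes O_B\}$, it suffices to show two things. First, $\mathcal{R}_{AB}^\dag\equiv{\rm Tr}_{\overline{A\cup B}}[\mathcal{R}^\dag]$ satisfies
\begin{equation*}
{\rm Tr}[\mathcal{R}_{AB}^\dag\,O_A\otimes O_B]={\rm tr}[\rho\,O_B(t)O_A]\,.
\end{equation*}
Second, $\mathcal{R}_{AB}$ gives the other ordering ${\rm tr}[\rho\,O_AO_B(t)]$, which is automatic because ${\rm Tr}_{\overline{A\cup B}}$ commutes with taking adjoints.

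Suppose $A\subset h_0$ and $B\subset h_t$, with all other slices together with $\bar A,\bar B$ traced out. By the defining property of partial traces,
\begin{equation*}
{\rm Tr}[\mathcal{R}_{AB}^\dag\,O_A\otimes O_B]={\rm Tr}[\mathcal{R}^\dag\,(O_A\otimes\mathbbm{1}_{\bar A})_0(O_B\otimes\mathbbm{1}_{\bar B})_t]\,,
\end{equation*}
where identities sit on all remaining slices. Corollary \ref{cor:wigthman}, anti-time-ordered version, evaluates the right-hand side as ${\rm tr}[\rho\,\bar T\,(O_B\otimes\mathbbm{1})(\epsilon t)(O_A\otimes\mathbbm{1})]={\rm tr}[\rho\,O_B(t)O_A]$. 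This is exactly the second line of Eq.\ \eqref{eq:TABwightman} for the subregions. The convention $O_B(t)=U^\dag O_BU$ with $U=U(t)$ must be matched to the Heisenberg evolution, as in Lemma \ref{th:lemmaTAB}. Uniqueness then gives $T_{AB}={\rm Tr}_{\overline{A\cup B}}[\mathcal{R}^\dag]$.

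The same argument covers general regions $A,B$ spread over several slices, provided the operators in $A$ share a single time and those in $B$ another, as in the statement. For consistency, and for readers who take the explicit formula of \cite{milekhin2025observable} as the definition, I would also give a direct derivation. This starts from Lemma \ref{th:lemmaTAB}, i.e. $T_{A\bar A,B\bar B}={\rm Tr}_{\overline{h_0\otimes h_t}}[\mathcal{R}^\dag]$, and notes that the spatial-region $T_{AB}$ of \cite{milekhin2025observable} is obtained from the full-slice one by tracing out $\bar A$ and $\bar B$. Nested partial traces then compose to ${\rm Tr}_{\overline{A\cup B}}$.

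The main obstacle is definitional, not computational. One must pin down exactly which object \cite{milekhin2025observable} calls $T_{AB}$ for subregions: their diagrammatic definition, possibly with particular orderings of $U,U^\dag$ around $\rho$. One must also check that their time-evolution convention, $\text{Ad}_U$ with $U^\dag\cdot U$, agrees with ours, so that $T_{AB}$ rather than $T_{AB}^\dag$ corresponds to $\mathcal{R}^\dag$. I would settle this by checking the full-system case against Lemma \ref{th:lemmaTAB}, and then showing diagrammatically that their subregion construction is the partial trace of the full one. After that, everything follows from the Wightman-function correlators.
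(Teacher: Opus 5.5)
Your strategy is the paper's. The paper proves the theorem by uniqueness: an operator on $\mathcal{H}_A\otimes\mathcal{H}_B$ is fixed by its correlators, and $T_{AB}$ is defined through Eq.~\eqref{eq:TABwightman}. It backs this with the diagrammatic argument of Figure~\ref{fig:timelikeent2}, which is exactly your reduction to Lemma~\ref{th:lemmaTAB} by further tracing out $\bar A$ and $\bar B$. That secondary route is sound as you state it. The problem is the correlator evaluation in your main route. That step decides between $\mathcal{R}$ and $\mathcal{R}^\dag$, which is the whole content of the statement. You correctly flagged it as the main obstacle, but it comes out the wrong way.

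In Corollary~\ref{cor:wigthman}, $\hat T$ puts later times to the left and $\bar T$ puts them to the right. Hence ${\rm Tr}[\mathcal{R}^\dag\,(O_A\otimes\mathbbm{1}_{\bar A})_0\,(O_B\otimes\mathbbm{1}_{\bar B})_t]={\rm tr}[O_A\,O_B(\epsilon t)\,\rho]={\rm tr}[\rho\,O_A\,O_B(\epsilon t)]$, not ${\rm tr}[\rho\,O_B(\epsilon t)\,O_A]$. You can check this at $N=2$: ${\rm Tr}[(\rho\otimes\mathbbm{1})\,\text{SWAP}\,(U\otimes U^\dag)\,(A\otimes B)]={\rm tr}[\rho\,U^\dag B U\,A]$. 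So it is $\mathcal{R}$, not $\mathcal{R}^\dag$, that produces $O_B(t)O_A$.

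As written, your first claim says ${\rm Tr}_{\overline{A\cup B}}[\mathcal{R}^\dag]$ reproduces the second line of Eq.~\eqref{eq:TABwightman}, i.e.\ the correlators of $T_{AB}^\dag$. Your second claim says ${\rm Tr}_{\overline{A\cup B}}[\mathcal{R}]$ reproduces those of $T_{AB}$. Uniqueness would then give $T_{AB}={\rm Tr}_{\overline{A\cup B}}[\mathcal{R}]$, the adjoint of the theorem. So your closing sentence does not follow from what precedes it.

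With the ordering corrected, ${\rm Tr}\big[{\rm Tr}_{\overline{A\cup B}}[\mathcal{R}^\dag]\,O_A\otimes O_B\big]={\rm tr}[\rho\,O_AO_B(t)]$ matches the \emph{first} line of Eq.~\eqref{eq:TABwightman}. One set of correlators over a complete product basis already determines the operator, so uniqueness yields $T_{AB}={\rm Tr}_{\overline{A\cup B}}[\mathcal{R}^\dag]$ directly.
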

By definition, $\mathcal{R}$ and $T_{AB}$ provide time ordered correlation functions so the theorem follows by uniqueness. We also provide an instructive diagrammatic proof in Figure \ref{fig:timelikeent2} based on deforming the diagram in \cite{milekhin2025observable} to the partial trace of $\mathcal{R}^\dag$.

\begin{figure}[t!]
    \centering
    \includegraphics[width=\linewidth]{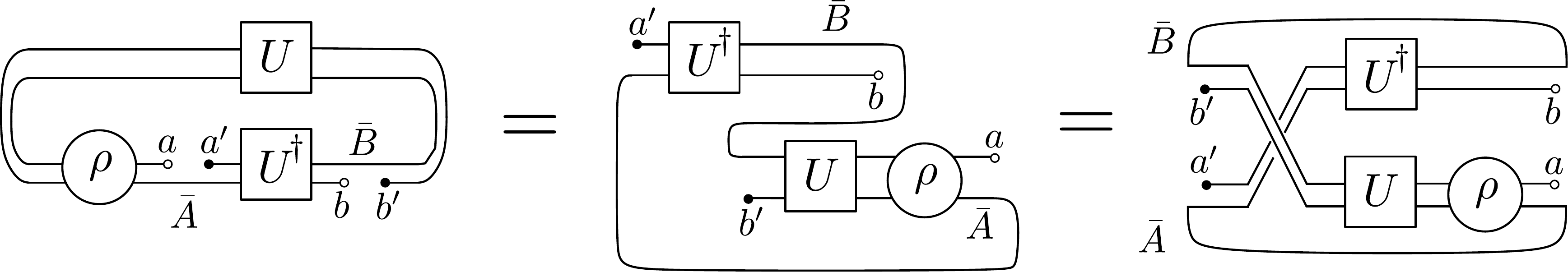}
    \caption{Diagrammatic proof of Theorem \ref{th:TABgeneral}. The main ideas of Figure \ref{fig:timelikeent1} stand but with the addition of the complements $\bar{A}$ and $\bar{B}$. The previously broken pieces of the diagram are now connected in these additional regions. The final diagram is ${\rm Tr}_{\, \overline{h_0\otimes h_t}}[\mathcal{R}^\dag]$ (for $U\equiv U(\epsilon)^t$) but with closed loops in $\bar{A}$ and $\bar{B}$, indicating the partial trace over these regions.}
    \label{fig:timelikeent2}
\end{figure}

The previous theorem establish the general connection between the spacetime scheme and $T_{AB}$. Then, we note that one of the main motivations of \cite{milekhin2025observable} concerns the definition of timelike entanglement in QFTs, and in particular providing a ``microscopic'' basis to the recently introduced \cite{harper2023timelike} timelike pseudoentropy in the context of the AdS-CFT and dS-CFT correspondence. Let us then discuss the consequences of the results of this section for QFTs (see also Section \ref{sec:QFT}).

\begin{figure*}[t!]
    \centering
\includegraphics[width=0.95\linewidth]{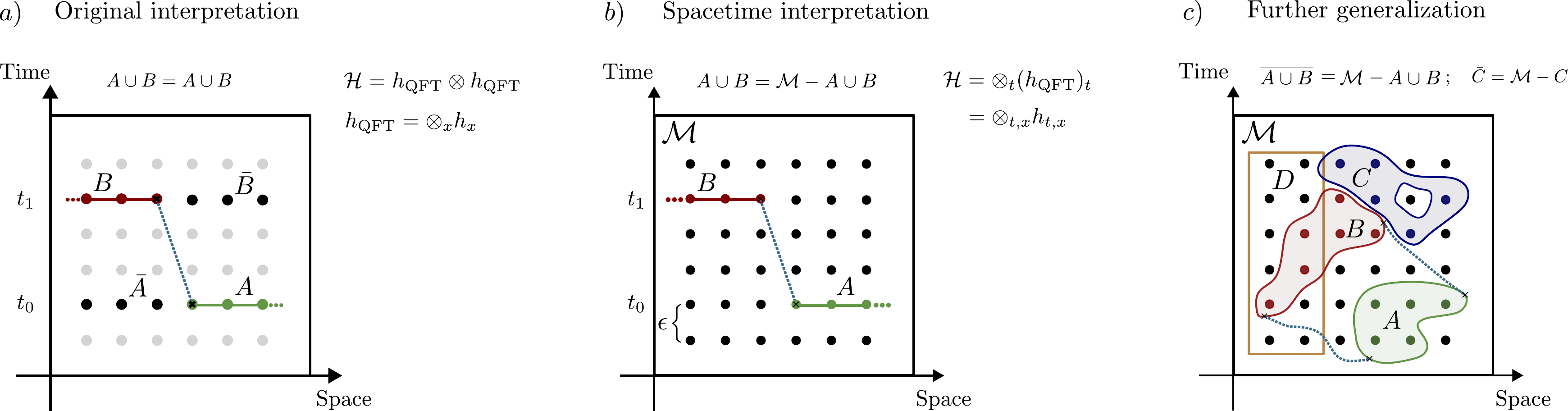}
    \caption{Different interpretations of timelike entanglement and generalizations. In Panel a) we depict the formulation presented in \cite{milekhin2025observable} where the entropy of $T_{AB}$ corresponds to a partial trace over the complements $\bar{A}$ and $\bar{B}$ at two different Cauchy surfaces. The partial trace is thus defined over the Hilbert space $\mathcal{H}=h_{\text{QFT}}\otimes h_{\text{QFT}}$ for $h_{\text{QFT}}$ the standard Hilbert space of a QFT (the one defining usual spacelike entanglement). The spacetime formalism shares this interpretation for $N=2$ but it also provides a novel point of view: the partial trace is over all Minkowski spacetime $\mathcal{M}$ other than $A$ and $B$. This is depicted in panel b) and follows from $\mathcal{H}=\otimes_t (h_{\text{QFT}})_t$, namely from the use of the spacetime Hilbert space. As depicted in Panel c), this perspective paves the way to considering partial traces over arbitrary subregions of $\mathcal{M}$, either connected or disconnected.
    Let us also recall \cite{milekhin2025observable} that for CFTs the entropy $-{\rm Tr}[T_{AB}\log{T_{AB}}]$ coincides with the pseudo-entanglement entropy \cite{harper2023timelike} for a single timelike interval (blue, dashed; panels a) and b)). The implications for CFT temporal pseudo-entanglement of scenarios depicted on panel c) are yet to be explored. }
    \label{fig:spacetimetraces}
\end{figure*}

One convenient consequence of the spacetime formalism is its direct relation to the PI formulation we discussed in Section \ref{sec:PIs}. Let us exploit it to connect Theorem \ref{th:TABgeneral} with the PI-based diagrams provided in \cite{milekhin2025observable} to justify the relation between entropies of $T_{AB}$ and analytical extensions of $\rho_{AB}$ (standard entanglement entropies) in QFT. Let us recall first that in the infinite volume limit, and considering a small imaginary part $i \epsilon$ for time, we can write the spacetime state of the vacuum of a QFT directly as $\mathcal{R}=e^{i\mathcal{S}}/{\rm Tr}[e^{i\mathcal{S}}]$ \cite{diaz2023spacetime} (see also Section \ref{sec:QFT}). Then, omitting overall constants \footnote{We refer the reader to \cite{diaz2023spacetime,diaz2021path} for subtleties in the continuum time limit. } the matrix elements of $T_{AB}^\dag$ in the field basis are given by  
\begin{equation}
\begin{split}
    T^\dag_{AB}[\phi^{-}_A, \phi^{-}_B,\phi^{+}_A, \phi^{+}_B]\equiv \langle \phi^{+}_A, \phi^{+}_B|\,{\rm Tr}_{\,\overline{A \cup B}}[e^{i\mathcal{S}}]\,|\phi^{-}_A, \phi^{-}_B\rangle\\=\int \mathcal{D}\phi_{\,\overline{A \cup B}}\,\langle \phi^{+}_A, \phi^{+}_B,\phi_{\,\overline{A \cup B}} |\,e^{i\mathcal{S}}\,|\phi^{-}_A, \phi^{-}_B, \phi_{\,\overline{A \cup B}}\rangle\,.
\end{split}
\end{equation}
Notice that we just used the definition of spacetime state and partial trace in $\mathcal{H}$ (the field $\phi_{\,\overline{A \cup B}}$ is defined in spacetime).
On the other hand, the results of Section \ref{sec:PIs} imply that $\langle \phi^{+}_A, \phi^{+}_B,\phi_{\,\overline{A \cup B}} |\,e^{i\mathcal{S}}\,|\phi^{-}_A, \phi^{-}_B, \phi_{\,\overline{A \cup B}}\rangle=e^{iS_{\text{cl}}}$, namely the exponential of the classical action evaluated along $\phi_{\,\overline{A \cup B}}$ with border conditions around $A$ and $B$. Then, the partial trace over $\overline{A \cup B}$ becomes a PI with proper boundary conditions defining the matrix elements of $T_{AB}^\dag$ (similar considerations hold for the adjoint). Thus, we have recovered the PI-based  diagram suggested in \cite{milekhin2025observable} to justify the definition of $T_{AB}$ (we are working with slightly imaginary Lorentzian time instead of Euclidean time but the diagrams are isomorphic). Besides working as a proof of consistency among approaches, this line of thought provides additional operational meaning to the entropy of $T_{AB}$. Namely, $(A\cup B) \cup \, (\overline{A\cup B})$ is the whole spacetime, meaning that the entropy of $T_{AB}$ is the entropy of $\mathcal{R}$ reduced over all spacetime points but $A\cup B$. This new interpretation is illustrated in Figure \ref{fig:spacetimetraces} where we also emphasize that the formalism paves the way to considering arbitrary regions of spacetime. 
We also recall \cite{milekhin2025observable} that $-{\rm Tr}[T_{AB}\log T_{AB}]$ in CFT is equal to the timelike pseudo entropy \cite{harper2023timelike} corresponding to the timelike separation between $A$  and $B$. This is the precise sense in which the approach described in this section provides a microscopic definition to the proposal in 
\cite{harper2023timelike} based on analytically extending spacelike regions to timelike ones. The current considerations pave the way to studying pseudo entropies of far more general regions, the meaning of which remains to be study. Some recent investigations among these lines have been considered in \cite{guo2025spacetime, guo2025entanglement}.

The previous comments on QFT also apply to general systems with $\mathcal{H}_{(A\cup B) \cup \, (\overline{A\cup B})}\equiv \mathcal{H}$. Importantly, let us remark that the results in Section \ref{sec:pureststates} apply to any bipartition and in particular to  $A'=A\cup B$, $B'=\overline{A\cup B}$ (whether for QFT or other quantum theories). This means that the interpretation of Figure \ref{fig:spacetimetraces} is rigorously justified and directly linked to the joint Schmidt decomposition of Eq.\ \eqref{eq:jointschmidt}. In summary, the SQM formalism (together with its direct relation with the scheme in \cite{milekhin2025observable}), and the results we presented in Section \ref{sec:pureststates}, provide a solid basis for a rigorous definition of timelike entanglement, properly grounding in a Hilbert-space formalism different discussions \cite{harper2023timelike, chu2023time, narayan2023notes, nunez2025timelike, heller2025temporal, jiang2025timelike} based on holographic and/or path integral considerations.

\section{Additional features and perspectives}\label{sec:additional}

Having introduced the SQM formalism and clarified its relation to other spacetime approaches, it is natural to take a step back and ask how it fits within a broader physical context. In this section we revisit several topics in the literature connected to temporal properties in QM in light of the SQM formalism. We also clarify its relation to classical physics.

\subsection{Leggett-Garg inequalities}\label{sec:legget}
The Leggett-Garg inequality \cite{leggett1985quantum}, sometimes also called Bell-inequality in time, is a statement on sequential measurements at different times: for a dichotomic measurement taking values $Q=\pm 1$ under the assumptions of  realism and
noninvasive measurability one obtains
\begin{equation}
    -3 \leq K=C_{21}+C_{32}-C_{31}\leq 1
\end{equation}
for $C_{ij}=\langle Q(t_i)Q(t_j)\rangle$ the expectation value of the measurements at times $t_i,t_j$.

The natural generalization of $C_{ij}$ to the quantum case \cite{emary2014leggett} is given by $C_{ij}=\frac{1}{2}{\rm tr}[\rho \{Q(t_i),Q(t_j)\}]$ where the anticommutator arises 
from considering sequential measurements (just as we discussed in Section \ref{sec:PDM}). Under this scheme, one can properly choose the evolution of a quantum system to find $K>1$ thus violating the classical inequality. At the heart of this violation lies the fact that in QM the order in which measurements described by non-commuting observables are performed affects the outcome.

As a direct consequence of the spacetime formalism we can write
\begin{equation}
    C_{ij}={\rm Tr}\left[\frac{\mathcal{R}+\mathcal{R}^\dag}{2}Q_{t_i}Q_{t_j}\right]\,.
\end{equation}
We can then use this simple observation, and linearity, to obtain
\small
\begin{equation}
    K={\rm Tr}\left[\frac{\mathcal{R}+\mathcal{R}^\dag}{2}\left(Q\otimes Q\otimes \mathbbm{1}+\mathbbm{1}\otimes Q \otimes Q-Q\otimes \mathbbm{1}\otimes Q\right)\right]
\end{equation}
\normalsize
where we considered the case $N=3$ for concreteness since only $3$ operators have been inserted \footnote{arbitrary evolutions can be considered by adding a simple adjoint action on $\mathcal{R}$; equivalently one can take arbitrary $N$ and partial traces}. Notice also that the quantity determining the values of $K$ is the quantum state over time generalized to $N\geq 3$ (Section \ref{sec:QSOT}). Thus while $\mathcal{R}-\mathcal{R}^\dag$ is linked to causality, $\mathcal{R}+\mathcal{R}^\dag$ is linked to the violation of Leggett-Garg inequalities.

Let us now notice that the operator $\mathcal{K}=Q\otimes Q\otimes \mathbbm{1}+\mathbbm{1}\otimes Q \otimes Q-Q\otimes \mathbbm{1}\otimes Q$ has eigenvalues $\lambda=\{-3,1\}$, as one can easily verify by considering the possible eigenvalues of the (commuting) $Q_t$. This leads to a simple but notable conclusion: if $\frac{\mathcal{R}+\mathcal{R}^\dag}{2}$ were a quantum state it would not be able to violate the Leggett-Garg inequality. In other words, for $|\Psi\rangle\in \mathcal{H}$
\begin{equation}
    -3\leq \langle \Psi|\mathcal{K} |\Psi\rangle\leq 1
\end{equation}
which is just the classical inequality: \emph{no standard quantum state defined in spacetime can violate the Leggett-Garg inequality}. This conclusion is independent of how much ordinary entanglement the state contains, and applies equally to mixed states.

The previous result is a direct consequence of the fact that the spacetime point of view associates each $C_{ij}$ to a pair of  \emph{commuting} observables $Q_{t_i}, Q_{t_j}$ (while generally $[Q(t_i),Q(t_j)]\neq 0$ in the Heisenberg picture). This means that the information of ``invasiveness'' of the measurement process needs to be contained exclusively at the level of the spacetime state by means that cannot be captured by standard quantum states. 
We can be even more precise. Write $\mathcal{K}=\mathbbm{1}-4\,\Pi$ for $\Pi=P_+ \otimes P_- \otimes P_++P_- \otimes P_+ \otimes P_-$ the projector onto the alternating sector, with $P_\pm = (\mathbbm{1}\pm Q)/2$. Notice in fact that these alternating cases $(+,-,+)$, $(-,+,-)$ are the only two choices of eigenstates of $Q_t$ giving an eigenvalue $\lambda=-3$ of $\mathcal{K}$. 
Then, 
\begin{equation}
    K=1-4\, {\rm Tr}\left[\Big(\frac{\mathcal{R}+\mathcal{R}^\dag}{2}\Big) \Pi\right]
\end{equation}
showing that the violation of the Leggett-inequality corresponding to $K>1$ is precisely the condition
\begin{equation}
    {\rm Tr}\left[\Big(\frac{\mathcal{R}+\mathcal{R}^\dag}{2}\Big) \Pi\right]< 0\,.
\end{equation}
 One thus needs that the Hermitian state over time object assigns negative weight to the alternating-history sector.
This is impossible if $(\mathcal{R}+\mathcal{R}^\dag)/2$
 is a positive semi-definite quantum state.

This conclusion can be generalized to higher order inequalities such as $-2\leq C_{21}+C_{32}+C_{43}-C_{41}\leq 2$ \cite{emary2014leggett} which can be recovered from the eigenvalues of $Q\otimes Q\otimes \mathbbm{1}\otimes \mathbbm{1}+\mathbbm{1}\otimes Q \otimes Q\otimes \mathbbm{1}+ \mathbbm{1}\otimes \mathbbm{1} \otimes Q \otimes Q-Q\otimes \mathbbm{1}\otimes  \mathbbm{1}\otimes Q$ which are equal to $\lambda=\pm 2$. 
This is a rather peculiar result because it allows one to derive \emph{classical} inequalities in time by considering the bounds that follow from standard \emph{quantum} states in $\mathcal{H}$. This means that  spacetime states $\mathcal{R}$ in general contain 
``correlations'' 
that go beyond what is captured by standard states.
Moreover, if one allows for quantum channels, one can find examples, such as the one considered in Eqs.\ \eqref{eq:collapsingdecoherence} and \eqref{eq:collapsingchannel}, of spacetime states collapsing to standard quantum states, which  correspond to quantum situations where the Leggett-Garg inequalities cannot be violated.

\subsection{Folded spacetime states and OTOCs}\label{sec:otocs}

The formalism we presented in Section \ref{sec:formalism} mostly involved (anti) time-ordered correlation functions. One can easily apply a similar scheme to out of order correlators (OTOCs), thus also recovering arbitrary Wightman functions. As a matter of fact, the forward and backward evolution obtained through $\mathcal{S}_{\text{ext}}$ is a particular case of what is needed in general. In this section we  explain how spacetime states naturally accommodate foldings of time, thus generalizing the theorems of Section \ref{sec:formalism} to OTOCs. As we now explain, \emph{the folded case can still be understood as a particular case of the formalism of  Section \ref{sec:formalism}} where the evolution within slices is no longer uniform.

Let us first consider the example of up to $4$ point correlators. We define
\small
\begin{equation}
\mathcal{V}_4=\mathop{{\textstyle\bigotimes}}\limits_{t=0}^{N-1}e^{i\epsilon tH}\mathop{{\textstyle\bigotimes}}\limits_{t=0}^{N-1}e^{-i\epsilon (N-t)H}\mathop{{\textstyle\bigotimes}}\limits_{t=0}^{N-1}e^{i\epsilon tH}\mathop{{\textstyle\bigotimes}}\limits_{t=0}^{N-1}e^{-i\epsilon (N-t)H}
\end{equation}
\normalsize
an operator acting on $\mathcal{H}^{\otimes 4}$ (equivalently we can divide  $\mathcal{H}$ in $4$ pieces). Then, for $e^{i\epsilon \mathcal{P}_4}$ the time translation operator across $4N$ slices we introduce
\begin{equation}\begin{split}
      e^{i\mathcal{S}_4}&=\mathcal{V}_4^\dag e^{i\epsilon \mathcal{P}_4} \mathcal{V}_4\\&=e^{i\epsilon \mathcal{P}_4}e^{-i\epsilon \mathcal{K}}\otimes e^{i\epsilon \mathcal{K}}\otimes e^{-i\epsilon \mathcal{K}}\otimes e^{i\epsilon \mathcal{K}}\,.
\end{split}
\end{equation}
While the time translation operator across slices is the same as before, now on each block $\mathcal{H}$ we find different time-translations within slices. 
In complete analogy with the results of Section \ref{sec:formalism} this leads to
\begin{equation}
\begin{split}
    {\rm Tr}[e^{i\mathcal{S}_4}A_{1,t_1}\otimes B_{2,N-t_2}\otimes C_{3,t_3}\otimes D_{4,N-t_4}]\\={\rm tr}[D(t_4)C(t_3)B(t_2)A(t_1)] 
\end{split}
\end{equation}
where $O_{i,t}$ indicates an operator acting on the $i$ copy of $\mathcal{H}$ and time slice $t$. While the index $t$ determines the amount of evolution, the index $i$ determines the ordering of the operators. If multiple operators are inserted in the same $\mathcal{H}$ one obtains either a time-ordered product or an anti-time ordered product within each ``slot''. The convention for the different operator insertions and how this relates to the corresponding Schwinger-Keldysh contour is depicted in Figure \ref{fig:S4}.

\begin{figure}[t!]
    \centering
    \includegraphics[width=0.75\linewidth]{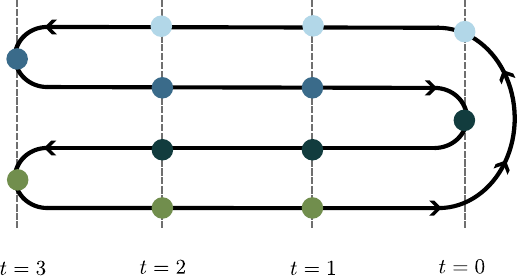}
    \caption{Example of the folded time contour corresponding to $\mathcal{S}_4$ for $N=3$. Each dot corresponds to a local Hilbert space $h$ with $\mathcal{H}^{\otimes 4}=(h^{\otimes 3})^{\otimes 4}=h^{\otimes 12}$. We use a different color for each $\mathcal{H}$. Following the contour we can assign to each operator insertion in a given $h$ the corresponding evolution time. The arrows point toward the direction of increasing index $t$ in the decomposition $\mathcal{H}=\otimes_{t=0}^{11} h_t$.}
    \label{fig:S4}
\end{figure}

Just as before, we can define a corresponding spacetime state $\mathcal{R}_4=\rho_0 e^{i\mathcal{S}_4}$ ($\rho_0$ acts on the first of the $4N$ copies of $h$) satisfying 
\begin{equation}\label{eq:basicotoc}
\begin{split}
    &{\rm Tr}[\mathcal{R}_4 A_{1,t_1}\otimes B_{2,N-t_2}\otimes C_{3,t_3}\otimes D_{4,N-t_4}]\\&={\rm tr}[\rho\, D(t_4)C(t_3)B(t_2)A(t_1)]\,. 
\end{split}
\end{equation}
From this relation we can recover common OTOCs. In particular, choosing $A=C^\dag=V$, with $t_1=t_2$ and $B=D^\dag=W$, with $t_2=t_4$ we obtain the OTOC
$F={\rm tr}[\rho\, W^\dag(t_2) V^\dag(t_1)W(t_2)V(t_1)]$, a quantity widely used in the literature, see e.g., \cite{roberts2017chaos, xu2020accessing}. We see that the norm of $\mathcal{R}_4$ can be used to bound $F$ directly. If in addition one properly inserts operators $y$ satisfying $y^4=e^{-\beta H}/Z$ on each copy (namely we split the initial state in $4$ parts distributed among copies) one can easily represent ${\rm tr}[y\, W(t_2) y V(t_1) y W(t_2)y V(t_1)]$,  a quantity appears in the seminal work \cite{maldacena2016bound}.

\begin{figure*}[ht!]
    \centering
    \includegraphics[width=0.95\linewidth]{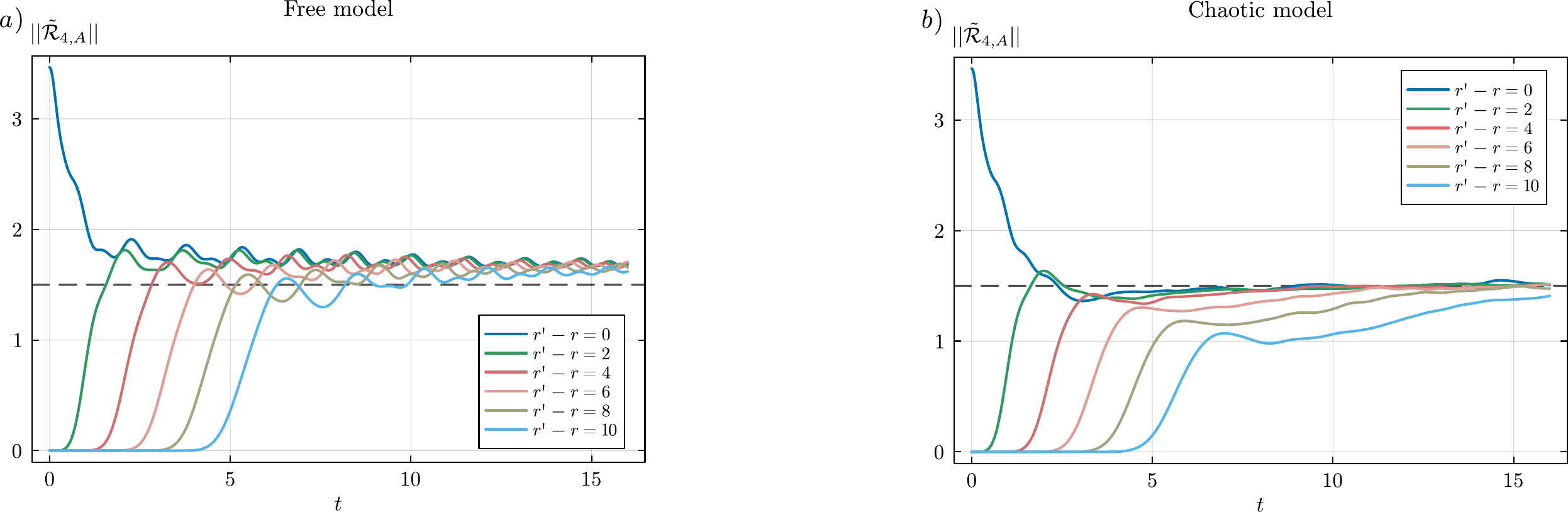}
    \caption{
Scrambling diagnostics for the tilted-field Ising chain of $51$ sites. 
Panels (a) and (b) show $||\widetilde{\mathcal R}_{4,A}||$ for the free model, $h_z=0$, and the chaotic model, $J=1$, $h_x=-1.05$, $h_z=0.5$, respectively, for several separations $r'-r$ ($r=26$). The dashed line marks the value $3/2$, corresponding to the isotropic late-time limit $G\simeq 2\mathbbm{1}_9$. 
}
    \label{fig:OTOCnorm}
\end{figure*}

In general we can define the $k$-folded spacetime state
\begin{equation}\label{eq:Rk}
    \mathcal{R}_k=\mathcal{V}^\dag_k \rho_0 e^{i\epsilon \mathcal{P}_k} \mathcal{V}_k
\end{equation}
acting on the spacetime Hilbert space with $kN$ time slots with corresponding time translation operator across slices $\mathcal{P}_k$ and with
\begin{equation}
\mathcal{V}_k=\mathop{{\textstyle\bigotimes}}\limits_{l=1}^{k} \mathop{{\textstyle\bigotimes}}\limits_{t=0}^{N-1}U^\dag_l (t\epsilon)\,,
\end{equation}
for $U_{l\; \text{odd}}(t\epsilon)=e^{-i\epsilon t H}$,  $U_{l\; \text{even}}(t\epsilon)=e^{i \epsilon(N-t)H}$. At this point we find it important to note that we can think of $\mathcal{R}_k$ as a particular case of the concept of spacetime state  $\mathcal{R}$ that we defined in Section \ref{sec:formalism}. The only difference is a time-dependent evolution induced by the folding. Considering that all the properties described in Section \ref{sec:pureststates} naturally generalize to the time-dependent case, we see that  $\mathcal{R}_k$ satisfy the same properties. In particular, when $\rho$ is pure $\mathcal{R}_k$ has vanishing entropies. The isospectrality across bipartitions also holds, and so on. Moreover, the spacetime state with no folding $\mathcal{R}$ acting on $h^{\otimes k N}$ is unitarily related with $\mathcal{R}_k$: as long as we chose the same number of time slices, we can unitarily relate arbitrarily folded spacetime states with same initial conditions. 
It is also clear that an inclusion between different $k$ arises, e.g., 
$\mathcal{R}={\rm Tr}_{\mathcal{H}_2}[\mathcal{R}_2]={\rm Tr}_{\mathcal{H}_2\otimes\mathcal{H}_3\otimes\mathcal{H}_4}[\mathcal{R}_4]$. Notice that here $\mathcal{R}_2$ is closely related to $\mathcal{R}_{\text{ext}}$ except for a convention regarding the arrangement of the backward evolution Hilbert space; see details on Appendix \ref{app:spacetimest}.

Let us now recall that the more direct connection between OTOCs and quantum chaos appears through unequal time commutators. Before exploring this it is interesting to connect the discussion with the imagitivity \cite{milekhin2025observable} we discussed in Sections \ref{sec:pureststates} and \ref{sec:timelikeent}. Before we had ${\rm Tr}\big[(\mathcal{R}-\mathcal{R}^\dag) V_{t_1}W_{t_2}\big]={\rm tr}\big\{\rho [W(t_2),V(t_1)]\big\}$ for $t_2>t_1$ leading to a bound on the expectation value of the unequal time commutator through $||\mathcal{R}_A-\mathcal{R}^\dag_A||$ with $A$ indicating the spatiotemporal support of the operators $V_{t_1},W_{t_2}$. 
At the same time, we can write the same quantity as 
\begin{equation}
\begin{split}
          & {\rm Tr}\big[\mathcal{R}_2 (W_{2,N-t_2}V_{1,t_1}-W_{1,t_2}V_{2,N-t_1})\big] \\&={\rm tr}\big\{\rho [W(t_2),V(t_1)]\big\}\,,
\end{split}
\end{equation}
where we used the two-folded spacetime state $\mathcal{R}_2$. 
Then, by  writing $\Phi_{12}(\mathcal{R}_2)=\text{SWAP}_{12} \tilde{U}\mathcal{R}_2 \tilde{U}^\dag\text{SWAP}_{12}$, where the SWAP operators act between the copies $\mathcal{H}_1\otimes \mathcal{H}_2$ and the $\tilde{U}$ operator readjust the evolution times, we can write
${\rm Tr}\big[\Phi_{12}(\mathcal{R}_2) W_{2,N-t_2}V_{1,t_1}\big]= {\rm Tr}\big[\mathcal{R}_2 W_{1,t_2}V_{2,N-t_1}\big]$. Now we can define $\widetilde{\mathcal{R}}_2=\mathcal{R}_2-\Phi_{12}(\mathcal{R}_2)$ so that 
\begin{equation}
        {\rm Tr}\big[\widetilde{\mathcal{R}}_2\, W_{2,N-t_2}V_{1,t_1}\big] ={\rm tr}\big\{\rho\, [W(t_2),V(t_1)]\big\}\,,
\end{equation}
with the commutator appearing directly. Notice that $\Phi(\mathcal{R}_2)$ corresponds to a different folding convention than the one defining $\mathcal{R}_2$ since the SWAP operator also affects the time translations across slices. We can then conclude that $\widetilde{\mathcal{R}}_{2,A'}=\mathcal{R}_A-\mathcal{R}^\dag_A$ for $A'$ the shared support of $W_{1,t_2},V_{2,N-t_1}$ so that we can replace the imagitivity with $||\widetilde{\mathcal{R}}_{2,A'}||$ in the bound of unequal time commutators of Eq.\ \eqref{eq:imagitivbound}.

 Let us now generalize this discussion to 4-point OTOCs.  By using Eq.\ \eqref{eq:basicotoc} we can write (we omit the symbols $\otimes$ for ease of notation)
\begin{equation}
\begin{split}\label{eq:otocC}
        &{\rm Tr}\big[\,\mathcal{R}_4\, \big(V^\dag_{4,N-t_1}W^\dag_{3,t_2}-V^\dag_{3,t_1}W^\dag_{4,N-t_2}\big)\\ &(W_{2,N-t_2}V_{1,t_1}-W_{1,t_2}V_{2,N-t_1}\big)\, \big]\\&={\rm tr}\{\rho\,  [W(t_2),V(t_1)]^\dag [W(t_2),V(t_1)]\}\,.
\end{split}
\end{equation}
In analogy to what we did with $\mathcal{R}_2$,  we can define 
\begin{equation}
\widetilde{\mathcal{R}}_4=\mathcal{R}_4+\mathcal{R}^\dag_4-\Phi_{12}(\mathcal{R}_4)-\Phi_{34}(\mathcal{R}_4)\,, 
\end{equation}
with $\Phi_{ij}$ properly exchanging $\mathcal{H}_i\otimes \mathcal{H}_j$ to generate the crossed terms, such that 
\begin{equation}
\begin{split}\label{eq:otocCcomm}
        {\rm Tr}\big[\widetilde{\mathcal{R}}_4\, V^\dag_{4,N-t_1}W^\dag_{3,t_2}W_{2,N-t_2}V_{1,t_1}\big]=C(t_1,t_2)\,,
\end{split}
\end{equation}
where we defined the commutator based OTOC
\begin{equation}
    C(t_1,t_2)={\rm tr}\{\rho\,  [W(t_2),V(t_1)]^\dag [W(t_2),V(t_1)]\}\,.
\end{equation}
From this relation
 we obtain the following bound on  $C$
\begin{equation}
    C(t_1,t_2)\leq  ||W||^2 ||V||^2 ||\widetilde{\mathcal{R}}_{4,A}||\,,
\end{equation}
with $A$ the shared support of the operators $V^\dag_{4,N-t_1},W^\dag_{3,t_2},W_{2,N-t_2},V_{1,t_1}$. This is the analogous of the imagitivity bound but generalized to $4$-point OTOCs involving commutators. Let us remark that the quantity $C(t_1,t_2)$ is ubiquitous to studies of quantum chaos and scrambling in the literature \cite{xu2020accessing}. 
\\

As an illustrative example of the previous ideas, in Figure \ref{fig:OTOCnorm} we depict the behavior of $||\widetilde{\mathcal{R}}_{4,A}||$ for the one-dimensional model  
\begin{equation}
        H = J\sum_{i=1}^{L} Z_i Z_{i+1} + h_x \sum_{i=1}^{L+1} X_i + h_z \sum_{i=1}^{L+1} Z_i \,,
    \end{equation}
    with $A=(t, r')\cup (0,r=L/2)$, namely initial time slice at the middle of the chain and arbitrary points $ (t, r')$. 
    We consider both the chaotic model with parameters $J=1$, $h_x=-1.05$, $h_z=0.5$, and the free model with $h_z\to 0$. We take $N=8000$ times with $\epsilon=0.002$ and $L=50$. The evolution is performed via a combination of matrix product states (MPS) evolution under Trotterization and matrix product operator (MPO) decompositions (see Appendix \ref{app:numerics} for details).  
Notice that by construction $||\widetilde{\mathcal{R}}_4||=\frac{1}{4}\sqrt{\sum_{i,j,k,l}|C_{ij,kl}(r,r',t)|^2}$  where we have defined the tensor 
\begin{equation}
     C_{ij;kl}(r,r';t)={\rm tr}\left\{ [P_{r,i}(t),P_{r',j}]^\dag[P_{r,k}(t),P_{r',l}]\right\}\,,
\end{equation}
with $P_{r,i}$ denoting the Pauli matrix $i$ at site $r$. 
It is interesting to notice that, for the chaotic model, we find 
$||\widetilde{\mathcal{R}}_{4,A}||\to 3/2$ at late times, while the free model exhibits rapid oscillations around, but systematically above, this value. This behavior can be understood directly from the structure of the tensor $C_{ij;kl}(r,r';t)$. Indeed, let us write
\begin{equation}
    C_{ij;kl}(r,r';t)
    =
    \langle\!\langle v_a|v_b\rangle\!\rangle
    \equiv G_{ab}\,,
\end{equation}
where $a=(i,j)$, $b=(k,l)$, and
$
    |v_a\rangle\!\rangle
    \equiv
    |[P_{r,i}(t),P_{r',j}]\rangle\rangle
$.
Thus, after vectorizing the commutators, $C_{ij;kl}$ can be regarded as a $9\times 9$ Gram matrix $G$ and $ ||\widetilde{\mathcal{R}}_{4,A}||
   =
    ||G||/4$. In the chaotic model, we numerically find that, at late times,
$G \simeq 2\mathbbm{1}_{9}$. 
This indicates that the commutator vectors become approximately orthogonal and have equal norm, i.e., that the late-time OTOC response becomes essentially independent of the particular local Pauli operators being probed. In this sense, the chaotic dynamics scrambles the local operator information isotropically within this operator sector. This  explains the asymptotic value of the norm, since
$
    ||\widetilde{\mathcal{R}}_{4,A}||
    \simeq 
    ||2\mathbbm{1}_9||/4
    \simeq
   \sqrt{9\times 2^2}/4
    =
    3/2
$.

\begin{figure*}[t!]
    \centering
    \includegraphics[width=0.94\linewidth]{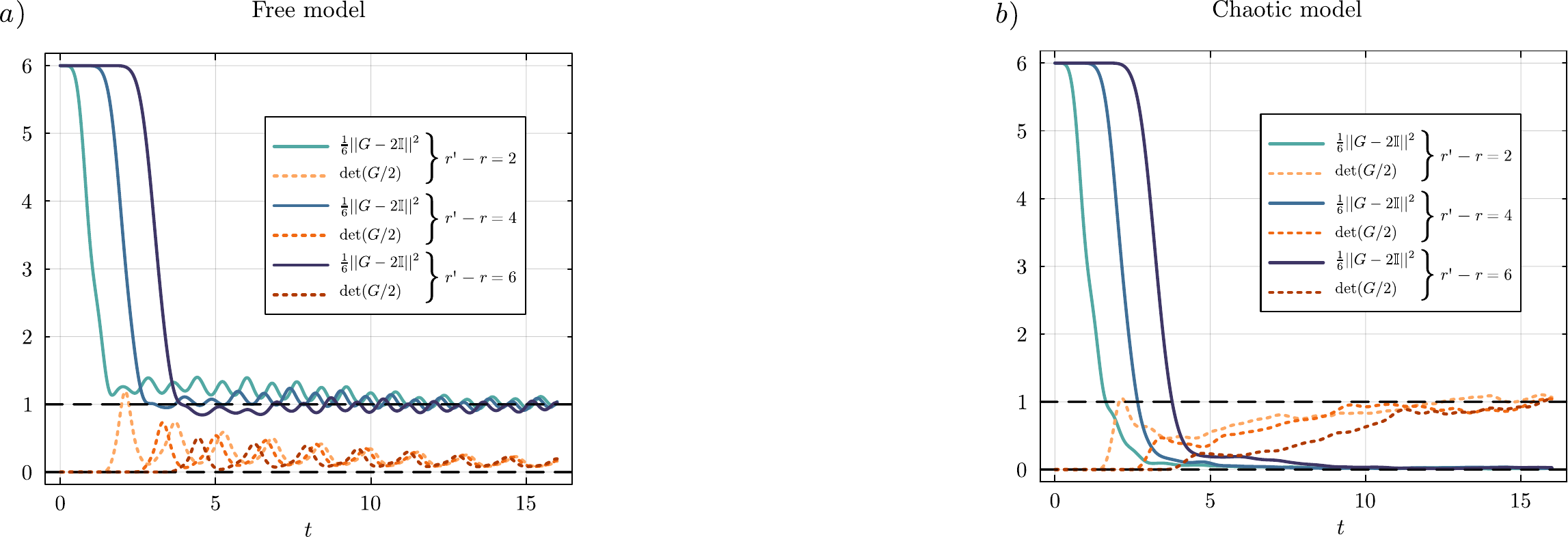}
    \caption{
Scrambling diagnostics for the tilted-field Ising chain. 
Panels (a) and (b) show the normalized distance $||G-2\mathbbm{1}_9||^2/6$ and $\det(G/2)$ for selected separations. The chaotic model approaches $G\simeq 2\mathbbm{1}_9$ (implying $\det(G/2)\simeq 1$), while the free model remains singular, with $\det(G/2)\simeq 0$, despite having a comparable value of $||\tilde{\mathcal R}_{4,A}||$.
}
    \label{fig:OTOCnormdiag}
\end{figure*}

The free model displays a different behavior. Although  $\mathrm{tr}[G]$ also approaches approximately the same late-time value $\mathrm{tr}[G]\simeq 18$, the spectrum of $G$ remains highly anisotropic (see Appendix \ref{app:numerics} for a depiction of  the eigenvalues of the Gram matrix). In particular, we find numerically that $\det [G]\simeq 0$ at late times, indicating that at least one direction in the space of Pauli-pair commutators remains linearly dependent, or has vanishing weight. This distinction, depicted in Figure \ref{fig:OTOCnormdiag} is important: individual OTOC witnesses $C(t)$, especially for operators that are nonlocal in the fermionic representation \cite{lin2018out}, can display behavior that superficially resembles chaotic scrambling even in a free theory.  By contrast, the Gram-matrix diagnostic clearly separates the two cases. That is, the chaotic model approaches a ``fully mixed'' Gram matrix, whereas the free model remains singular.
This also explains why $||\widetilde{\mathcal{R}}_{4,A}||$ stays above $3/2$ in the free model, as for a Gram matrix with fixed trace, the Frobenius norm is minimized when the nonzero (positive) eigenvalues are distributed uniformly. In the chaotic case, the weight is distributed over all nine directions while in the free case one eigenvalue is constrained to vanish.

The analysis above is largely numerical, but it clearly illustrates the potential of the spacetime-state formalism as a diagnostic of scrambling. In this analysis we have focused on  quantities derived from $\tilde{\mathcal R}_{4,A}$, such as its Frobenius norm and the associated Gram matrix of Pauli-pair commutators. One may also fully analyze $\tilde{\mathcal R}_{4,A}$ itself as a $16\times 16$ operator acting on four copies of the local Hilbert space, and the corresponding possible products of the form ${\rm Tr}[\tilde{\mathcal R}_{4,A} \mathcal{O}]$, where $\mathcal{O}$ may not be restricted to a separable operator in general. One may also combine information from different partitions.  
Let us also recall that the imagitivity of $\mathcal{R}$ can be estimated through quantum computing protocols \cite{milekhin2025observable}, so one may attempt to devise similar schemes to access properties of $\tilde{\mathcal R}_{4,A}$. An additional interesting perspective is the study of the entropic properties of folded spacetime states and their potential relation with timelike entanglement in CFTs (see Section \ref{sec:timelikeent}).
We leave these avenues for future work.\\

Returning to a conceptual discussion, let us finally point out that the results of this section also clarify the scope of the causal structure assumed in this manuscript. Most of our constructions use a fixed ordering of the time slices, encoded both by the SWAP chain and the time translations within slices entering the quantum action. Folded spacetime states do not remove this assumption, but they show that the formalism can naturally handle different time orderings, both at the level of time translations within slices and also across slices. This shows that the 
formalism already contains the ingredients needed to compare, mix, and coherently combine different orderings or foldings, and thus to encompass indefinite causal orderings \cite{castro2018dynamics}. 
For example one could coherently combine different foldings of spacetime such as the two-time pair $\mathcal R$ and $\mathcal R^\dagger$ . As we have seen both can be obtained from a common enlarged object, namely $\mathcal{R}= {\rm Tr}_{\mathcal{H}_2}[\mathcal{R}_{\text{ext}}]$, $\mathcal{R}^\dag= {\rm Tr}_{\mathcal{H}_2}[W\mathcal{R}_{\text{ext}}W^\dag]$ (see also Figure \ref{fig:PDMSWAPs} in Appendix~\ref{app:PDM}). This means that one may in principle also define ``non-diagonal'' reduced objects $\mathcal{R}_{01}\equiv {\rm Tr}_{\mathcal{H}_2}[\mathcal{R}_{\text{ext}}W^\dag]$, $\mathcal{R}_{10}\equiv {\rm Tr}_{\mathcal{H}_2}[W\mathcal{R}_{\text{ext}}]$ that would naturally arise if the folding is controlled by a quantum system.
See also the discussion in the Conclusions \ref{sec:conclusions}.

\subsection{Spacetime states as tensor networks and relation with TN-temporal entanglement}\label{sec:TN}

Over the past few years, the TN community has begun to study how to extract information from time-evolved quantum states without explicitly carrying out the evolution \cite{banuls2009matrix, lerose2021scaling, ye2021constructing}. By fully laying out the TN corresponding to, e.g., the calculation of an expectation value of a time-evolved quantum state represented as a MPS, one can devise contraction strategies that improve upon simply applying the evolution operator to the state. In particular, a notion of temporal entanglement naturally arises in the context of influence functionals (IFs) \cite{lerose2021scaling, park2025simulating}, a framework in which contractions are carried out ``orthogonally'' to the time direction.

In this section, we discuss how spacetime states admit a natural TN representation with physical legs corresponding to different spacetime points. The main result we want to emphasize here is that contracting this network in the appropriate way reproduces the IF used to define temporal entanglement, while leaving selected spacetime regions open, or equivalently partially tracing the complementary region, gives the reduced spacetime operators whose spectra define pseudo-entropic quantities. Thus, spacetime states naturally provide a common object for IF temporal entanglement and spacetime pseudo-entropies. 
While our focus will mostly be conceptual, this common TN representation also indicates where standard numerical tools may enter: the contraction of the network can be treated with existing approximate contraction methods, with IF-type contractions providing a natural baseline, even when operators are inserted at different times. We will also briefly suggest some possible alternative routes for exploiting the TN representation of spacetime states for actual numerical implementations. 

\begin{figure}[t!]
    \centering
    \includegraphics[width=0.74\linewidth]{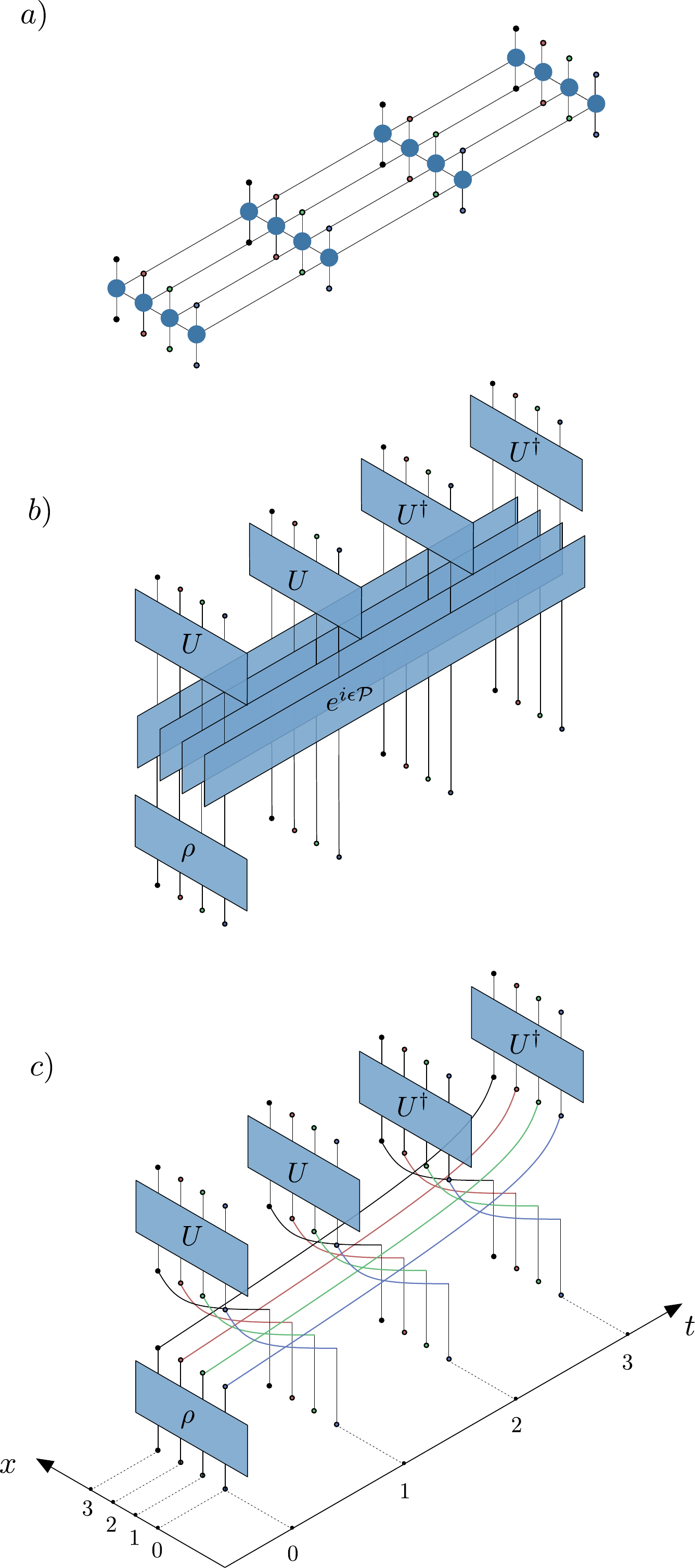}
    \caption{Spacetime state $\mathcal{R}_2$ as a TN with increasing level of detail. We are considering $4$ spatial sites and $4$ total time slices, with the first  half corresponding to forward evolution and the second half to backward evolution. On panel a) we simply emphasize that we can think of $\mathcal{R}_2$ as a PEPO whose structure comes from treating  time as a dimension, i.e., we have $d+1$ dimensional tensor with a physical leg for each spacetime point (in the representation we take $d=1$). On panel b) we represent the structure of $\mathcal{R}_2$:  the evolution within slices is separable in time, and the evolution across slices is separable in space. On panel c) we are explicit on the structure of each time translation across slices, thus represented as a cyclic permutation among the indices. For each space slice this tensor looks just like in Figure \ref{fig:tntheorem}. We have used different colors for each spatial site for clarity. }
    \label{fig:tnofR}
\end{figure}

Spacetime states can naturally be viewed as operators living on a $(d+1)$-dimensional lattice, generated by first placing an initial quantum state on the first temporal slice, and then applying the necessary operations along both the spatial ($d$-dimensional) and temporal directions. 
Since the IF framework best applies to the calculation of expectation values, where forward and backward time evolution appear symmetrically, and for the sake of keeping the discussion as simple as possible, we now focus on $\mathcal{R}_2$ defined as in Eq.~\eqref{eq:Rk} (for $k=2$). We also recall that $\mathcal{R}={\rm Tr}_{\mathcal{H}_2}[\mathcal{R}_2]$ can be represented by very similar means by just considering forward evolution only and adding a border term. Lastly, $\mathcal{R}_{\text{ext}}$ is also a slight variation of $\mathcal{R}_2$ as explained in Appendix \ref{app:spacetimest}.

In Figure~\ref{fig:tnofR} we fix $d=1$ and represent $\mathcal{R}_2$ as a $2$-dimensional TN, which could further be encoded as a projected entangled pair operator (PEPO). Notice that this would allow one to access approximate contraction methods such as belief propagation \cite{park2025simulating} and edge MPS \cite{lubasch2014unifying}. For concreteness, we consider just two (doubled) time slices and four spatial slices. The general case has the same structure. 
Each external physical leg corresponds to a spacetime point. Temporal translations within a slice are separable in time, while those across slices are separable in space, as explicitly depicted in the Figure. Note also that the operators $e^{i\epsilon \mathcal{P}}$ on each spatial slice are precisely of the form shown in Figure~\ref{fig:tntheorem}, panel a). In addition, each translation across slices may be decomposed as an MPO of bond dimension $\chi=\dim(h)^2$ (with $h$ the local Hilbert space), as follows directly from the SWAP-chain representation in Definition~\ref{def:eip} and from the fact that each local SWAP has full rank. 
Let us finally remark that the TN representing $\mathcal{R}_2$ is simple to prepare once the elementary Trotter layers have been specified. Indeed, each $U,U^\dagger$ operator can be treated as a single MPO layer, while the translations across time slices admit a simple MPO representation with fixed bond dimension $\chi=\dim(h)^2$. In this way, the main complexity of the time evolution is shifted to the contraction of the full spacetime network.

\begin{figure}[t!]
    \centering
    \includegraphics[width=\linewidth]{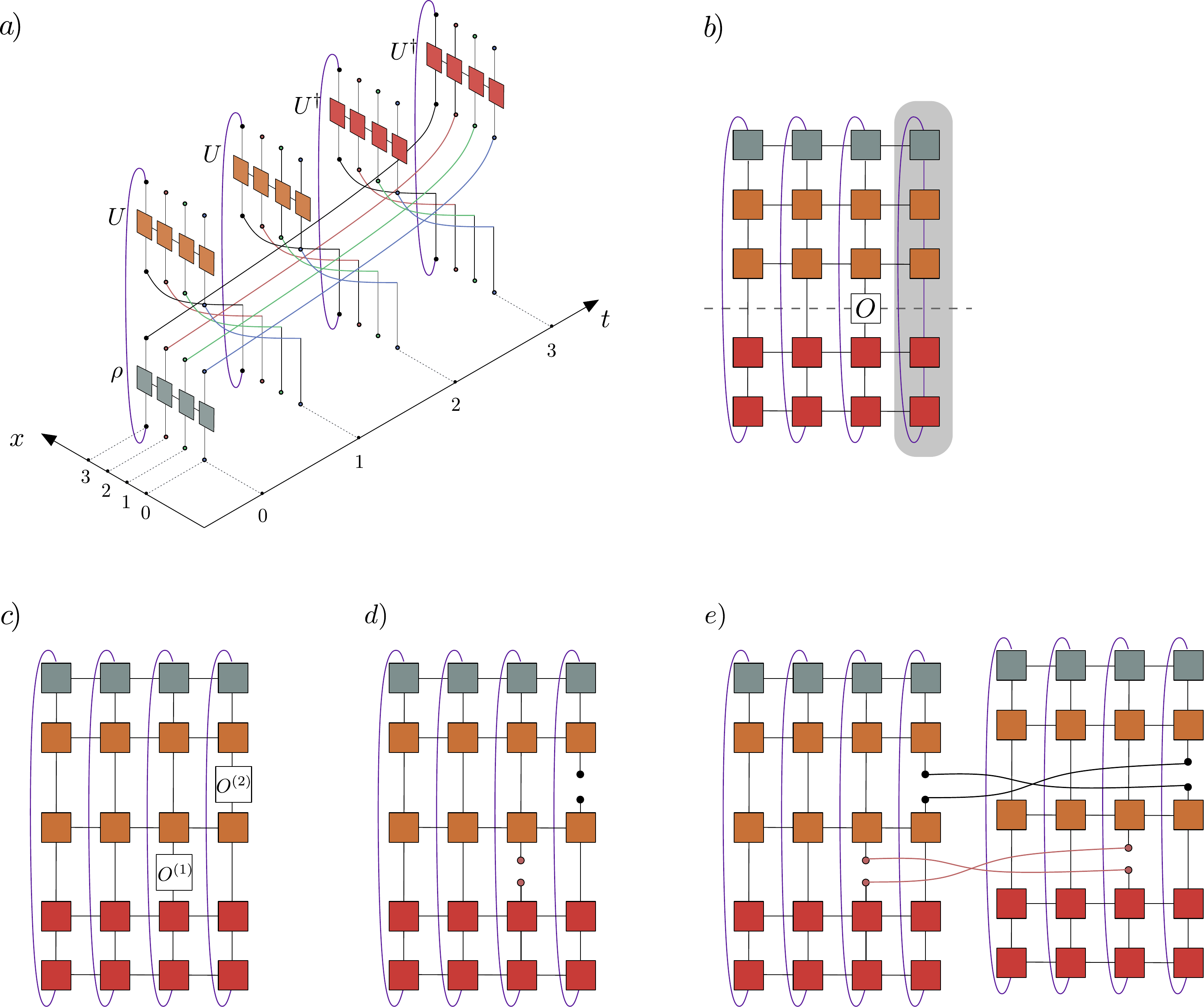}
    \caption{Temporal entanglement and pseudo-entropies. On panel a) we consider $\mathcal{R}_2$ as the TN in Figure \ref{fig:RasPEPO} with all single time gates replaced by their corresponding MPO. We also depicted a partial trace with respect to the spatial slice $x=3$ with violet lines. On panel b) we depict the result of this partial contraction (shadowed column) and the full contraction of the ensuing PEPO with the additional insertion of an operator $O$ at coordinates $(t=2,x=2)$. The shadowed column, as well as the ones on the l.h.s. of $O$ define influence functional (typically after a folding across the dashed line, which, as in our formalism, defines the Schwinger-Keldyish contour). One may treat these columns as independent MPS extended across the time direction thus defining the notion of temporal entanglement of \cite{lerose2021scaling, park2025simulating}. We see that in our current scheme this tensor corresponds to first taking a partial trace of $\mathcal{R}_2$ with respect to a spatial slice and isolating the corresponding tensor. On panel c) we generalize the trace to the case of two operator insertions at different times, namely an $O^{(1)}$ at same coordinates and $O^{(2)}$ at $(t=1,x=3)$. Notice that this contraction admits a similar representation as in the case of one operator. On panel d) we leave these coordinates empty, thus refining $\mathcal{R}_A={\rm Tr}_{\overline{(2,2)\cup (1,3)}}[\mathcal{R}_2]$. Since we are not considering the backward slice we can also write $\mathcal{R}_A={\rm Tr}_{\overline{(2,2)\cup (1,3)}}[\mathcal{R}]$ for $\mathcal{R}$ the standard spacetime state. On panel e) we depict the pseudo-purity ${\rm Tr}[\mathcal{R}_A^2]$. }
    \label{fig:RasPEPO}
\end{figure}

The IF framework for computing expectation values is naturally recovered from our construction. Indeed, in that framework one first lays out the full TN corresponding to: (i) evolving an initial MPS state by applying Trotter gates in MPO form, (ii) inserting the MPO observable whose expectation value we are interested in, (iii) applying the backward evolution, and (iv) finally contracting with the MPS encoding the bra of the initial state. One then regards all the tensors lying over the same spatial site as composing an MPS/MPO with bond dimension $\chi=\dim(h)$, and carries out the contraction of the network in the direction orthogonal to time, treating the original virtual indices as physical ones and vice versa. During these contractions, the growth of the bond dimension is used to define a notion of temporal entanglement in the system~\cite{banuls2009matrix, lerose2021scaling, ye2021constructing}.
Now, as depicted in Figure~\ref{fig:RasPEPO}, once the MPO structure of all the Trotter gates appearing in $\mathcal{R}_2$ is made explicit, carrying out the partial traces over each spatial slice of $\mathcal{R}_2$ reorganizes its TN exactly into the structure of the IF framework. More generally, one may consider arbitrary insertions of operators at different times, thereby allowing the study of temporal correlation functions in addition to expectation values.
If we leave these insertions open, which is equivalent to considering a partial trace of $\mathcal{R}_2$, we can graphically depict the corresponding pseudo-purity. Thus, while the latter can be represented in a somewhat similar scheme to the one employed in the IF setup, the notion of temporal entanglement arising in the IF framework and the one quantified by pseudo-entropies are quite different. Let us also note that the representation of the pseudo-purity shown in the Figure may be rewritten as in the supplemental Lemma \ref{applemm:purities} of Appendix~\ref{app:spacetimest}.

Next, we note that contrary to the notion of temporal entanglement in IF settings, the pseudo purity appears to quantify a very different quantity. Indeed, the spectral properties of $\mathcal{R}_{AB}$ (with $\mathcal{H}=\mathcal{H}_{A\cup B}\cup \mathcal{H}_{\overline{A\cup B}}$) encode how much the regions $A,B$ communicate with each other through the rest of the network (here we can interchange $\mathcal{R}_2$ with $\mathcal{R}$ if the regions only lie in the forward evolution sector). This is the interpretation also suggested in \cite{milekhin2025observable} for different regions of space separated in time (see Section \ref{sec:timelikeent}). A direct interpretation is thus obtained through the use of singular values. On the other hand, we recall that $\mathcal{R}$ is not a standard quantum state so that singular values and eigenvalues provide very different information. In particular, the isospectrality condition between bipartitions only holds at the level of eigenvalues (see Section \ref{sec:pureststates}). This suggests to study possible bounds among the two notions, or, on a more speculative note, one may also attempt to exploit the biorthogonal Schmidt decomposition of Eq.~\eqref{eq:jointschmidt} to define eigenvalue-based TN representations. \\

Let us add a few comments on $\mathcal{R}_E$, namely spacetime states for Euclidean time as defined in Eq.\ \eqref{eq:Reuclid}. The corresponding TN has the same form as $\mathcal{R}_2$ in Figure \ref{fig:tnofR} but without the insertion of the initial state and with all $U,U^\dag$ replaced by $e^{-\epsilon H}$. In this way the TN exhibits time translation invariance with identified borders. The total length of Euclidean time is $\beta$. If in addition translational symmetry in space is present, and again one assumes periodic boundary conditions, the geometry of the network is a torus. In this scenario it has also been recently proven \cite{diaz2025spacetime} that $\mathcal{R}_E$ lies in a stationary point of the functional $F[\Gamma]=\langle \mathcal{S}_E\rangle_\Gamma +{\rm Tr}[\Gamma \log(\Gamma)]$, with $\Gamma$ an operator acting on $\mathcal{H}$.

This setup paves the way for applying different approximate methods to estimate the trace of the non-Hermitian but normal operator $\mathcal{R}_E$ (with the possible insertion of local operators) from a few dominant spectral contributions. 
A natural possibility is to reinterpret the full contraction along one of the two directions as an effective boundary problem, so that the toroidal TN is replaced by an induced one-dimensional operator acting on a virtual boundary space. In such a formulation, one may attempt to approximate the trace from the leading eigenvalues of this effective operator, in a way reminiscent of transfer-matrix techniques. This suggests that TN methods based on boundary states, such as MPS approximations of the dominant left and right fixed points, could provide a useful starting point. Closely related alternatives are corner-transfer-matrix renormalization schemes, which construct effective environments directly from the two dimensional network, as well as tensor renormalization-group approaches based on iterative coarse graining of the torus. 
Let us also recall that the quantity being estimated can ultimately be written as a sum over Euclidean histories (see Section \ref{sec:PIs}), so that comparisons with standard PI quantum Monte Carlo methods are natural. The present formulation is nevertheless different in spirit: instead of sampling from classical configurations, one works with an ``operator-valued'' TN
 representation of the same quantity, to which one may then apply truncation, boundary-state, or coarse-graining approximations. We leave a detailed investigation of these approximation strategies, and of their possible use for computing Euclidean (or partially Wick rotated) spacetime correlators, for future work.

\subsection{Fermionic systems: formalism, tensor-product reinterpretation, and relation to other approaches}\label{sec:fermions}

Here we briefly describe the formalism of SQM for fermions. The main difference with bosons is that their canonical algebraic structure is based on anticommutators, so a direct tensor-product construction across time is no longer adequate for a space-time symmetric approach. Instead, following \cite{diaz2025spacetime} one can introduce independent fermionic operators $\tilde{a}_{t i},\tilde{a}_{t i}^\dagger$ for each time slice $t$, satisfying
\begin{equation}\label{eq:fermalg}
    \{\tilde{a}_{t i},\tilde{a}_{t' j}^\dagger\}=\delta_{t t'}\delta_{ij},\qquad
    \{\tilde{a}_{t i},\tilde{a}_{t' j}\}=0,
\end{equation}
with the corresponding spacetime Hilbert space built from the associated Fock representation. In this way, time is treated as an additional discrete label on the same footing as the other (standard) modes. Then, one can introduce a fermionic version of the time translations across slices satisfying 
\begin{equation}
    e^{i\epsilon \mathcal{P}_F} \tilde{a}_{t i} e^{-i\epsilon \mathcal{P}_F}=\tilde{a}_{t+1,i}\,,
\end{equation}
with anti-periodic boundary conditions. Using Fourier in time modes one can provide an explicit definition for $\mathcal{P}_F$ as a quadratic Legendre-like operator \cite{diaz2025spacetime}. The antiperiodic boundary conditions are linked to half-integer Matsubara frequencies $\omega_n=(2 n+1)\pi/T$.  
Fermionic spacetime states are then defined by $\mathcal{R}_F=\rho_0 e^{i \tilde{\mathcal{S}}_F}$ with $e^{i \tilde{\mathcal{S}}_F}=\Gamma e^{i H_0 T}e^{i\epsilon \mathcal{P}_F} \prod_t e^{-i \epsilon H_t}$ and where $\Gamma=e^{i\pi \sum_{ti}\tilde{a}^\dag_{ti}\tilde{a}_{ti}}$ is the total parity operator. These spacetime states satisfy properties that are the fermionic counterpart of what we developed in Section \ref{sec:formalism}. For example,
\begin{equation}\label{eq:fermionicmap}
    {\rm Tr}[\mathcal{R}_F \tilde{a}_{t_1 i}\tilde{a}^\dag_{t_2 j}]={\rm tr}[\rho \hat{T}a_i(t_1)a^\dag_j(t_2)]\,,
\end{equation}
where now the time-ordering follows the standard fermionic definition. Here it is also assumed that the evolution and the initial state are genuinely fermionic, namely they respect parity.

Let us now use Eq.\ \eqref{eq:fermionicmap} to introduce the fermionic version of the imagitivity discussed in Section \ref{sec:pureststates} for bosonic-like systems. For fermions, it is natural to discuss causality in terms of unequal anticommutation relations of elementary ladder operators since all observables can be built from an even number of them. Let us then focus on ${\rm tr}[\rho \{a_i(t_1),a^\dag_j(t_2)\}]={\rm tr}[\rho (\hat{T}-\hat{\bar{T}})a_i(t_1)a^\dag_j(t_2))]$ where we assumed $t_1>t_2$ for concreteness. Notice that we added a sign to the second term since the anti-time ordering operator yield an extra sign when interchanging the two ladder operators. This leads to 
\begin{equation}
 {\rm tr}[\rho \{a_i(t_1),a^\dag_j(t_2)\}]={\rm Tr}[(\mathcal{R}_F-\mathcal{R}^\dag_F)\tilde{a}_{t_1i}\tilde{a}^\dag_{t_2j}]
\end{equation}
with ensuing bound \footnote{we use the Cauchy-Schwarz inequality, but one may also use the more general H{\"{o}}lder inequality and Schatten norms. }
\begin{equation}
 |{\rm tr}[\rho \{a_i(t_1),a^\dag_j(t_2)\}]|\leq 4||\mathcal{R}_F-\mathcal{R}^\dag_F||\,. 
\end{equation}
Here $\mathcal{R}_F$ can be restricted to the support $(t_1,i)\cup (t_2,j)$ (see e.g., \cite{gigena2016one,vidal2021quantum,diaz2025spacetime} for discussions on fermionic partial traces and \cite{diaz2025spacetime} for a specific spacetime discussion). We see that even if we bound unequal-time anticommutators of ladder operators, instead of commutators, the \emph{fermionic imagitivity} is defined in complete analogy with the bosonic-like case. In addition, if we consider genuine observables $O^{(i)}$ commuting with parity, we can still bound their unequal time \emph{commutators} as 
$|{\rm tr}\big[\rho\, [O^{(2)}(t_2),O^{(1)}(t_1)]\big]|\leq ||O^{(1)}|| \,||O^{(2)}||\, ||\mathcal{R}_F-\mathcal{R}_F^\dag||$, where again we might consider a marginal of the spacetime state. This follows since observables are not affected by signs under time-ordering. In conclusion, the fermionic imagitivity bounds both the unequal-time commutators of fermionic observables and the unequal-time anticommutators of individual ladder operators, with the latter the basic building block defining causality (e.g., in QFT the microcausality condition is dictated by anticommutators of Dirac fields).

Let us now give a complementary point of view of the previous fermionic construction. While the natural setting is the one provided above and fully developed in \cite{diaz2025spacetime} one may also want to understand the fermionic case in direct analogy of what we developed throughout the manuscript for general systems. A direct way to do so is to consider the fermionic Hilbert space $h$ and build $\mathcal{H}=h^{\otimes N}$ as before so that the ladder operators $a_{ti}, a^\dag_{ti}$ satisfy anticommutation relations at the same time slice but they commute at different times. Then, we can ``dress'' these operators as follows $\tilde{a}_{ti}=\prod_{t'=0}^{t-1}\Gamma_{t'}a_{ti}$ with $\Gamma_{t}$ the local (in time) parity operator. The dressed operators are genuinely fermionic and satisfy Eq.\ \eqref{eq:fermalg}. Now, consider the action of the fermionic time translation operator on a Fock state 
\begin{equation}
\begin{split}
     &e^{i\epsilon \mathcal{P}_F}|\textbf{n}_0 \textbf{n}_1\dots \textbf{n}_{N-1}\rangle\\=&(-1)^{M_{N-1}(\sum_{t=0}^{N-2} M_t+1)}|\textbf{n}_{N-1} \textbf{n}_0 \textbf{n}_{1}\dots\rangle
\end{split}
\end{equation}
where we used the notation $\textbf{n}_t$ to indicate the fermionic modes at a given time and $M_t=\sum_i \tilde{a}^\dag_{ti}\tilde{a}_{ti}$ is the total number of fermions at time $t$. The phase is a consequence of the antiperiodic boundary conditions and the need to commute the creation operators that were previously on the last slice past all operators belonging to the previous slices (the Fock basis assumes an ordering). We can then write $e^{i\epsilon \mathcal{P}_F}=e^{i\epsilon \mathcal{P}}D$ for $D=e^{i\pi(M_{N-1}[\sum_{t=0}^{N-2} M_t+1])}$ the operator responsible of the phases. This also implies $ \Gamma e^{i\epsilon \mathcal{P}_F}=Ge^{i\epsilon \mathcal{P}}$
 for $G=e^{i\pi(1+M_0)\sum_{t=1}^{N-1}M_t}$, where we used that both $G$ and $\Gamma$ are diagonal in the number basis. Moreover, the previous implies 
 \begin{equation}
     e^{i \tilde{\mathcal{S}}_F}=G\,e^{i\tilde{\mathcal{S}}}\,.
 \end{equation}
 This result shows that we can recover the genuinely fermionic QA from the QA defined in \eqref{eq:qainin} by simply multiplying on the left by $G$. Notice also that the evolution within is automatically dressed for parity preserving Hamiltonians (e.g., $\tilde{a}^\dag_{1i}\tilde{a}_{1j}=\Gamma_0 \Gamma_0 \otimes a^\dag_{i}a_{j}=a^\dag_{1i}a_{1j}$). A similar result holds for the spacetime state 
\begin{equation}
    \mathcal{R}_F=G_\psi\,   \mathcal{R}\,,
\end{equation}
 for $G_\psi=\big(\prod_{t=1}^{N-1}\Gamma_t\big)^{\frac{1+P_\psi}{2}}$ where $\Gamma|\psi\rangle=P_\psi |\psi\rangle$, i.e., $P_\psi$ is the parity of the state. As a corollary of this relation one can immediately conclude that the singular values of $\mathcal{R}_F$ are equal to those of $\mathcal{R}$ and, for the odd parity sector, that their eigenvalues, and hence pseudo-entropies are also equal. One can easily verify that the pseudo-purities of all orders for the even sector are also unaffected by $G_\psi$. We can readily conclude that all fermionic pure spacetime states have vanishing pseudo-entropies, namely for closed systems, they satisfy  Eq.\ \eqref{eq:vanishent} just as their bosonic counterpart.

 With the previous  results at hand one can also 
 reinterpret the fermionic maps, such as Eq.\ \eqref{eq:fermionicmap} in analogy with the results of Section \ref{sec:formalism}
\begin{equation}
\begin{split}
    {\rm Tr}\big[\mathcal{R}_F \tilde{a}_{t_1 i}\tilde{a}^\dag_{t_2 j}\big]&={\rm Tr}\Big[G_\psi\,\mathcal{R}\, \smallprod_{t'=0}^{t_1-1}\Gamma_{t'}a_{t_1 i}\smallprod_{t''=0}^{t_2-1}\Gamma_{t''}a^\dag_{t_2 j}\Big]\\
    &={\rm Tr}\big[\mathcal{R}\, a_{t_1 i}a^\dag_{t_2 j}\,G_{\psi,t_1,t_2}\big]\,,
\end{split}
\end{equation}
where we have absorbed all dressing factors into the operator
$G_{\psi,t_1,t_2}=-\mathrm{sgn}(t_1-t_2)\prod_{t=\min(t_1,t_2)}^{\max(t_1,t_2)-1}\Gamma_t\, G_\psi$, which follows by keeping track of the signs produced when parity operators are moved through ladder operators on the same time slice (for $t_1\neq t_2$; $G_{\psi,t,t}=G_\psi$).
Since the latter is a product-in-time operator and all the operators on the r.h.s. are bosonic-like, one can now apply Corollary \ref{cor:wigthman} to recover Eq.\ \eqref{eq:fermionicmap}. When doing so, the extra factor $G_{\psi,t_1,t_2}$  takes into account the fermionic properties and in particular provides the sign in the fermionic time-ordering operator. Similar considerations hold for arbitrary correlators. 
We have thus seen that the genuinely fermionic spacetime construction can be re-expressed in direct analogy with the tensor-product-in-time formalism developed for general systems, at the cost of introducing a temporal parity string.

Let us add a final comment on how the discussion of this section fits within the unifying spacetime picture summarized in Table \ref{tab:unified}. The first point to notice is that QSOT and PDMs have not, to our knowledge, been explicitly formulated for fermions in the literature. Since these approaches rely on tensor products across time, and fermions do not naturally admit such an ordinary tensor-product structure, it is natural instead to seek their fermionic generalization by combining the results of the present section with the derivations of Section \ref{sec:unifying}. In this regard, the reinterpretation of the fermionic spacetime formalism in terms of tensor products provides a particularly natural tool.

Regarding PIs, we recall that the standard fermionic formulation makes use of Grassmann variables. While one may indeed introduce Grassmann variables to define coherent-state bases in spacetime, the use of traces in $\mathcal{H}$ already captures the fermionic sum-over-histories intuition by itself. In fact, one can bypass the explicit use of Grassmann variables altogether and still reproduce PI-like techniques directly within the Hilbert-space formalism. This was recently discussed in \cite{diaz2025spacetime}.

Regarding SOs, we note that in \cite{cotler2018superdensity} the authors also discuss the fermionic case through the use of spacetime anticommutation relations. It is therefore natural to develop the fermionic analogue of Theorem \ref{th:SO}. The main caveat is that this requires fermionic versions of the partial transposition and realignment maps. We leave these subtleties for future work.

Finally, consider the timelike entanglement approach. Since $\mathcal{R}_F$, and its marginals, play precisely the role of $T_{AB}$, they may be used to define both fermionic pseudo-entropies and fermionic imagitivity. In this sense, the discussion of Section \ref{sec:timelikeent} extends naturally to the fermionic setting. It is also interesting to note that in \cite{milekhin2025observable} the authors already study timelike entanglement in a free fermionic model and relate it explicitly to the timelike entanglement of \cite{harper2023timelike} in the corresponding CFT. In that particular case, however, the analysis relies directly on two-point correlations, exploiting the Gaussian nature of the model, rather than on a microscopic definition. The results of the present section provide precisely such a microscopic definition.

\subsection{Relativistic QFTs}\label{sec:QFT}
Up to this point we have presented many proposals that aim to develop a spacetime symmetric form of QM and showed how to unify them by using the spacetime QM framework. 
On the other hand, it is widely known that QFTs provide the framework underlying the Standard Model of particle physics and the
standard language for relativistic quantum theories. It is natural to wonder where QFTs fit in our picture. In this section we discuss the precise relation existing between SQM and QFT.

Let us start by recalling that there are two main approaches to QFT. Namely, the canonical one, and the PI formulation. Since the PI formulation has been directly related to the spacetime approach in Section \ref{sec:PIs} let us focus on the first (we add some QFT specific  comments related to PIs afterwards). 
For simplicity we will also focus on the case of a single scalar field. In the canonical formulation the Hilbert space is defined by the equal-time commutation relations $[\phi(t,\textbf{x}),\pi(t,\textbf{y})]=i\delta^{(d)}(\textbf{x}-\textbf{y})$ where we are considering $d$ spatial dimensions. This very definition requires a classically defined foliation of spacetime. From this one can also introduce a Hamiltonian which in turn defines what we mean by a field in spacetime: $\phi_H(t,\textbf{x})= U^\dag(t) \phi(\textbf
{x})U(t)$, namely a quantum field in spacetime is a dynamical-dependent notion, with the Heisenberg picture relating the initial field with the field at different times. 
We see that in the standard canonical approach fields and momenta at different spacetime points are in general causally dependent. We recall, however, that an important result of special relativistic QFTs,  is the microcausality condition $[\phi_H(x),\phi_H(y)]=0$ for $x,y$ two spacelike separated points. This allows one to recover Lorentz invariance a posteriori, even if the very definition of the Hilbert space and the related quantization process breaks Lorentz invariance explicitly.

Let us now make a few remarks regarding the roles of space and time. To avoid subtleties linked to the continuum let us first discretize the canonical algebra as  
$[\phi_\textbf{x},\pi_\textbf{y}]=i\delta_{\textbf{x},\textbf{y}}$ (with other commutators vanishing), where we assumed the fields are at a common reference time. We see that there is a different independent field for each value of $\textbf{x}$. This is in agreement with QM describing the complete system as the tensor product of the Hilbert spaces of their parts. In other words, we have independent modes with $\textbf{x}$ being a simple label of the fields of interest. On the other hand, when we write $\phi_\textbf{x}(t)$, where we omit the subindex $H$ for ease of notation, the parameter $t$ has a completely different meaning: it is indicating how much we evolved the field through a parameterized adjoint action. So even if both $t$ and $\textbf{x}$ are classical parameters, their role is completely different.

Let us now contrast the canonical QFT approach with the spacetime QM formalism. First of all we recall that the SQM approach can be applied to any quantum system, so let us apply to the scalar fields in discrete space. The definition of $\mathcal{H}$ corresponds to 
$[\phi_{t\textbf{x}},\pi_{t'\textbf{y}}]=i\delta_{tt'}\delta_{\textbf{x},\textbf{y}}$. Now both space and time are classical labels of independent fields and their role is on an equal footing. For each spacetime point we have independent modes. In the continuum limit we write $\phi(x)=\frac{\phi_{t\textbf{x}}}{\sqrt{\epsilon^D}}$, $\pi(x)=\frac{\pi_{t\textbf{x}}}{\sqrt{\epsilon^D}}$, where we assumed a common uniform spacing $\epsilon$ across all spacetime dimensions, leading to 
\begin{equation}\label{eq:spacetimefieldalg}
    [\phi(x),\pi(y)]=i\delta^{(D)}(x-y)\,,
\end{equation}
with $D=d+1$ \footnote{Let us recall that while for discrete spacetime the Stone-Von Neummann theorem guarantees that the Hilbert space is well-defined by the canonical algebra (up to unitary equivalence), the continuum limit carries the usual mathematical subtleties. Nonetheless, these subtleties are the standard ones, as the algebra of Eq.\ \eqref{eq:spacetimefieldalg} is isomorphic to the equal-time algebra of  standard QFT with an additional dimension. As such, they may be tackled by proper regularization. }. Thus the field algebra becomes manifestly covariant. This also allows one to treat Lorentz transformations in analogy with rotations with the operators 
\begin{equation}\label{eq:poincare}
     L_{\mu\nu}=\int d^{4}x\, \pi(x_\mu \partial_\nu-x_\nu \partial_\mu) \phi\,,\quad\mathcal{P}_\mu=\int d^4x\, \pi\partial_\mu\phi
\end{equation}
closing the Poincaré algebra. Notice that both the generator of boosts $L_{0i}$ and of time translations $\mathcal{P}_0\equiv \mathcal{P}$ involve \emph{translations across Hilbert spaces} (see Definition \ref{def:eip}) and not Hamiltonian evolution.

One apparent obstacle when defining the algebra of Eq.\ \eqref{eq:spacetimefieldalg} is the fact that it seems to conflict with standard unitary evolution. Fortunately, the framework of spacetime QM described in Section \ref{sec:formalism} clearly shows us how to proceed: we simply need to exponentiate the quantum action corresponding to a given theory
and recover observables from its correlators. Time-ordered correlation functions are a fundamental quantity of QFT and the spacetime formalism leads us precisely to them. Schematically, from the results of Section \ref{sec:basicformalism} we  obtain e.g., 
\begin{equation}\label{eq:feynmanprop}
    {\rm Tr}[\mathcal{R} \phi(x)\phi(y)]\equiv \langle 0| \hat{T}\phi_H(x^0,\textbf{x})\phi_H(y^0,\textbf{y})|0\rangle
\end{equation}
 which for $\mathcal{R}=\frac{1}{{\rm Tr}[e^{i\mathcal{S}}]}e^{i\mathcal{S}}$ with quantum action
 \small
 \begin{equation}\label{eq:kleingordonaction}
   \mathcal{S}=\int d^Dx \, \left[\pi(x)\dot{\phi}(x)-\frac{\pi^2(x)}{2}-\frac{(\nabla \phi(x))^2}{2}-\frac{m^2\phi^2(x)}{2} \right] 
 \end{equation}
 \normalsize
 is just Feynman propagator (similar considerations hold for interacting theories). Here a small imaginary part has to be added to the Hamiltonian part of the action, with $\mathcal{R}$ the spacetime state corresponding to the vacuum state. Indeed, the small imaginary part is precisely what projects onto the vacuum state inserted at the initial time slice (see comment below Eq.\ \eqref{eq:Reuclid}).
The continuum time case involves some additional subtleties we are ignoring here. While the trace may be understood as the limit $\epsilon \to 0$ at fixed $T$ and then large $T$, there are more direct treatments one may employ in the continuum case. We refer the reader to \cite{diaz2023spacetime,diaz2026quantum} for explicit discussions on the continuum limit and  interacting QFTs.

Let us add that in this context the imagitivity  is directly tied to microcausality
\begin{equation}
{\rm Tr}[(\mathcal{R}-\mathcal{R}^\dag)\phi(x)\phi(y)]\equiv [\phi_H(x),\phi_H(y)]\,,
\end{equation}
where we assume that the commutator is a 
$c$-number, as in free theories. Since local operators can in principle be constructed from the field and its derivatives, microcausality formally implies that the reduced spacetime state associated with spacelike regions is Hermitian.

Notice also that the l.h.s. of Eq.\ \eqref{eq:feynmanprop}
 has precisely the form of a PI but it is not yet one: only after we evaluate in the field spacetime basis we recover Feynman's formalism. This is what we explained for a single particle in Section \ref{sec:PIs}, which for fields corresponds to using a field eigenbasis $\hat{\phi}(x)|\phi(x)\rangle=\phi(x) |\phi(x)\rangle$, i.e., a basis of field configurations in spacetime. Another interesting insight concerns the notion of particle: if one diagonalizes the free quantum action normal ladder operators satisfying $[a(p),a^\dag(k)]=(2\pi)^D\delta^{(D)}(p-k)$ are obtained. This shows that the theory is intrinsically off-shell. However, only on-shell modes appear when considering scattering processes, a result that can be recovered from considerations on spacetime states \cite{diaz2023spacetime,diaz2026quantum}. Interestingly, these comments also link the spacetime approach to QFT with the PW mechanism, as single particle states of the QFT in $\mathcal{H}$ have the structure of PW states, as we described in Section \ref{sec:PW}.

In conclusion, when quantizing a QFT under the spacetime QM framework Lorentz covariance becomes explicit and Hamiltonian independent at the level of canonical algebras. Particles are naturally off-shell in this formulation. Correlation functions are recovered from expectation values computed with respect to the exponential of the action, thus reflecting Feynman's formulation. In principle, these insights apply also to curved spacetime with covariance under diffeomorphisms also manifest. Similar considerations hold for fermionic fields, with commutators replaced by anticommutators. For example, for a Dirac field one imposes $\{\psi_a(x),\psi^\dag_b(y)\}=\delta_{ab}\delta^{(D)}(x-y)$ with $a,b$ spinor indices and where we are considering the continuum spacetime limit. The corresponding free quantum action takes the form ${\mathcal{S}=\int d^4x\, \bar{\psi}(x)(\gamma^\mu i\partial_\mu-m){\psi}(x)}$ (see \cite{diaz2025spacetime} for details). 
In principle, and in analogy with the PI formulation, the SQM formalism is naturally expected to accommodate gauge theories as well. However, a detailed study of how gauge redundancies and constraints are encoded in spacetime states is still under development.

We also remark that algebraic QFT \cite{haag1964algebraic} also differs substantially from SQM. There, local algebras are tied to dynamical evolution and are typically generated from the fields $\phi$, with $\pi$ formally recovered from time derivatives. By contrast, an algebraic counterpart of SQM would treat $\phi$ and $\pi$ symmetrically and assign commuting algebras to any disjoint spacetime regions. We refer the reader to \cite{diaz2026quantum} for further discussion on this specific point.

\subsection{Quantum reference frames}\label{sec:QRF}

Another line of work, related in spirit to a ``spacetime approach'', introduces the concept of quantum reference frame (QRF), namely, the idea that a more relational and, in some cases, more space-time symmetric approach to QM may emerge when the reference frames themselves are treated quantum mechanically \cite{aharonov1984quantum,giacomini2019quantum,giacomini2019relativistic,apadula2022quantum}\footnote{We note that the literature also includes resource-theoretic approaches to QRFs; see, e.g., \cite{bartlett2007reference,marvian2014modes}.}.

While the usual QRFs are proposed within the framework of non-relativistic and/or single particle QM, the application of spacetime QM to QFT we described in Section \ref{sec:QFT} allow one to introduce a QRF in QFT settings. As a matter of fact, they almost appear as a mathematical necessity, in the sense that while the algebra of Eq.\ \eqref{eq:spacetimefieldalg} is symmetric in spacetime, the action in Eq.\ \eqref{eq:kleingordonaction} breaks Lorentz covariance. The problem is that the Legendre transform part  requires one to choose a foliation. Thus, while at the algebraic level we can make Lorentz invariance manifest, the need to employ phase-space variables forces us to inherit an asymmetry caused by the Hamiltonian formulation. Notably this obstacle can be immediately circumvented if we let the choice of the foliation to stay arbitrary thus replacing $\int d^Dx\, \pi(x)\dot{\phi}(x)\to \int d^Dx\, \pi(x) n^\mu \partial_\mu \phi(x)$ with $n^\mu$ a timelike vector orthogonal to the Cauchy hypersurfaces of the corresponding foliation. 
However, to truly recover Lorentz symmetry we need the vector $n^\mu$ to transform under the action of some unitary operator acting on Hilbert space. To do so $n^\mu$ has to be treated as a quantum degree of freedom. Let us remark that this is a natural extension of the SQM construction, not a standard QRF result. 
The simplest choice corresponds to imposing $[n^\mu, \kappa_\nu]=i\delta^{\mu}_{\; \nu}$. Then the quantum action becomes a control-like operator $\mathcal{S}=\int dn\, \mathcal{S}_n \otimes |n\rangle \langle n|$ for $\mathcal{S}_n$ the quantum action of fields, defined by the Legendre transform at fixed $n$, and with $\hat{n}^\mu |n\rangle= n^\mu |n\rangle$ (the separability is a consequence of the fields and foliation operators commuting with e.g., $[\phi(x),n^\mu]=0$). With these definitions it is straightforward to define a total angular momentum operator $J_{\mu\nu}=L_{\mu\nu}+l_{\mu\nu}$ such that $[J_{\mu\nu}, \mathcal{S}]=0$ for relativistic actions. Here $l_{\mu\nu}=n_\mu \kappa_\nu-n_\nu \kappa_\mu$ is the angular momentum of the foliation while $L_{\mu\nu}$ is defined in Eq.\ \eqref{eq:poincare}. The formalism we described here has been introduced in \cite{diaz2023spacetime} where it is also shown how to apply it to classical mechanics.

In light of the above, we can claim that the concept of QRF arises naturally within SQM when applied to QFT: if rather than an abstract tool to make Lorentz invariance explicit we provide the quantization of $n^\mu$ a genuine dynamical  origin, we are indeed defining a QRF. To give a simple example, we may associate $n^\mu$ with the worldline of a particle so that $n^\mu \propto p^\mu$ with $p^\mu$ its tetramomentum. More generally, this interpretation, extended in particular to non-inertial observers,  would require to consider a foliation field, i.e.,  $n^\mu\equiv n^\mu(x)$.

Let us remark that, at least formally, this approach to QRFs is not directly equivalent to previous proposals in the literature, particularly because our construction is intrinsically formulated within QFT and in the SQM framework. Nonetheless, the considerations above provide a natural and tightly constrained framework in which to study the consequences of QRFs in QFT, together with their possible low-energy limits. In those limits, the effective description may be compared with the QRF constructions developed in standard QM (or relativistic single particle settings), thus opening a route toward an alternative, symmetry-motivated foundation for QRFs, as well as for their  QFT completion, a problem recently posed in \cite{apadula2022quantum}.

\subsection{Spacetime classical mechanics}\label{sec:classical}

Having presented a general framework for formulating quantum mechanics in spacetime, it is natural to ask how classical mechanics fits into this picture. In particular, since the Hamiltonian formulation provides the canonical route from classical theories to standard quantum mechanics, one may ask whether there exists an analogous spacetime formulation of classical mechanics whose quantization leads to SQM, and conversely, which features of SQM admit a natural classical counterpart.

We begin by recalling a few basic facts about the standard Hamiltonian formulation of classical mechanics. The fundamental structure is provided by the Poisson brackets (PBs), which for two phase-space functions $f(q,p)$ and $g(q,p)$ are defined as
\begin{equation}\label{eq:PBsstandard}
    \{f,g\}
    =
    \sum_i
    \left(
    \frac{\partial f}{\partial q_i}\frac{\partial g}{\partial p_i}
    -
    \frac{\partial f}{\partial p_i}\frac{\partial g}{\partial q_i}
    \right).
\end{equation}
In particular, they imply the canonical relations
$
    \{q_i,p_j\}=\delta_{ij}$. 
PBs play a central role both  in the passage to quantization and in the description of the dynamics. Indeed, time evolution is generated by the Hamiltonian through Hamilton's equations, which through PBs take the compact  form $
    \dot q_i=\{q_i,H\}=\frac{\partial H}{\partial p_i}$ $,
    \dot p_i=\{p_i,H\}=-\frac{\partial H}{\partial q_i}$.
More generally, any phase-space function $f$ evolves according to
\begin{equation}\label{eq:Hamiltoneqs}
    \dot f=\{f,H\}\,,
\end{equation}
assuming $f$ has no explicit time dependence. Thus a function $f(q,p)$ gets parameterized by an external time according to $f(q,p)\to f(q(t),p(t))$. A convenient way to derive Hamilton equations is to impose the principle of stationary action $\delta S=0$ which may be rewritten as $\delta S=\sum_i \frac{\delta S}{\delta q_i(t)}\delta q_i(t)+\frac{\delta S}{\delta p_i(t)}\delta p_i(t)=0$ when we treat the action as a functional of phase-space variables. Then, one obtains Hamilton equations from  $\frac{\delta S}{\delta q_i(t)}=0$, $ \frac{\delta S}{\delta p_i(t)}=0$ imposed for all $i$ and times, 
where one assumes the standard form of the action in phase-space variables $S=\int dt\, \left[\sum_i p_i(t) \dot{q}_i(t)-H(q(t),p(t))\right]$ and the functional derivative are defined so that $\delta q_j(t')/\delta q_i(t)=\delta_{ij}\delta(t-t')=\delta p_j(t')/\delta p_i(t)$, $\delta q_j(t')/\delta p_i(t)=0$  (which can be interpreted as the continuum limit of discrete derivatives such as $\frac{\delta S}{\delta q_{i}(t)}\equiv \frac{1}{\epsilon}\frac{ \partial S}{\partial q_{ti}}$ for $\epsilon$ the time scaling going to zero).

Let us also add that in principle, the pairs $q_i,p_i$ are arbitrary canonical variables, namely they do not need to represent position and momentum of a particle. They could represent instead canonical pairs obtained from a previously solved constrained system (e.g., $q=\theta$, $p=\frac{\partial L}{\partial \dot{\theta}}$ for a pendulum with Lagrangian $L$). 
More generally, constraints can be imposed directly in phase-space as weak equations
\begin{equation}\label{eq:stconstraint}
    \phi_a(q,p)\approx 0,
\end{equation}
where each $\phi_a$ is a function on phase-space. These equations define a constraint surface, namely the submanifold of allowed phase-space points. The symbol $\approx$ indicates that the equality is required only on this submanifold, so that Poisson brackets are first computed in the full phase space and only afterwards restricted to the constraint surface. Notice also that consistency under evolution requires imposing  $\{ H,\phi_a\}\approx 0$ as well, which may produce additional constraints.

Let us now develop a classical counterpart of SQM. For simplicity we will employ a discrete time formulation. Our starting point will be to make use of the classical analogue of spacetime commutators such as \eqref{eq:spacetimealg}. We then define spacetime PBs
\begin{equation}\label{eq:SPBs}
    \{f,g\}_{\text{ST}}=\sum_{t,i} \left(\frac{\partial f}{\partial q_{ti}}\frac{\partial g}{\partial p_{ti}}-\frac{\partial f}{\partial p_{ti}}\frac{\partial g}{\partial q_{ti}}\right),
\end{equation}
which in particular leads to the canonical relations $\{q_{ti},p_{t'j}\}_{\text{ST}}=\delta_{tt'}\delta_{ij}$ and to $\{q_{ti},\cdot\}_{\text{ST}}=\frac{\partial \,(\cdot)}{\partial p_{ti}}$, $\{p_{ti},\cdot\}_{\text{ST}}=-\frac{\partial \,(\cdot)}{\partial q_{ti}}$. According to these relations
``space'' and time are indistinguishable so that an independent degree of freedom is assigned for each value $(t,i)$. Equivalently, the phase-space factorizes as $X_{\text{spacetime}}=\times_t (X_{\text{space}})_t=\times_{t,i}(X_{\text{single-mode}})_{ti}$, for $X_{\text{space}}$ the standard phase-space defined by Eq.\ \eqref{eq:PBsstandard} (here ``$\times$'' denotes standard Cartesian product). The apparent problem of this formulation is that evolution needs to somehow be recovered from a ``timeless'' picture, or, in other words, from a phase-space where time is internal. Notably, the solution is straightforward by considering the action in phase-space variables 
\begin{equation}\label{apeq:discreteaction}
 S= \epsilon\sum_t \Big\{\sum_i p_{ti} \dot{q}_{ti}-H(p_t,q_t)\Big\}\,,
\end{equation}
where for discrete time one needs to define a discrete derivative such as $\dot{q}_{ti}=\sum_{t',n} \frac{i\epsilon\omega_n}{N} e^{i\omega_n (t-t')\epsilon}q_{t'i}$ for $\omega_n= \frac{2\pi n}{T}$ (other possibilities, such as $\dot{q}_t=\frac{q_{t+1}-q_t}{\epsilon}$ are possible). Then, it follows from Eq.\ \eqref{eq:SPBs} that for any function $f_t\in X_{\text{spacetime}}$
\begin{equation}\label{eq:PBwithaction}
    \{f_t,S\}_{\text{ST}}=\epsilon\, \big(\dot{f}_t-\{f_t,H\}\big)\,,
\end{equation}
where we used that a direct evaluation leads to $\{f_t,\sum_{t,i} p_{ti} \dot{q}_{ti}\}_{\text{ST}}=\sum_i \frac{\partial f_t}{\partial q_{ti}}\dot{q}_{ti}+\frac{\partial f_t}{\partial p_{ti}}\dot{p}_{ti}= \dot{f}_t$. The derivative with respect to time is thus provided by the Legendre transform part, which generates time translations in the index $t$, the classical version of \emph{time translations across slices}.  The second term is just the bracket with the Hamiltonian part with the sum over $t$ fixed to the time of $f_t$. If we now consider Eqs.\ \eqref{eq:PBwithaction} and \eqref{eq:Hamiltoneqs} we see that evolution is recovered by imposing
\begin{equation}\label{eq:constraintev}
     \{f_t,S\}_{\text{ST}}\approx 0\,.
\end{equation}
Namely, the submanifold of points satisfying \eqref{eq:constraintev} is in one-to-one correspondence with the standard phase-space (assuming a well-posed initial conditions problem). 
A simple way to understand why this condition leads precisely to Hamilton's equations is to connect this discussion with the principle of stationary action.
\begin{theorem}
    The constraints $\{q_{ti},S\}_{\text{ST}}\approx 0$, $\{p_{ti},S\}_{\text{ST}}\approx 0$ defined by the spacetime brackets are equivalent to the stationary condition
    \begin{equation}
        \delta S= 0\,,
    \end{equation}
    for the action regarded as a functional of the phase-space variables. 
\end{theorem}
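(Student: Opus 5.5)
The proof is a direct computation using the derivative form of the spacetime brackets already noted below Eq.\ \eqref{eq:SPBs}. For any phase-space function $g$ we have $\{q_{ti},g\}_{\text{ST}}=\partial g/\partial p_{ti}$ and $\{p_{ti},g\}_{\text{ST}}=-\partial g/\partial q_{ti}$. Taking $g=S$ with $S$ the discrete action \eqref{apeq:discreteaction}, regarded as a function of the $2\times N\times(\#\text{modes})$ independent variables $(q_{ti},p_{ti})$, this gives
\begin{equation}
\{q_{ti},S\}_{\text{ST}}=\frac{\partial S}{\partial p_{ti}}\,,\qquad \{p_{ti},S\}_{\text{ST}}=-\frac{\partial S}{\partial q_{ti}}\,.
\end{equation}
Imposing both families of constraints, for all $t$ and $i$, is therefore the same as requiring that every partial derivative of $S$ vanish on the constraint surface.

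Next, the first variation of $S$ under an arbitrary independent displacement of the phase-space variables is
\begin{equation}
\delta S=\sum_{t,i}\left(\frac{\partial S}{\partial q_{ti}}\delta q_{ti}+\frac{\partial S}{\partial p_{ti}}\delta p_{ti}\right),
\end{equation}
which is the discrete version of the functional variation recalled in the text, with $\delta S/\delta q_i(t)\equiv \epsilon^{-1}\partial S/\partial q_{ti}$. Since the $\delta q_{ti},\delta p_{ti}$ are arbitrary and independent, $\delta S=0$ holds if and only if all coefficients vanish. By the first step this is exactly the set of constraints, which proves both implications. The only care needed is that the variations be unrestricted, i.e.\ taken in the full spacetime phase space $X_{\text{spacetime}}$. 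This matches the weak-equality convention: brackets are computed off-shell first and then restricted.

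As a consistency check, and to make contact with Eq.\ \eqref{eq:PBwithaction}, I would compute the derivatives explicitly. We have $\partial S/\partial p_{ti}=\epsilon(\dot q_{ti}-\partial H/\partial p_{ti})$. For $\partial S/\partial q_{ti}$, the Legendre term yields $\epsilon\sum_{t'}p_{t'i}\,\partial\dot q_{t'i}/\partial q_{ti}$. With the Fourier-in-time derivative this kernel is antisymmetric, $D_{t't}=-D_{tt'}$, because $\omega_{-n}=-\omega_n$. Hence $\partial S/\partial q_{ti}=-\epsilon(\dot p_{ti}+\partial H/\partial q_{ti})$, reproducing the discrete Hamilton equations and Eq.\ \eqref{eq:PBwithaction} for $f_t=q_{ti},p_{ti}$. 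This antisymmetry, i.e.\ the discrete integration by parts with periodic boundary conditions, is the only step with real content. It is also the main obstacle if one instead uses the forward difference $\dot q_t=(q_{t+1}-q_t)/\epsilon$: there the transposed kernel is a backward difference, so the equations hold only up to $\mathcal{O}(\epsilon)$ reindexing and boundary terms. The equivalence with $\delta S=0$ itself is unaffected, since the first two steps never use the form of the derivative.
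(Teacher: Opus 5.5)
Your proposal is correct and is essentially the paper's own proof: both expand $\delta S=\sum_{t,i}\bigl(\frac{\partial S}{\partial q_{ti}}\delta q_{ti}+\frac{\partial S}{\partial p_{ti}}\delta p_{ti}\bigr)$, identify the coefficients with $-\{p_{ti},S\}_{\text{ST}}$ and $\{q_{ti},S\}_{\text{ST}}$, and conclude from the independence of the variations in the full spacetime phase space. Your additional consistency check is a fine side remark, but note that the antisymmetry $D_{t't}=-D_{tt'}$ holds only if the Fourier label $n$ runs over a set symmetric under $n\to-n$ (e.g.\ $|n|\le (N-1)/2$ for odd $N$); the paper also uses this tacitly in deriving Eq.~\eqref{eq:PBwithaction}.
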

The proof of the Theorem is straightforward: 
\begin{equation}
\begin{split}
     \delta S&=\sum_{t,i}  \frac{\partial S}{\partial p_{ti}}\delta p_{ti}+\frac{\partial S}{\partial q_{ti}}\delta q_{ti}\\&=\sum_{t,i}  \{q_{ti},S\}_{\text{ST}}\,\delta p_{ti}-\{p_{ti},S\}_{\text{ST}}\,\delta q_{ti}\,,
\end{split}
\end{equation} 
In the first line we have written the infinitesimal variation of the action under independent variations of the $2MN$ spacetime variables. This is precisely the standard variational step underlying the usual action principle: before the equations of motion are imposed, the variables at different times are varied independently, subject only to the chosen boundary conditions. In the second line we used the definition of the spacetime Poisson brackets, which express derivatives with respect to independent canonical variables at different times.  We see
that requiring the variation of the action to vanish for arbitrary infinitesimal variations is the same as imposing the spacetime brackets to vanish. This completes the proof.

Let us add a few remarks. The reason the action principle can be directly compared with the spacetime scheme is that, in the variational formulation, the variables at different times are treated as independent before the equations of motion are imposed. At this level, therefore, the two descriptions are straightforwardly related, since the spacetime Poisson brackets simply provide a convenient way of rewriting the derivatives of the action with respect to the independent spacetime variables.
However, the spacetime formulation adds an interpretation that is not explicit in the standard variational statement. In the spacetime scheme, the action principle can be understood as imposing the equivalence between the geometrical notion of translations across time slices and the Hamiltonian evolution generated within each slice.
We also remark that the condition of the Theorem is analogous to standard constraints of the form \eqref{eq:stconstraint} where $\phi^{(q)}_{ti}=\{q_{ti}, S\}_{\text{ST}}$, $\phi^{(p)}_{ti}=\{p_{ti}, S\}_{\text{ST}}$ are first constructed from $S$. Then $\phi^{(q,p)}_{ti}\approx 0$ are the Hamilton equations.

In the language of classical mechanics we can also say that only the functions invariant under the flow of $S$ are ``physical''. This flow is precisely the difference between translations across slices (generated by the Legendre transform) and the time translations ``within'' (generated by the Hamiltonian). In the classical case, the initial discussion of Section \ref{sec:basicformalism} about these two types of translations is thus even more direct.  
Notice also that initial (and/or final) conditions can be imposed either in the definition of $S$ or as additional constraints. 
Let us finally notice that a symmetry in the action translates immediately to a symmetry in evolution. That is, if $\{G,S\}_{\text{ST}}=0$ then Jacobi identity implies $0=\{f,\{G,S\}\}_{\text{ST}}+\{G,\{S,f\}\}_{\text{ST}}+\{S,\{f,G\}\}_{\text{ST}}=\{S,\{f,G\}\}_{\text{ST}}$, namely if $f$ is physical the transformed function $\{f,G\}$ (and higher order nested PBs) is as well. Clearly a particular set of symmetries of $S$ follow from $G=\sum_t g(q_t,p_t)$ for $\{g,H\}=0$ and $g$ independent of time.

Let us add a conceptual discussion. In the standard Hamiltonian picture, the variables $q_i,p_i$ provide canonical coordinates on phase space. A point in phase space, specified by $2M$ real numbers, represents the instantaneous state of the system at a given time. Dynamics is then described by a trajectory in phase space, namely by the time-parameterized family $(q_i(t),p_i(t))$ generated by the Hamiltonian flow.
In the spacetime picture, the canonical coordinates are instead $q_{ti},p_{ti}$. A point in this enlarged phase space, specified by $2MN$ real numbers, corresponds to a full history of the system rather than to an instantaneous state. A priori, these histories are arbitrary.
The dynamical information is then encoded not as motion on phase space, but rather as a restriction on physical histories through the constraint \eqref{eq:constraintev}. 
In particular, the variables at one time cannot be chosen independently of those at another time as they must satisfy the equations of motion together with the appropriate initial (and/or final) conditions, which in the spacetime picture are viewed not as an evolution law, but as a coupled system of equations for the full set of variables.
Thus, in the spacetime picture, evolution is replaced by a constraint on histories. Different physical theories are then characterized by different constraints on the same kinematical spacetime phase space.

We have seen that it is perfectly plausible to formulate classical mechanics directly in spacetime. At the kinematical level this construction is in one-to-one correspondence with SQM, including the fact that the bosonic SQM Hilbert space follows from promoting spacetime Poisson brackets to commutators. However, a crucial obstruction appears if one attempts to quantize the theory by strictly following Dirac's prescription \cite{dirac1950generalized}. The reason is that the constraints $\phi^{(q,p)}_{ti}$ implementing evolution in spacetime classical mechanics are second class, namely, their mutual Poisson brackets do not vanish weakly \cite{diaz2026quantum}. To understand why this is a problem consider the example of a two-dimensional particle constrained by $\phi_1=q_x\approx 0$ and $\phi_2=p_x\approx 0$ we have $\{\phi_1,\phi_2\}=1$, so the constraints are second class. This means that, although it is classically consistent to restrict the particle to a line, these constraints cannot be imposed a posteriori as operator conditions after quantization, i.e., $q_x|\psi\rangle=p_x|\psi\rangle=0$ leads to $|\psi\rangle=0$. Instead, under these condition Dirac proposed replacing Poisson brackets by Dirac brackets, which effectively reduces the number of dynamical degrees of freedom prior to quantization; in the present example, one is left with the quantum theory of a one-dimensional particle as $q_x,p_x$ are never quantized.
A similar phenomenon occurs in spacetime classical mechanics with the constraints $\phi^{(q,p)}_{ti}$. If one follows Dirac's procedure directly, replacing Poisson brackets by Dirac brackets before quantization, the ``off-shell'' modes become non-dynamical and the formalism collapses back to standard QM. In this sense, Dirac quantization removes the spacetime structure rather than preserving it. This no-go theorem has been recently proven in \cite{diaz2026quantum}.

As argued in \cite{diaz2026quantum}, this obstruction is closely tied to the need to generalize the notion of quantum state itself: preserving the spacetime picture requires implementing the constraints at the level of correlators through a quantum action, rather than through standard state/subspace quantization. In this sense, we can think of SQM as the quantum version of spacetime classical mechanics. In particular, the use of the exponential of the action to define spacetime states defines a natural classical limit: for small $\epsilon$ we can write 
 $\langle [\mathcal{S},O_t]\rangle_\mathcal{S}=0$, where $\langle ...\rangle_\mathcal{S}={\rm Tr}[|q\rangle_0\langle q'| e^{i\mathcal{S}}...]$ (see Theorem \ref{th:theorem1}). Then  we may write
$0=\langle [\mathcal{S},O_t]\rangle_\mathcal{S}=\lim_{\hbar\to 0 }\langle [\mathcal{S},O_t]\rangle_\mathcal{S}\simeq i\hbar(\{S,O_t\}_{\text{ST}})|_{\text{classical-trajectory}}$ where in the last step we used the standard PI argument based on a stationary phase approximation, so  that for small $\hbar$ the dominant contributions come from classical trajectories (here we also reintroduce $\hbar$ as $\mathcal{S}\to \mathcal{S}/\hbar$ and make use of the results of Section \ref{sec:PIs}). We also replaced the spacetime commutator with the corresponding spacetime PB. So at least formally, it is clear that SQM implies the spacetime condition \eqref{eq:constraintev} in the classical limit.

\section{Conclusions}\label{sec:conclusions}

In this work we developed the concept of spacetime state as the central object of spacetime quantum mechanics. The basic idea is to assign the quantum description of a system not to a spacelike slice, but to spacetime itself. While the present formulation still assumes an underlying classical spacetime background, this does not amount to a deterministic classical block universe. Rather, it suggests a block universe of possible histories: the spacetime arena is fixed and ``four-dimensional'', but the physical content is encoded quantum mechanically. In this sense, spacetime states provide a
new synthesis of the tension emphasized in the Introduction between the geometrical point of view of Minkowski and the dynamical point of view of Dirac.

A central point of the manuscript is that spacetime states $\mathcal{R}$ are not merely formal mathematical objects. They reduce to standard quantum states in the regimes where they should, and even when they do not, they retain many state-like properties: pure spacetime states have vanishing entropies, the corresponding reduced spacetime states are isospectral under arbitrary bipartitions leading to biorthogonal Schmidt decompositions, and convex mixtures and coherences acquire natural spacetime-generalized interpretations. At the same time, the features that make spacetime states different from ordinary density matrices are not pathologies. Their non-Hermiticity encodes time ordering and causality; their pseudo-entropies quantify the loss of a closed-history, either through standard convex mixtures or through inaccessible globally entangled evolution; and the negativity of $\mathcal{R}+\mathcal{R}^\dag$  marks the presence of genuinely temporal quantum behavior, beyond ordinary spacelike quantum correlations, and required for violations of Leggett-Garg inequalities \cite{leggett1985quantum}. Conversely, we identified regimes, such as sufficiently strong decoherence, in which spacetime states collapse to standard density matrices and violations of temporal inequalities are no longer possible. Thus, while space and time are on an equal footing at the kinematical level, spacetime states make manifest how QM distinguishes the two. 
Our preliminary results suggest that this distinction is lost in the classical limit.

With this structure in place, we unified all the main spacetime approaches to QM in Table \ref{tab:unified} under a common framework. We showed that these are not unrelated proposals, but different manifestations of the same underlying formalism. 
This includes the path-integral formulation, which is essential for any spacetime-oriented formulation of quantum theory but is often outside Hilbert-space-based approaches. One general lesson of the unification is that the role of the action  is not lost in the other spacetime approaches, but hidden by the reductions, rearrangements, or non-invertible maps through which they arise from spacetime states. The formalism also applies naturally to relativistic QFTs, showing that ideas originally developed in non-relativistic or quantum-information settings can be connected to the standard language of high-energy physics. Finally, the classical counterpart of the formalism shows that the same logic can be applied to classical mechanics itself, whereby the action can be reinterpreted as the object relating geometrical translations across time with standard dynamics, so that the standard action principle corresponds to constraining these two notions to being equal.

At the same time, the previous spacetime approaches remain valuable in their own domains, and immediately provide insights into the relevance and properties of spacetime states, or of their corresponding reductions and maps, in regimes that may not have been apparent from SQM alone. This is particularly clear in the recent holographic setting. While SQM was heavily inspired by PIs and QFTs, only recent developments \cite{milekhin2025observable,guo2025spacetime} have suggested an explicit relevance of spacetime states for understanding holographic timelike entanglement \cite{harper2023timelike,chu2023time,narayan2023notes,nunez2025timelike,heller2025temporal,jiang2025timelike}. Having established this connection, the present results provide a rigorous Hilbert-space basis for further developments, in the same way that standard quantum-information tools support the usual theory of spatial entanglement and the corresponding holographic implications. In addition, we have provided a ``microscopic'' definition for timelike entanglement that holds for fermionic systems. The numerical examples involving imagitivity and folded spacetime states further indicate that reduced spacetime states and their folded generalizations can serve as concrete probes of causal propagation and scrambling, while their TN representation points to possible computational uses of the formalism.
More generally, by revealing interconnections among previous approaches and clarifying the role played by each of them, the unification presented in this work should provide both a useful guide and a natural benchmark for future developments.

One question, raised naturally in discussions of the formalism, is what SQM  (or other spacetime approaches) gives us beyond a convenient
rewriting of known quantities. The answer is that, besides its conceptual strengths, 
the formalism provides a quantum-information language in which the time-related assumptions of ordinary QM become visible. In the usual formulation, time evolution, the ordering of
events, the reference frame, and the distinction between space and time are already built into
the formalism in a way that presupposes a classical external structure that cannot be removed. In SQM these assumptions enter the theory in a transparent way so that they can in principle be relaxed. 
This makes it possible to ask, and possibly quantify, which aspects of time should be understood as effective classical approximations that may require revision in appropriate physical or experimental regimes. 
In this sense, what may appear as an equivalent scheme within standard, well-tested regimes could become physically meaningful, or even necessary, in regimes beyond the reach of ordinary QM, where the assumptions built into its standard formulation can no longer be taken for granted.

On this basis, one particular direction to be explored concerns the status of causal order. Throughout most of the manuscript we have assumed a fixed ordering of the time translations across slices.
At the same time, we have seen that QSOTs, PDMs,  and OTOCs can be understood in terms of different foldings, or as non-separable-in-time linear maps acting on enlarged spacetime states. In the present work these foldings were either fixed or combined only as classical mixtures. The only example we presented of genuine spacetime coherences was the quantum switch \cite{chiribella2013quantum},  which, however, does not affect the  time-slice ordering itself.
A genuine spacetime generalization would coherently combine different foldings of spacetime, as would naturally occur if the choice of folding were controlled by a quantum system (see the explicit construction  at the end of Section \ref{sec:otocs}).
This suggests a route toward a spacetime-state formulation of indefinite causal order, possibly connected with quantum reference frames (Section \ref{sec:QRF}) or, more speculatively, with quantum spacetime backgrounds entangled with matter, both of which may control the quantum action and thereby the order of events. A direct comparison with the process-matrix formalism \cite{castro2018dynamics}, a framework proposed precisely to relax the assumption of a definite causal structure among quantum parties, may therefore become feasible upon further extensions of the formalism presented here.

Regarding the possibility of revisiting old problems such as canonical quantization under the SQM intuition, let us remark that the appearance of a Hamiltonian should no longer be interpreted as the introduction of a preferred time direction. In standard canonical quantization the algebra is defined at equal times, so the split between space and time is already built into the kinematics, and treating this split as dynamical is indeed problematic. In particular, the momentum is fixed by this choice.
In SQM, instead, spacetime algebras are naturally independent of the foliation. This means that the time direction $n^\mu$, used to define a Hamiltonian decomposition, can itself be treated as dynamical without modifying the canonical algebra of the standard degrees of freedom (i.e., the matter and foliation algebras commute). As discussed in Section \ref{sec:QRF}, one natural reason to assign an algebra to the foliation itself is to give the choice of time a concrete physical realization, namely to regard it as determined by an actual physical reference system. The important conceptual point is that what breaks manifest covariance of a relativistic theory is not the Hamiltonian split itself, but treating this split as fixed external information. 
For these reasons, SQM may also provide a nontrivial way to revisit canonical approaches to gravity.

Let us finally comment on a broader foundational question suggested by the formalism. In ordinary QM, once the Hilbert space of a system is given, any positive normalized operator on that Hilbert space is a valid quantum state. The dynamics is then introduced through an additional, separate axiom. In the spacetime formalism developed here, the situation is different. We have not postulated that an arbitrary normalized operator on the spacetime Hilbert space is a valid spacetime state. Instead, spacetime states were defined constructively, through an initial state and a quantum action. This leaves open the possibility that this simple structure is itself an effective one, thus raising  a natural question: \textit{Can spacetime states be characterized from a more axiomatic point of view?} For example, if every suitably normalized spacetime operator satisfying appropriate consistency conditions could be obtained from an enlarged construction of the type described here (i.e., from marginals of pure spacetime states), then the formalism would admit an axiomatic formulation closer in spirit to ordinary QM. The key difference would be that this single axiom would unify what in ordinary QM are two separate postulates.
If not, the obstruction would be equally interesting: it could identify the precise conditions distinguishing spacetime quantum mechanics from more general, possibly post-quantum, theories, or reveal that the constructive spacetime states studied here arise only as effective descriptions selected by limited access to a more general spacetime object. In either case, the results of this work provide the necessary starting point for addressing this issue. \\



In summary, SQM does not simply add another spacetime approach to quantum mechanics. Instead,  it identifies the spacetime state as the object from which the existing spacetime approaches, the standard dynamical picture, and several new questions about time and causality can be understood within a common rigorous and intuitive framework.

\acknowledgments 
We thank Raúl Rossignoli and Lorenzo Maccone  for deep conceptual discussions on quantum mechanics and its extensions. In particular, we thank Raúl Rossignoli for suggesting the use of a biorthogonal Schmidt decomposition. We are also grateful to Akram Touil and \mbox{Wojciech~Zurek} for insightful questions regarding the scope and perspectives opened by the formalism. We acknowledge valuable discussions with Alexey Milekhin, Wu-zhong Guo, and James Fullwood on timelike entanglement and pseudo-density matrices. We would also like to thank Lukasz Cincio for his feedback on tensor-network aspects of the work. N. L. D., M. C. and P.B. were supported by the Laboratory Directed Research and Development (LDRD) program of Los Alamos National Laboratory (LANL) under project number 20260043DR. N. L. D. also acknowledges support by the Center for Nonlinear Studies at LANL.

\clearpage

\appendix

\section{Spacetime state properties: proofs and additional technical details}\label{app:spacetimest}

Here we provide proofs and additional comments for the results of Sections \ref{sec:pureststates} and \ref{sec:fbevolution}.\\

\emph{Pure spacetime states}. Consider first pure spacetime states of the form $\mathcal{R}=|\psi\rangle_0\langle \psi| e^{i\tilde{\mathcal{S}}}$.
As a useful tool, let us first notice that since $e^{i\tilde{\mathcal{S}}}=\mathcal{V}^\dag e^{i\epsilon \mathcal{P}}\mathcal{V}$ we can write 
\begin{equation}\label{appeq:adjV}
    \mathcal{R}=\mathcal{V}^\dag\, |\psi\rangle_0\langle \psi| e^{i\epsilon \mathcal{P}}\,\mathcal{V}\,,
\end{equation}
 as it follows from $\mathcal{V}$ acting trivially on the first slice. We will use this result to study spectral properties of $\mathcal{R}$ ignoring the Hamiltonian part.

\begin{figure}[t!]
    \centering
    \includegraphics[width=\linewidth]{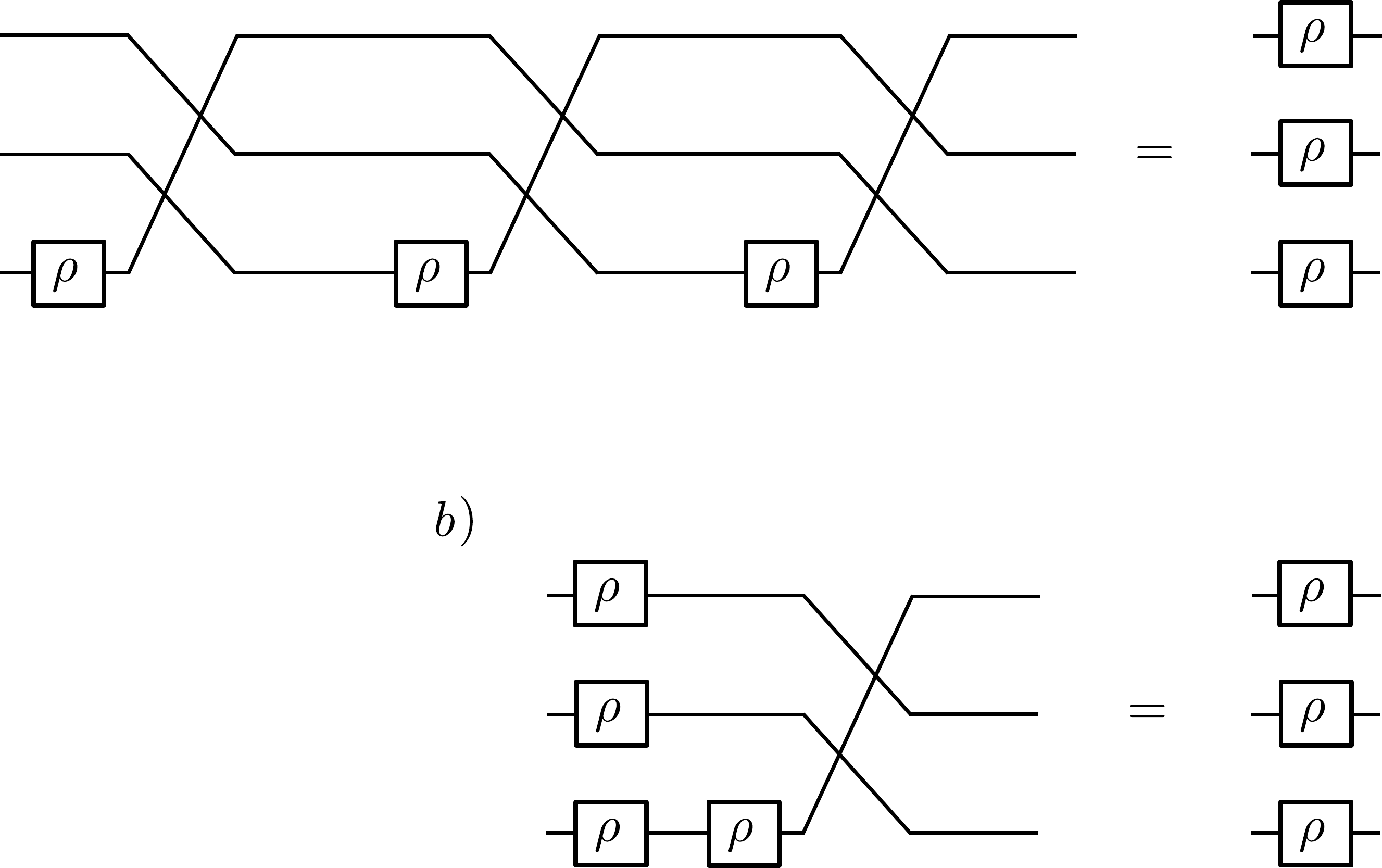}
    \caption{Diagrammatic proof of the decomposition $\mathcal{R}=\Pi+X$ without including evolution. On panel a) we depict $\mathcal{R}^3$ for $N=3$ leading to $\mathcal{R}^3=\rho^{\otimes 3}$. On panel b) we show that $\Pi \mathcal{R}=\Pi$ for pure states ($e^{-i\epsilon \mathcal{P}}|\psi\rangle^{\otimes N}=|\psi\rangle^{\otimes N}$). Similar considerations hold for arbitrary $N$. }
    \label{fig:decompproof}
\end{figure}

Let us now discuss the decomposition $  \mathcal{R}=\Pi+X$ described in Eq.\ \eqref{eq:decompositionpix} for $\mathcal{R}=\rho_0 e^{i\epsilon \mathcal{P}}$. A simple diagrammatic proof (see Figure \ref{fig:decompproof}) allows one to conclude $\mathcal{R}^N=\rho^{\otimes N}$ as it follows from the cyclical nature of $e^{i\epsilon \mathcal{P}}$. This shows that for an initial pure state $\mathcal{R}^N=\Pi$. We can then prove that $\Pi \mathcal{R}=\Pi$ and $\mathcal{R}\Pi=\Pi$ which is also easy to see graphically and depicted in  Figure \ref{fig:decompproof}. This means that $X=\mathcal{R}-\Pi$ satisfies $X\Pi=\Pi X=0$. Moreover, since $\mathcal{R}^N=\Pi+X^N$ we have $X^N=0$. Now, since the adjoint action of $\mathcal{V}$ in Eq.\ \eqref{appeq:adjV} cannot change the previous properties of $\Pi, X$ we conclude that the same results hold if evolution is included.

Notice also that $\mathcal{R}^\dag \mathcal{R}=\rho_0^2\otimes \mathbbm{1}$. Hence, for pure states $\mathcal{R}^\dag \mathcal{R}$ has an eigenvalue $1$ with degeneracy $d^{N-1}$, and one eigenvalue $0$ with degeneracy $d-1$. This defines the SVD of $\mathcal{R}$. \\

\emph{Pseudo-purities and Isospectrality}.
Here we discussed the properties of a pure spacetime state under bipartitions.
We will discuss now show how to fully characterize the pseudo-entropies of marginals. Let us first recall that given any operator $O$ we can write ${\rm tr}[O^k]={\rm tr}[O^{\otimes k} C]$ for $C$ a shift operator between copies. One can expand $C$ in terms of SWAPs just as we did with $e^{i\epsilon \mathcal{P}}$. Since the latter shifts among different time slices while $C$ among different copies of a single object we use different notation. Now, if instead we want ${\rm tr}[O^k_A]$ for $O_A={\rm tr}_B[O]$ what we need is the shift operator $C_A$, shifting only between copies of $A$. In this way the $B$ part gets traced away on each copy, i.e., ${\rm tr}[O^k_A]={\rm tr}[O^{\otimes k} C_A]$ where in the r.h.s. we use the full $O$.

Going back to the spacetime formalism, let us introduce a convenient notation to fully characterize the bipartition $A-B$ of $\mathcal{H}=\otimes_t h$: for each time slice we define a partition $A_t$ so that $A=\cup_t A_t$. Then it's clear that $B=\bar{A}=\cup_t B_t$ for $B_t=\bar{A}_t$ such that $h=h_{A_t}\otimes h_{B_t}$ for each time slice. Equivalently, $\mathcal{H}_{A(B)}=\otimes_t h_{A(B)}$. Notice also that $A_t$ can correspond to the full system or be empty. With this in mind we can introduce the cyclic operators $C_{A}=\otimes_t C_{A_t}$, $C_{B}=\otimes_t C_{B_t}$ which cyclically 
shift between copies of $\mathcal{H}_A$, $\mathcal{H}_B$ and where $C_{A_t(B_t)}$ do the same between copies of $h_{A_t}(B_t)$. We can then write
\begin{equation}\label{appeq:cyclictrick}
{\rm Tr}[\mathcal{R}_A^k]={\rm Tr}[\mathcal{R}^{\otimes k}C_A]\,,\quad  {\rm Tr}[\mathcal{R}_B^k]={\rm Tr}[\mathcal{R}^{\otimes k}C_B]\,.
\end{equation}
Notice also that since $\mathcal{R}^{\otimes k}$ is symmetric between copies we can always replace $C_B\to C_B^{-1}$ without changing the trace, i.e., ${\rm Tr}[\mathcal{R}_B^k]={\rm Tr}[\mathcal{R}^{\otimes k}C_B^{-1}]$ with $C_B^{-1}=C_B^{N-1}$.

\begin{figure*}[t!]
    \centering
    \includegraphics[width=0.9\linewidth]{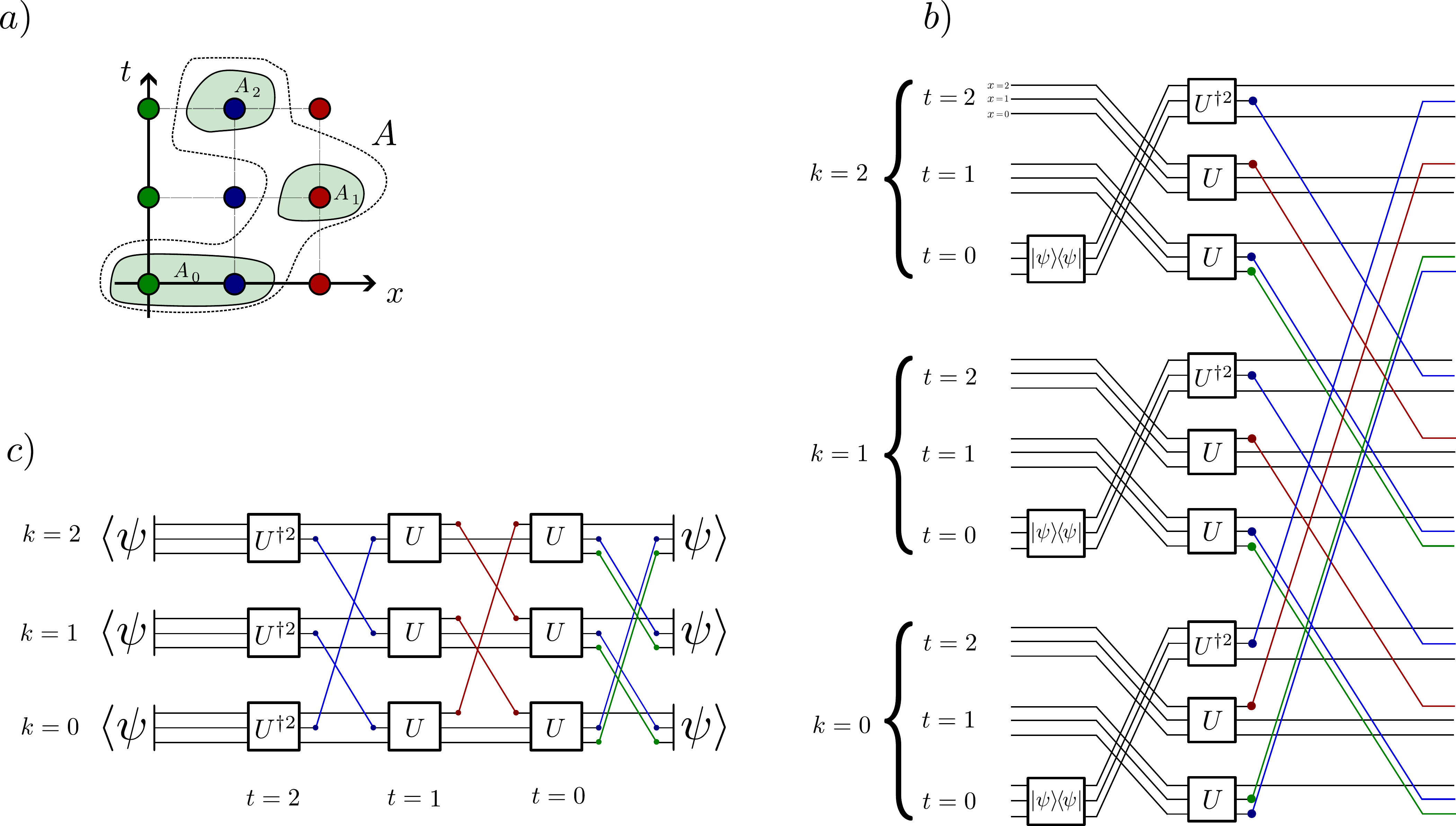}
    \caption{Example of the purity ${\rm Tr}[\mathcal{R}_A^3]$ for $3$ systems separated in space and $N=3$ times. We label the different copies with an index $k$ while for the systems at a given time, with local Hilbert spaces $h_0$, $h_1$, $h_2$, we use colors green, blue and red respectively. Here $A$ is defined by $\mathcal{H}_A=(h_0\otimes h_1)_{t=0}\otimes (h_2)_{t=1}\otimes (h_1)_{t=2}$ as depicted on panel a). On panel b) we depict the corresponding $\mathcal{R}^{\otimes 3} C_{A_0}\otimes C_{A_1}\otimes C_{A_2}$ whose trace yields ${\rm Tr}[\mathcal{R}_A^3]$. Notice how
    each $C_{A_t}$ acts only on the lines corresponding with the subystem $A_t$ with e.g., $C_{A_0}$ affecting the first two lines in space at the initial slice of each copy. 
    On panel c) we show how the trace of panel b) can be greatly simplified yielding  ${\rm Tr}[\mathcal{R}^{\otimes 3} C_{A_0}\otimes C_{A_1}\otimes C_{A_2}]=\langle \psi^{\otimes 3}| (U^{\dag 2})^{\otimes 2}C_{A_2}U^{\otimes 3}C_{A_1} U^{\otimes 3} C_{A_0} |\psi\rangle^{\otimes 3}$ which is a particular case of Eq.\  \eqref{appeq:puritiesform}. }
    \label{fig:bipartitions}
\end{figure*}
Having established the basic strategy and notation in Eq.\ \eqref{appeq:cyclictrick} we now proceed to evaluate the r.h.s. 
In Figure \ref{fig:bipartitions} we depict a particular example which is helpful to understand the following general argument.
The first thing to notice is that the type of contractions between objects are particularly constrained. The bra $\langle \psi|$ are entirely contracted with the $(U^\dag)^{N-1}$ of the same copy. The $(U^\dag)^{N-1}$ can only be contracted on the right with a $U$: if a line is not affected by $C_{A_{N-1}}$ (i.e., the line is not coming from $A_{N-1}$) then it contracts with the $U$ of the same copy and the previous time, as it follows from the structure of $\mathcal{R}$ in a given copy. Otherwise it gets contracted with the $U$ of the previous copy (and previous time), as it follows from the fact that the cyclic operator connects any given line in $A_t$ with the same line in the previous copy of $\mathcal{R}$. We can then repeat this analysis for all the right contractions on the operators $U$ showing that they get either connected with the previous (in time) $U$ on the same copy for lines not in $A_t$ or in the previous copy for lines in $A_t$. The same holds for the last contractions between $U$s and $|\psi\rangle$. In summary, we obtain a closed expression for the pseudo-entropies.
\begin{suplemma}\label{applemm:purities}
The pseudo-purities of a reduced spacetime state acting on $A=\cup_t A_t$ is given by the expectation value
 \begin{equation}\label{appeq:puritiesform}
\begin{split}
      &{\rm Tr}[\mathcal{R}_A^k]\\&=\langle \psi|^{\otimes k}(U^{\dag N-1})^{\otimes k}C_{A_{N-1}}U^{\otimes k}\dots   C_{A_1} U^{\otimes k}C_{A_0}|\psi\rangle^{\otimes k}\,,
\end{split}
\end{equation}
where $C_{A_t}$ is the cyclic shift operator acting on $A_t$. 
\end{suplemma}

Notably, the pseudo purities of degree $k$ are equal to an expectation value with respect to the standard quantum state $|\psi\rangle^{\otimes k}$. The corresponding operator is not Hermitian in general and involves single step evolutions intertwined with cyclic operators defining the partition at each time slice. When all $C_{A_t}$ are trivial we recover the pure condition ${\rm Tr}[\mathcal{R}_A^k]=1$. 
Moreover since the operators $\mathcal{W}=(U^{\dag N-1})^{\otimes k}C_{A_{N-1}}U^{\otimes k}\dots   C_{A_1} U^{\otimes k}C_{A_0}$ are unitary (for any choice of partitions $\mathcal{W}$ is a product of unitary operators) we must have 
\begin{equation}
    |{\rm Tr}[\mathcal{R}_A^k]|=|\langle \psi|^{\otimes k}\mathcal{W}|\psi\rangle^{\otimes k}|\leq 1\,,
\end{equation}
a property we discussed in Section \ref{sec:pureststates}.

With the supplemental Lemma \ref{applemm:purities} at hand we are also in a position to prove the isospectrality property. 
Notice first that a reasoning analogous to the previous allows us to write
\small
\begin{equation}
\begin{split}
      {\rm Tr}[\mathcal{R}_B^k]&={\rm Tr}[\mathcal{R}^{\otimes k}C_B^{-1}]\\&=\langle \psi|^{\otimes k}(U^{\dag N-1})^{\otimes k}C_{B_{N-1}}^{-1}\dots   C_{B_1}^{-1} U^{\otimes k}C^{-1}_{B_0}|\psi\rangle^{\otimes k}\,,
\end{split}
\end{equation}
\normalsize
where we employed the inverse of $C_B$ for convenience.  Now, since $C_{A_t}C_{B_t}=C_{h}$, namely the full cyclic operator, we can replace in Eq.\ \eqref{appeq:puritiesform} each $C_{A_t}= C_{A_t}C_{B_t}C_{B_t}^{-1}\to C_{B_t}^{-1}$, since $C_{h}$ acts trivially on all the operators involved. This proves the equality 
\begin{equation}
    {\rm Tr}[\mathcal{R}_A^k]={\rm Tr}[\mathcal{R}_B^k]\,.
\end{equation}
holding for all $k$. \\

\emph{Biorthogonal Schmidt}.
We now prove the biorthogonal Schmidt decomposition of Eq.\ \eqref{eq:jointschmidt}.
Let us first assume for simplicity that $d_A=d_B$ and both $\mathcal{R}_A$ and $\mathcal{R}_B$ are diagonalizable. Then we can write
\begin{equation}
    \mathcal{R}_A=\sum_\nu \lambda_\nu |\nu\rangle_A\langle \tilde{\nu}|\,,\quad \mathcal{R}_B=\sum_\nu \lambda_\nu |\nu\rangle_B\langle \tilde{\nu}|
\end{equation}
where we have defined the basis
\begin{equation}\label{appeq:bibasis}
\begin{split}
     |\nu\rangle_A&=\sum_i U_{i\nu}|i\rangle_A\,,\quad _A\langle \tilde\nu|=\sum_j U^{-1}_{\nu j} {}_A\langle j|\\
|\nu\rangle_B&=\sum_i V^{-1}_{\nu i}|i\rangle_B\,,\quad _B\langle \tilde\nu|=\sum_j V_{j\nu} {}_B\langle j|
\end{split}
\end{equation}
for 
$U,V$  the matrices diagonalizing $\mathcal{R}_A$ and $\mathcal{R}_B$ respectively. With these definition we find the biorthogonal conditions
 \begin{equation} _A\langle \tilde{\nu}'|\nu\rangle_A=_B\langle\tilde\nu'|\nu\rangle_B=\delta_{\nu\nu'}\,.
 \end{equation} 
Notice that we used the isospectrality condition to write the same eigenvalues $\lambda_\nu$ in both marginals. If we now define matrices
\begin{align}\label{appeq:defCD}
    C=U\Sigma V^{-1}\,,\quad
    D^\dag=V\Sigma^\dag U^{-1}\,,
\end{align}
with $\Sigma=\text{diag}(\sqrt{\lambda_\nu})$ we can introduce the pair of states
\begin{equation}
    |\Psi\rangle=\sum_{i,j}C_{ij}|i\rangle_A |j\rangle_B=\sum_\nu \sqrt{\lambda_\nu}|\nu\rangle_A |\nu\rangle_B\,,
\end{equation}
\begin{equation}
    |\Phi\rangle=\sum_{i,j}D_{ij}|i\rangle_A |j\rangle_B=\sum_\nu \sqrt{\lambda^\ast_\nu}|\tilde{\nu}\rangle_A |\tilde{\nu}\rangle_B\,,
\end{equation}
where we used the definition of the biorthogonal basis of Eq.\ \eqref{appeq:bibasis}. 
These states satisfy
\begin{equation}
    \langle \Phi| \Psi\rangle=\sum_\nu \lambda_\nu=1
\end{equation}
and
\begin{equation}\label{appeq:marginalspsiphi}
    \mathcal{R}_A={\rm Tr}_B\big[\,|\Psi\rangle \langle \Phi|\,\big]\,,\quad \mathcal{R}_B={\rm Tr}_A\big[\,|\Psi\rangle \langle \Phi|\,\big]
\end{equation}
as it follows from the biorthogonality condition. In particular, this shows that we have decomposed the spacetime states of the subsystems as $\mathcal{R}_A=\sum_{i,j}(CD^\dag)_{ij}|i\rangle \langle j|$ and $\mathcal{R}_B=\sum_{i,j}(D^\dag C)_{ji}|i\rangle \langle j|$. 
On the other hand, from Eq.\ \eqref{appeq:marginalspsiphi} if we write
\begin{equation}
    \mathcal{R}=|\Psi\rangle \langle \Phi|+X
\end{equation}
it must hold ${\rm Tr}_A [X]={\rm Tr}_B [X]=0$. This proves the result of Eq.\ \eqref{eq:jointschmidt}.

It is interesting to compare these decomposition with that of standard quantum states: we can replace $\mathcal{R}$ with a standard pure quantum state by taking $|\Phi\rangle\to |\Psi\rangle$ which corresponds to $D\to C$. Then the diagonalization of the marginals reduces to the diagonalization of $C^\dag C$, $CC^\dag$ which is what defines a standard SVD and the corresponding Schmidt decomposition. Conversely, if $\mathcal{R}_A(B)$ collapses to a standard positive semidefinite quantum state, it is natural to take $D=C\equiv \sqrt{\mathcal{R}_A}$ so that the joint Schmidt decomposition collapses to the standard one, even if $\mathcal{R}=|\Psi\rangle \langle \Psi|+X$ is not a standard quantum state.

Notice now that for $d_A\neq d_B$ but $\mathcal{R}_{A(B)}$ diagonalizable, the previous discussion holds by simply replacing $\Sigma$ in Eq.\ \eqref{appeq:defCD} with a rectangular operator of dimensions $d_A\times d_B$ containing only ``diagonal'' entries $\sqrt{\lambda_\nu}$ (and zeros otherwise).

Let us now briefly discuss the generalization to the case in which $\mathcal{R}_{A(B)}$ are not diagonalizable and only admit a nontrivial Jordan form. We assume that the two marginals admit compatible Jordan decompositions, and denote by $J_\beta(\lambda_\beta)$ the Jordan blocks (with size $m_\beta$). For each nonzero block, $ J_\beta(\lambda_\beta)=\lambda_\beta \mathbbm{1}+N_\beta$ with  $N_\beta^{m_\beta}=0$ and for $ \lambda_\beta\neq 0$
we choose a square root $J^{1/2}_\beta(\lambda_\beta)$.
Proceeding in analogy with the diagonalizable case, we construct the explicit part of the decomposition from the nonzero Jordan blocks only. Namely, if $|\beta,k\rangle_{A(B)}$ and $|\widetilde{\beta,k}\rangle_{A(B)}$ denote the corresponding right/left generalized eigenvector chains (biorthogonal Jordan bases), we define
\begin{equation}
\begin{split}
|\Psi\rangle&=\sum_{\beta:\,\lambda_\beta\neq 0}\sum_{k,\ell=1}^{m_\beta}
\big(J_\beta^{1/2}\big)_{k\ell}\,|\beta,k\rangle_A|\beta,\ell\rangle_B\\
|\Phi\rangle&=\sum_{\beta:\,\lambda_\beta\neq 0}\sum_{k,\ell=1}^{m_\beta}
\big(J_\beta^{1/2}\big)_{k\ell}^{\!*}\,|\widetilde{\beta,k}\rangle_A|\widetilde{\beta,\ell}\rangle_B\,.
\end{split}
\end{equation}
Equivalently, in matrix form this corresponds to choosing
\begin{equation}
    C=U\Sigma_JV^{-1}\,,\qquad D^\dag=V\Sigma_J^\dag U^{-1}
\end{equation}
for $\Sigma_J$ the (generally rectangular) matrix containing the blocks $J_\beta(\lambda_\beta)^{1/2}$ on the common nonzero Jordan sector and zeros otherwise.
The contribution $|\Psi\rangle\langle\Phi|$ reproduces the nonzero Jordan-sector part of the marginals. We then write the full spacetime state as
\begin{equation}
    \mathcal{R}=|\Psi\rangle\langle\Phi|+X\,,
\end{equation}
where all remaining terms (including the zero-eigenvalue sector) are absorbed into $X$. We then recover the conditions
$
    {\rm Tr}_A[X]={\rm Tr}_B[X]=0\,.
$

Notice that in contrast with the diagonalizable case, $J^{1/2}_\beta(\lambda_\beta)$ is not diagonal within each Jordan block, but still admits a finite polynomial expansion $J^{1/2}_\beta(\lambda_\beta)= \sqrt{\lambda_\beta} \mathbbm{1}_{m_\beta}+\text{polynomial in $N_\beta$}$. 
 Importantly, these nilpotent corrections modify the operator structure but not the eigenvalues of the marginals. Hence quantities depending only on the marginal spectra (such as pseudo-entropies defined from those eigenvalues) are unaffected by the nilpotent part.\\

\emph{Quantum channels and convex mixtures}.
Here we provide proofs related to  spacetime states corresponding to quantum channels and use them to provide an interpretation convex mixtures of closed spacetime states.\\

We start by proving  Theorem \ref{th:quantumchannel}.
\begin{proof}
    Consider a composite system $S+E$ in an initial state $\rho\otimes |0\rangle_E\langle 0|$. 
    The corresponding SQA is given by $$e^{i\tilde{\mathcal{S}}_{SE}}=(\mathbbm{1}\otimes U)\text{SWAP}_{SE}(\mathbbm{1}\otimes U^\dag)\,.$$ Here $\text{SWAP}_{SE}=\text{SWAP}\otimes \text{SWAP}_E$, namely a product operator in the partition system-environment. 
    The spacetime state is then
    \small
    \begin{equation}
        \begin{split}
            &{\rm Tr}_E[(\rho\otimes |0\rangle_E\langle 0|)_0e^{i\tilde{\mathcal{S}}_{SE}}]\\&=\rho_0\text{SWAP}\sum_{k_0,k_1}{}_{E}\langle k_0k_1|(|0\rangle_E \langle 0|\otimes U)\text{SWAP}_{E}(\mathbbm{1}\otimes U^\dag)|k_0k_1\rangle_{E}\\&=\rho_0\text{SWAP}\sum_{k_1}\langle k_1|U|0\rangle \otimes \langle 0|U^\dag|k_1\rangle\\&=\rho_0\text{SWAP}\sum_{k}E_k \otimes E_k^\dag\,.
        \end{split}
    \end{equation}
    \normalsize
    Notice that in the second line the tensor product symbol indicates time, and $E_k=\langle k|U|0\rangle$.
    Moreover,  \begin{equation}
        \begin{split}
            \text{SWAP}\sum_{k}E_k \otimes E_k^\dag\\=\sum_k\mathbbm{1}\otimes E_k \text{SWAP}\sum_{k}\mathbbm{1} \otimes E_k^\dag\\=\sum_{i,j}|i\rangle \langle j|\otimes \sum_k E_k |j\rangle \langle i|E_k^\dag= J(\mathcal{E})
        \end{split}\,.
    \end{equation}

\end{proof}

We now prove Lemma \ref{lemma:specquantumchannels} about the spectrum of a spacetime state $\mathcal{R}_S=|\psi\rangle \langle \psi|J(\mathcal{E})$. 
\begin{proof}
    Consider the spacetime state with $J(\mathcal{E})$ expanded in some basis:
    $\mathcal{R}_S=|\psi\rangle \langle \psi| \otimes \mathbbm{1}\sum_{i,j}|i\rangle \langle j|\otimes \mathcal{E}(|j\rangle \langle i|)=\sum_{i,j}\langle \psi|i\rangle |\psi\rangle \langle j|\otimes \mathcal{E}(|j\rangle \langle i|)$. Now by writing $\langle \psi|i\rangle=\psi_i^\ast$ we find $\mathcal{R}_S=\sum_{i,j}\psi_i^\ast |\psi\rangle \langle j|\otimes \mathcal{E}(|j\rangle \langle i|)$. Using linearity of the map we conclude that
    \begin{equation}\label{appeq:channelpure}
        \mathcal{R}_S=\sum_j |\psi\rangle \langle j|\otimes \mathcal{E}(|j\rangle \langle \psi|)\,.
    \end{equation}
Using this expression for the spacetime state, a direct evaluation leads to
$\mathcal{R}_S \mathcal{R}_S=\mathbbm{1}\otimes \mathcal{E}(|\psi\rangle \langle \psi|)\mathcal{R}_S$. Iterating  we find the property
\begin{equation}
    \mathcal{R}_S^k=\mathbbm{1}\otimes \mathcal{E}(|\psi\rangle \langle \psi|)^k\mathcal{R}_S\,.
\end{equation}
Combining this result with Eq.\ \eqref{appeq:channelpure} we find
\begin{equation}
    {\rm Tr}[\mathcal{R}_S^k]={\rm tr}[\mathcal{E}(|\psi\rangle \langle \psi|)^k]\,.
\end{equation}
Since this holds for all values of $k$ we have proven that $\mathcal{R}_S$ and $\mathcal{E}(|\psi\rangle \langle \psi|)$ share the same non-null spectrum.
\end{proof}

Let us also add that the vectors $|\psi\rangle\otimes |\phi_\alpha\rangle$, for $\mathcal{E}(|\psi\rangle \langle \psi|)=\sum_\alpha \lambda_\alpha |\phi_\alpha\rangle \langle \phi_\alpha|$ the eigendecomposition of the state after the action of the channel, are the eigenvectors of $\mathcal{R}_S$ with eigenvalues $\lambda_\alpha$. This is easily verified since \eqref{appeq:channelpure} implies $\mathcal{R}_S|x\rangle \otimes |v\rangle=|\psi\rangle \otimes \mathcal{E}(|x\rangle \langle \psi|)|v\rangle$. \\

Let us now discuss convex mixtures of spacetime states. Let us first notice that for the case considered in Eq.\ \eqref{eq:convexmix} of the main text we can provide an explicit ``purification'' as follows: define an environment reference state $|e_0\rangle=\sum_k \sqrt{p_k\!}|k\rangle$ and take as initial state $\rho_{SE}=\rho \otimes |e_0\rangle_E\langle e_0|$. Then, consider the  controlled unitary $W_{SE}=\sum_k U_k \otimes |k\rangle_E\langle k|$. It is straightforward to verify that $W_{SE}$ provides a  Stinespring dilation of the channel $\mathcal{E}$, namely $E_k=\sqrt{p_k\!}\,U_k={}_E\langle k|W_{SE}|e_0\rangle_E$. Then, following Theorem \ref{th:quantumchannel} we recover  the mixed $\mathcal{R}$ of Eq.\ \eqref{eq:convextoJ} as the marginal of $\mathcal{R}_{SE}
    =
    (\rho_{SE})_0\,
    \text{SWAP}_{SE}\,
    (W_{SE}\otimes W_{SE}^\dag)$. 
Let us now consider a more general convex mixture of closed two-time spacetime states of the form
$
    \mathcal{R}=\sum_k p_k\, \mathcal{R}_k,
$
with
\begin{equation}
    \mathcal{R}_k=
    |\psi_{i_k}\rangle_0\langle\psi_{i_k}|\,
    \text{SWAP} \,
    (U_{j_k}\otimes U_{j_k}^\dag )\,.
\end{equation}
Here the maps $k\to i_k$, $k\to j_k$ specify which initial pure state $|\psi_{i_k}\rangle$, and evolution $U_{j_k}$ are associated with the spacetime state  $k$. This includes as a special case the previous scenario where all initial states coincide, namely $|\psi_{i_k}\rangle=|\psi\rangle$ for all $k$.

A natural question is whether such a convex mixture can still be understood as the reduction of a larger closed history. The answer is yes, if we allow for initial correlations between the system and the environment:
Introduce an environment with orthonormal basis $\{|k\rangle_E\}$ and define the pure state
$
    |\Psi_{SE}\rangle
    =
    \sum_k \sqrt{p_k}\,
    |\psi_{i_k}\rangle_S\otimes |k\rangle_E\,.
$
The corresponding density matrix is
\small
\begin{equation}
\begin{split}
    \rho_{SE}
    =
    |\Psi_{SE}\rangle\langle\Psi_{SE}|
    =
    \sum_{k,k'}
    \sqrt{p_k p_{k'}\!}\,
    |\psi_{i_k}\rangle\langle\psi_{i_{k'}}|_S
    \otimes
    |k\rangle\langle k'|_E\,.
    \end{split}
\end{equation}
\normalsize
This is the natural  extension of the previous Stinespring-like construction, recovered for $|\psi_{i_k}\rangle=|\psi\rangle$. 
We now define the controlled unitary 
\begin{equation}
    W_{SE}
    =
    \sum_k U_{j_k}\otimes |k\rangle\langle k|_E\,.
\end{equation}
Notice that for all $U_{j_k}=U$ we obtain  $W_{SE}=U\otimes \mathbbm{1}_E$ so that any correlations between the system and the environment are merely contained in the initial state. 
At the level of ordinary quantum states one then finds
$
    W_{SE}\rho_{SE}W_{SE}^\dagger
    \!=\!
    \sum_{k,k'}
    \sqrt{p_kp_{k'}\!}\,
    U_{j_k}|\psi_{i_k}\rangle\langle\psi_{i_{k'}}|U_{j_{k'}}^\dagger
    \otimes
    |k\rangle\langle k'|_E\,.
$
Tracing out the environment gives
\begin{equation}
    \rho_S'
    =
    {\rm Tr_E}\!\big[W_{SE}\rho_{SE}W_{SE}^\dagger\big]
    =
    \sum_k p_k\,
    U_{j_k}|\psi_{i_k}\rangle\langle\psi_{i_k}|U_{j_k}^\dagger.
\end{equation}
So, already at the level of standard states, the reduced system is the desired convex mixture.
The corresponding global two-time spacetime state is instead
\begin{equation}
    \mathcal{R}_{SE}
    =
    (\rho_{SE})_0\,
    \text{SWAP}_{SE}\,
    (W_{SE}\otimes W_{SE}^\dag)\,.
\end{equation}
Tracing out the environment, one obtains
\begin{equation}
    {\rm Tr}_E[\mathcal{R}_{SE}]
    =
    \sum_k p_k\,
    |\psi_{i_k}\rangle_0\langle\psi_{i_k}|
    \,\text{SWAP}\,
    (U_{j_k}\otimes U_{j_k}^\dag)\,,
\end{equation}
which is just the convex mixture defining $\mathcal{R}$, i.e., there is always a closed spacetime states $\mathcal{R}_{SE}$ such that $\mathcal{R}={\rm Tr}_E[\mathcal{R}_{SE}]$. Let us prove this last relation. 
\begin{proof}
Consider
\small
\begin{equation}
\begin{split}
    &{\rm Tr}_E[\mathcal{R}_{SE}]
    = {\rm Tr}_E\!\left[(\rho_{SE})_0\,\text{SWAP}_{SE}\,(W_{SE}\otimes W_{SE}^\dagger)\right]\\
    &= \sum_{k,k'} \sqrt{p_kp_{k'}\!}\,|\psi_{i_k}\rangle_0\langle\psi_{i_{k'}}|\,\text{SWAP}\, \times \\
    &\sum_{k_0,k_1}{}_E\langle k_0k_1|(|k\rangle_E \langle k'|\otimes W_{SE})\text{SWAP}_E
    (\mathbbm{1}\otimes W_{SE}^\dagger)|k_0k_1\rangle_E\,,
   \end{split}
\end{equation} 
    \normalsize
    where we separated all the terms that depend only on the system. 
We can now focus on the terms under the sum over $k_0,k_1$. By writing $W_{SE}=\sum_l U_{j_l}\otimes |l\rangle_E \langle l|$ a straightforward calculation leads to 
\small
\begin{equation}
    \begin{split}
        &\sum_{k_0,k_1}{}_E\langle k_0k_1|(|k\rangle \langle k'|\otimes W_{SE})\text{SWAP}_E
    (\mathbbm{1}\otimes W_{SE}^\dagger)|k_0k_1\rangle_E\\&=\sum_{l,l',k_0,k_1} U_{j_l}\otimes U^\dag_{j_{l'}}\delta_{kk_0}\delta_{l k_1}\delta_{lk}\delta_{l'k'}\delta_{ll'}=\delta_{kk'} U_{j_k}\otimes U^\dag_{j_{k}}\,.
    \end{split}
\end{equation}
\normalsize
Notice that the factor $U_{j_l}\otimes U^\dag_{j_{l'}}$ can be pulled out from the start as it only acts on the system. The final deltas are thus obtained directly by computing overlaps among kets and bras of the environment. 
Replacing this result in ${\rm Tr}_E[\mathcal{R}_{SE}]$ gives the desired relation.
\end{proof}

We see that $\mathcal{R}$
is indeed the reduced spacetime state of a larger closed system which contains both initial state correlations and entangling evolution. \\

\emph{Forward and backward evolution}. Here we explain how Theorem \ref{th:extendedQA} can be proven in analogy with Theorem \ref{th:theorem1}. The first thing to notice is that  the operator $e^{i\epsilon \mathcal{P}_{\text{ext}}}$ is just $e^{i\epsilon \mathcal{P}}$ for $2N$ times after a rearrangement of the time slices in the second copy of $\mathcal{H}$. This is explained in Figure \ref{fig:eiprear}. The within evolution on the second copy $e^{i\mathcal{S}_{\text{ext}}}$ is symmetric under permutations, while the factor $U_N$ gets mapped to $U_{2N-1}$. 
So after the rearrangement we get $e^{i\mathcal{S}_{\text{ext}}}\to e^{i\epsilon \mathcal{P}} e^{-i\epsilon (\mathcal{K}\otimes \mathbbm{1}-\mathbbm{1}\otimes \mathcal{K})}U^\dag_{N-1}U_{2N-1}$ which is essentially the form of $e^{i\mathcal{S}}$ but with backward evolution on the second half. If one considers correlators Wightman functions follow in complete analogy with Theorem \ref{th:theorem1} but with $U^\dag$ operators accompanying the $B$ operators of Theorem \ref{th:extendedQA}.
Notice also that under this rearrangement we can write
\begin{equation}
    \mathcal{R}_{\text{ext}}\to \mathcal{R}_2U^\dag_{N-1}U_{2N-1}\,.
\end{equation}
Thus the only difference between $\mathcal{R}_{\text{ext}} $ and $\mathcal{R}_2$ is a convention in the Hilbert space ordering and a border term changing the amount of evolution of operators on the backward sector by a single time step. \\

\begin{figure}[t!]
    \centering
    \includegraphics[width=\linewidth]{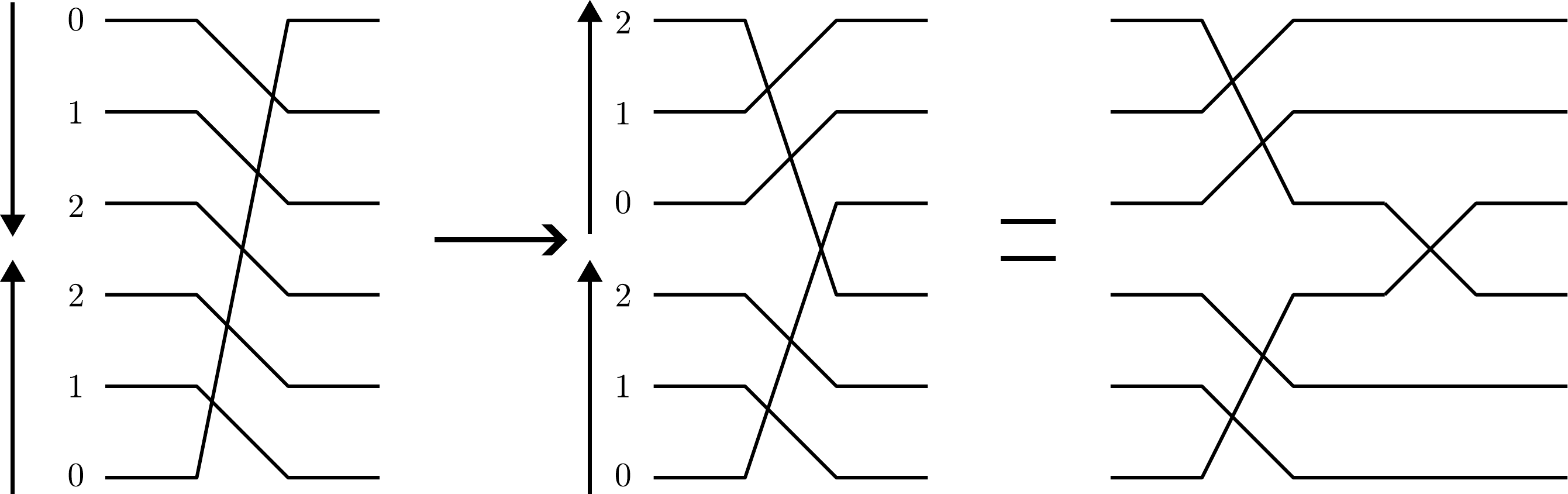}
    \caption{We show a rearrangement of the Hilbert spaces on the second copy of $\mathcal{H}$ acting on $e^{i\epsilon \mathcal{P}_{2}}$ and leading to $e^{i\epsilon \mathcal{P}_{\text{ext}}}$. At first, times are labeled backward. A rearrangement can always be understood as $e^{i\epsilon \mathcal{P}}\otimes e^{-i\epsilon \mathcal{P}} \text{SWAP}_{N-1,N}$ with the SWAP operator at the end being the only entangling operator between the two copies of $\mathcal{H}$.}
    \label{fig:eiprear}
\end{figure}

\emph{Time-dependent case and folded spacetime states}. 
For completeness let us recall here how the spacetime states of Section \ref{sec:formalism} generalize straightforwardly to a time-dependent evolution. Then, we will discuss how the  properties of Section \ref{sec:pureststates} proved in this manuscript generalize as well.

The time-dependent quantum actions are introduced as follows: we define $\mathcal{V}=\otimes_{t=0}^{N-1}U^\dag(\epsilon t)$ with $U(\epsilon t)=\hat{T}\exp(i\int_0^{\epsilon t} dt' H(t'))$ the general time evolution operator. Then we have
\begin{equation}
  e^{i\tilde{\mathcal{S}}}=\mathcal{V}^\dag e^{i\epsilon \mathcal{P}}\mathcal{V}\,,  
\end{equation}
and
\begin{equation}
     e^{i\mathcal{S}}= U_0(T)e^{i\tilde{\mathcal{S}}}=
     e^{i\epsilon \mathcal{P}}\otimes_{t=0}^{N-1} U[\epsilon(t+1),\epsilon t]\,.
\end{equation}
Clearly for time-independent Hamiltonians we recover the main text definition. The corresponding spacetime states are defined as before
\begin{equation}
    \mathcal{R}=\rho_0e^{i\tilde{\mathcal{S}}}=\mathcal{V}^\dag \rho_0e^{i\epsilon \mathcal{P}}\mathcal{V}
\end{equation}
and satisfy all the time-dependent version of the properties discussed in Section \ref{sec:basicformalism}.

We now discuss the spectral properties of reduced spacetime states when the Hamiltonian is time-dependent. Let us first discuss the supplemental Lemma \ref{applemm:purities}. Notably it is trivial to generalize to the time-dependent case: consider the example of the Figure \ref{fig:bipartitions}.  The operators $U\otimes U \otimes (U^\dag)^2$ acting on the left of the time translations across slices on panel b) get replaced by $U_{10}\otimes U_{21}\otimes U^\dag_{20}$ for $U_{tt'}\equiv U(\epsilon t', \epsilon t)$. This happens in the $3$-copies of $\mathcal{R}$. Then, on panel c) we just need to replace again each $(U^\dag)^2\to U^\dag_{20}$, each at time $t=1$ with $U_{21}$ and each $U$ at time $t=0$ with $U_{10}$. The same reasoning holds for any other case, allowing us to write 
\small
 \begin{equation}
\begin{split}
      &{\rm Tr}[\mathcal{R}_A^k]\\&=\langle \psi|^{\otimes k}(U^{\dag \otimes k}_{N-1,0})^{\otimes k}C_{A_{N-1}}U_{N-1,N-2}^{\otimes k}\dots   C_{A_1} U_{10}^{\otimes k}C_{A_0}|\psi\rangle^{\otimes k}\,.
\end{split}
\end{equation}
\normalsize
Since the same reasoning applies to the other partition, the proof of isospectrality holds for time-dependent Hamiltonians. As a corollary, considering the discussion on folded spacetime states of Section \ref{sec:otocs}, the isospectrality property also holds for $\mathcal{R}_k$. Moreover, since the joint Schmidt decomposition only employs the isospectrality assumption, Eq.\ \eqref{eq:jointschmidt} holds as well for diagonalizable marginals.

\section{Numerical methods}\label{app:numerics}

All of the following numerical investigations have been conducted via customized code developed on top of the \texttt{ITensors.jl} tensor network package~\cite{fishman2022itensor} for the \texttt{Julia} programming language~\cite{bezanson2017julia}.

\subsection{Imagitivity}
To describe our method let us first notice that since we are considering only two sites $\mathcal{R}_A=\frac{1}{4}\sum_{i,j}{\rm Tr}[\mathcal{R}P_i\otimes P_j] P_i\otimes P_j$ where $P_i$ are the Pauli matrices which form a complete orthogonal basis of $h$. By writing the corresponding expression for $\mathcal{R}^\dag_A$ and using Corollary \ref{cor:wigthman}
we obtain
\begin{equation}
    \mathcal{R}_A-\mathcal{R}_A^\dag=\frac{1}{4}\sum_{i,j}\langle \psi|[P_i(t),P_j] |\psi\rangle P_i\otimes P_j\,.
\end{equation}
where we assumed a pure initial state $|\psi\rangle$. 
From this expressions we can write the imagitivity as $|| \mathcal{R}_A-\mathcal{R}_A^\dag||^2=\frac{1}{4}\sum_{i,j}|\langle \psi|[P_i(t),P_j] |\psi\rangle|^2$.

Now we define $|\psi_j\rangle=U(t)P_j|\psi\rangle$, $|\varphi_i\rangle=P_iU(t)|\psi\rangle$ which allows to write the imagitivity as the overlap 
\begin{equation}
    ||\mathcal{R}_A-\mathcal{R}_A^\dag||^2=\sum_{i,j} \text{Im}[\langle \varphi_i|\psi_j\rangle]^2\,.
\end{equation}
This is the basis of our numerical calculation.

In our computation of the imagitivity depicted in Figure \ref{fig:lightcone} we obtain $|\psi\rangle$ from a DMRG approximation of the ground state with bound dimension $\chi=64$ and $25$ sweeps. Then $|\psi_j\rangle$ is obtained by simply acting on $|\psi\rangle$ with the local Pauli $P_j$ first and then evolving through a Trotter approximation updating our tensor on each step, namely we use the TEBD method with second order Trotter. To compute $|\varphi_i\rangle$ we first evolve with the same method and in the end we apply $P_i$. We used a time step of $\epsilon=0.05$ and the evolved state bound dimension was fixed to a maximum of $\chi=256$.

\subsection{OTOCs}

Consider the operator $\tilde {\mathcal{R}}_4$ restricted to the support of sites $r,r'$ and times $t_1=0, t_2\equiv t$. From Eq.\ \eqref{eq:otocCcomm} we get 
\begin{equation}
    \widetilde{\mathcal{R}}_{4,A}=\frac{1}{16}\sum_{i,j,k,l} C_{ij;kl}(r,r';t) P_i\otimes P_j \otimes P_k \otimes P_l
\end{equation}
where we recall that the tensor, for $r,r'$ sites and $P_i$ the Pauli matrices ($i=x,y,z$), is given by 
\begin{equation}
     C_{ij;kl}(r,r';t)={\rm tr}[ [P_{r,i}(t),P_{r',j}]^\dag[P_{r,k}(t),P_{r',l}]]\,.
\end{equation}
This implies $||\widetilde{\mathcal{R}}_{4,A}||^2=\frac{1}{16}\sum_{i,j,k,l}| C_{ij;kl}(r,r';t)|^2$. For our numerical computation we have rewritten the tensor in 
vectorized notation 
\begin{widetext}
\begin{equation}
    C_{ij;kl}(r,r';t)
=
\langle P_{r,i}(t)|
\Big(
P_{r',j} P_{r',l}\otimes \mathbbm 1
+\mathbbm 1\otimes P^\ast_{r',j}P^\ast_{r',l}
-P_{r',l}\otimes P_{r',j}^\ast
-P_{r',j}\otimes P_{r',l}^\ast
\Big)
|P_{r,k}(t)\rangle 
\end{equation}
\end{widetext}
Here the states represent the vectorization $|O\rangle=O\otimes \mathbbm{1}|\Phi^+\rangle$ and are evolved through the super Hamiltonian $H_s=H\otimes \mathbbm{1}-\mathbbm{1}\otimes H^\ast$. In the actual numerical computation we have represented the internal operators as local MPOs and the external ``states'' through MPS, evolved via Trotter steps, i.e., TEBD method applied to the vectorized ``external'' Paulis.

Notice that for the diagonal entries of the tensor we get a simpler expression $C_{ij;ij}(r,r';t)
=
2-2\Re\left(\langle P_{r,i}(t)|\,P_{r',j}\otimes P_{r',j}^\ast\,|P_{r,i}(t)\rangle\right)$. If moreover \(P_{r',j}\) is real in the chosen basis,
$$
C_{ij;ij}(r,r';t)
=
2-2\,\langle P_{r,i}(t)|\,P_{r',j}\otimes P_{r',j}\,|P_{r,i}(t)\rangle \,,
$$
which is the result in \cite{xu2020accessing}, employed to access scrambling witnesses through MPOs. 
In this sense, our current scheme is the direct generalization of 
\cite{xu2020accessing} to the full tensor $C_{ij;ij}(r,r';t)$.

For our computations of $C_{ij;kl}(r,r';t)$ presented in Section \ref{sec:otocs} we considered a time step of $\epsilon=0.002$ and a maximal bound dimension of $\chi=24$ for both the chaotic and free models. Contrary to what is reported in \cite{xu2020accessing} for the diagonal part of $C$ in the free case, the full tensor exhibits bound dimensions grow with time, rapidly exceeding $\chi=4$.

\begin{figure}[h!]
    \centering
    \includegraphics[width=0.85\linewidth]{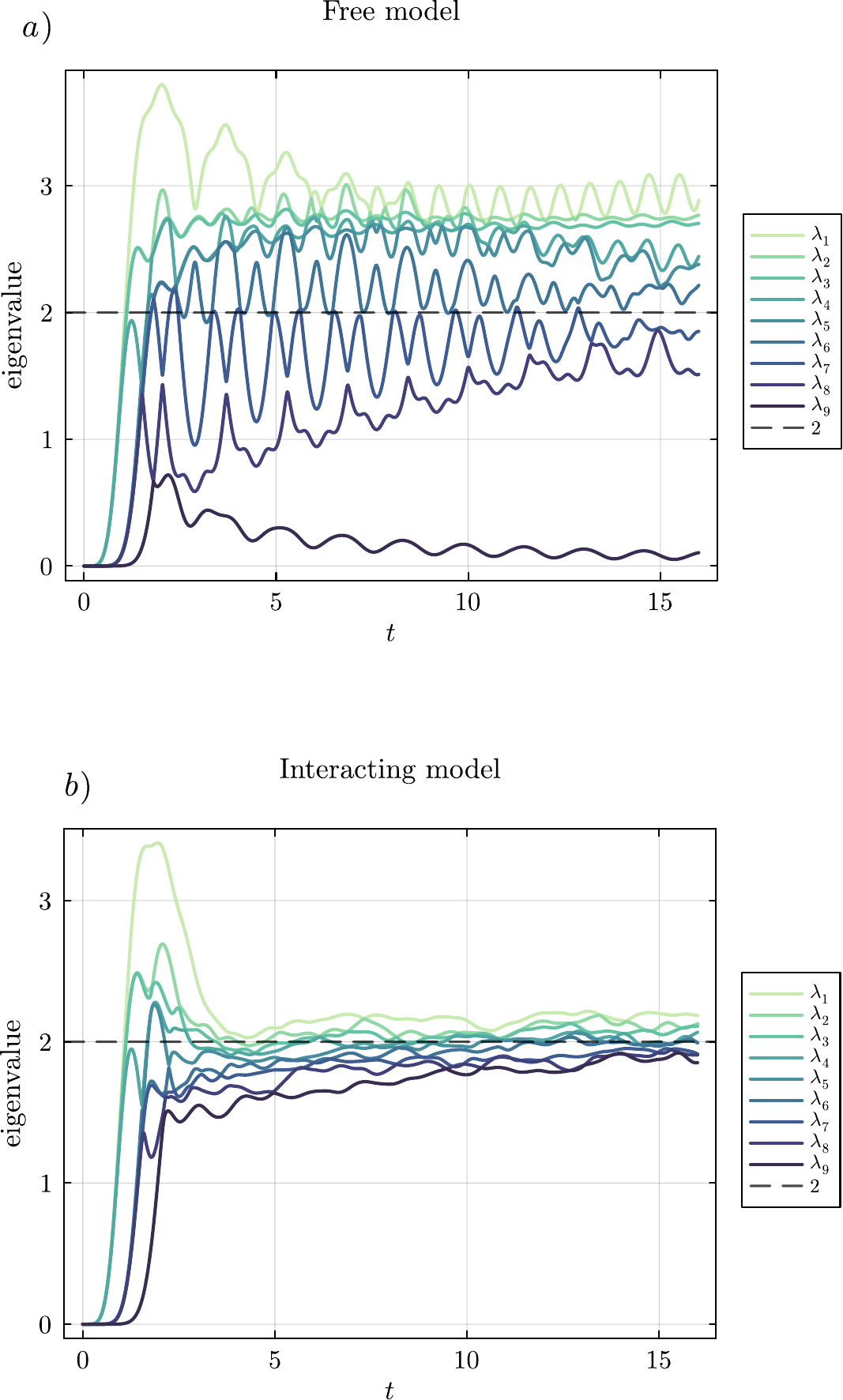}
    \caption{Eigenvalues of the Gram matrix $G$ for a fixed separation $r'-r=2$ as a function of time. For each instant of time, the eigenvalues are ordered in magnitude. }
    \label{fig:eigenvaluesotoc}
\end{figure}

Having computed the full tensor for each value of $r',t$ we reshape it onto the matrix $G_{ab}$. We then computed the eigenvalues of the Gram matrix to obtain the results reported in Figure \ref{fig:OTOCnorm}. In particular $||G||^2=\sum_{i=1}^9 \lambda_i^2$ from which $||\widetilde{\mathcal{R}}_{4,A}||$ is obtained. As an illustration, we plot the $9$ eigenvalues as a function of time in Figure \ref{fig:eigenvaluesotoc} for $r'=28$.

\section{Path Integrals: matrix elements of the exponential of QAs}\label{app:PI}

Here we present additional technical details on PIs mentioned
in Section \ref{sec:PIs}. In particular, we provide a close expression for the generator of time translations and obtain the matrix elements of the exponentials of $\mathcal{S}$ and $\mathcal{S}_{\text{ext}}$.

\subsection{Feynman PI}

Let us briefly explain how to obtain the close expression for $\mathcal{P}$ of Eq.\ \eqref{eq:Plegendre}. We first notice than rather than employing operators $q_t,p_t$ we can work with ladder operators $a_t,a_t^\dag$ satisfying 
\begin{equation}\label{eqapp:ladders}
    [a_t,a^\dag_{t'}]=\delta_{tt'}\,,
\end{equation}
 with other commutators vanishing. In fact, we can always define $a_t=\sqrt{\frac{\omega}{2}}(q_t+ip_t/\omega)$, $a^\dag_t=\sqrt{\frac{\omega}{2}}(q_t-ip_t/\omega)$ for arbitrary $\omega$.

Since Eq.\ \eqref{eqapp:ladders} shows that we have a mode for each $t$ we can define a Fourier in time transformation
\begin{equation}
    a_n=\frac{1}{\sqrt{N}}\sum_t e^{i \omega_n \epsilon t}a_t
\end{equation}
for $\omega_n=2\pi n/T$ such that $[a_n,a^\dag_{n'}]=\delta_{nn'}$ as one can easily verify. The new operators are genuine ladder operators in the $n$-modes showing that this is a well-defined canonical transformation. 

Now it becomes straightforward to show that
\begin{equation}
    \mathcal{P}=\sum_n \omega_n a_n^\dag a_n\,,
\end{equation}
generates time translations: $e^{i\epsilon \mathcal{P}}a_n e^{-i\epsilon\mathcal{P}}=e^{-i\epsilon\omega_n}a_n$
while the inverse Fourier transform gives $a_t=\frac{1}{\sqrt{N}}\sum_t e^{-i \omega_n \epsilon t}a_n$ leading to $e^{i\epsilon \mathcal{P}}a_t e^{-i\epsilon\mathcal{P}}=a_{t+1}$. 
We can also write 
\begin{equation}
    \mathcal{P}=\sum_{t,t'} a^\dag_t iD_{tt'}a_{t'}\equiv \sum_t a^\dag_t i\dot{a}_t\,,
\end{equation}
with $D_{tt'}=-\frac{1}{N}\sum_n i \omega_n e^{i\omega_n \epsilon (t-t')}$ a discrete derivative. This last expression already has the form of the classical Legendre transform in terms of conjugate variables $a^\dag, ia$. It is a straightforward calculation to obtain $ \mathcal{P}=\sum_t p_t \dot{q}_t$ by replacing the definition of ladder operators in terms of $q_t,p_t$. The final result is independent of $\omega$. As a matter of fact, the time translation operators exhibits invariance under a quite general set of transformation. See the discussion in \cite{diaz2021spacetime}.\\

Let us now show how the classical action arises from matrix elements of the QA. Consider 
\begin{equation}\label{eq:matrixeleis}
\begin{split}
    \langle \textbf{q}_f|e^{i\mathcal{S}} |\textbf{q}_i\rangle&=\langle \textbf{q}_f|e^{i\epsilon \mathcal{P}}e^{-i\epsilon \mathcal{K}} |\textbf{q}_i\rangle\\
    &=\int \prod_{t=0}^{N-1}dp_t\langle \textbf{q}_f|e^{i\epsilon \mathcal{P}}|\textbf{p}\rangle \langle \textbf{p}|e^{-i\epsilon \mathcal{K}} |\textbf{q}_i\rangle\,,
\end{split}
\end{equation}
where we used the momentum completeness relation. 
Now, by using Eq.\ \eqref{eq:trajoverlap} it is straightforward to show that 
\begin{equation}\label{eq:matrixellegendre}
    \langle \textbf{q}_f|e^{i\epsilon \mathcal{P}}|\textbf{p}\rangle \langle \textbf{p}|\textbf{q}_i\rangle=\frac{1}{(2\pi)^N}\,e^{\frac{i}{\hbar}\sum_t \epsilon\, p_t\left(\frac{q_{t+1}-q_t}{\epsilon}\right)\big|_{q_0=q_i}^{q_N=q_f}}\,,
\end{equation}
thus confirming that the translations across time-slices are linked to the classical Legendre transform.

On the other hand, consider a standard Hamiltonian $H=\frac{p^2}{2m}+V(q)$ such that  ${\epsilon\mathcal{K}=\sum_t \epsilon \big[\frac{p_t^2}{2m}+V(q_t)\big]}$. We have
\begin{equation}\label{eq:matrixelham}
    \langle \textbf{p}|e^{-i\epsilon \mathcal{K}}|\textbf{q}_i\rangle=\langle \textbf{p}|\textbf{q}_i\rangle\, e^{-i\sum_t \epsilon\big(\frac{p_t^2}{2m}+V(q_t)\big)|_{q_0=q_i}}+\mathcal{O}(\epsilon^2)\,,
\end{equation}
where we used the usual Trotter approximation to separate the kinetical and potential contributions. 
Combining \eqref{eq:matrixeleis}, \eqref{eq:matrixellegendre} and \eqref{eq:matrixelham} leads directly to 
\small
\begin{equation}
\begin{split}
    \langle \textbf{q}_f|e^{i\mathcal{S}} |\textbf{q}_i\rangle
    \!&=\!\int \!\prod_{t=0}^{N-1}\frac{dp_t}{2\pi}\, e^{i\sum_t \epsilon \big[p_t\left(\frac{q_{t+1}-q_t}{\epsilon}\right)-\frac{p_t^2}{2m}-V(q_t)\big]\big|_{q_0=q_i}^{q_N=q_f}}\\ 
    &=\frac{1}{(\sqrt{2\pi i \epsilon/m})^N}\, e^{i\sum_t \epsilon\, \left(\frac{1}{2}m \dot{q}^2_t-V(q_t)\right)\big|_{q_0=q_i}^{q_N=q_f}}\,,
\end{split}
\end{equation}
\normalsize
which holds up to second order in $\epsilon^2$ and where we are using the notation (for classical variables) $\dot{q}_t\equiv \frac{q_{t+1}-q_t}{\epsilon}$.

Notice that the integral over $p_t$ is standard for Hamiltonians which are quadratic in the momentum. This integral allows one to work directly in configuration space, and it is a basic assumption of Feynman's formulation. Nonetheless in the SQM approach one can work directly with the QA and there is no need for this extra assumption.

\subsection{Schwinger–Keldysh PI}
Here we provide an explicit derivation of the matrix elements of $e^{i\mathcal{S}_{\text{ext}}}$ needed to connect it to the Schwinger–Keldysh PI.

Let us focus first in the time translation operator defined in Eq.\ \eqref{eq:extendedP}. Notice that acting on the right it essentially translates one step in the future direction when acting on $\textbf{p}^+$ and one steps on the past direction when acting on $\textbf{p}^-$. In addition the operator $\text{SWAP}_{N-1,N}$  relates the two halves. By using these properties and   Eq.\ \eqref{eq:trajoverlap} for both copies of position/momentum operators
one finds 
\begin{align}
&\langle \textbf{q}^+\textbf{q}^-|e^{i\epsilon \mathcal{P}_{\text{ext}}}|\textbf{p}^+\textbf{p}^-\rangle \langle \textbf{p}^+\textbf{p}^-|\textbf{q}^+\textbf{q}^-\rangle\nonumber\\
&=\frac{1}{(2\pi)^{N}}e^{i\epsilon\left(\sum_{t=0}^{N-1}  p^+_t\dot{q}^+_t-\sum_{t=0}^{N-1}  p^-_t\dot{q}^-_t\right)\big|_{q^-_{-1}=q_0^+}^{q^+_N=q^-_{N-1}}}\,,
\end{align}
which is the difference between two Legendre transforms and where $\dot{q}_t^+\equiv (q^+_{t+1}-q_t)/\epsilon$, $\dot{q}_t^-\equiv (q^-_t-q^-_{t-1})/\epsilon$. A convenient way to understand the border terms is to notice that 
\small
\begin{equation}
    e^{-i\epsilon \mathcal{P}_{\text{ext}}}|\textbf{q}^+\textbf{q}^-\rangle=|q_1^+q_2^+\dots q_{N-1}^+ q^-_{N-1}, q_0^+ q_0^-q_1^-\dots q_{N-2}^-\rangle\,.
\end{equation}
\normalsize
This equation is obtained by  first translating $\textbf{q}^+$ to the left one step and simultaneously $\textbf{q}^-$ to the right, and then swapping $q_{N-1}^+ \leftrightarrow q_{0}^-$.  The border conditions can then be read directly from the resulting string, with  $q_{0}^+$ always to the left of $q_0^-$ and $q^{-}_{N-1}$ to the right of $q_{N-1}^+$, in agreement with the Schwinger–Keldysh loop.

If we now impose initial and final conditions we need to consider ${\textbf{q}'^+=(q',q^+_1,\dots, q^+_{N-1})}$ and $\textbf{q}^+=(q,q^+_1,\dots, q^+_{N-1})$ as it follows from the term $|q\rangle_0\langle q'|$ and as we explain in the main text. We thus have
\begin{align}
&\langle \textbf{q}'^+\textbf{q}^-|e^{i\epsilon \mathcal{P}_{\text{ext}}}|\textbf{p}^+\textbf{p}^-\rangle \langle \textbf{p}^+\textbf{p}^-|\textbf{q}^+\textbf{q}^-\rangle\nonumber\\
&=\frac{1}{(2\pi)^{N}}e^{i\epsilon\left(\sum_{t=0}^{N-1}  p^+_t\dot{q}^+_t-\sum_{t=0}^{N-1}  p^-_t\dot{q}^-_t\right)\big|_{q^+_0=q, q^-_{-1}=q'}^{q^+_N=q^-_{N-1}=q_f}}\,,\label{eq:dsffmatrixellegendre}
\end{align}
where we are writing $q_f=q_{N-1}^-$ to compare directly with the PI. For $q=q'=q^+_0$ we recover the previous expression.

Similarly, under Trotter approximation
\begin{align}
&\langle \textbf{p}^+ \textbf{p}^-|e^{i\epsilon(\mathcal{K}\otimes \mathbbm{1}-\mathbbm{1}\otimes \mathcal{K})} |\textbf{q}^+\textbf{q}^-\rangle\nonumber\\&=\langle \textbf{p}^+ \textbf{p}^-|\textbf{q}^+\textbf{q}^-\rangle\, e^{-i\epsilon \sum_t (H_t[\textbf{q}^+]-H_t[\textbf{q}^-])\big|_{q^+_0=q}^{q^+_N=q^-_{N-1}=q_f}}\,.\label{eq:matrixelsext}
\end{align}
We thus obtain a PI over momenta with the classical action corresponding to the difference between the action in the $+$ and $-$ variables (including or not the factor $U^\dag_{N-1}U_N$, which we added in Definition~\ref{def:extendedqa}, is an order $\epsilon$ correction). After integrating over the momenta we obtain
\begin{equation}
    \langle \textbf{q}'^+\textbf{q}^-|e^{i\mathcal{S}_{\text{ext}}} |\textbf{q}^+\textbf{q}^-\rangle=\frac{1}{(\sqrt{2\pi i \epsilon/m})^{2N}}e^{i S_{\text{cl}}[\textbf{q}^+]-i S_{\text{cl}}[\textbf{q}^+]}\,,
\end{equation}
with $S_{\text{cl}}[\textbf{q}^{\pm}]$ the discrete time classical action of the corresponding variables and  border conditions $q^+_0=q, q^+_N=q_f$ and $q^-_{-1}=q', q^-_{N-1}=q_f$.

\section{Pseudo Density matrices: additional details}\label{app:PDM}

Here we discuss additional details on PDMs mentioned in Section \ref{sec:PDM}.\\

\emph{Illustration of Theorem \ref{th:PDMextendedQA}}.
Let us first illustrate a basic application of Theorem \ref{th:PDMextendedQA} corresponding to two time slices. The first thing to notice is that for two times the only Kraus operators to be considered are either the identity or a SWAP between slices $0\leftrightarrow 2$, where for convenience we are labeling $\mathcal{H}^{\otimes 2}\equiv h_0 \otimes h_1 \otimes h_2 \otimes h_3$. Notice that $h_2$, $h_3$ correspond to the time slices $0$ and $1$ respectively but on the second copy of $\mathcal{H}$. Then we can write
\begin{equation}
    \Phi(\mathcal{R}_{\text{ext}})=\frac{1}{2}(\mathcal{R}_{\text{ext}}+\text{SWAP}_{02}\mathcal{R}_{\text{ext}}\text{SWAP}_{02})\,.
\end{equation}
The theorem requires one to consider the partial trace over $h_2\otimes h_3$.

\begin{figure}[t!]
    \centering
\includegraphics[width=\linewidth]{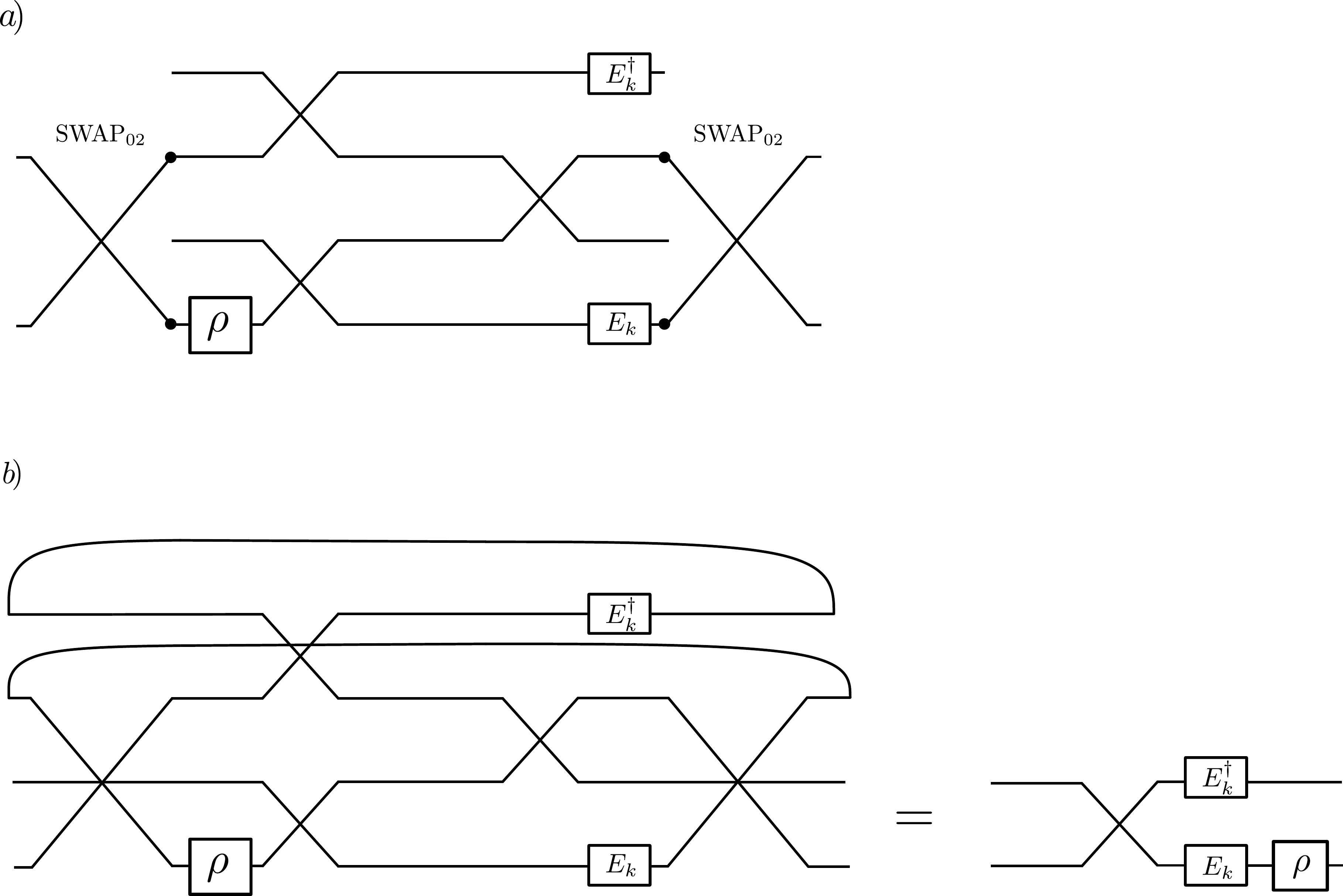}
    \caption{On panel a) we represent $\mathcal{R}_{\text{ext}}$ under the adjoint action of $\text{SWAP}_{02}$. On panel  b) we consider the partial trace over $h_2\otimes h_3$ leading to $J(\mathcal{E})\rho_0$.}
    \label{fig:PDMapp}
\end{figure}

Since the partial trace is linear, and we already know from Section \ref{sec:fbevolution} that ${\rm Tr}_{h_2\otimes h_3}[\mathcal{R}_{\text{ext}}]=\rho_0 J(\mathcal{E})$ we just need to compute the partial trace of the second term. 
The latter is easily obtained via tensor network notation and by using Eq.\ \eqref{eq:extquantumchannel}. As we depict in Figure \ref{fig:PDMapp} we find 
\begin{equation}
    {\rm Tr}_{h_2\otimes h_3}[\text{SWAP}_{02}\mathcal{R}_{\text{ext}}\text{SWAP}_{02}]= J(\mathcal{E})\rho_0\,.
\end{equation}
In summary, Theorem \ref{th:PDMextendedQA} yields
\begin{equation}\label{apeq:pdm2example}
    \mathcal{R}_{\text{pdm}}=\frac{\rho\otimes \mathbbm{1} J(\mathcal{E})+J(\mathcal{E})\rho\otimes \mathbbm{1}}{2}=\frac{\mathcal{R}+\mathcal{R}^\dag}{2}\,.
\end{equation}

One can apply the theorem for any number of slices by similar means. For unitary evolution this is straightforward as one can use the explicit form of spacetime states. For open quantum systems one needs to find the corresponding $\mathcal{R}_{\text{ext}}$ first, just as we did for $N=2$. \\

\emph{The bosonic gaussian case}.
In \cite{zhang2020different} the authors defined PDMs for gaussian states. Let us first recall that Gaussian states are a special case in continuous variables  where the first two statistical moments of the quantum
states, the quadrature mean values and the covariance matrix, fully determine the state. Following the notation in \cite{zhang2020different} the mean values of $L$ modes $\{q_k, p_k\}$ arranged as 
$x_k=(q_0, p_0,..., q_L, p_L)^t$ 
define $d_k={\rm tr}[\rho x_k]$ and the covariance matrix 
\begin{equation}
    \sigma_{ij}={\rm tr}[\rho (x_j x_k+ x_k x_j)]-2 {\rm tr}[\rho x_i]{\rm tr}[\rho x_j]\,.
\end{equation}
In general, any gaussian operator is  determined by the first two moments. The proposal in \cite{zhang2020different} consists in defining a gaussian PDM such that the correlation matrix is defined as
\begin{equation}
\sigma^{\text{spacetime}}_{ij}=2\langle\{x_i,x_j\}\rangle-2 {\rm tr}[\rho x_i]{\rm tr}[\rho x_j]\,.
\end{equation}
where the correlator $\langle\{x_i,x_j\}\rangle$ is defined to be the expectation value for the
product of measurement results on these quadratures. Instead, the mean values are given by $d_j$ at different times.

A basic example consists of measurements at two times, with $U=\mathbbm{1}$ and an initial vacuum state $|0\rangle$. Then one can easily obtain  $d_j=0$ and $$\sigma^{\text{spacetime}}=\begin{pmatrix}
    1&0&1&0\\
    0&1&0&1\\
     1&0&1&0\\
    0&1&0&1
\end{pmatrix}\,,$$
notice that since we have a single mode and two times so that $\sigma^{\text{spacetime}}\equiv \sigma^{\text{time}}$ (only timelike correlations are considered in the example). 
On the other hand, if we define $\mathcal{R}'=\frac{\mathcal{R}+\mathcal{R}^\dag}{2}$, for $\mathcal{R}=|0\rangle \langle 0|\otimes \mathbbm{1} e^{i\epsilon \sum_t p_t \dot{q}_t}$, and use the basic theorems of Section \ref{sec:formalism} we obtain ${{\rm Tr}[\mathcal{R}' (x_k x_{k'}+x_{k'} x_{k})]=\frac{1}{2}\langle 0|(x_k x_{k'}+x_{k'}x_k)|0\rangle}$. Then, using $\langle 0|q^2|0\rangle=\langle 0|p^2|0\rangle=1/2$ and $\langle 0|(qp+pq)|0\rangle=0$ we find the following result:
\begin{equation}
 {\rm Tr}[\mathcal{R}' (x_k x_{k'}+x_{k'}x_k)]- 2{\rm Tr}[\mathcal{R}' x_k ]{\rm Tr}[\mathcal{R}'  x_{k'}]=   \sigma^{\text{spacetime}}_{kk'}\,,
\end{equation}
namely the \emph{standard covariance matrix} of $\mathcal{R}'$ is equal to the spacetime covariance matrix introduced in \cite{zhang2020different}. Notice, however, that while $\mathcal{R}$ is gaussian, $\mathcal{R}'$ needs not to be  (one can consider the mean value of e.g., $q_0^2\otimes p_1q_1$ to show that Wick's theorem doesn't hold) and so the PDM defined in \cite{zhang2020different} is the gaussian operator sharing the first two momenta with $\mathcal{R}'$ but not  $\mathcal{R}'$ itself.

Let us also recall that this form for $\mathcal{R}'$ arises from considering light touch observables. Since for more general observables this form does not hold, there is no reason to regard $\mathcal{R}'$ as the ``true'' PDM. Interestingly, the first two momenta of a properly defined PDM are matched nonetheless. Let us finally remark that the strategy of using a map $\Phi$ that we presented for finite dimensional systems could in principle be extended to bosons leading to a potential definition of PDMs beyond the gaussian case. \\

\emph{Extending the PDM to other observables}. 
Let us now show how one can develop an extension of the theorems in Section \ref{sec:PDM} to observables other than light-touch observables.

Consider a complete Hermitian operator basis $\{K_{i}\}$ 
satisfying
$
{\rm tr}[K_{i} K_{j}]=d\,\delta_{ij}
$ and
\begin{equation}
    K_{i}=\alpha (P_{i})_\alpha+\beta (P_{i})_\beta\,,
\end{equation}
where we denote the spectral projectors
$(P_{i})_\alpha$ (rank $1$) and $(P_{i})_\beta$ (rank $d-1$). 
As an example of these operators  one can consider operators defining a discrete-phase space \cite{wootters1987wigner} for $d=2$ in which case $\alpha=\frac{1+\sqrt{3}}{2}$, $\beta=\frac{1-\sqrt{3}}{2}$ so that the $K_i$ are not light-touch observables. 
Another example is given by a symmetric informationally complete POVM (SIC-POVM) \cite{renes2004symmetric}: 
recall that a SIC is a set of $d^2$ rank-one projectors $\{\Pi_i\}_{i=1}^{d^2}$ on $\mathcal H_d$ such that
$
{\rm Tr}(\Pi_i\Pi_j)=\frac{d\,\delta_{ij}+1}{d+1}.
$
From any SIC one can define the associated  operators
\begin{equation}\label{eq:KfromSIC}
K_i=\sqrt{d+1}\,\Pi_i+\frac{1-\sqrt{d+1}}{d}\,\mathbbm 1,
\end{equation}
which satisfy the previous conditions.
For example, for $d=3$ one has $\sqrt{d+1}=2$ and therefore
$
K_i = 2\Pi_i-\frac{1}{3}\,\mathbbm 1,
$
so each $K_i$ has spectrum $\{5/3,-1/3,-1/3\}$.

Returning to the general case, one can show that
\small
\begin{equation}
\begin{split}
        \sum_{\kappa=\alpha,\beta} \lambda_\kappa (P_{i})_\kappa\otimes (P_{i})_\kappa&=-\frac{2\alpha\beta}{(\alpha-\beta)^2}\,(K_{i}\otimes \mathbbm{1}+\mathbbm{1}\otimes K_{i})\\
&+\frac{\alpha+\beta}{(\alpha-\beta)^2}\,K_{i}\otimes K_{i}
\\&+\frac{\alpha\beta(\alpha+\beta)}{(\alpha-\beta)^2}\,\mathbbm{1}\otimes \mathbbm{1}\,,
\end{split}
\end{equation}
\normalsize
with $\lambda_\alpha \equiv \alpha$, $\lambda_\beta \equiv \beta$.
Notice that for $\beta=-\alpha$ we recover the linear relation of light-touch observables. The first term can then be recovered by a linear map acting on $K_i$ using SWAPs as before. To recover the other two terms from a linear map notice that
\begin{equation}
\Phi_1(X)=\frac{1}{d^{2}}\sum_{i}
\operatorname{Tr}\!\big[(K_{i}\otimes \mathbbm{1})\,X\big]\,K_{i}\otimes K_{i}
\end{equation}
maps $K_{i}\otimes \mathbbm{1}\to K_{i}\otimes K_{i}$ for all $i$ and its linear. Similarly, 
\begin{equation}
\Phi_2(X)=\frac{1}{d}{\rm Tr}[X]\, \mathbbm{1}\otimes \mathbbm{1}
\end{equation}
maps $K_{i}\otimes \mathbbm{1}\to \mathbbm{1}\otimes \mathbbm{1}$ up to a proportionality constant. Since a linear combination of linear maps is another linear map it is clear that we can construct  $\Phi$ such that
\begin{equation}
      \sum_i \lambda_i (P_{i})_i\otimes (P_{i})_i=\Phi(K_{i}\otimes \mathbbm{1})
\end{equation}
for all $i$ as required.

The idea of this example can be generalized since one can always find a polynomial expansion of $ \sum_i \lambda_i P_i\otimes P_i$ and then exploit orthonormality of the basis to represent this sum as a single linear map acting on the corresponding operator. In this sense, our scheme provides an explicit route to generalize PDMs to arbitrary observables (see also comments in Section \ref{sec:PDM}).

\section{Page and Wootters mechanism: proof of Theorem \ref{th:RDMPW}}\label{app:reducedPW}

Here we prove Theorem \ref{th:RDMPW} stating that the RDM 
of $\mathcal{R}_{\text{ext}}$ defined by 
$$
\rho^{\text{sp}}_{\pm}=\sum_{t_1,i,t_2,j}{\rm Tr}[\mathcal{R}_{\text{ext}}\,a^{+\dag}_{t_1,i}a^-_{t_2,j}]|t_2,j\rangle \langle t_1,i|\,,
$$
is the PW state when the initial state represents a single particle and the action is free. \\

\begin{proof}
We recall that the matrix elements of $\rho^{\text{sp}}_{+-}$ are given by  $ \langle \psi|a_{k_1}^\dag (t_1) a_{k_2}(t_2)|\psi\rangle$, as it follows from Theorem \ref{th:extendedQA}. Here ladder operators are evolved in the Heisenberg picture $a_k(t)=e^{iHt}a_k e^{-iHt}$ and $|\psi\rangle$ is a general state in $h_F$. Under the hypothesis of the theorem we consider a state of the form $|\psi\rangle=\sum_k \psi_k a^\dag_k|0\rangle$ and a quadratic (particle preserving) Hamiltonian $H=\sum_{k,k'}M_{kk'}a^\dag_k a_{k'}$. Then, ladder operator evolve as follows,
\begin{equation}
    \begin{split}
        a_k(t)&=\sum_{k'}\,\big(\,e^{-itM}\,\big)_{kk'}\,a_{k'} \\a^\dag_k(t)&=\sum_{k'}\,\big(\,e^{itM}\,\big)_{kk'}\,a^\dag_{k'}\,.
    \end{split}
\end{equation}
We can use these relations to write
\small
\begin{equation}
  \langle \psi|a_{k_1}^\dag (t_1) a_{k_2}(t_2)|\psi\rangle=\sum_{l_1,l_2}  \langle 0|\wick{ \c1 a_{l_2} \c1 a^\dag_{k_1}(t_1) \c2 a_{k_2}(t_2) \c2 a^\dag_{l_1}}|0\rangle \psi_{l_1}\psi^\ast_{l_2}
\end{equation}
\normalsize
with the contractions defined by
\begin{equation}
    \wick{\c2 a_{k}(t) \c2 a^\dag_{k'}}=\langle 0|a_{k}(t)  a^\dag_{k'}|0\rangle=[e^{-it M}]_{kk'}\,,
\end{equation}
with other contractions vanishing. 
 This yields
\begin{equation}
\begin{split}
    \langle \psi|a_{k_1}^\dag (t_1) a_{k_2}(t_2)|\psi\rangle&=\sum_{l_1,l_2}  \psi_{l_1}\psi^\ast_{l_2}(e^{-it_2 M})_{k_2l_1}(e^{it_1 M})_{l_2k_1}\\
    &=  \psi_{k_2}(t_2)\psi^\ast_{k_1}(t_1)\,,
\end{split}
\end{equation}
where we have defined the evolved wavefunction $\psi_{k}(t)=\sum_l [e^{-it M}]_{kl} \psi_{l}$. Using this relation in the definition of the RDM gives
\begin{equation}
\rho^{\text{sp}}_{+-}=\sum_{t_1,k_1,t_2,k_2}\psi_{k_2}(t_2)\psi^\ast_{k_1}(t_1)|t_2,k_2\rangle \langle t_1,k_1|\,.
\end{equation}
This is precisely the projector $\rho^{\text{sp}}_{+-}=|\Psi_{\text{PW}}\rangle\langle\Psi_{\text{PW}}|$ with
\begin{equation}
|\Psi_{\text{PW}}\rangle=\sum_{t,k}\psi_k(t)|t,k\rangle=\sum_t |t\rangle |\psi(t)\rangle\,,
\end{equation}
the PW state. In fact, the associated universe equation corresponds to the Hamiltonian $M$ acting on a single particle, and the state $|\psi(t)\rangle$ is the associated state in the Schr\"{o}dinger picture: 
\begin{equation}
\begin{split}
     |\psi(t)\rangle&=e^{-iMt}\sum_l \psi_l |l\rangle=\sum_k \sum_l \,\big(e^{-it M}\big)_{kl}\, \psi_{l}|k\rangle\\&=\sum_k \psi_k(t)|k\rangle\,.
\end{split}
\end{equation}
\end{proof}

\newpage
\clearpage


\begin{thebibliography}{115}%
\makeatletter
\providecommand \@ifxundefined [1]{%
 \@ifx{#1\undefined}
}%
\providecommand \@ifnum [1]{%
 \ifnum #1\expandafter \@firstoftwo
 \else \expandafter \@secondoftwo
 \fi
}%
\providecommand \@ifx [1]{%
 \ifx #1\expandafter \@firstoftwo
 \else \expandafter \@secondoftwo
 \fi
}%
\providecommand \natexlab [1]{#1}%
\providecommand \enquote  [1]{``#1''}%
\providecommand \bibnamefont  [1]{#1}%
\providecommand \bibfnamefont [1]{#1}%
\providecommand \citenamefont [1]{#1}%
\providecommand \href@noop [0]{\@secondoftwo}%
\providecommand \href [0]{\begingroup \@sanitize@url \@href}%
\providecommand \@href[1]{\@@startlink{#1}\@@href}%
\providecommand \@@href[1]{\endgroup#1\@@endlink}%
\providecommand \@sanitize@url [0]{\catcode `\\12\catcode `\$12\catcode `\&12\catcode `\#12\catcode `\^12\catcode `\_12\catcode `\%12\relax}%
\providecommand \@@startlink[1]{}%
\providecommand \@@endlink[0]{}%
\providecommand \url  [0]{\begingroup\@sanitize@url \@url }%
\providecommand \@url [1]{\endgroup\@href {#1}{\urlprefix }}%
\providecommand \urlprefix  [0]{URL }%
\providecommand \Eprint [0]{\href }%
\providecommand \doibase [0]{https://doi.org/}%
\providecommand \selectlanguage [0]{\@gobble}%
\providecommand \bibinfo  [0]{\@secondoftwo}%
\providecommand \bibfield  [0]{\@secondoftwo}%
\providecommand \translation [1]{[#1]}%
\providecommand \BibitemOpen [0]{}%
\providecommand \bibitemStop [0]{}%
\providecommand \bibitemNoStop [0]{.\EOS\space}%
\providecommand \EOS [0]{\spacefactor3000\relax}%
\providecommand \BibitemShut  [1]{\csname bibitem#1\endcsname}%
\let\auto@bib@innerbib\@empty
\bibitem [{\citenamefont {Minkowski}(1909)}]{minkowski1909raum}%
  \BibitemOpen
  \bibfield  {author} {\bibinfo {author} {\bibfnamefont {H.}~\bibnamefont {Minkowski}},\ }\bibfield  {title} {\bibinfo {title} {{Raum und Zeit}},\ }\href@noop {} {\bibfield  {journal} {\bibinfo  {journal} {Physikalische Zeitschrift}\ }\textbf {\bibinfo {volume} {10}},\ \bibinfo {pages} {104} (\bibinfo {year} {1909})}\BibitemShut {NoStop}%
\bibitem [{\citenamefont {Dirac}(1963)}]{dirac1963evolution}%
  \BibitemOpen
  \bibfield  {author} {\bibinfo {author} {\bibfnamefont {P.~A.~M.}\ \bibnamefont {Dirac}},\ }\bibfield  {title} {\bibinfo {title} {The evolution of the physicist's picture of nature},\ }\href {https://doi.org/10.1038/scientificamerican0563-45} {\bibfield  {journal} {\bibinfo  {journal} {Scientific American}\ }\textbf {\bibinfo {volume} {208}},\ \bibinfo {pages} {45} (\bibinfo {year} {1963})}\BibitemShut {NoStop}%
\bibitem [{\citenamefont {Eddington}(1920)}]{eddington1920space}%
  \BibitemOpen
  \bibfield  {author} {\bibinfo {author} {\bibfnamefont {A.~S.}\ \bibnamefont {Eddington}},\ }\href@noop {} {\emph {\bibinfo {title} {Space, Time and Gravitation: An Outline of the General Relativity Theory}}}\ (\bibinfo  {publisher} {Cambridge University Press},\ \bibinfo {address} {Cambridge},\ \bibinfo {year} {1920})\BibitemShut {NoStop}%
\bibitem [{\citenamefont {Dirac}(1958)}]{dirac1958theory}%
  \BibitemOpen
  \bibfield  {author} {\bibinfo {author} {\bibfnamefont {P.~A.~M.}\ \bibnamefont {Dirac}},\ }\bibfield  {title} {\bibinfo {title} {The theory of gravitation in hamiltonian form},\ }\href {https://doi.org/10.1098/rspa.1958.0142} {\bibfield  {journal} {\bibinfo  {journal} {Proceedings of the Royal Society of London. Series A. Mathematical and Physical Sciences}\ }\textbf {\bibinfo {volume} {246}},\ \bibinfo {pages} {333} (\bibinfo {year} {1958})}\BibitemShut {NoStop}%
\bibitem [{\citenamefont {Isham}(1993)}]{isham1993canonical}%
  \BibitemOpen
  \bibfield  {author} {\bibinfo {author} {\bibfnamefont {C.~J.}\ \bibnamefont {Isham}},\ }\bibfield  {title} {\bibinfo {title} {Canonical quantum gravity and the problem of time},\ }\href {https://doi.org/10.1007/978-94-011-1980-1_6} {\bibfield  {journal} {\bibinfo  {journal} {Integrable systems, quantum groups, and quantum field theories}\ ,\ \bibinfo {pages} {157}} (\bibinfo {year} {1993})}\BibitemShut {NoStop}%
\bibitem [{\citenamefont {Anderson}(2012)}]{anderson2012problem}%
  \BibitemOpen
  \bibfield  {author} {\bibinfo {author} {\bibfnamefont {E.}~\bibnamefont {Anderson}},\ }\bibfield  {title} {\bibinfo {title} {Problem of time in quantum gravity},\ }\href {https://doi.org/https://doi.org/10.1002/andp.201200147} {\bibfield  {journal} {\bibinfo  {journal} {Annalen der Physik}\ }\textbf {\bibinfo {volume} {524}},\ \bibinfo {pages} {757} (\bibinfo {year} {2012})}\BibitemShut {NoStop}%
\bibitem [{\citenamefont {Zurek}(2003)}]{zurek2003decoherence}%
  \BibitemOpen
  \bibfield  {author} {\bibinfo {author} {\bibfnamefont {W.~H.}\ \bibnamefont {Zurek}},\ }\bibfield  {title} {\bibinfo {title} {Decoherence, einselection, and the quantum origins of the classical},\ }\href {https://doi.org/10.1103/RevModPhys.75.715} {\bibfield  {journal} {\bibinfo  {journal} {Reviews of modern physics}\ }\textbf {\bibinfo {volume} {75}},\ \bibinfo {pages} {715} (\bibinfo {year} {2003})}\BibitemShut {NoStop}%
\bibitem [{\citenamefont {Abanin}\ \emph {et~al.}(2019)\citenamefont {Abanin}, \citenamefont {Altman}, \citenamefont {Bloch},\ and\ \citenamefont {Serbyn}}]{abanin2019colloquium}%
  \BibitemOpen
  \bibfield  {author} {\bibinfo {author} {\bibfnamefont {D.~A.}\ \bibnamefont {Abanin}}, \bibinfo {author} {\bibfnamefont {E.}~\bibnamefont {Altman}}, \bibinfo {author} {\bibfnamefont {I.}~\bibnamefont {Bloch}},\ and\ \bibinfo {author} {\bibfnamefont {M.}~\bibnamefont {Serbyn}},\ }\bibfield  {title} {\bibinfo {title} {Colloquium: Many-body localization, thermalization, and entanglement},\ }\href {https://journals.aps.org/rmp/abstract/10.1103/RevModPhys.91.021001} {\bibfield  {journal} {\bibinfo  {journal} {Reviews of Modern Physics}\ }\textbf {\bibinfo {volume} {91}},\ \bibinfo {pages} {021001} (\bibinfo {year} {2019})}\BibitemShut {NoStop}%
\bibitem [{\citenamefont {Zurek}(2009)}]{zurek2009quantum}%
  \BibitemOpen
  \bibfield  {author} {\bibinfo {author} {\bibfnamefont {W.~H.}\ \bibnamefont {Zurek}},\ }\bibfield  {title} {\bibinfo {title} {Quantum darwinism},\ }\href {https://doi.org/10.1038/nphys1202} {\bibfield  {journal} {\bibinfo  {journal} {Nature physics}\ }\textbf {\bibinfo {volume} {5}},\ \bibinfo {pages} {181} (\bibinfo {year} {2009})}\BibitemShut {NoStop}%
\bibitem [{\citenamefont {Touil}\ \emph {et~al.}(2024)\citenamefont {Touil}, \citenamefont {Anza}, \citenamefont {Deffner},\ and\ \citenamefont {Crutchfield}}]{touil2024branching}%
  \BibitemOpen
  \bibfield  {author} {\bibinfo {author} {\bibfnamefont {A.}~\bibnamefont {Touil}}, \bibinfo {author} {\bibfnamefont {F.}~\bibnamefont {Anza}}, \bibinfo {author} {\bibfnamefont {S.}~\bibnamefont {Deffner}},\ and\ \bibinfo {author} {\bibfnamefont {J.~P.}\ \bibnamefont {Crutchfield}},\ }\bibfield  {title} {\bibinfo {title} {Branching states as the emergent structure of a quantum universe},\ }\href {https://doi.org/10.22331/q-2024-10-10-1494} {\bibfield  {journal} {\bibinfo  {journal} {Quantum}\ }\textbf {\bibinfo {volume} {8}},\ \bibinfo {pages} {1494} (\bibinfo {year} {2024})}\BibitemShut {NoStop}%
\bibitem [{\citenamefont {Ryu}\ and\ \citenamefont {Takayanagi}(2006)}]{ryu2006holographic}%
  \BibitemOpen
  \bibfield  {author} {\bibinfo {author} {\bibfnamefont {S.}~\bibnamefont {Ryu}}\ and\ \bibinfo {author} {\bibfnamefont {T.}~\bibnamefont {Takayanagi}},\ }\bibfield  {title} {\bibinfo {title} {Holographic derivation of entanglement entropy from the anti–de sitter space/conformal field theory correspondence},\ }\href {https://doi.org/10.1103/PhysRevLett.96.181602} {\bibfield  {journal} {\bibinfo  {journal} {Physical review letters}\ }\textbf {\bibinfo {volume} {96}},\ \bibinfo {pages} {181602} (\bibinfo {year} {2006})}\BibitemShut {NoStop}%
\bibitem [{\citenamefont {Van~Raamsdonk}(2010)}]{van2010building}%
  \BibitemOpen
  \bibfield  {author} {\bibinfo {author} {\bibfnamefont {M.}~\bibnamefont {Van~Raamsdonk}},\ }\bibfield  {title} {\bibinfo {title} {Building up space--time with quantum entanglement},\ }\href {https://doi.org/10.1007/s10714-010-1034-0} {\bibfield  {journal} {\bibinfo  {journal} {International Journal of Modern Physics D}\ }\textbf {\bibinfo {volume} {19}},\ \bibinfo {pages} {2429} (\bibinfo {year} {2010})}\BibitemShut {NoStop}%
\bibitem [{\citenamefont {Cao}\ \emph {et~al.}(2017)\citenamefont {Cao}, \citenamefont {Carroll},\ and\ \citenamefont {Michalakis}}]{cao2017space}%
  \BibitemOpen
  \bibfield  {author} {\bibinfo {author} {\bibfnamefont {C.}~\bibnamefont {Cao}}, \bibinfo {author} {\bibfnamefont {S.~M.}\ \bibnamefont {Carroll}},\ and\ \bibinfo {author} {\bibfnamefont {S.}~\bibnamefont {Michalakis}},\ }\bibfield  {title} {\bibinfo {title} {Space from hilbert space: Recovering geometry from bulk entanglement},\ }\href {https://doi.org/10.1103/PhysRevD.95.024031} {\bibfield  {journal} {\bibinfo  {journal} {Physical Review D}\ }\textbf {\bibinfo {volume} {95}},\ \bibinfo {pages} {024031} (\bibinfo {year} {2017})}\BibitemShut {NoStop}%
\bibitem [{\citenamefont {Harper}\ \emph {et~al.}(2023)\citenamefont {Harper}, \citenamefont {Mollabashi}, \citenamefont {Takayanagi}, \citenamefont {Taki} \emph {et~al.}}]{harper2023timelike}%
  \BibitemOpen
  \bibfield  {author} {\bibinfo {author} {\bibfnamefont {J.}~\bibnamefont {Harper}}, \bibinfo {author} {\bibfnamefont {A.}~\bibnamefont {Mollabashi}}, \bibinfo {author} {\bibfnamefont {T.}~\bibnamefont {Takayanagi}}, \bibinfo {author} {\bibfnamefont {Y.}~\bibnamefont {Taki}}, \emph {et~al.},\ }\bibfield  {title} {\bibinfo {title} {Timelike entanglement entropy},\ }\href {https://doi.org/10.1007/JHEP05(2023)052} {\bibfield  {journal} {\bibinfo  {journal} {Journal of High Energy Physics}\ }\textbf {\bibinfo {volume} {2023}},\ \bibinfo {pages} {1} (\bibinfo {year} {2023})}\BibitemShut {NoStop}%
\bibitem [{\citenamefont {Heller}\ \emph {et~al.}(2025)\citenamefont {Heller}, \citenamefont {Ori},\ and\ \citenamefont {Serantes}}]{heller2025temporal}%
  \BibitemOpen
  \bibfield  {author} {\bibinfo {author} {\bibfnamefont {M.~P.}\ \bibnamefont {Heller}}, \bibinfo {author} {\bibfnamefont {F.}~\bibnamefont {Ori}},\ and\ \bibinfo {author} {\bibfnamefont {A.}~\bibnamefont {Serantes}},\ }\bibfield  {title} {\bibinfo {title} {Temporal entanglement from holographic entanglement entropy},\ }\href {https://doi.org/10.1103/qlsv-gp22} {\bibfield  {journal} {\bibinfo  {journal} {Physical Review X}\ }\textbf {\bibinfo {volume} {15}},\ \bibinfo {pages} {041022} (\bibinfo {year} {2025})}\BibitemShut {NoStop}%
\bibitem [{\citenamefont {Milekhin}\ \emph {et~al.}(2025)\citenamefont {Milekhin}, \citenamefont {Adamska},\ and\ \citenamefont {Preskill}}]{milekhin2025observable}%
  \BibitemOpen
  \bibfield  {author} {\bibinfo {author} {\bibfnamefont {A.}~\bibnamefont {Milekhin}}, \bibinfo {author} {\bibfnamefont {Z.}~\bibnamefont {Adamska}},\ and\ \bibinfo {author} {\bibfnamefont {J.}~\bibnamefont {Preskill}},\ }\bibfield  {title} {\bibinfo {title} {Observable and computable entanglement in time},\ }\href {https://arxiv.org/abs/2502.12240} {\bibfield  {journal} {\bibinfo  {journal} {arXiv preprint arXiv:2502.12240}\ } (\bibinfo {year} {2025})}\BibitemShut {NoStop}%
\bibitem [{\citenamefont {Horsman}\ \emph {et~al.}(2017)\citenamefont {Horsman}, \citenamefont {Heunen}, \citenamefont {Pusey}, \citenamefont {Barrett},\ and\ \citenamefont {Spekkens}}]{horsman2017can}%
  \BibitemOpen
  \bibfield  {author} {\bibinfo {author} {\bibfnamefont {D.}~\bibnamefont {Horsman}}, \bibinfo {author} {\bibfnamefont {C.}~\bibnamefont {Heunen}}, \bibinfo {author} {\bibfnamefont {M.~F.}\ \bibnamefont {Pusey}}, \bibinfo {author} {\bibfnamefont {J.}~\bibnamefont {Barrett}},\ and\ \bibinfo {author} {\bibfnamefont {R.~W.}\ \bibnamefont {Spekkens}},\ }\bibfield  {title} {\bibinfo {title} {Can a quantum state over time resemble a quantum state at a single time?},\ }\href {https://doi.org/10.1098/rspa.2017.0395} {\bibfield  {journal} {\bibinfo  {journal} {Proceedings of the Royal Society A: Mathematical, Physical and Engineering Sciences}\ }\textbf {\bibinfo {volume} {473}},\ \bibinfo {pages} {20170395} (\bibinfo {year} {2017})}\BibitemShut {NoStop}%
\bibitem [{\citenamefont {Fitzsimons}\ \emph {et~al.}(2015)\citenamefont {Fitzsimons}, \citenamefont {Jones},\ and\ \citenamefont {Vedral}}]{fitzsimons2015quantum}%
  \BibitemOpen
  \bibfield  {author} {\bibinfo {author} {\bibfnamefont {J.~F.}\ \bibnamefont {Fitzsimons}}, \bibinfo {author} {\bibfnamefont {J.~A.}\ \bibnamefont {Jones}},\ and\ \bibinfo {author} {\bibfnamefont {V.}~\bibnamefont {Vedral}},\ }\bibfield  {title} {\bibinfo {title} {Quantum correlations which imply causation},\ }\href {https://doi.org/10.1038/srep18281} {\bibfield  {journal} {\bibinfo  {journal} {Scientific reports}\ }\textbf {\bibinfo {volume} {5}},\ \bibinfo {pages} {18281} (\bibinfo {year} {2015})}\BibitemShut {NoStop}%
\bibitem [{\citenamefont {Fullwood}\ and\ \citenamefont {Parzygnat}(2024)}]{fullwood2024operator}%
  \BibitemOpen
  \bibfield  {author} {\bibinfo {author} {\bibfnamefont {J.}~\bibnamefont {Fullwood}}\ and\ \bibinfo {author} {\bibfnamefont {A.~J.}\ \bibnamefont {Parzygnat}},\ }\bibfield  {title} {\bibinfo {title} {Operator representation of spatiotemporal quantum correlations},\ }\href {https://arxiv.org/abs/2405.17555} {\bibfield  {journal} {\bibinfo  {journal} {arXiv preprint arXiv:2405.17555}\ } (\bibinfo {year} {2024})}\BibitemShut {NoStop}%
\bibitem [{\citenamefont {Cotler}\ \emph {et~al.}(2018)\citenamefont {Cotler}, \citenamefont {Jian}, \citenamefont {Qi},\ and\ \citenamefont {Wilczek}}]{cotler2018superdensity}%
  \BibitemOpen
  \bibfield  {author} {\bibinfo {author} {\bibfnamefont {J.}~\bibnamefont {Cotler}}, \bibinfo {author} {\bibfnamefont {C.-M.}\ \bibnamefont {Jian}}, \bibinfo {author} {\bibfnamefont {X.-L.}\ \bibnamefont {Qi}},\ and\ \bibinfo {author} {\bibfnamefont {F.}~\bibnamefont {Wilczek}},\ }\bibfield  {title} {\bibinfo {title} {Superdensity operators for spacetime quantum mechanics},\ }\href {https://doi.org/10.1007/JHEP09(2018)093} {\bibfield  {journal} {\bibinfo  {journal} {Journal of High Energy Physics}\ }\textbf {\bibinfo {volume} {2018}},\ \bibinfo {pages} {1} (\bibinfo {year} {2018})}\BibitemShut {NoStop}%
\bibitem [{\citenamefont {Page}\ and\ \citenamefont {Wootters}(1983)}]{page1983evolution}%
  \BibitemOpen
  \bibfield  {author} {\bibinfo {author} {\bibfnamefont {D.~N.}\ \bibnamefont {Page}}\ and\ \bibinfo {author} {\bibfnamefont {W.~K.}\ \bibnamefont {Wootters}},\ }\bibfield  {title} {\bibinfo {title} {Evolution without evolution: Dynamics described by stationary observables},\ }\href {https://doi.org/10.1103/PhysRevD.27.2885} {\bibfield  {journal} {\bibinfo  {journal} {Physical Review D}\ }\textbf {\bibinfo {volume} {27}},\ \bibinfo {pages} {2885} (\bibinfo {year} {1983})}\BibitemShut {NoStop}%
\bibitem [{\citenamefont {Giovannetti}\ \emph {et~al.}(2015)\citenamefont {Giovannetti}, \citenamefont {Lloyd},\ and\ \citenamefont {Maccone}}]{giovannetti2015quantum}%
  \BibitemOpen
  \bibfield  {author} {\bibinfo {author} {\bibfnamefont {V.}~\bibnamefont {Giovannetti}}, \bibinfo {author} {\bibfnamefont {S.}~\bibnamefont {Lloyd}},\ and\ \bibinfo {author} {\bibfnamefont {L.}~\bibnamefont {Maccone}},\ }\bibfield  {title} {\bibinfo {title} {Quantum time},\ }\href {https://doi.org/10.1103/PhysRevD.92.045033} {\bibfield  {journal} {\bibinfo  {journal} {Physical Review D}\ }\textbf {\bibinfo {volume} {92}},\ \bibinfo {pages} {045033} (\bibinfo {year} {2015})}\BibitemShut {NoStop}%
\bibitem [{\citenamefont {Boette}\ \emph {et~al.}(2016)\citenamefont {Boette}, \citenamefont {Rossignoli}, \citenamefont {Gigena},\ and\ \citenamefont {Cerezo}}]{boette2016system}%
  \BibitemOpen
  \bibfield  {author} {\bibinfo {author} {\bibfnamefont {A.}~\bibnamefont {Boette}}, \bibinfo {author} {\bibfnamefont {R.}~\bibnamefont {Rossignoli}}, \bibinfo {author} {\bibfnamefont {N.}~\bibnamefont {Gigena}},\ and\ \bibinfo {author} {\bibfnamefont {M.}~\bibnamefont {Cerezo}},\ }\bibfield  {title} {\bibinfo {title} {System-time entanglement in a discrete-time model},\ }\href {https://doi.org/10.1103/PhysRevA.93.062127} {\bibfield  {journal} {\bibinfo  {journal} {Physical Review A}\ }\textbf {\bibinfo {volume} {93}},\ \bibinfo {pages} {062127} (\bibinfo {year} {2016})}\BibitemShut {NoStop}%
\bibitem [{\citenamefont {Giovannetti}\ \emph {et~al.}(2023)\citenamefont {Giovannetti}, \citenamefont {Lloyd},\ and\ \citenamefont {Maccone}}]{giovannetti2023geometric}%
  \BibitemOpen
  \bibfield  {author} {\bibinfo {author} {\bibfnamefont {V.}~\bibnamefont {Giovannetti}}, \bibinfo {author} {\bibfnamefont {S.}~\bibnamefont {Lloyd}},\ and\ \bibinfo {author} {\bibfnamefont {L.}~\bibnamefont {Maccone}},\ }\bibfield  {title} {\bibinfo {title} {Geometric event-based quantum mechanics},\ }\href {https://doi.org/10.1088/1367-2630/acb793} {\bibfield  {journal} {\bibinfo  {journal} {New Journal of Physics}\ }\textbf {\bibinfo {volume} {25}},\ \bibinfo {pages} {023027} (\bibinfo {year} {2023})}\BibitemShut {NoStop}%
\bibitem [{\citenamefont {Guo}(2026)}]{guo2025spacetime}%
  \BibitemOpen
  \bibfield  {author} {\bibinfo {author} {\bibfnamefont {W.-z.}\ \bibnamefont {Guo}},\ }\bibfield  {title} {\bibinfo {title} {Spacetime density matrix: Formalism and properties},\ }\href {https://doi.org/https://doi.org/10.1007/JHEP01(2026)128} {\bibfield  {journal} {\bibinfo  {journal} {J. High Energ. Phys.}\ }\textbf {\bibinfo {volume} {2026}}\bibinfo  {number} { (128)}}\BibitemShut {NoStop}%
\bibitem [{\citenamefont {Giacomini}\ \emph {et~al.}(2019{\natexlab{a}})\citenamefont {Giacomini}, \citenamefont {Castro-Ruiz},\ and\ \citenamefont {Brukner}}]{giacomini2019quantum}%
  \BibitemOpen
\bibfield  {number} {  }\bibfield  {author} {\bibinfo {author} {\bibfnamefont {F.}~\bibnamefont {Giacomini}}, \bibinfo {author} {\bibfnamefont {E.}~\bibnamefont {Castro-Ruiz}},\ and\ \bibinfo {author} {\bibfnamefont {{\v{C}}.}~\bibnamefont {Brukner}},\ }\bibfield  {title} {\bibinfo {title} {Quantum mechanics and the covariance of physical laws in quantum reference frames},\ }\href {https://doi.org/https://doi.org/10.1038/s41467-018-08155-0} {\bibfield  {journal} {\bibinfo  {journal} {Nature communications}\ }\textbf {\bibinfo {volume} {10}},\ \bibinfo {pages} {494} (\bibinfo {year} {2019}{\natexlab{a}})}\BibitemShut {NoStop}%
\bibitem [{\citenamefont {Castro-Ruiz}\ \emph {et~al.}(2018)\citenamefont {Castro-Ruiz}, \citenamefont {Giacomini},\ and\ \citenamefont {Brukner}}]{castro2018dynamics}%
  \BibitemOpen
  \bibfield  {author} {\bibinfo {author} {\bibfnamefont {E.}~\bibnamefont {Castro-Ruiz}}, \bibinfo {author} {\bibfnamefont {F.}~\bibnamefont {Giacomini}},\ and\ \bibinfo {author} {\bibfnamefont {{\v{C}}.}~\bibnamefont {Brukner}},\ }\bibfield  {title} {\bibinfo {title} {Dynamics of quantum causal structures},\ }\href {https://doi.org/10.1103/PhysRevX.8.011047} {\bibfield  {journal} {\bibinfo  {journal} {Physical Review X}\ }\textbf {\bibinfo {volume} {8}},\ \bibinfo {pages} {011047} (\bibinfo {year} {2018})}\BibitemShut {NoStop}%
\bibitem [{\citenamefont {Diaz}\ \emph {et~al.}(2024)\citenamefont {Diaz}, \citenamefont {Matera},\ and\ \citenamefont {Rossignoli}}]{diaz2023spacetime}%
  \BibitemOpen
  \bibfield  {author} {\bibinfo {author} {\bibfnamefont {N.~L.}\ \bibnamefont {Diaz}}, \bibinfo {author} {\bibfnamefont {J.~M.}\ \bibnamefont {Matera}},\ and\ \bibinfo {author} {\bibfnamefont {R.}~\bibnamefont {Rossignoli}},\ }\bibfield  {title} {\bibinfo {title} {Spacetime quantum and classical mechanics with dynamical foliation},\ }\href {https://doi.org/10.1103/PhysRevD.109.105008} {\bibfield  {journal} {\bibinfo  {journal} {Phys. Rev. D}\ }\textbf {\bibinfo {volume} {109}},\ \bibinfo {pages} {105008} (\bibinfo {year} {2024})}\BibitemShut {NoStop}%
\bibitem [{\citenamefont {Diaz}\ and\ \citenamefont {Rossignoli}(2026)}]{diaz2025spacetime}%
  \BibitemOpen
  \bibfield  {author} {\bibinfo {author} {\bibfnamefont {N.~L.}\ \bibnamefont {Diaz}}\ and\ \bibinfo {author} {\bibfnamefont {R.}~\bibnamefont {Rossignoli}},\ }\bibfield  {title} {\bibinfo {title} {Spacetime quantum mechanics for bosonic and fermionic systems},\ }\bibfield  {journal} {\bibinfo  {journal} {Physical Review Research}\ }\textbf {\bibinfo {volume} {8}},\ \href {https://doi.org/10.1103/zmky-mmwb} {10.1103/zmky-mmwb} (\bibinfo {year} {2026})\BibitemShut {NoStop}%
\bibitem [{\citenamefont {Diaz}\ \emph {et~al.}(2021)\citenamefont {Diaz}, \citenamefont {Matera},\ and\ \citenamefont {Rossignoli}}]{diaz2021spacetime}%
  \BibitemOpen
  \bibfield  {author} {\bibinfo {author} {\bibfnamefont {N.~L.}\ \bibnamefont {Diaz}}, \bibinfo {author} {\bibfnamefont {J.~M.}\ \bibnamefont {Matera}},\ and\ \bibinfo {author} {\bibfnamefont {R.}~\bibnamefont {Rossignoli}},\ }\bibfield  {title} {\bibinfo {title} {Spacetime quantum actions},\ }\href {https://doi.org/10.1103/PhysRevD.103.065011} {\bibfield  {journal} {\bibinfo  {journal} {Physical Review D}\ }\textbf {\bibinfo {volume} {103}},\ \bibinfo {pages} {065011} (\bibinfo {year} {2021})}\BibitemShut {NoStop}%
\bibitem [{\citenamefont {Diaz}(2026)}]{diaz2026quantum}%
  \BibitemOpen
  \bibfield  {author} {\bibinfo {author} {\bibfnamefont {N.~L.}\ \bibnamefont {Diaz}},\ }\bibfield  {title} {\bibinfo {title} {From quantum time to manifestly covariant QFT: On the need for a quantum-action-based quantization},\ }\href {https://doi.org/10.3390/e28040425} {\bibfield  {journal} {\bibinfo  {journal} {Entropy}\ }\textbf {\bibinfo {volume} {28}},\ \bibinfo {pages} {425} (\bibinfo {year} {2026})}\BibitemShut {NoStop}%
\bibitem [{\citenamefont {Feynman}(1948)}]{feynman1948space}%
  \BibitemOpen
  \bibfield  {author} {\bibinfo {author} {\bibfnamefont {R.~P.}\ \bibnamefont {Feynman}},\ }\bibfield  {title} {\bibinfo {title} {Space-time approach to non-relativistic quantum mechanics},\ }\href {https://doi.org/10.1103/RevModPhys.20.367} {\bibfield  {journal} {\bibinfo  {journal} {Reviews of modern physics}\ }\textbf {\bibinfo {volume} {20}},\ \bibinfo {pages} {367} (\bibinfo {year} {1948})}\BibitemShut {NoStop}%
\bibitem [{\citenamefont {Diaz}\ \emph {et~al.}(2025{\natexlab{a}})\citenamefont {Diaz}, \citenamefont {Matera},\ and\ \citenamefont {Rossignoli}}]{diaz2021path}%
  \BibitemOpen
  \bibfield  {author} {\bibinfo {author} {\bibfnamefont {N.~L.}\ \bibnamefont {Diaz}}, \bibinfo {author} {\bibfnamefont {J.~M.}\ \bibnamefont {Matera}},\ and\ \bibinfo {author} {\bibfnamefont {R.}~\bibnamefont {Rossignoli}},\ }\bibfield  {title} {\bibinfo {title} {Path integrals from spacetime quantum actions},\ }\href {https://doi.org/10.1016/j.aop.2025.170052} {\bibfield  {journal} {\bibinfo  {journal} {Annals of Physics}\ ,\ \bibinfo {pages} {170052}} (\bibinfo {year} {2025}{\natexlab{a}})}\BibitemShut {NoStop}%
\bibitem [{Note1()}]{Note1}%
  \BibitemOpen
  \bibinfo {note} {For a finite number of time labels, $[q_t,p_{t'}]=i\delta _{tt'}$ defines a direct sum of independent canonical algebras. The corresponding Weyl algebra factorizes in time, and Stone-von Neumann gives $\protect \mathcal {H}\simeq \otimes _t h$ up to unitary equivalence.}\BibitemShut {Stop}%
\bibitem [{Note2()}]{Note2}%
  \BibitemOpen
  \bibinfo {note} {The case of fermions is however not equivalent to a tensor-product structure; for this reason we study it in the separate Section \ref {sec:fermions}.}\BibitemShut {Stop}%
\bibitem [{\citenamefont {Kirkwood}(1933)}]{kirkwood1933quantum}%
  \BibitemOpen
  \bibfield  {author} {\bibinfo {author} {\bibfnamefont {J.~G.}\ \bibnamefont {Kirkwood}},\ }\bibfield  {title} {\bibinfo {title} {Quantum statistics of almost classical assemblies},\ }\href {https://doi.org/https://doi.org/10.1103/PhysRev.44.31} {\bibfield  {journal} {\bibinfo  {journal} {Physical Review}\ }\textbf {\bibinfo {volume} {44}},\ \bibinfo {pages} {31} (\bibinfo {year} {1933})}\BibitemShut {NoStop}%
\bibitem [{\citenamefont {Dirac}(1945)}]{dirac1945analogy}%
  \BibitemOpen
  \bibfield  {author} {\bibinfo {author} {\bibfnamefont {P.~A.~M.}\ \bibnamefont {Dirac}},\ }\bibfield  {title} {\bibinfo {title} {On the analogy between classical and quantum mechanics},\ }\href {https://doi.org/https://doi.org/10.1103/RevModPhys.17.195} {\bibfield  {journal} {\bibinfo  {journal} {Reviews of Modern Physics}\ }\textbf {\bibinfo {volume} {17}},\ \bibinfo {pages} {195} (\bibinfo {year} {1945})}\BibitemShut {NoStop}%
\bibitem [{\citenamefont {Harper}\ \emph {et~al.}(2025)\citenamefont {Harper}, \citenamefont {Kawamoto}, \citenamefont {Maeda}, \citenamefont {Nakamura},\ and\ \citenamefont {Takayanagi}}]{harper2025non}%
  \BibitemOpen
  \bibfield  {author} {\bibinfo {author} {\bibfnamefont {J.}~\bibnamefont {Harper}}, \bibinfo {author} {\bibfnamefont {T.}~\bibnamefont {Kawamoto}}, \bibinfo {author} {\bibfnamefont {R.}~\bibnamefont {Maeda}}, \bibinfo {author} {\bibfnamefont {N.}~\bibnamefont {Nakamura}},\ and\ \bibinfo {author} {\bibfnamefont {T.}~\bibnamefont {Takayanagi}},\ }\bibfield  {title} {\bibinfo {title} {Non-hermitian density matrices from time-like entanglement and wormholes},\ }\href {https://arxiv.org/abs/2512.13800} {\bibfield  {journal} {\bibinfo  {journal} {arXiv preprint arXiv:2512.13800}\ } (\bibinfo {year} {2025})}\BibitemShut {NoStop}%
\bibitem [{\citenamefont {Aharonov}\ \emph {et~al.}(1988)\citenamefont {Aharonov}, \citenamefont {Albert},\ and\ \citenamefont {Vaidman}}]{aharonov1988result}%
  \BibitemOpen
  \bibfield  {author} {\bibinfo {author} {\bibfnamefont {Y.}~\bibnamefont {Aharonov}}, \bibinfo {author} {\bibfnamefont {D.~Z.}\ \bibnamefont {Albert}},\ and\ \bibinfo {author} {\bibfnamefont {L.}~\bibnamefont {Vaidman}},\ }\bibfield  {title} {\bibinfo {title} {How the result of a measurement of a component of the spin of a spin-1/2 particle can turn out to be 100},\ }\href {https://doi.org/10.1103/PhysRevLett.60.1351} {\bibfield  {journal} {\bibinfo  {journal} {Physical review letters}\ }\textbf {\bibinfo {volume} {60}},\ \bibinfo {pages} {1351} (\bibinfo {year} {1988})}\BibitemShut {NoStop}%
\bibitem [{\citenamefont {Leggett}\ and\ \citenamefont {Garg}(1985)}]{leggett1985quantum}%
  \BibitemOpen
  \bibfield  {author} {\bibinfo {author} {\bibfnamefont {A.~J.}\ \bibnamefont {Leggett}}\ and\ \bibinfo {author} {\bibfnamefont {A.}~\bibnamefont {Garg}},\ }\bibfield  {title} {\bibinfo {title} {Quantum mechanics versus macroscopic realism: Is the flux there when nobody looks?},\ }\href {https://doi.org/https://doi.org/10.1103/PhysRevLett.54.857} {\bibfield  {journal} {\bibinfo  {journal} {Physical Review Letters}\ }\textbf {\bibinfo {volume} {54}},\ \bibinfo {pages} {857} (\bibinfo {year} {1985})}\BibitemShut {NoStop}%
\bibitem [{\citenamefont {Singh}\ and\ \citenamefont {Datta}(2022)}]{singh2022detecting}%
  \BibitemOpen
  \bibfield  {author} {\bibinfo {author} {\bibfnamefont {S.}~\bibnamefont {Singh}}\ and\ \bibinfo {author} {\bibfnamefont {N.}~\bibnamefont {Datta}},\ }\bibfield  {title} {\bibinfo {title} {Detecting positive quantum capacities of quantum channels},\ }\href {https://doi.org/https://doi.org/10.1038/s41534-022-00550-2} {\bibfield  {journal} {\bibinfo  {journal} {npj Quantum Information}\ }\textbf {\bibinfo {volume} {8}},\ \bibinfo {pages} {50} (\bibinfo {year} {2022})}\BibitemShut {NoStop}%
\bibitem [{\citenamefont {Chiribella}\ \emph {et~al.}(2013)\citenamefont {Chiribella}, \citenamefont {D’Ariano}, \citenamefont {Perinotti},\ and\ \citenamefont {Valiron}}]{chiribella2013quantum}%
  \BibitemOpen
  \bibfield  {author} {\bibinfo {author} {\bibfnamefont {G.}~\bibnamefont {Chiribella}}, \bibinfo {author} {\bibfnamefont {G.~M.}\ \bibnamefont {D’Ariano}}, \bibinfo {author} {\bibfnamefont {P.}~\bibnamefont {Perinotti}},\ and\ \bibinfo {author} {\bibfnamefont {B.}~\bibnamefont {Valiron}},\ }\bibfield  {title} {\bibinfo {title} {Quantum computations without definite causal structure},\ }\href {https://doi.org/10.1103/PhysRevA.88.022318} {\bibfield  {journal} {\bibinfo  {journal} {Physical Review A—Atomic, Molecular, and Optical Physics}\ }\textbf {\bibinfo {volume} {88}},\ \bibinfo {pages} {022318} (\bibinfo {year} {2013})}\BibitemShut {NoStop}%
\bibitem [{\citenamefont {Ara{\'u}jo}\ \emph {et~al.}(2015)\citenamefont {Ara{\'u}jo}, \citenamefont {Branciard}, \citenamefont {Costa}, \citenamefont {Feix}, \citenamefont {Giarmatzi},\ and\ \citenamefont {Brukner}}]{araujo2015witnessing}%
  \BibitemOpen
  \bibfield  {author} {\bibinfo {author} {\bibfnamefont {M.}~\bibnamefont {Ara{\'u}jo}}, \bibinfo {author} {\bibfnamefont {C.}~\bibnamefont {Branciard}}, \bibinfo {author} {\bibfnamefont {F.}~\bibnamefont {Costa}}, \bibinfo {author} {\bibfnamefont {A.}~\bibnamefont {Feix}}, \bibinfo {author} {\bibfnamefont {C.}~\bibnamefont {Giarmatzi}},\ and\ \bibinfo {author} {\bibfnamefont {{\v{C}}.}~\bibnamefont {Brukner}},\ }\bibfield  {title} {\bibinfo {title} {Witnessing causal nonseparability},\ }\href {https://doi.org/10.1088/1367-2630/17/10/102001} {\bibfield  {journal} {\bibinfo  {journal} {New Journal of Physics}\ }\textbf {\bibinfo {volume} {17}},\ \bibinfo {pages} {102001} (\bibinfo {year} {2015})}\BibitemShut {NoStop}%
\bibitem [{\citenamefont {Cerezo}\ \emph {et~al.}(2017)\citenamefont {Cerezo}, \citenamefont {Rossignoli}, \citenamefont {Canosa},\ and\ \citenamefont {R{\'\i}os}}]{cerezo2017factorization}%
  \BibitemOpen
  \bibfield  {author} {\bibinfo {author} {\bibfnamefont {M.}~\bibnamefont {Cerezo}}, \bibinfo {author} {\bibfnamefont {R.}~\bibnamefont {Rossignoli}}, \bibinfo {author} {\bibfnamefont {N.}~\bibnamefont {Canosa}},\ and\ \bibinfo {author} {\bibfnamefont {E.}~\bibnamefont {R{\'\i}os}},\ }\bibfield  {title} {\bibinfo {title} {Factorization and criticality in finite $xxz$ systems of arbitrary spin},\ }\href {https://doi.org/10.1103/PhysRevLett.119.220605} {\bibfield  {journal} {\bibinfo  {journal} {Physical Review Letters}\ }\textbf {\bibinfo {volume} {119}},\ \bibinfo {pages} {220605} (\bibinfo {year} {2017})}\BibitemShut {NoStop}%
\bibitem [{\citenamefont {Des~Cloizeaux}\ and\ \citenamefont {Pearson}(1962)}]{des1962spin}%
  \BibitemOpen
  \bibfield  {author} {\bibinfo {author} {\bibfnamefont {J.}~\bibnamefont {Des~Cloizeaux}}\ and\ \bibinfo {author} {\bibfnamefont {J.}~\bibnamefont {Pearson}},\ }\bibfield  {title} {\bibinfo {title} {Spin-wave spectrum of the antiferromagnetic linear chain},\ }\href {https://doi.org/10.1103/PhysRev.128.2131} {\bibfield  {journal} {\bibinfo  {journal} {Physical Review}\ }\textbf {\bibinfo {volume} {128}},\ \bibinfo {pages} {2131} (\bibinfo {year} {1962})}\BibitemShut {NoStop}%
\bibitem [{\citenamefont {Lee}\ \emph {et~al.}(2026)\citenamefont {Lee}, \citenamefont {Kumaran}, \citenamefont {Pokharel}, \citenamefont {Scheie}, \citenamefont {Sarkis}, \citenamefont {Tennant}, \citenamefont {Humble}, \citenamefont {Schleife}, \citenamefont {Kandala},\ and\ \citenamefont {Banerjee}}]{lee2026benchmarking}%
  \BibitemOpen
  \bibfield  {author} {\bibinfo {author} {\bibfnamefont {Y.-T.}\ \bibnamefont {Lee}}, \bibinfo {author} {\bibfnamefont {K.}~\bibnamefont {Kumaran}}, \bibinfo {author} {\bibfnamefont {B.}~\bibnamefont {Pokharel}}, \bibinfo {author} {\bibfnamefont {A.}~\bibnamefont {Scheie}}, \bibinfo {author} {\bibfnamefont {C.~L.}\ \bibnamefont {Sarkis}}, \bibinfo {author} {\bibfnamefont {D.~A.}\ \bibnamefont {Tennant}}, \bibinfo {author} {\bibfnamefont {T.}~\bibnamefont {Humble}}, \bibinfo {author} {\bibfnamefont {A.}~\bibnamefont {Schleife}}, \bibinfo {author} {\bibfnamefont {A.}~\bibnamefont {Kandala}},\ and\ \bibinfo {author} {\bibfnamefont {A.}~\bibnamefont {Banerjee}},\ }\bibfield  {title} {\bibinfo {title} {Benchmarking quantum simulation with neutron-scattering experiments},\ }\bibfield  {journal} {\bibinfo  {journal} {arXiv preprint arXiv:2603.15608}\ }\href {https://doi.org/10.48550/arXiv.2603.15608} {10.48550/arXiv.2603.15608} (\bibinfo {year} {2026})\BibitemShut {NoStop}%
\bibitem [{\citenamefont {Schwinger}(1961)}]{schwinger1961brownian}%
  \BibitemOpen
  \bibfield  {author} {\bibinfo {author} {\bibfnamefont {J.}~\bibnamefont {Schwinger}},\ }\bibfield  {title} {\bibinfo {title} {Brownian motion of a quantum oscillator},\ }\href {https://doi.org/https://doi.org/10.1063/1.1703727} {\bibfield  {journal} {\bibinfo  {journal} {Journal of Mathematical Physics}\ }\textbf {\bibinfo {volume} {2}},\ \bibinfo {pages} {407} (\bibinfo {year} {1961})}\BibitemShut {NoStop}%
\bibitem [{\citenamefont {Keldysh}(1965)}]{keldysh2024diagram}%
  \BibitemOpen
  \bibfield  {author} {\bibinfo {author} {\bibfnamefont {L.~V.}\ \bibnamefont {Keldysh}},\ }\bibfield  {title} {\bibinfo {title} {Diagram technique for nonequilibrium processes},\ }\href@noop {} {\bibfield  {journal} {\bibinfo  {journal} {Sov. Phys. JETP}\ }\textbf {\bibinfo {volume} {20}} (\bibinfo {year} {1965})}\BibitemShut {NoStop}%
\bibitem [{\citenamefont {Griffiths}(1984)}]{griffiths1984consistent}%
  \BibitemOpen
  \bibfield  {author} {\bibinfo {author} {\bibfnamefont {R.~B.}\ \bibnamefont {Griffiths}},\ }\bibfield  {title} {\bibinfo {title} {Consistent histories and the interpretation of quantum mechanics},\ }\href {https://doi.org/https://doi.org/10.1007/BF01015734} {\bibfield  {journal} {\bibinfo  {journal} {Journal of Statistical Physics}\ }\textbf {\bibinfo {volume} {36}},\ \bibinfo {pages} {219} (\bibinfo {year} {1984})}\BibitemShut {NoStop}%
\bibitem [{\citenamefont {Gell-Mann}\ and\ \citenamefont {Hartle}(2010)}]{gell2010quantum}%
  \BibitemOpen
  \bibfield  {author} {\bibinfo {author} {\bibfnamefont {M.}~\bibnamefont {Gell-Mann}}\ and\ \bibinfo {author} {\bibfnamefont {J.~B.}\ \bibnamefont {Hartle}},\ }\bibfield  {title} {\bibinfo {title} {Quantum mechanics in the light of quantum cosmology},\ }in\ \href {https://doi.org/https://doi.org/10.1142/9789812836854_0021} {\emph {\bibinfo {booktitle} {Murray Gell-Mann: Selected Papers}}}\ (\bibinfo  {publisher} {World Scientific},\ \bibinfo {year} {2010})\ pp.\ \bibinfo {pages} {303--325}\BibitemShut {NoStop}%
\bibitem [{\citenamefont {Isham}(1994)}]{isham1994quantum}%
  \BibitemOpen
  \bibfield  {author} {\bibinfo {author} {\bibfnamefont {C.~J.}\ \bibnamefont {Isham}},\ }\bibfield  {title} {\bibinfo {title} {Quantum logic and the histories approach to quantum theory},\ }\href {https://doi.org/10.1063/1.530544} {\bibfield  {journal} {\bibinfo  {journal} {Journal of Mathematical Physics}\ }\textbf {\bibinfo {volume} {35}},\ \bibinfo {pages} {2157} (\bibinfo {year} {1994})}\BibitemShut {NoStop}%
\bibitem [{\citenamefont {Isham}\ \emph {et~al.}(1994)\citenamefont {Isham}, \citenamefont {Linden},\ and\ \citenamefont {Schreckenberg}}]{isham1994classification}%
  \BibitemOpen
  \bibfield  {author} {\bibinfo {author} {\bibfnamefont {C.~J.}\ \bibnamefont {Isham}}, \bibinfo {author} {\bibfnamefont {N.}~\bibnamefont {Linden}},\ and\ \bibinfo {author} {\bibfnamefont {S.}~\bibnamefont {Schreckenberg}},\ }\bibfield  {title} {\bibinfo {title} {The classification of decoherence functionals: An analog of gleason’s theorem},\ }\href {https://doi.org/https://doi.org/10.1063/1.530679} {\bibfield  {journal} {\bibinfo  {journal} {Journal of Mathematical Physics}\ }\textbf {\bibinfo {volume} {35}},\ \bibinfo {pages} {6360} (\bibinfo {year} {1994})}\BibitemShut {NoStop}%
\bibitem [{Note3()}]{Note3}%
  \BibitemOpen
  \bibinfo {note} {Several results of this subsection are based on \cite {diaz2021path}, where further mathematical details have been discussed}\BibitemShut {NoStop}%
\bibitem [{Note4()}]{Note4}%
  \BibitemOpen
  \bibinfo {note} {In many field theoretical applications at zero temperature a PI suffices as one is interested in vacuum correlation functions, which can be obtained by the quotient of two Feynman PIs. At finite temperature and out of equilibrium standard PI techniques are no longer feasible \cite {kamenev2005course}.}\BibitemShut {Stop}%
\bibitem [{\citenamefont {Brif}\ and\ \citenamefont {Mann}(1999)}]{brif1999phase}%
  \BibitemOpen
  \bibfield  {author} {\bibinfo {author} {\bibfnamefont {C.}~\bibnamefont {Brif}}\ and\ \bibinfo {author} {\bibfnamefont {A.}~\bibnamefont {Mann}},\ }\bibfield  {title} {\bibinfo {title} {Phase-space formulation of quantum mechanics and quantum-state reconstruction for physical systems with lie-group symmetries},\ }\href {https://doi.org/10.1103/PhysRevA.59.971} {\bibfield  {journal} {\bibinfo  {journal} {Physical Review A}\ }\textbf {\bibinfo {volume} {59}},\ \bibinfo {pages} {971} (\bibinfo {year} {1999})}\BibitemShut {NoStop}%
\bibitem [{\citenamefont {Lie}\ and\ \citenamefont {Fullwood}(2025)}]{lie2025multipartite}%
  \BibitemOpen
  \bibfield  {author} {\bibinfo {author} {\bibfnamefont {S.~H.}\ \bibnamefont {Lie}}\ and\ \bibinfo {author} {\bibfnamefont {J.}~\bibnamefont {Fullwood}},\ }\bibfield  {title} {\bibinfo {title} {Multipartite quantum states over time from two fundamental assumptions},\ }\href {https://doi.org/10.1103/lbf3-snp8} {\bibfield  {journal} {\bibinfo  {journal} {Physical Review Letters}\ }\textbf {\bibinfo {volume} {135}},\ \bibinfo {pages} {230204} (\bibinfo {year} {2025})}\BibitemShut {NoStop}%
\bibitem [{\citenamefont {Marletto}\ \emph {et~al.}(2020)\citenamefont {Marletto}, \citenamefont {Vedral}, \citenamefont {Virz{\`\i}}, \citenamefont {Rebufello}, \citenamefont {Avella}, \citenamefont {Piacentini}, \citenamefont {Gramegna}, \citenamefont {Degiovanni},\ and\ \citenamefont {Genovese}}]{marletto2020non}%
  \BibitemOpen
  \bibfield  {author} {\bibinfo {author} {\bibfnamefont {C.}~\bibnamefont {Marletto}}, \bibinfo {author} {\bibfnamefont {V.}~\bibnamefont {Vedral}}, \bibinfo {author} {\bibfnamefont {S.}~\bibnamefont {Virz{\`\i}}}, \bibinfo {author} {\bibfnamefont {E.}~\bibnamefont {Rebufello}}, \bibinfo {author} {\bibfnamefont {A.}~\bibnamefont {Avella}}, \bibinfo {author} {\bibfnamefont {F.}~\bibnamefont {Piacentini}}, \bibinfo {author} {\bibfnamefont {M.}~\bibnamefont {Gramegna}}, \bibinfo {author} {\bibfnamefont {I.~P.}\ \bibnamefont {Degiovanni}},\ and\ \bibinfo {author} {\bibfnamefont {M.}~\bibnamefont {Genovese}},\ }\bibfield  {title} {\bibinfo {title} {Non-monogamy of spatio-temporal correlations and the black hole information loss paradox},\ }\href {https://doi.org/https://doi.org/10.3390/e22020228} {\bibfield  {journal} {\bibinfo  {journal} {Entropy}\ }\textbf {\bibinfo {volume} {22}},\ \bibinfo {pages} {228} (\bibinfo {year} {2020})}\BibitemShut {NoStop}%
\bibitem [{\citenamefont {Song}\ \emph {et~al.}(2024)\citenamefont {Song}, \citenamefont {Narasimhachar}, \citenamefont {Regula}, \citenamefont {Elliott},\ and\ \citenamefont {Gu}}]{song2024causal}%
  \BibitemOpen
  \bibfield  {author} {\bibinfo {author} {\bibfnamefont {M.}~\bibnamefont {Song}}, \bibinfo {author} {\bibfnamefont {V.}~\bibnamefont {Narasimhachar}}, \bibinfo {author} {\bibfnamefont {B.}~\bibnamefont {Regula}}, \bibinfo {author} {\bibfnamefont {T.~J.}\ \bibnamefont {Elliott}},\ and\ \bibinfo {author} {\bibfnamefont {M.}~\bibnamefont {Gu}},\ }\bibfield  {title} {\bibinfo {title} {Causal classification of spatiotemporal quantum correlations},\ }\href {https://doi.org/https://doi.org/10.1103/PhysRevLett.133.110202} {\bibfield  {journal} {\bibinfo  {journal} {Physical Review Letters}\ }\textbf {\bibinfo {volume} {133}},\ \bibinfo {pages} {110202} (\bibinfo {year} {2024})}\BibitemShut {NoStop}%
\bibitem [{\citenamefont {Fullwood}(2025)}]{fullwood2025quantum}%
  \BibitemOpen
  \bibfield  {author} {\bibinfo {author} {\bibfnamefont {J.}~\bibnamefont {Fullwood}},\ }\bibfield  {title} {\bibinfo {title} {Quantum dynamics as a pseudo-density matrix},\ }\href {https://doi.org/https://doi.org/10.22331/q-2025-04-24-1719} {\bibfield  {journal} {\bibinfo  {journal} {Quantum}\ }\textbf {\bibinfo {volume} {9}},\ \bibinfo {pages} {1719} (\bibinfo {year} {2025})}\BibitemShut {NoStop}%
\bibitem [{\citenamefont {Zhang}\ \emph {et~al.}(2020)\citenamefont {Zhang}, \citenamefont {Dahlsten},\ and\ \citenamefont {Vedral}}]{zhang2020different}%
  \BibitemOpen
  \bibfield  {author} {\bibinfo {author} {\bibfnamefont {T.}~\bibnamefont {Zhang}}, \bibinfo {author} {\bibfnamefont {O.}~\bibnamefont {Dahlsten}},\ and\ \bibinfo {author} {\bibfnamefont {V.}~\bibnamefont {Vedral}},\ }\bibfield  {title} {\bibinfo {title} {Different instances of time as different quantum modes: quantum states across space-time for continuous variables},\ }\href {https://doi.org/10.1088/1367-2630/ab6b9f} {\bibfield  {journal} {\bibinfo  {journal} {New Journal of Physics}\ }\textbf {\bibinfo {volume} {22}},\ \bibinfo {pages} {023029} (\bibinfo {year} {2020})}\BibitemShut {NoStop}%
\bibitem [{Note5()}]{Note5}%
  \BibitemOpen
  \bibinfo {note} {We consider that between each measurements a fixed amount of time $\epsilon $ lapsed. We also mostly focus on unitary evolution as our discussion can be extended to quantum channels by simply extending the system.}\BibitemShut {Stop}%
\bibitem [{\citenamefont {Liu}\ \emph {et~al.}(2025)\citenamefont {Liu}, \citenamefont {Qiu}, \citenamefont {Dahlsten},\ and\ \citenamefont {Vedral}}]{liu2025quantum}%
  \BibitemOpen
  \bibfield  {author} {\bibinfo {author} {\bibfnamefont {X.}~\bibnamefont {Liu}}, \bibinfo {author} {\bibfnamefont {Y.}~\bibnamefont {Qiu}}, \bibinfo {author} {\bibfnamefont {O.}~\bibnamefont {Dahlsten}},\ and\ \bibinfo {author} {\bibfnamefont {V.}~\bibnamefont {Vedral}},\ }\bibfield  {title} {\bibinfo {title} {Quantum causal inference with extremely light touch},\ }\href {https://doi.org/https://doi.org/10.1038/s41534-024-00956-0} {\bibfield  {journal} {\bibinfo  {journal} {npj Quantum Information}\ }\textbf {\bibinfo {volume} {11}},\ \bibinfo {pages} {1} (\bibinfo {year} {2025})}\BibitemShut {NoStop}%
\bibitem [{Note6()}]{Note6}%
  \BibitemOpen
  \bibinfo {note} {In this sense, that Markovian QSOT extension may be naturally identified with the multipartite PDM. To our knowledge, this connection was not made explicit in \cite {lie2025multipartite}.}\BibitemShut {Stop}%
\bibitem [{\citenamefont {Rudolph}(2003)}]{rudolph2003some}%
  \BibitemOpen
  \bibfield  {author} {\bibinfo {author} {\bibfnamefont {O.}~\bibnamefont {Rudolph}},\ }\bibfield  {title} {\bibinfo {title} {Some properties of the computable cross-norm criterion for separability},\ }\href {https://doi.org/https://doi.org/10.1103/PhysRevA.67.032312} {\bibfield  {journal} {\bibinfo  {journal} {Physical Review A}\ }\textbf {\bibinfo {volume} {67}},\ \bibinfo {pages} {032312} (\bibinfo {year} {2003})}\BibitemShut {NoStop}%
\bibitem [{\citenamefont {Chen}\ and\ \citenamefont {Wu}(2003)}]{chen2002matrix}%
  \BibitemOpen
  \bibfield  {author} {\bibinfo {author} {\bibfnamefont {K.}~\bibnamefont {Chen}}\ and\ \bibinfo {author} {\bibfnamefont {L.-A.}\ \bibnamefont {Wu}},\ }\bibfield  {title} {\bibinfo {title} {A matrix realignment method for recognizing entanglement},\ }\href@noop {} {\bibfield  {journal} {\bibinfo  {journal} {Quantum Information and Computation}\ }\textbf {\bibinfo {volume} {3}},\ \bibinfo {pages} {193} (\bibinfo {year} {2003})}\BibitemShut {NoStop}%
\bibitem [{\citenamefont {Peres}(1996)}]{peres1996separability}%
  \BibitemOpen
  \bibfield  {author} {\bibinfo {author} {\bibfnamefont {A.}~\bibnamefont {Peres}},\ }\bibfield  {title} {\bibinfo {title} {Separability criterion for density matrices},\ }\href {https://doi.org/10.1103/PhysRevLett.77.1413} {\bibfield  {journal} {\bibinfo  {journal} {Physical Review Letters}\ }\textbf {\bibinfo {volume} {77}},\ \bibinfo {pages} {1413} (\bibinfo {year} {1996})}\BibitemShut {NoStop}%
\bibitem [{\citenamefont {Horodecki}\ \emph {et~al.}(2001)\citenamefont {Horodecki}, \citenamefont {Horodecki},\ and\ \citenamefont {Horodecki}}]{horodecki2001separability}%
  \BibitemOpen
  \bibfield  {author} {\bibinfo {author} {\bibfnamefont {M.}~\bibnamefont {Horodecki}}, \bibinfo {author} {\bibfnamefont {P.}~\bibnamefont {Horodecki}},\ and\ \bibinfo {author} {\bibfnamefont {R.}~\bibnamefont {Horodecki}},\ }\bibfield  {title} {\bibinfo {title} {Separability of n-particle mixed states: necessary and sufficient conditions in terms of linear maps},\ }\href {https://doi.org/10.1016/S0375-9601(01)00142-6} {\bibfield  {journal} {\bibinfo  {journal} {Physics Letters A}\ }\textbf {\bibinfo {volume} {283}},\ \bibinfo {pages} {1} (\bibinfo {year} {2001})}\BibitemShut {NoStop}%
\bibitem [{\citenamefont {DeWitt}(1967)}]{dewitt1967quantum}%
  \BibitemOpen
  \bibfield  {author} {\bibinfo {author} {\bibfnamefont {B.~S.}\ \bibnamefont {DeWitt}},\ }\bibfield  {title} {\bibinfo {title} {Quantum theory of gravity. i. the canonical theory},\ }\href {https://doi.org/10.1103/PhysRev.160.1113} {\bibfield  {journal} {\bibinfo  {journal} {Physical Review}\ }\textbf {\bibinfo {volume} {160}},\ \bibinfo {pages} {1113} (\bibinfo {year} {1967})}\BibitemShut {NoStop}%
\bibitem [{\citenamefont {Unruh}\ and\ \citenamefont {Wald}(1989)}]{unruh1989time}%
  \BibitemOpen
  \bibfield  {author} {\bibinfo {author} {\bibfnamefont {W.~G.}\ \bibnamefont {Unruh}}\ and\ \bibinfo {author} {\bibfnamefont {R.~M.}\ \bibnamefont {Wald}},\ }\bibfield  {title} {\bibinfo {title} {Time and the interpretation of canonical quantum gravity},\ }\href {https://doi.org/https://doi.org/10.1103/PhysRevD.40.2598} {\bibfield  {journal} {\bibinfo  {journal} {Physical Review D}\ }\textbf {\bibinfo {volume} {40}},\ \bibinfo {pages} {2598} (\bibinfo {year} {1989})}\BibitemShut {NoStop}%
\bibitem [{\citenamefont {Kucha{\v{r}}}(2011)}]{kuchavr2011time}%
  \BibitemOpen
  \bibfield  {author} {\bibinfo {author} {\bibfnamefont {K.~V.}\ \bibnamefont {Kucha{\v{r}}}},\ }\bibfield  {title} {\bibinfo {title} {Time and interpretations of quantum gravity},\ }\href {https://doi.org/10.1142/S0218271811019347} {\bibfield  {journal} {\bibinfo  {journal} {International Journal of Modern Physics D}\ }\textbf {\bibinfo {volume} {20}},\ \bibinfo {pages} {3} (\bibinfo {year} {2011})}\BibitemShut {NoStop}%
\bibitem [{\citenamefont {Boette}\ and\ \citenamefont {Rossignoli}(2018)}]{boette2018history}%
  \BibitemOpen
  \bibfield  {author} {\bibinfo {author} {\bibfnamefont {A.}~\bibnamefont {Boette}}\ and\ \bibinfo {author} {\bibfnamefont {R.}~\bibnamefont {Rossignoli}},\ }\bibfield  {title} {\bibinfo {title} {History states of systems and operators},\ }\href {https://doi.org/10.1103/PhysRevA.98.032108} {\bibfield  {journal} {\bibinfo  {journal} {Physical Review A}\ }\textbf {\bibinfo {volume} {98}},\ \bibinfo {pages} {032108} (\bibinfo {year} {2018})}\BibitemShut {NoStop}%
\bibitem [{\citenamefont {Paiva}\ \emph {et~al.}(2022)\citenamefont {Paiva}, \citenamefont {Te’eni}, \citenamefont {Peled}, \citenamefont {Cohen},\ and\ \citenamefont {Aharonov}}]{paiva2022non}%
  \BibitemOpen
  \bibfield  {author} {\bibinfo {author} {\bibfnamefont {I.~L.}\ \bibnamefont {Paiva}}, \bibinfo {author} {\bibfnamefont {A.}~\bibnamefont {Te’eni}}, \bibinfo {author} {\bibfnamefont {B.~Y.}\ \bibnamefont {Peled}}, \bibinfo {author} {\bibfnamefont {E.}~\bibnamefont {Cohen}},\ and\ \bibinfo {author} {\bibfnamefont {Y.}~\bibnamefont {Aharonov}},\ }\bibfield  {title} {\bibinfo {title} {Non-inertial quantum clock frames lead to non-hermitian dynamics},\ }\href {https://doi.org/10.1038/s42005-022-01081-0} {\bibfield  {journal} {\bibinfo  {journal} {Communications Physics}\ }\textbf {\bibinfo {volume} {5}},\ \bibinfo {pages} {298} (\bibinfo {year} {2022})}\BibitemShut {NoStop}%
\bibitem [{\citenamefont {Diaz}\ \emph {et~al.}(2025{\natexlab{b}})\citenamefont {Diaz}, \citenamefont {Braccia}, \citenamefont {Larocca}, \citenamefont {Matera}, \citenamefont {Rossignoli},\ and\ \citenamefont {Cerezo}}]{diaz2023parallel}%
  \BibitemOpen
  \bibfield  {author} {\bibinfo {author} {\bibfnamefont {N.~L.}\ \bibnamefont {Diaz}}, \bibinfo {author} {\bibfnamefont {P.}~\bibnamefont {Braccia}}, \bibinfo {author} {\bibfnamefont {M.}~\bibnamefont {Larocca}}, \bibinfo {author} {\bibfnamefont {J.~M.}\ \bibnamefont {Matera}}, \bibinfo {author} {\bibfnamefont {R.}~\bibnamefont {Rossignoli}},\ and\ \bibinfo {author} {\bibfnamefont {M.}~\bibnamefont {Cerezo}},\ }\bibfield  {title} {\bibinfo {title} {Parallel-in-time quantum simulation via Page and Wootters quantum time},\ }\href {https://doi.org/10.1103/wpnf-4nnn} {\bibfield  {journal} {\bibinfo  {journal} {Phys. Rev. Res.}\ }\textbf {\bibinfo {volume} {7}},\ \bibinfo {pages} {033294} (\bibinfo {year} {2025}{\natexlab{b}})}\BibitemShut {NoStop}%
\bibitem [{\citenamefont {Cafasso}\ \emph {et~al.}(2024)\citenamefont {Cafasso}, \citenamefont {Pranzini}, \citenamefont {Malo}, \citenamefont {Giovannetti},\ and\ \citenamefont {Chiofalo}}]{cafasso2024quantum}%
  \BibitemOpen
  \bibfield  {author} {\bibinfo {author} {\bibfnamefont {D.}~\bibnamefont {Cafasso}}, \bibinfo {author} {\bibfnamefont {N.}~\bibnamefont {Pranzini}}, \bibinfo {author} {\bibfnamefont {J.~Y.}\ \bibnamefont {Malo}}, \bibinfo {author} {\bibfnamefont {V.}~\bibnamefont {Giovannetti}},\ and\ \bibinfo {author} {\bibfnamefont {M.}~\bibnamefont {Chiofalo}},\ }\bibfield  {title} {\bibinfo {title} {Quantum time and the time-dilation induced interaction transfer mechanism},\ }\href {https://doi.org/https://doi.org/10.1103/PhysRevD.110.106014} {\bibfield  {journal} {\bibinfo  {journal} {Physical Review D}\ }\textbf {\bibinfo {volume} {110}},\ \bibinfo {pages} {106014} (\bibinfo {year} {2024})}\BibitemShut {NoStop}%
\bibitem [{\citenamefont {Coppo}\ \emph {et~al.}(2026)\citenamefont {Coppo}, \citenamefont {Pranzini},\ and\ \citenamefont {Verrucchi}}]{coppo2026quantum}%
  \BibitemOpen
  \bibfield  {author} {\bibinfo {author} {\bibfnamefont {A.}~\bibnamefont {Coppo}}, \bibinfo {author} {\bibfnamefont {N.}~\bibnamefont {Pranzini}},\ and\ \bibinfo {author} {\bibfnamefont {P.}~\bibnamefont {Verrucchi}},\ }\bibfield  {title} {\bibinfo {title} {Quantum model for black holes and clocks},\ }\bibfield  {journal} {\bibinfo  {journal} {arXiv preprint arXiv:2601.07437}\ }\href {https://doi.org/10.48550/arXiv.2601.07437} {10.48550/arXiv.2601.07437} (\bibinfo {year} {2026})\BibitemShut {NoStop}%
\bibitem [{\citenamefont {Diaz}\ \emph {et~al.}(2019)\citenamefont {Diaz}, \citenamefont {Matera},\ and\ \citenamefont {Rossignoli}}]{diaz2019history}%
  \BibitemOpen
  \bibfield  {author} {\bibinfo {author} {\bibfnamefont {N.~L.}\ \bibnamefont {Diaz}}, \bibinfo {author} {\bibfnamefont {J.~M.}\ \bibnamefont {Matera}},\ and\ \bibinfo {author} {\bibfnamefont {R.}~\bibnamefont {Rossignoli}},\ }\bibfield  {title} {\bibinfo {title} {History state formalism for scalar particles},\ }\href {https://doi.org/10.1103/PhysRevD.100.125020} {\bibfield  {journal} {\bibinfo  {journal} {Physical Review D}\ }\textbf {\bibinfo {volume} {100}},\ \bibinfo {pages} {125020} (\bibinfo {year} {2019})}\BibitemShut {NoStop}%
\bibitem [{\citenamefont {Diaz}\ and\ \citenamefont {Rossignoli}(2019)}]{diaz2019historystate}%
  \BibitemOpen
  \bibfield  {author} {\bibinfo {author} {\bibfnamefont {N.~L.}\ \bibnamefont {Diaz}}\ and\ \bibinfo {author} {\bibfnamefont {R.}~\bibnamefont {Rossignoli}},\ }\bibfield  {title} {\bibinfo {title} {History state formalism for Dirac’s theory},\ }\href {https://doi.org/10.1103/PhysRevD.99.045008} {\bibfield  {journal} {\bibinfo  {journal} {Physical Review D}\ }\textbf {\bibinfo {volume} {99}},\ \bibinfo {pages} {045008} (\bibinfo {year} {2019})}\BibitemShut {NoStop}%
\bibitem [{\citenamefont {Schwabl}(2008)}]{schwabl2008advanced}%
  \BibitemOpen
  \bibfield  {author} {\bibinfo {author} {\bibfnamefont {F.}~\bibnamefont {Schwabl}},\ }\href@noop {} {\emph {\bibinfo {title} {Advanced quantum mechanics}}}\ (\bibinfo  {publisher} {Springer},\ \bibinfo {year} {2008})\BibitemShut {NoStop}%
\bibitem [{Note7()}]{Note7}%
  \BibitemOpen
  \bibinfo {note} {One could attempt to introduce e.g., two-particle interaction within the PW scheme. Nonetheless, since usually a single time is considered, in practice this is never done. Instead, introducing a ``clock'' for each particle give rise to a complete new set of challenges as one has to deal with a multi-time scheme. None of the difficulties arise in the second quantized approach.}\BibitemShut {Stop}%
\bibitem [{\citenamefont {Coleman}(1963)}]{coleman1963structure}%
  \BibitemOpen
  \bibfield  {author} {\bibinfo {author} {\bibfnamefont {A.~J.}\ \bibnamefont {Coleman}},\ }\bibfield  {title} {\bibinfo {title} {Structure of fermion density matrices},\ }\href {https://doi.org/10.1103/RevModPhys.35.668} {\bibfield  {journal} {\bibinfo  {journal} {Reviews of modern Physics}\ }\textbf {\bibinfo {volume} {35}},\ \bibinfo {pages} {668} (\bibinfo {year} {1963})}\BibitemShut {NoStop}%
\bibitem [{\citenamefont {Mazziotti}(2012)}]{mazziotti2012structure}%
  \BibitemOpen
  \bibfield  {author} {\bibinfo {author} {\bibfnamefont {D.~A.}\ \bibnamefont {Mazziotti}},\ }\bibfield  {title} {\bibinfo {title} {Structure of fermionic density matrices: Complete n-representability conditions},\ }\href {https://doi.org/10.1103/PhysRevB.78.201103} {\bibfield  {journal} {\bibinfo  {journal} {Physical Review Letters}\ }\textbf {\bibinfo {volume} {108}},\ \bibinfo {pages} {263002} (\bibinfo {year} {2012})}\BibitemShut {NoStop}%
\bibitem [{Note8()}]{Note8}%
  \BibitemOpen
  \bibinfo {note} {We refer the reader to \cite {diaz2023spacetime,diaz2021path} for subtleties in the continuum time limit.}\BibitemShut {Stop}%
\bibitem [{\citenamefont {Guo}\ \emph {et~al.}(2025)\citenamefont {Guo}, \citenamefont {He},\ and\ \citenamefont {Liu}}]{guo2025entanglement}%
  \BibitemOpen
  \bibfield  {author} {\bibinfo {author} {\bibfnamefont {W.-z.}\ \bibnamefont {Guo}}, \bibinfo {author} {\bibfnamefont {S.}~\bibnamefont {He}},\ and\ \bibinfo {author} {\bibfnamefont {T.}~\bibnamefont {Liu}},\ }\bibfield  {title} {\bibinfo {title} {Entanglement of general subregions in time-dependent states},\ }\bibfield  {journal} {\bibinfo  {journal} {arXiv preprint arXiv:2512.19955}\ }\href {https://doi.org/10.48550/arXiv.2512.19955} {10.48550/arXiv.2512.19955} (\bibinfo {year} {2025})\BibitemShut {NoStop}%
\bibitem [{\citenamefont {Chu}\ and\ \citenamefont {Parihar}(2023)}]{chu2023time}%
  \BibitemOpen
  \bibfield  {author} {\bibinfo {author} {\bibfnamefont {C.-S.}\ \bibnamefont {Chu}}\ and\ \bibinfo {author} {\bibfnamefont {H.}~\bibnamefont {Parihar}},\ }\bibfield  {title} {\bibinfo {title} {Time-like entanglement entropy in ads/bcft},\ }\href {https://arxiv.org/abs/2304.10907} {\bibfield  {journal} {\bibinfo  {journal} {arXiv preprint arXiv:2304.10907}\ } (\bibinfo {year} {2023})}\BibitemShut {NoStop}%
\bibitem [{\citenamefont {Narayan}\ and\ \citenamefont {Saini}(2023)}]{narayan2023notes}%
  \BibitemOpen
  \bibfield  {author} {\bibinfo {author} {\bibfnamefont {K.}~\bibnamefont {Narayan}}\ and\ \bibinfo {author} {\bibfnamefont {H.~K.}\ \bibnamefont {Saini}},\ }\bibfield  {title} {\bibinfo {title} {Notes on time entanglement and pseudo-entropy},\ }\href {https://arxiv.org/abs/2303.01307} {\bibfield  {journal} {\bibinfo  {journal} {arXiv preprint arXiv:2303.01307}\ } (\bibinfo {year} {2023})}\BibitemShut {NoStop}%
\bibitem [{\citenamefont {Nunez}\ and\ \citenamefont {Roychowdhury}(2025)}]{nunez2025timelike}%
  \BibitemOpen
  \bibfield  {author} {\bibinfo {author} {\bibfnamefont {C.}~\bibnamefont {Nunez}}\ and\ \bibinfo {author} {\bibfnamefont {D.}~\bibnamefont {Roychowdhury}},\ }\bibfield  {title} {\bibinfo {title} {Timelike entanglement entropy: A top-down approach},\ }\href {https://doi.org/10.1103/vjyt-xc15} {\bibfield  {journal} {\bibinfo  {journal} {Physical Review D}\ }\textbf {\bibinfo {volume} {112}},\ \bibinfo {pages} {026030} (\bibinfo {year} {2025})}\BibitemShut {NoStop}%
\bibitem [{\citenamefont {Jiang}\ \emph {et~al.}(2026)\citenamefont {Jiang}, \citenamefont {Wu},\ and\ \citenamefont {Yang}}]{jiang2025timelike}%
  \BibitemOpen
  \bibfield  {author} {\bibinfo {author} {\bibfnamefont {X.}~\bibnamefont {Jiang}}, \bibinfo {author} {\bibfnamefont {H.}~\bibnamefont {Wu}},\ and\ \bibinfo {author} {\bibfnamefont {H.}~\bibnamefont {Yang}},\ }\bibfield  {title} {\bibinfo {title} {Timelike entanglement entropy revisited},\ }\href {https://doi.org/10.1103/rl9b-373v} {\bibfield  {journal} {\bibinfo  {journal} {Phys. Rev. D}\ }\textbf {\bibinfo {volume} {113}},\ \bibinfo {pages} {106021} (\bibinfo {year} {2026})}\BibitemShut {NoStop}%
\bibitem [{\citenamefont {Emary}\ \emph {et~al.}(2014)\citenamefont {Emary}, \citenamefont {Lambert},\ and\ \citenamefont {Nori}}]{emary2014leggett}%
  \BibitemOpen
  \bibfield  {author} {\bibinfo {author} {\bibfnamefont {C.}~\bibnamefont {Emary}}, \bibinfo {author} {\bibfnamefont {N.}~\bibnamefont {Lambert}},\ and\ \bibinfo {author} {\bibfnamefont {F.}~\bibnamefont {Nori}},\ }\bibfield  {title} {\bibinfo {title} {Leggett--garg inequalities},\ }\href {https://doi.org/10.1088/0034-4885/77/3/039501} {\bibfield  {journal} {\bibinfo  {journal} {Reports on Progress in Physics}\ }\textbf {\bibinfo {volume} {77}},\ \bibinfo {pages} {016001} (\bibinfo {year} {2014})}\BibitemShut {NoStop}%
\bibitem [{Note9()}]{Note9}%
  \BibitemOpen
  \bibinfo {note} {Arbitrary evolutions can be considered by adding a simple adjoint action on $\protect \mathcal {R}$; equivalently one can take arbitrary $N$ and partial traces}\BibitemShut {NoStop}%
\bibitem [{\citenamefont {Roberts}\ and\ \citenamefont {Yoshida}(2017)}]{roberts2017chaos}%
  \BibitemOpen
  \bibfield  {author} {\bibinfo {author} {\bibfnamefont {D.~A.}\ \bibnamefont {Roberts}}\ and\ \bibinfo {author} {\bibfnamefont {B.}~\bibnamefont {Yoshida}},\ }\bibfield  {title} {\bibinfo {title} {Chaos and complexity by design},\ }\href {https://doi.org/10.1007/JHEP04(2017)121} {\bibfield  {journal} {\bibinfo  {journal} {Journal of High Energy Physics}\ }\textbf {\bibinfo {volume} {2017}},\ \bibinfo {pages} {121} (\bibinfo {year} {2017})}\BibitemShut {NoStop}%
\bibitem [{\citenamefont {Xu}\ and\ \citenamefont {Swingle}(2020)}]{xu2020accessing}%
  \BibitemOpen
  \bibfield  {author} {\bibinfo {author} {\bibfnamefont {S.}~\bibnamefont {Xu}}\ and\ \bibinfo {author} {\bibfnamefont {B.}~\bibnamefont {Swingle}},\ }\bibfield  {title} {\bibinfo {title} {Accessing scrambling using matrix product operators},\ }\href {https://doi.org/10.1038/s41567-019-0712-4} {\bibfield  {journal} {\bibinfo  {journal} {Nature Physics}\ }\textbf {\bibinfo {volume} {16}},\ \bibinfo {pages} {199} (\bibinfo {year} {2020})}\BibitemShut {NoStop}%
\bibitem [{\citenamefont {Maldacena}\ \emph {et~al.}(2016)\citenamefont {Maldacena}, \citenamefont {Shenker},\ and\ \citenamefont {Stanford}}]{maldacena2016bound}%
  \BibitemOpen
  \bibfield  {author} {\bibinfo {author} {\bibfnamefont {J.}~\bibnamefont {Maldacena}}, \bibinfo {author} {\bibfnamefont {S.~H.}\ \bibnamefont {Shenker}},\ and\ \bibinfo {author} {\bibfnamefont {D.}~\bibnamefont {Stanford}},\ }\bibfield  {title} {\bibinfo {title} {A bound on chaos},\ }\href@noop {} {\bibfield  {journal} {\bibinfo  {journal} {Journal of High Energy Physics}\ }\textbf {\bibinfo {volume} {2016}},\ \bibinfo {pages} {1} (\bibinfo {year} {2016})}\BibitemShut {NoStop}%
\bibitem [{\citenamefont {Lin}\ and\ \citenamefont {Motrunich}(2018)}]{lin2018out}%
  \BibitemOpen
  \bibfield  {author} {\bibinfo {author} {\bibfnamefont {C.-J.}\ \bibnamefont {Lin}}\ and\ \bibinfo {author} {\bibfnamefont {O.~I.}\ \bibnamefont {Motrunich}},\ }\bibfield  {title} {\bibinfo {title} {Out-of-time-ordered correlators in a quantum {I}sing chain},\ }\href {https://doi.org/10.1103/PhysRevB.97.144304} {\bibfield  {journal} {\bibinfo  {journal} {Physical Review B}\ }\textbf {\bibinfo {volume} {97}},\ \bibinfo {pages} {144304} (\bibinfo {year} {2018})}\BibitemShut {NoStop}%
\bibitem [{\citenamefont {Banuls}\ \emph {et~al.}(2009)\citenamefont {Banuls}, \citenamefont {Hastings}, \citenamefont {Verstraete},\ and\ \citenamefont {Cirac}}]{banuls2009matrix}%
  \BibitemOpen
  \bibfield  {author} {\bibinfo {author} {\bibfnamefont {M.-C.}\ \bibnamefont {Banuls}}, \bibinfo {author} {\bibfnamefont {M.~B.}\ \bibnamefont {Hastings}}, \bibinfo {author} {\bibfnamefont {F.}~\bibnamefont {Verstraete}},\ and\ \bibinfo {author} {\bibfnamefont {J.~I.}\ \bibnamefont {Cirac}},\ }\bibfield  {title} {\bibinfo {title} {Matrix product states for dynamical simulation of infinite chains},\ }\href {https://doi.org/doi.org/10.1103/PhysRevLett.102.240603} {\bibfield  {journal} {\bibinfo  {journal} {Physical review letters}\ }\textbf {\bibinfo {volume} {102}},\ \bibinfo {pages} {240603} (\bibinfo {year} {2009})}\BibitemShut {NoStop}%
\bibitem [{\citenamefont {Lerose}\ \emph {et~al.}(2021)\citenamefont {Lerose}, \citenamefont {Sonner},\ and\ \citenamefont {Abanin}}]{lerose2021scaling}%
  \BibitemOpen
  \bibfield  {author} {\bibinfo {author} {\bibfnamefont {A.}~\bibnamefont {Lerose}}, \bibinfo {author} {\bibfnamefont {M.}~\bibnamefont {Sonner}},\ and\ \bibinfo {author} {\bibfnamefont {D.~A.}\ \bibnamefont {Abanin}},\ }\bibfield  {title} {\bibinfo {title} {Scaling of temporal entanglement in proximity to integrability},\ }\href {https://doi.org/doi.org/10.1103/PhysRevB.104.035137} {\bibfield  {journal} {\bibinfo  {journal} {Physical Review B}\ }\textbf {\bibinfo {volume} {104}},\ \bibinfo {pages} {035137} (\bibinfo {year} {2021})}\BibitemShut {NoStop}%
\bibitem [{\citenamefont {Ye}\ and\ \citenamefont {Chan}(2021)}]{ye2021constructing}%
  \BibitemOpen
  \bibfield  {author} {\bibinfo {author} {\bibfnamefont {E.}~\bibnamefont {Ye}}\ and\ \bibinfo {author} {\bibfnamefont {G.~K.}\ \bibnamefont {Chan}},\ }\bibfield  {title} {\bibinfo {title} {Constructing tensor network influence functionals for general quantum dynamics},\ }\bibfield  {journal} {\bibinfo  {journal} {The Journal of Chemical Physics}\ }\textbf {\bibinfo {volume} {155}},\ \href {https://doi.org/10.1063/5.0047260} {10.1063/5.0047260} (\bibinfo {year} {2021})\BibitemShut {NoStop}%
\bibitem [{\citenamefont {Park}\ \emph {et~al.}(2025)\citenamefont {Park}, \citenamefont {Gray},\ and\ \citenamefont {Chan}}]{park2025simulating}%
  \BibitemOpen
  \bibfield  {author} {\bibinfo {author} {\bibfnamefont {G.}~\bibnamefont {Park}}, \bibinfo {author} {\bibfnamefont {J.}~\bibnamefont {Gray}},\ and\ \bibinfo {author} {\bibfnamefont {G.~K.-L.}\ \bibnamefont {Chan}},\ }\bibfield  {title} {\bibinfo {title} {Simulating quantum dynamics in two-dimensional lattices with tensor network influence functional belief propagation},\ }\href {https://doi.org/10.1103/7jzt-xhn6} {\bibfield  {journal} {\bibinfo  {journal} {Physical Review B}\ }\textbf {\bibinfo {volume} {112}},\ \bibinfo {pages} {174310} (\bibinfo {year} {2025})}\BibitemShut {NoStop}%
\bibitem [{\citenamefont {Lubasch}\ \emph {et~al.}(2014)\citenamefont {Lubasch}, \citenamefont {Cirac},\ and\ \citenamefont {Banuls}}]{lubasch2014unifying}%
  \BibitemOpen
  \bibfield  {author} {\bibinfo {author} {\bibfnamefont {M.}~\bibnamefont {Lubasch}}, \bibinfo {author} {\bibfnamefont {J.~I.}\ \bibnamefont {Cirac}},\ and\ \bibinfo {author} {\bibfnamefont {M.-C.}\ \bibnamefont {Banuls}},\ }\bibfield  {title} {\bibinfo {title} {Unifying projected entangled pair state contractions},\ }\href {https://doi.org/10.1088/1367-2630/16/3/033014} {\bibfield  {journal} {\bibinfo  {journal} {New Journal of Physics}\ }\textbf {\bibinfo {volume} {16}},\ \bibinfo {pages} {033014} (\bibinfo {year} {2014})}\BibitemShut {NoStop}%
\bibitem [{Note10()}]{Note10}%
  \BibitemOpen
  \bibinfo {note} {We use the Cauchy-Schwarz inequality, but one may also use the more general H{\"{o}}lder inequality and Schatten norms.}\BibitemShut {Stop}%
\bibitem [{\citenamefont {Gigena}\ and\ \citenamefont {Rossignoli}(2016)}]{gigena2016one}%
  \BibitemOpen
  \bibfield  {author} {\bibinfo {author} {\bibfnamefont {N.}~\bibnamefont {Gigena}}\ and\ \bibinfo {author} {\bibfnamefont {R.}~\bibnamefont {Rossignoli}},\ }\bibfield  {title} {\bibinfo {title} {One-body information loss in fermion systems},\ }\href {https://doi.org/PhysRevA.94.042315} {\bibfield  {journal} {\bibinfo  {journal} {Physical Review A}\ }\textbf {\bibinfo {volume} {94}},\ \bibinfo {pages} {042315} (\bibinfo {year} {2016})}\BibitemShut {NoStop}%
\bibitem [{\citenamefont {Vidal}\ \emph {et~al.}(2021)\citenamefont {Vidal}, \citenamefont {Bera}, \citenamefont {Riera}, \citenamefont {Lewenstein},\ and\ \citenamefont {Bera}}]{vidal2021quantum}%
  \BibitemOpen
  \bibfield  {author} {\bibinfo {author} {\bibfnamefont {N.~T.}\ \bibnamefont {Vidal}}, \bibinfo {author} {\bibfnamefont {M.~L.}\ \bibnamefont {Bera}}, \bibinfo {author} {\bibfnamefont {A.}~\bibnamefont {Riera}}, \bibinfo {author} {\bibfnamefont {M.}~\bibnamefont {Lewenstein}},\ and\ \bibinfo {author} {\bibfnamefont {M.~N.}\ \bibnamefont {Bera}},\ }\bibfield  {title} {\bibinfo {title} {Quantum operations in an information theory for fermions},\ }\href {https://doi.org/10.1103/PhysRevA.104.032411} {\bibfield  {journal} {\bibinfo  {journal} {Physical Review A}\ }\textbf {\bibinfo {volume} {104}},\ \bibinfo {pages} {032411} (\bibinfo {year} {2021})}\BibitemShut {NoStop}%
\bibitem [{Note11()}]{Note11}%
  \BibitemOpen
  \bibinfo {note} {Let us recall that while for discrete spacetime the Stone-Von Neummann theorem guarantees that the Hilbert space is well-defined by the canonical algebra (up to unitary equivalence), the continuum limit carries the usual mathematical subtleties. Nonetheless, these subtleties are the standard ones, as the algebra of Eq.\ \protect \eqref {eq:spacetimefieldalg} is isomorphic to the equal-time algebra of standard QFT with an additional dimension. As such, they may be tackled by proper regularization.}\BibitemShut {Stop}%
\bibitem [{\citenamefont {Haag}\ and\ \citenamefont {Kastler}(1964)}]{haag1964algebraic}%
  \BibitemOpen
  \bibfield  {author} {\bibinfo {author} {\bibfnamefont {R.}~\bibnamefont {Haag}}\ and\ \bibinfo {author} {\bibfnamefont {D.}~\bibnamefont {Kastler}},\ }\bibfield  {title} {\bibinfo {title} {An algebraic approach to quantum field theory},\ }\href {https://doi.org/10.1063/1.1704187} {\bibfield  {journal} {\bibinfo  {journal} {Journal of Mathematical Physics}\ }\textbf {\bibinfo {volume} {5}},\ \bibinfo {pages} {848} (\bibinfo {year} {1964})}\BibitemShut {NoStop}%
\bibitem [{\citenamefont {Aharonov}\ and\ \citenamefont {Kaufherr}(1984)}]{aharonov1984quantum}%
  \BibitemOpen
  \bibfield  {author} {\bibinfo {author} {\bibfnamefont {Y.}~\bibnamefont {Aharonov}}\ and\ \bibinfo {author} {\bibfnamefont {T.}~\bibnamefont {Kaufherr}},\ }\bibfield  {title} {\bibinfo {title} {Quantum frames of reference},\ }\href {https://doi.org/https://doi.org/10.1103/PhysRevD.30.368} {\bibfield  {journal} {\bibinfo  {journal} {Physical Review D}\ }\textbf {\bibinfo {volume} {30}},\ \bibinfo {pages} {368} (\bibinfo {year} {1984})}\BibitemShut {NoStop}%
\bibitem [{\citenamefont {Giacomini}\ \emph {et~al.}(2019{\natexlab{b}})\citenamefont {Giacomini}, \citenamefont {Castro-Ruiz},\ and\ \citenamefont {Brukner}}]{giacomini2019relativistic}%
  \BibitemOpen
  \bibfield  {author} {\bibinfo {author} {\bibfnamefont {F.}~\bibnamefont {Giacomini}}, \bibinfo {author} {\bibfnamefont {E.}~\bibnamefont {Castro-Ruiz}},\ and\ \bibinfo {author} {\bibfnamefont {{\v{C}}.}~\bibnamefont {Brukner}},\ }\bibfield  {title} {\bibinfo {title} {Relativistic quantum reference frames: the operational meaning of spin},\ }\href {https://doi.org/10.1103/PhysRevLett.123.090404} {\bibfield  {journal} {\bibinfo  {journal} {Physical review letters}\ }\textbf {\bibinfo {volume} {123}},\ \bibinfo {pages} {090404} (\bibinfo {year} {2019}{\natexlab{b}})}\BibitemShut {NoStop}%
\bibitem [{\citenamefont {Apadula}\ \emph {et~al.}(2024)\citenamefont {Apadula}, \citenamefont {Castro-Ruiz},\ and\ \citenamefont {Brukner}}]{apadula2022quantum}%
  \BibitemOpen
  \bibfield  {author} {\bibinfo {author} {\bibfnamefont {L.}~\bibnamefont {Apadula}}, \bibinfo {author} {\bibfnamefont {E.}~\bibnamefont {Castro-Ruiz}},\ and\ \bibinfo {author} {\bibfnamefont {{\v{C}}.}~\bibnamefont {Brukner}},\ }\bibfield  {title} {\bibinfo {title} {Quantum reference frames for lorentz symmetry},\ }\href {https://doi.org/10.22331/q-2024-08-14-1440} {\bibfield  {journal} {\bibinfo  {journal} {Quantum}\ }\textbf {\bibinfo {volume} {8}},\ \bibinfo {pages} {1440} (\bibinfo {year} {2024})}\BibitemShut {NoStop}%
\bibitem [{Note12()}]{Note12}%
  \BibitemOpen
  \bibinfo {note} {We note that the literature also includes resource-theoretic approaches to QRFs; see, e.g., \cite {bartlett2007reference,marvian2014modes}.}\BibitemShut {Stop}%
\bibitem [{\citenamefont {Dirac}(1950)}]{dirac1950generalized}%
  \BibitemOpen
  \bibfield  {author} {\bibinfo {author} {\bibfnamefont {P.~A.~M.}\ \bibnamefont {Dirac}},\ }\bibfield  {title} {\bibinfo {title} {Generalized hamiltonian dynamics},\ }\href {https://doi.org/10.4153/CJM-1950-012-1} {\bibfield  {journal} {\bibinfo  {journal} {Canadian journal of mathematics}\ }\textbf {\bibinfo {volume} {2}},\ \bibinfo {pages} {129} (\bibinfo {year} {1950})}\BibitemShut {NoStop}%
\bibitem [{\citenamefont {Fishman}\ \emph {et~al.}(2022)\citenamefont {Fishman}, \citenamefont {White},\ and\ \citenamefont {Stoudenmire}}]{fishman2022itensor}%
  \BibitemOpen
  \bibfield  {author} {\bibinfo {author} {\bibfnamefont {M.}~\bibnamefont {Fishman}}, \bibinfo {author} {\bibfnamefont {S.}~\bibnamefont {White}},\ and\ \bibinfo {author} {\bibfnamefont {E.}~\bibnamefont {Stoudenmire}},\ }\bibfield  {title} {\bibinfo {title} {The itensor software library for tensor network calculations},\ }\href {https://doi.org/10.21468/SciPostPhysLectNotes.54} {\bibfield  {journal} {\bibinfo  {journal} {SciPost Physics Codebases}\ ,\ \bibinfo {pages} {004}} (\bibinfo {year} {2022})}\BibitemShut {NoStop}%
\bibitem [{\citenamefont {Bezanson}\ \emph {et~al.}(2017)\citenamefont {Bezanson}, \citenamefont {Edelman}, \citenamefont {Karpinski},\ and\ \citenamefont {Shah}}]{bezanson2017julia}%
  \BibitemOpen
  \bibfield  {author} {\bibinfo {author} {\bibfnamefont {J.}~\bibnamefont {Bezanson}}, \bibinfo {author} {\bibfnamefont {A.}~\bibnamefont {Edelman}}, \bibinfo {author} {\bibfnamefont {S.}~\bibnamefont {Karpinski}},\ and\ \bibinfo {author} {\bibfnamefont {V.~B.}\ \bibnamefont {Shah}},\ }\bibfield  {title} {\bibinfo {title} {Julia: A fresh approach to numerical computing},\ }\href {https://doi.org/10.1137/141000671} {\bibfield  {journal} {\bibinfo  {journal} {SIAM {R}eview}\ }\textbf {\bibinfo {volume} {59}},\ \bibinfo {pages} {65} (\bibinfo {year} {2017})}\BibitemShut {NoStop}%
\bibitem [{\citenamefont {Wootters}(1987)}]{wootters1987wigner}%
  \BibitemOpen
  \bibfield  {author} {\bibinfo {author} {\bibfnamefont {W.~K.}\ \bibnamefont {Wootters}},\ }\bibfield  {title} {\bibinfo {title} {A wigner-function formulation of finite-state quantum mechanics},\ }\href {https://doi.org/https://doi.org/10.1016/0003-4916(87)90176-X} {\bibfield  {journal} {\bibinfo  {journal} {Annals of Physics}\ }\textbf {\bibinfo {volume} {176}},\ \bibinfo {pages} {1} (\bibinfo {year} {1987})}\BibitemShut {NoStop}%
\bibitem [{\citenamefont {Renes}\ \emph {et~al.}(2004)\citenamefont {Renes}, \citenamefont {Blume-Kohout}, \citenamefont {Scott},\ and\ \citenamefont {Caves}}]{renes2004symmetric}%
  \BibitemOpen
  \bibfield  {author} {\bibinfo {author} {\bibfnamefont {J.~M.}\ \bibnamefont {Renes}}, \bibinfo {author} {\bibfnamefont {R.}~\bibnamefont {Blume-Kohout}}, \bibinfo {author} {\bibfnamefont {A.~J.}\ \bibnamefont {Scott}},\ and\ \bibinfo {author} {\bibfnamefont {C.~M.}\ \bibnamefont {Caves}},\ }\bibfield  {title} {\bibinfo {title} {Symmetric informationally complete quantum measurements},\ }\href {https://doi.org/10.1063/1.1737053} {\bibfield  {journal} {\bibinfo  {journal} {Journal of Mathematical Physics}\ }\textbf {\bibinfo {volume} {45}},\ \bibinfo {pages} {2171} (\bibinfo {year} {2004})}\BibitemShut {NoStop}%
\bibitem [{\citenamefont {Kamenev}(2005)}]{kamenev2005course}%
  \BibitemOpen
  \bibfield  {author} {\bibinfo {author} {\bibfnamefont {A.}~\bibnamefont {Kamenev}},\ }\bibfield  {title} {\bibinfo {title} {Course 3 many-body theory of non-equilibrium systems},\ }in\ \href {https://doi.org/10.1016/S0924-8099(05)80045-9} {\emph {\bibinfo {booktitle} {Les Houches}}},\ Vol.~\bibinfo {volume} {81}\ (\bibinfo  {publisher} {Elsevier},\ \bibinfo {year} {2005})\ pp.\ \bibinfo {pages} {177--246}\BibitemShut {NoStop}%
\bibitem [{\citenamefont {Bartlett}\ \emph {et~al.}(2007)\citenamefont {Bartlett}, \citenamefont {Rudolph},\ and\ \citenamefont {Spekkens}}]{bartlett2007reference}%
  \BibitemOpen
  \bibfield  {author} {\bibinfo {author} {\bibfnamefont {S.~D.}\ \bibnamefont {Bartlett}}, \bibinfo {author} {\bibfnamefont {T.}~\bibnamefont {Rudolph}},\ and\ \bibinfo {author} {\bibfnamefont {R.~W.}\ \bibnamefont {Spekkens}},\ }\bibfield  {title} {\bibinfo {title} {Reference frames, superselection rules, and quantum information},\ }\href {https://doi.org/10.1103/RevModPhys.79.555} {\bibfield  {journal} {\bibinfo  {journal} {Reviews of Modern Physics}\ }\textbf {\bibinfo {volume} {79}},\ \bibinfo {pages} {555} (\bibinfo {year} {2007})}\BibitemShut {NoStop}%
\bibitem [{\citenamefont {Marvian}\ and\ \citenamefont {Spekkens}(2014)}]{marvian2014modes}%
  \BibitemOpen
  \bibfield  {author} {\bibinfo {author} {\bibfnamefont {I.}~\bibnamefont {Marvian}}\ and\ \bibinfo {author} {\bibfnamefont {R.~W.}\ \bibnamefont {Spekkens}},\ }\bibfield  {title} {\bibinfo {title} {Modes of asymmetry: the application of harmonic analysis to symmetric quantum dynamics and quantum reference frames},\ }\href {https://doi.org/10.1103/PhysRevA.90.062110} {\bibfield  {journal} {\bibinfo  {journal} {Physical Review A}\ }\textbf {\bibinfo {volume} {90}},\ \bibinfo {pages} {062110} (\bibinfo {year} {2014})}\BibitemShut {NoStop}%
\end{thebibliography}
\end{document}